\newtheorem{theorem}{Theorem}[section]
\newtheorem{proposition}{Proposition}[section]
\newtheorem{lemma}{Lemma}[section]
\newtheorem{corollary}{Corollary}[section]
\newcommand{\beqa}{\begin{eqnarray}}
\newcommand{\eeqa}{\end{eqnarray}}
\newcommand{\rf}[1]{(\ref{#1})}
\def\tr{\operatorname{tr}}
\def\det{\operatorname{det}}
\numberwithin{equation}{section}
\title{\bfseries On Scalar Products in Higher Rank Quantum Separation of Variables}
\author[,1]{J. M. Maillet\thanks{maillet@ens-lyon.fr}}
\author[,1]{G. Niccoli\thanks{giuliano.niccoli@ens-lyon.fr}}
\author[,1]{L. Vignoli\thanks{louis.vignoli@ens-lyon.fr}}
\affil[1]{Univ Lyon, Ens de Lyon, Univ Claude Bernard, CNRS, Laboratoire de Physique, UMR 5672, F-69342 Lyon, France}
\begin{document}
\maketitle










\begin{abstract} \normalsize Using the framework of the quantum separation of variables (SoV) for higher rank quantum integrable lattice
models \cite{MaiN18}, we introduce some foundations to go beyond the
obtained complete transfer matrix spectrum description, and open the way to
the computation of matrix elements of local operators. This first amounts to obtain simple
expressions for scalar products of the so-called separate states, that are
transfer matrix eigenstates or some simple generalization of them. 
In the higher rank case, left and right SoV bases are expected to be \emph{pseudo-orthogonal}, that is
for a given SoV co-vector $\bra{\underline{\mathbf{h}}}$, there could be more than one 
non-vanishing overlap $\bra{\underline{\mathbf{h}}}\ket{\underline{\mathbf{k}}}$ with the vectors $\ket{\underline{\mathbf{k}}}$ of the chosen right SoV basis.
For simplicity, we describe our method to get these pseudo-orthogonality overlaps in the fundamental representations of the $\mathcal{Y}(gl_3)$ lattice model with $N$ sites, a case of rank 2.
The non-zero couplings between the co-vector and vector SoV bases are exactly characterized. 
While the corresponding \textit{SoV-measure} stays reasonably
simple and of possible practical use, we address the problem of
constructing left and right SoV bases which do satisfy standard
orthogonality (by standard we mean $\bra{\underline{\mathbf{h}}}\ket{\underline{\mathbf{k}}} \propto \delta_{\underline{\mathbf{h}}, \underline{\mathbf{k}}}$). In our approach, the SoV bases are constructed by using families of conserved charges. This gives us a large freedom in the SoV bases construction, and allows us to look for the choice of a family of conserved charges which leads to orthogonal co-vector/vector SoV bases. We first define such a choice in the case of twist matrices having simple spectrum and zero determinant. Then, we generalize the associated family of conserved charges and orthogonal SoV
bases to generic simple spectrum and  invertible twist matrices. Under this choice of
conserved charges, and of the associated orthogonal SoV bases, the scalar
products of separate states simplify considerably and take a form similar to
the $\mathcal{Y}(gl_2)$ rank one case.
\end{abstract}

\clearpage

\noindent\rule{\linewidth}{1pt}
\tableofcontents 
\vspace{10pt}
\noindent\rule{\linewidth}{1pt}
\newpage

\section{Introduction}

The quantum separation of variables (SoV) has been introduced by Sklyanin 
\cite{Skl85,Skl90,Skl92a,Skl95,Skl96} in the framework of the quantum
inverse scattering method \cite%
{FadS78,FadST79,FadT79,Skl79,Skl79a,FadT81,Skl82,Fad82,Fad96}. It enables to analyze the transfer matrix (and Hamiltonian)  spectrum using the Yang-Baxter commutation relations. It does not rely on any ansatz,
which makes explicit its advantage w.r.t.\ Bethe Ansatz methods \cite%
{B31,W59,Lie67c,Ba72,Bax71a,Bax73-tot,BaxBook,FadST79}. This method has been
first systematically developed in the class of the rank one integrable
quantum models \cite%
{BabBS96,Smi98a,Smi01,DerKM01,DerKM03,DerKM03b,BytT06,vonGIPS06,FraSW08,AmiFOW10,NicT10,Nic10a,Nic11,FraGSW11,GroMN12,GroN12,Nic12,Nic13,Nic13a,Nic13b,GroMN14,FalN14,FalKN14,KitMN14,NicT15,LevNT15,NicT16,KitMNT16,MarS16,KitMNT17,MaiNP17,MaiNP18}
proving its wide range of application. The completeness of the
transfer matrix spectrum characterization in the SoV approach for compact
representations has been clearly addressed and proven in \cite%
{NicT10,Nic10a,Nic11,GroMN12,GroN12,Nic12,Nic13,Nic13a,Nic13b,GroMN14,FalN14,FalKN14,KitMN14,NicT15,LevNT15,NicT16,JiaKKS16,KitMNT16,MarS16,KitMNT17,MaiNP17,MaiNP18}. In this rank one case, the SoV approach has also been shown to lead to simple determinant formulae for scalar products of the so-called separate states \cite%
{GroMN12,Nic12,Nic13,Nic13a,Nic13b,GroMN14,FalN14,FalKN14,LevNT15,MaiNP17,MaiNP18}. Those include the transfer matrix eigenstates and their generalizations with factorized but otherwise arbitrary wave functions in the SoV basis.
In several important cases, the form factors of local or quasi-local operators have been computed in terms of determinants, while in \cite{KitMNT16,KitMNT17,KitMNT18} a rewriting of the determinants giving the scalar product formulae has been
obtained paving the way for the direct analysis of form factors and correlation functions in the homogeneous and thermodynamic limits.

Our aim is to extend these achievements to the higher rank cases. Let us
comment that scalar product formulae and matrix elements of local operators
have been already computed in the literature \cite%
{BelPRS12,BelPRS12a,BelPRS13,BelPRS13a,PakRS14,PakRS14a,PakRS15,PakRS15a,PakRS15b,HutLPRS16,HutLPRS17,PakRS17,HutLPRS17a,LiaS18}
for the higher rank case in the nested algebraic Bethe ansatz (NABA)
framework \cite{KulR81, KulR83,BelR08,PakRS18} and that more recently have
appeared interesting works analyzing these problems in SoV related
frameworks \cite{CavGLM19,GroLMRV19}.

Sklyanin has also pioneered the SoV approach in the higher rank case%
\footnote{%
See also \cite{Smi01,MarS16} for some interesting analysis toward the SoV
description of higher rank cases.}, in the particular example of rank two  \cite{Skl96}. Sklyanin's
beautiful SoV construction involves the identification of a $B$-operator, whose
eigenco-vector basis is meant to separate the spectral problem of the
transfer matrix. The other fundamental
elements of the Sklyanin's construction\cite{Skl96} are the
identification of an $A$-operator, whose role is that of generating the shift operator on
the $B$-spectrum, together with the identification of an operator quantum
spectral curve equation involving the transfer matrices, the $B$-operator
and the $A$-operator. These operator equations should separate the transfer
matrix spectrum when computed in the zeroes of the $B$-operator.
However, in \cite{Skl96} the SoV construction has been developed just
using the $gl_3$ Yang-Baxter commutation relations without introducing any
specific representations of the algebra. Only more recently, the SoV
analysis for higher rank has been revived. For the fundamental
representations of $gl_3$ Yang-Baxter algebra, in \cite{GroLMS17} the
spectrum of the Sklyanin's $B$-operator has been conjectured together with its
diagonalizability for some classes of twisted boundary conditions on the basis
of an exact analysis of quantum chains of small sizes. Moreover in \cite%
{GroLMS17}, the Sklyanin's $B$-operator has been used to conjecture a formula for the transfer matrix
eigenvectors bypassing the traditional nested Bethe Ansatz procedure and consistent with small chains verification\footnote{This conjecture has been then proven in the ABA framework in \cite{LiaS18}. These observations and conjectures have also been extended to the super-symmetric case in \cite{GroF2018}.}.
Then, in \cite{DerV18} the separation of variables approach has been initiated for
non-compact representations of the $gl_3$ Yang-Baxter algebra determining
the eigenfunctions of the Sklyanin's $B$-operator. While these findings are quite 
interesting, the complete implementation of the Sklyanin's SoV program for
higher rank seems more involved as, at least for fundamental
representations, the proposed $A$-operator acts as shift only on part of the 
$B$-spectrum which leaves unproven the separate relations in this
SoV framework. This phenomenon has been already anticipated by Sklyanin in 
\cite{Skl96} and it occurs when the spectrum of the $B$-operator zeroes
partially coincides with that of the poles of operators appearing in the
commutation relations between $A$-operator and $B$-operator and/or in the
operator quantum spectral curve equation, see \cite{MaiN18} for further
discussions.\\

In \cite{MaiN18} we have overcome these difficulties by developing a new SoV
approach which relies only on the abelian algebra of conserved charges of
the given quantum integrable model. In our SoV approach the SoV co-vectors/vectors bases 
are generated by the action of appropriate sets of conserved charges on some reference
co-vector/vector, hence bypassing the construction of the Sklyanin's $A$ and $B$ operators. \\

In its most general form, our construction uses a family of commuting conserved charges say $T(\lambda)$, $\lambda \in \mathbb{C}$ (typically the transfer matrix, its fused versions or the Baxter $Q$-operator in most of the cases considered, but in principle more general situations could occur) acting on some Hilbert space $\cal H$ ($\cal H^*$ being its dual) of the considered model. Such a family is said to be SoV bases generating if there exist a co-vector $\langle L| \in \cal{H^*}$ (resp. a vector $|R \rangle \in \cal{H}$) and sets of commuting conserved charges constructed from $T(\lambda)$, $T^{(a)}_{h_a}$ (resp. $\tilde{T}^{(a)}_{k_a}$) where $a=1, \dots, N$ and $h_a, k_a=0, \dots, d_a-1$ with $d=\prod_{a=1}^N d_a$  the dimension of the Hilbert spaces $\cal H$ and $\cal H^*$,  such that the set of co-vectors,
\begin{equation}
\langle h_1, \dots, h_N| =  \langle L| \prod_{a=1}^N T^{(a)}_{h_a}\ ,
\label{leftSoVbasis}
\end{equation}
forms a basis of $\cal H^*$ and the set of vectors,
\begin{equation}
|k_1, \dots, k_N\rangle =   \prod_{a=1}^N \tilde{T}^{(a)}_{h_a} |R \rangle\ ,
\label{rightSoVbasis}
\end{equation}
forms a basis of $\cal H$. It follows immediately, by construction, that whenever such bases exist, any common eigenvector $|t \rangle$ (resp. eigenco-vector $\langle t|$) of the family $T(\lambda)$ with eigenvalue $t(\lambda)$ is also a common eigenvector (resp. eigenco-vector) of the commuting sets of conserved charges $T^{(a)}_{h_a}$ (resp. $\tilde{T}^{(a)}_{k_a}$) with eigenvalues $t^{(a)}_{h_a}$ (resp. $\tilde{t}^{(a)}_{k_a}$). Hence the corresponding wave functions in the coordinates $h_i$ (resp. $k_i$) factorize as 
\begin{equation}
\Psi _{t}(h_{1},...,h_{N})\equiv \langle h_{1},...,h_{N%
}|t\rangle = \langle L  |t\rangle\prod_{i=1}^{N}t^{(i)}_{h_i}\ ,
\label{leftwave-function}
\end{equation}
and similarly,
\begin{equation}
\tilde{\Psi} _{t}(k_{1},...,k_{N})\equiv \langle t |k_{1},...,k_{N%
}\rangle = \langle t  |R\rangle\prod_{i=1}^{N} \tilde{t}^{(i)}_{k_i} \ .
\label{rightwave-function}
\end{equation}
This also means that the eigenvectors coordinates in such SoV bases are completely determined from the eigenvalues of the commuting conserved charges used to construct those bases. 
Hence, the very existence of such bases implies the simplicity of the spectrum of the family $T(\lambda)$ since the coordinates (wave function) of any eigenvector are completely determined by the corresponding eigenvalue. 
This in turn implies that the above sets of conserved charges $T^{(a)}_{h_a}$ and  $\tilde{T}^{(a)}_{k_a}$ are both basis of the vector space ${\cal{C}}_{T(\lambda)}$ of operators commuting with the family of operators $T(\lambda)$. Hence the {\em linear} action of the operator $T(\lambda)$ on such bases can be computed in a close form as for any values of $h_{1},...,h_{N}$ (resp. $k_{1},...,k_{N}$),  the product $T^{(a)}_{h_a}\cdot T(\lambda)$ (resp. $T(\lambda) \cdot \tilde{T}^{(a)}_{k_a}$) is also a conserved charge commuting with $T(\lambda)$. 
Hence it is an element of ${\cal{C}}_{T(\lambda)}$  that can be decomposed {\em linearly} on the basis generated by $T^{(a)}_{h_a}$ (resp. $\tilde{T}^{(a)}_{k_a}$). \\

To make this more explicitly, let us introduce compact notations we will be using all along this paper, namely,  $\underline{\mathbf{h}}=(h_{1},...,h_{N})$ and similarly $\underline{\mathbf{k}}=(k_{1},...,k_{N})$, and accordingly, $T_{{\underline{\mathbf{h}}}}= \prod_{a=1}^N T^{(a)}_{h_a}$, ${\tilde T}_{{\underline{\mathbf{k}}}}= \prod_{a=1}^N {\tilde T}^{(a)}_{k_a}$, and also $|k_{1},...,k_{N%
}\rangle = |{\underline{\mathbf{k}}}\rangle$,  $\langle h_{1},...,h_{N%
}| = \langle {\underline{\mathbf{h}}}|$ for the two sets defining the right and left SoV bases\footnote{Using such compact notations it should not be forgotten that these vectors $|{\underline{\mathbf{k}}}\rangle$ and co-vectors $\langle {\underline{\mathbf{h}}}|$ defining SoV bases  are depending respectively on the chosen sets of conserved charges  $\tilde{T}^{(a)}_{k_a}$ and $T^{(a)}_{h_a}$ and on the reference vector $|R\rangle$ and co-vector $\langle L  |$. Hence in the following such compact notations will be used only after such choices have been defined. } , then there exist scalar complex coefficients $N_{{\underline{\mathbf{h}}}}^{{\underline{\mathbf{l}}}} (\lambda)$ and $N_{{\underline{\mathbf{h}}}, {\underline{\mathbf{k}}}}^{{\underline{\mathbf{l}}}}$ such that\footnote{Let us stress here that these complex coefficients which can be interpreted as the structure constants of the associative and commutative algebra of the conserved charges, are depending directly on the choice of the two sets of commuting conserved charges $T^{(a)}_{h_a}$ and $\tilde{T}^{(a)}_{k_a}$. Hence changing those sets, eventually in a non-linear way, as sums of products of commuting conserved charges are still commuting conserved charges, will modify these structure constants accordingly.}:
\begin{equation}
T_{{\underline{\mathbf{h}}}} \cdot T(\lambda) = \sum_{{\underline{\mathbf{l}}}} N^{{\underline{\mathbf{l}}}}_{{\underline{\mathbf{h}}}} (\lambda) \ T_{\underline{\mathbf{l}}} \ ,
\label{ThTlambda}
\end{equation}%
and,
\begin{equation}
T_{{\underline{\mathbf{h}}}} \cdot {\tilde T}_{{\underline{\mathbf{k}}}} = \sum_{{\underline{\mathbf{l}}}} N^{{\underline{\mathbf{l}}}}_{{\underline{\mathbf{h}}}, {\underline{\mathbf{k}}}}  \ T_{\underline{\mathbf{l}}} \ .
\label{ThtildeTk}
\end{equation}%
Similarly one can define two other sets of complex coefficients, namely $C_{{\underline{\mathbf{h}}}, {\underline{\mathbf{k}}}}^{{\underline{\mathbf{l}}}}$ and ${\tilde C}_{{\underline{\mathbf{h}}}, {\underline{\mathbf{k}}}}^{{\underline{\mathbf{l}}}}$ such that:
\begin{equation}
T_{{\underline{\mathbf{h}}}} \cdot  T_{{\underline{\mathbf{k}}}} = \sum_{{\underline{\mathbf{l}}}} C^{{\underline{\mathbf{l}}}}_{{\underline{\mathbf{h}}}, {\underline{\mathbf{k}}}}  \ T_{\underline{\mathbf{l}}} \ ,
\label{ThTk}
\end{equation}
and,
\begin{equation}
{\tilde T}_{{\underline{\mathbf{h}}}} \cdot  {\tilde T}_{{\underline{\mathbf{k}}}} = \sum_{{\underline{\mathbf{l}}}} {\tilde C}^{{\underline{\mathbf{l}}}}_{{\underline{\mathbf{h}}}, {\underline{\mathbf{k}}}}  \ {\tilde T}_{\underline{\mathbf{l}}} \ .
\label{ThTk-tilde}
\end{equation}
The knowledge of these relations together with the action of the complete family of conserved charges  $T(\lambda)$ on our SoV bases has been shown to completely characterize the common spectrum of all the above commuting conserved charges. 
Particular realizations of this situation include the case where the $T^{(a)}_{h_a}$ are powers of the transfer matrix evaluated in the inhomogeneity parameters as $T(\xi_a)^{h_a}$, or are given as the fused transfer matrices $T_{h_a}({\xi^{(h_a)}_a})$ in some shifted points ${\xi^{(h_a)}_a}$, where $h_a$ is the level of fusion.
In the higher spin $gl_2$ case, they are simply obtained from the $Q$-operator evaluated in shifted inhomogeneities as $Q({\xi^{(h_a)}_a})$. 
In all these cases, the coefficients $N_{{\underline{\mathbf{h}}}}^{{\underline{\mathbf{l}}}} (\lambda)$, $N_{{\underline{\mathbf{h}}}, {\underline{\mathbf{k}}}}^{{\underline{\mathbf{l}}}}$, $C_{{\underline{\mathbf{h}}}, {\underline{\mathbf{k}}}}^{{\underline{\mathbf{l}}}}$ and ${\tilde C}_{{\underline{\mathbf{h}}}, {\underline{\mathbf{k}}}}^{{\underline{\mathbf{l}}}}$are completely determined by the fusion relations or the $T$-$Q$ relations satisfied by the transfer matrices and the Baxter $Q$-operator. 

The conditions on the above sets of conserved charges to indeed generate  SoV bases
were identified and proven\footnote{%
They mainly reduce to properties satisfied by the twist matrix and the
inhomogeneities parameters.} in \cite{MaiN18}, together with the
factorization of the wave functions in terms of conserved charge eigenvalues
and the proof of the completeness of the description of the transfer matrix
spectrum. The discrete separate relations were proven to
be equivalent to the quantum spectral curve equations, involving the
transfer matrices and the $Q$-operator holding both at the eigenvalue and
operator level, due to the proven simplicity of the transfer matrix spectrum 
\cite{MaiN18}. 
In our approach, the separate variables relations are
themselves proven to be originated by the structure constants of the abelian
algebra of conserved charges, in particular by the transfer matrix fusion
equations for the charges considered in \cite{MaiN18}. From this perspective
our SoV approach has the potential to be universal in the realm of quantum
integrable model. Indeed, we have proven its applicability for a large class
of quantum integrable models from the fundamental representations of $gl_{n}$, $gl_{n,m}$ and the $U_{q}(gl_{n})$ Yang-Baxter algebras with simple spectrum twist matrices up to the higher rank reflection algebra cases with
general boundary conditions, deriving new and complete descriptions of the
transfer matrix spectrum \cite{MaiN18,MaiN19,MaiN19a,MaiN18b,MaiN18e,MaiNV19} \footnote{Note that our reference \cite{MaiN18e} describe our approach for higher spin
representations for the rank one case. While \cite{MaiNV19} also contains
the SoV basis construction for the quasi-periodic Hubbard model.}. Moreover, in \cite{RyaV18,RyaV20}
our construction of SoV bases using conserved charges has been extended to
arbitrary finite dimensional rectangular representations of the $gl_{n}$ Yang-Baxter algebra. 

The relation of our SoV approach with the Sklyanin's one has been
first analyzed in \cite{MaiN18}. There we have observed the coincidence of
our SoV co-vector basis with the Sklyanin's $B$-operator co-vector eigenbasis
for chains of arbitrary length in the $gl_2$ case. This correspondence has been obtained for special choices of the reference co-vector and of the set of conserved charges used to generate the SoV basis. The same result has been derived in \cite{MaiN18} for the $gl_3$ case for chains of small sizes. 
In \cite{RyaV18} this observation has been proven for arbitrary finite dimensional rectangular representations of the $gl_{n}$ Yang-Baxter algebra  and for chains of any size. Moreover the simple spectrum of the
Sklyanin's $B$-operator, and its $gl_{n}$ extensions proposed in \cite{GroLMS17}, has been obtained in \cite{RyaV18}. This result together with the completeness of the description of the spectrum
by factorized wave  functions in terms of polynomial $Q$-functions \cite{MaiN18} implies the ABA type formula of \cite{GroLMS17} for all the transfer matrix eigenvectors\footnote{%
Note that it was first remarked in \cite{DerKM01} for non-compact rank one
models that the factorization of the wave-functions in terms of polynomial $Q$%
-functions imply the ABA form of transfer matrix eigenvectors in the SoV
basis once the Sklyanin's $B$-operator is proven to be diagonalizable. As we
have explained in \cite{MaiN18}, this proof extends also to the higher rank
case under the same assumption as it only uses the SoV representation of the
transfer matrix eigenvectors.}.\\

An important feature of our new approach to the SoV bases is that it relies only on finding a suitable set of commuting conserved charges and a corresponding reference co-vector/vector $\langle L| \in \cal{H^*}$ and $|R \rangle \in \cal{H}$,  (the number of choices for those being in fact very large as shown in our first paper \cite{MaiN18}). However, any other sets build from sums of products of given commuting conserved charges being again sets of commuting conserved charges, it results in a huge freedom in constructing SoV bases which was not available if one would have stick to SoV bases identified as eigenbasis of the Sklyanin's $B$-operator or its higher rank extensions. \\

Clearly, this is a very interesting built in aspect of our new approach to SoV that enables us to ask a new key question in this context: {\em what would be  optimal choices of the sets of conserved charges determining the SoV bases for the quantum integrable model at hand?} \\

A first answer to this question, from the point of view of the determination of the spectrum, is that an optimal SoV basis is such that the action of the transfer matrix (and hence of the Hamiltonian of the model) on the chosen basis is as simple as possible. This could mean for example that the action of the family of $T(\lambda)$ on any element of the set $T_{{\underline{\mathbf{h}}}}$ decomposes back on that set with only a very few non-zero coefficients, and moreover that it is given only by local shifts of finite and lowest possible order on the coordinates $h_a$. This amounts to have chosen the basis  $T_{{\underline{\mathbf{h}}}}$ of the space ${\cal{C}}_{T(\lambda)}$ in such a way that the structure constants  $N_{{\underline{\mathbf{h}}}}^{{\underline{\mathbf{l}}}} (\lambda)$ have such a simple property; namely that the only non zero coefficients are those where ${\underline{\mathbf{h}}}$ and ${\underline{\mathbf{l}}}$ differ only by localized shifts in the coordinates. This is exactly what happens for SoV bases in the $gl_2$ case that are  generated directly from the Baxter $Q$-operator. Indeed, the Baxter $T$-$Q$ relation determines an action of the transfer matrix $T(\lambda)$ on the basis generated by $Q(\lambda)$ which involves only two terms with a local shift  $\pm 1$ for each coordinate $h_a$,  to be compared to the dimension of the Hilbert space $\cal H$ and of the Bethe algebra ${\cal{C}}_{T(\lambda)}$ which is $2^N$ for a spin-$1/2$ chain of length $N$. This is in some sense the hallmark of integrability that generate a characteristic equation of degree two, hence much smaller than the dimension of the Hilbert space. \\

Another meaning of simplicity in the choice of our SoV bases could also be related to the coupling between the two chosen left (\ref{leftSoVbasis}) and right (\ref{rightSoVbasis}) SoV bases. Namely, a criterion of simplicity could be to take such two SoV covector/vector bases such that their scalar products are calculable in terms of manageable expressions. This is certainly an important question and criterion as it determines to what extend the chosen left (\ref{leftSoVbasis}) and right (\ref{rightSoVbasis}) SoV bases are easy to use when computing scalar products of separate states, form factors and correlation functions, that are our main goals. \\

The main purpose of the present paper is to study the important question of scalar products from this perspective.

\bigskip
In the class of rank one quantum integrable models, the SoV analysis so far
developed \cite{Skl85,Skl90,Skl92a,Skl95,Skl96,BabBS96,Smi98a,Smi01,DerKM01,DerKM03,DerKM03b,BytT06,vonGIPS06,FraSW08,AmiFOW10,NicT10,Nic10a,Nic11,FraGSW11,GroMN12,GroN12,Nic12,Nic13,Nic13a,Nic13b,GroMN14,FalN14,FalKN14,KitMN14,NicT15,LevNT15,NicT16,KitMNT16,MarS16,KitMNT17,MaiNP17,MaiNP18}
leads to the expectation that the transfer matrix construction of the
co-vector/vector SoV bases can be defined in such a way that these are
orthogonal bases. Similarly, in the Sklyanin's approach, this leads to the
expectation that the co-vector/vector Sklyanin's $B$-operator eigenbases (orthogonal as soon as $B$ is diagonalizable with simple spectrum) both implement the separation of variables for the transfer matrix spectrum. This feature has been proven to be very useful in computing scalar products of the so-called separate states and also in obtaining determinant formulae for the form factors of local operators. As
we will see in the next, in the higher rank quantum integrable models, this
is not directly the case if the charges used to construct the
co-vector/vector SoV basis are simply the transfer matrices or their fused higher versions, for a 
generic twist $K$.

On the one hand, the SoV vector basis is univocally fixed
in terms of the co-vector one defined in \cite{MaiN18} if one requires that
it is of SoV type, i.e. that it is generated by a factorized action of conserved
charges, and that it satisfies the orthogonality conditions with the
co-vector basis on one quantum site (this is obviously a necessary requirement for general orthogonality!). It turns out that in general such SoV vector basis
stays only \textit{pseudo-orthogonal} to the co-vector one for quantum
chains of arbitrary length $N$. More precisely, the matrix of scalar products ${\cal N}_{{\underline{\mathbf{h}}}, {\underline{\mathbf{k}}}} = \langle {\underline{\mathbf{h}}} |{\underline{\mathbf{k}}}\rangle$ for the natural SoV bases introduced in \cite{MaiN18} is in general not a diagonal matrix.\\

The aim of the present paper is twofold:
\begin{itemize}
\item Characterize the matrix of scalar products ${\cal N}_{{\underline{\mathbf{h}}}, {\underline{\mathbf{k}}}} = \langle {\underline{\mathbf{h}}} |{\underline{\mathbf{k}}}\rangle$ and the associated SoV measure (related to the inverse of ${\cal N}_{{\underline{\mathbf{h}}}, {\underline{\mathbf{k}}}}$)  for the natural SoV bases introduced in \cite{MaiN18} in  the example of the rank two $gl_3$ case in the fundamental representations.
\item Determine, in the same $gl_3$ representations, two sets of commuting conserved charges, $T_{{\underline{\mathbf{h}}}}$ and ${\tilde T}_{{\underline{\mathbf{k}}}}$ generating a left and right SoV bases that are orthogonal to each other and compute the corresponding SoV measure.
\end{itemize}

Given our left SoV co-vector basis, we first prove that the defined set of SoV vectors indeed define a basis
and we exactly characterize the \textit{pseudo-orthogonality} conditions
writing all the non-zero non-diagonal couplings in terms of the diagonal ones,
which we explicitly compute. This set of SoV vectors has been introduced
recently in \cite{GroLMRV19} as the set of eigenvectors of a $C$-operator
which plays a similar role to the Sklyanin's $B$-operator and some integral
form has been given for the coupling of the SoV co-vectors/vectors in \cite{GroLMRV19}. Due to the quite different representations, a direct comparison of the results of \cite{GroLMRV19} with those that we obtained stays a complicate task which however deserves further analysis.

Let us comment that this pseudo-orthogonality is intrinsically related to
the form of fusion relations of the transfer matrices for higher rank case
when computed in the special inhomogeneous points. In fact the matrix of scalar products can be directly related to the structure constants of the algebra of commuting conserved charges \rf{ThtildeTk} that are in fact determined completely by the fusion relations as shown in \cite{MaiN18}.  To be more precise let us illustrate this in the following situation. Suppose we have chosen a left SoV basis of the type \rf{leftSoVbasis}. Then let us consider a right SoV basis \rf{rightSoVbasis} where we have chosen the right reference vector $|R \rangle$ in such a way that it satisfies $\langle {\underline{\mathbf{h}}} |R \rangle=\delta_{{\underline{\mathbf{h}}},{\underline{\mathbf{h}}_0}}$ for some ${\underline{\mathbf{h}}_0}$. Then the corresponding matrix ${\cal N}_{{\underline{\mathbf{h}}}, {\underline{\mathbf{k}}}}$ of scalar products can be computed in terms of the structure constant $N_{{\underline{\mathbf{h}}}, {\underline{\mathbf{k}}}}^{{\underline{\mathbf{h}}_0}}$ to be:
\begin{equation}
{\cal N}_{{\underline{\mathbf{h}}}, {\underline{\mathbf{k}}}} = N_{{\underline{\mathbf{h}}}, {\underline{\mathbf{k}}}}^{{\underline{\mathbf{h}}_0}} \ .
\end{equation}
A very interesting question is thus if there exists an optimal  choice of the left \rf{leftSoVbasis} and right \rf{rightSoVbasis} SoV bases such that for some ${{\underline{\mathbf{h}}_0}}$ we have $N_{{\underline{\mathbf{h}}}, {\underline{\mathbf{k}}}}^{{\underline{\mathbf{h}}_0}} = \delta_{{\underline{\mathbf{h}}}, {\underline{\mathbf{k}}}} \ n({\underline{\mathbf{h}}})$ with a calculable non-zero coefficient $n({\underline{\mathbf{h}}})$ whose inverse determines the SoV measure. 

This naturally leads to the observation that if we want to obtain co-vector/vector SoV bases mutually
orthogonal we have to chose in general a different family of commuting conserved charges than the simple choice taken in \cite{MaiN18} 
to generate both of them (or at least look for different points where the
transfer matrices are computed). These observations in the Sklyanin's SoV
framework for rank two mean that while the Sklyanin's $B$-operator define the
co-vector SoV basis, its vector eigenbasis is actually only a \textit{pseudo-SoV}
basis, i.e.\ not all the wave functions of transfer matrix eigenco-vectors have
factorized form in terms of the transfer matrix eigenvalues.

Despite the absence of direct orthogonality the \textit{SoV-measure} that we 
derive in section 3 stays reasonably simple and can be used as the starting point to compute
matrix elements of local operators in this SoV framework. While, this seems
a sensible line of research and we will further analyze it in the future, we would like  to further investigate the potentiality of our new SoV approach. 

In the present paper, for the rank two $%
gl_3$ case in the fundamental representation, we define some new family of commuting conserved charges
whose spectral problem is separated for both a co-vector and a vector bases,
which are moreover orthogonal to each other. Further, we show that the corresponding \textit{SoV measure} takes a form very similar to
the rank one case. The consequence is that w.r.t.\ this family of
commuting conserved charges scalar products simplify considerably and take a form very similar to the
rank one case for the separate states. Of course, in order to be
able to compute matrix elements of local operators we will need to
address the problem of the representation of the local operators in these
new SoV bases. 

\bigskip
The paper is organized as follows:\\

Section \ref{sec:sov-bases} is dedicated to recall some
fundamental properties satisfied by the transfer matrices in the fundamental
representations of the $gl_3$ Yang-Baxter algebra. In subsection \ref{ssec:sov-bases-our-approach}, we
moreover recall the results of \cite{MaiN18} for the construction of the SoV
bases for the considered representations, that is equations 
\eqref{Left-SoV-B-0} and \eqref{Right-SoV-B-0}. \\

In section \ref{sec:scalar-products}, we  introduce a
standard construction of co-vector/vector SoV bases \eqref{Co-vector-SoV-basis-def}-\eqref{Vector-SoV-basis} using the
choice of the generating charges made in \cite{MaiN18}, i.e. given by the
transfer matrices evaluated in the inhomogeneity parameters. The Theorem \protect\ref{Central-T-det-non0} characterizes completely the co-vector/vector coupling
of these two systems of SoV states. The main results of this section are
\emph{i)} that the given system of SoV vectors form a basis, \emph{ii)} the computation in \eqref{S-vector-a} of the known tensor product form \eqref{S-vector}
of the reference vector associated to a fixed reference co-vector in the SoV
basis, \emph{iii)} the exact characterization in Theorem \ref{Central-T-det-non0} of the pseudo-orthogonality relations \eqref{pseudo-ortho}, with
the description of the non-diagonal couplings in terms of the diagonal ones,
and \emph{iv)} the explicit computation of the diagonal couplings in \eqref{eq:diagonal-sov-measure}.  Finally, the subsection \ref{ssec:higher-rank-sov-measure} characterizes with Corollary \ref{cor:pseudo-ortho-of-ortho-basis} the SoV measure in terms of the non-zero SoV co-vector/vector couplings.\\

In section \ref{sec:orthogonal-bases}, we use the freedom in the choice of the generating family of conserved charges to
construct orthogonal co-vector/vector SoV bases. The subsection \ref{ssec:non-invertible-twist-case} is
dedicated to this construction in the class of quasi-periodic boundary
conditions associated to simple spectrum but non-invertible twist matrices. The main
theorem there, Theorem \ref{Central-T-det0}, states the orthogonality properties and the form of the
diagonal SoV co-vector/vector couplings. These are similar to the SoV
co-vector/vector couplings of the rank one integrable quantum models. In
subsection \ref{ssec:scalar-product-separate-states}, these results are used to compute scalar product formulae of
separate states \eqref{Main-ScalarP} and \eqref{Norm}, showing that they take a form similar to the rank one case. Finally, in section \ref{ssec:extension-simple-spectrum-case},
we introduce a new set of charges \eqref{eq:new-conserved-charges-T-bb} that extends the results of subsections \ref{ssec:non-invertible-twist-case}
and \ref{ssec:scalar-product-separate-states} to the general quasi-periodic boundary conditions, associated to simple spectrum and invertible twist matrices. \\

We give several technical and important proofs in the three appendices. The appendix \ref{sec:explicit-form} details the proof of the tensor product form of SoV starting co-vector/vector in our SoV construction. The appendix \ref{sec:gl2-sov} details how our SoV construction holds in the $gl_2$ representations, the aim being to establish one simple example to which compare our higher rank construction. Finally, the appendix \ref{sec:proof-thm} is dedicated to the detailed proof of our Theorem \protect\ref{Central-T-det-non0}. Subsection \ref{ssec:orthogonality-proof} handles the orthogonality proof, while subsection \ref{ssec:non-zero-sov-couplings} details the description of the non-zero SoV co-vector/vector couplings.

\section{SoV bases for the fundamental representation of the \texorpdfstring{$gl_{3}$}{gl3} Yang-Baxter algebra}
\label{sec:sov-bases}

\subsection{Fundamental representation of the \texorpdfstring{$gl_{3}$}{gl3} Yang-Baxter algebra}
We consider here the Yang-Baxter algebra associated to the rational $gl_{3}$ 
$R$-matrix:%
\begin{equation}
R_{a,b}(\lambda )=\lambda I_{a,b}+\eta \mathbb{P}_{a,b}=\left( 
\begin{array}{ccc}
a_{1}(\lambda ) & b_{1} & b_{2} \\ 
c_{1} & a_{2}(\lambda ) & b_{3} \\ 
c_{2} & c_{3} & a_{3}(\lambda )%
\end{array}%
\right) \in \text{End}(V_{a}\otimes V_{b}),
\end{equation}%
where $V_{a}\cong V_{b}\cong \mathbb{C}^{3}$ and we have defined:%
\begin{align}
& a_{j}(\lambda )\left. =\right. \left( 
\begin{array}{ccc}
\lambda +\eta \delta _{j,1} & 0 & 0 \\ 
0 & \lambda +\eta \delta _{j,2} & 0 \\ 
0 & 0 & \lambda +\eta \delta _{j,3}%
\end{array}%
\right) ,\text{ \ \ }\forall j\in \{1,2,3\},  \notag \\
& b_{1}\left. =\right. \left( 
\begin{array}{ccc}
0 & 0 & 0 \\ 
\eta & 0 & 0 \\ 
0 & 0 & 0%
\end{array}%
\right) ,\text{ \ }b_{2}\left. =\right. \left( 
\begin{array}{ccc}
0 & 0 & 0 \\ 
0 & 0 & 0 \\ 
\eta & 0 & 0%
\end{array}%
\right) ,\text{ \ }b_{3}\left. =\right. \left( 
\begin{array}{ccc}
0 & 0 & 0 \\ 
0 & 0 & 0 \\ 
0 & \eta & 0%
\end{array}%
\right) ,  \notag \\
& c_{1}\left. =\right. \left( 
\begin{array}{ccc}
0 & \eta & 0 \\ 
0 & 0 & 0 \\ 
0 & 0 & 0%
\end{array}%
\right) ,\text{ \ }c_{2}\left. =\right. \left( 
\begin{array}{ccc}
0 & 0 & \eta \\ 
0 & 0 & 0 \\ 
0 & 0 & 0%
\end{array}%
\right) ,\text{ \ }c_{3}\left. =\right. \left( 
\begin{array}{ccc}
0 & 0 & 0 \\ 
0 & 0 & \eta \\ 
0 & 0 & 0%
\end{array}%
\right) ,
\end{align}%
which satisfies the Yang-Baxter equation%
\begin{equation}
R_{12}(\lambda -\mu )R_{13}(\lambda )R_{23}(\mu )=R_{23}(\mu )R_{13}(\lambda
)R_{12}(\lambda -\mu )\in \text{End}(V_{1}\otimes V_{2}\otimes V_{3}).
\end{equation}%
and the scalar Yang-Baxter equation:%
\begin{equation}
R_{12}(\lambda )K_{1}K_{2}=K_{2}K_{1}R_{12}(\lambda )\in \text{End}%
(V_{1}\otimes V_{2}),
\end{equation}%
where $K\in \operatorname{End}(V)$ is any $3\times 3$ matrix. We can define the following
monodromy matrix: 
\begin{equation}
M_{a}^{(K)}(\lambda )\equiv K_{a}R_{a,\mathsf{N}}(\lambda -\xi _{\mathsf{N}%
})\cdots R_{a,1}(\lambda -\xi _{1})\in \text{End}(V_{a}\otimes \mathcal{H}),
\end{equation}%
where $\mathcal{H}=\bigotimes_{n=1}^{\mathsf{N}}V_{n}$. $M_{a}^{(K)}(\lambda
)$ itself satisfies the Yang-Baxter equation and hence it defines an
irreducible $3^{\mathsf{N}}$-dimensional representation of the $gl_{3}$
Yang-Baxter algebra for the \emph{inhomogeneity} parameters $\{\xi
_{1},...,\xi _{\mathsf{N}}\}$ in generic complex positions:%
\begin{equation}
\xi _{i}-\xi _{j}\neq 0,\pm \eta , \quad \forall i,j\in \{1,...,\mathsf{N}%
\} . \label{Inhomog-cond}
\end{equation}%
Then, in the framework of the quantum inverse scattering \cite%
{KulRS81,KulR82,KunNS94}, the following families of commuting charges exist
according to the following:

\begin{proposition}[\hspace{-0.02cm}\protect\cite{KulRS81,KulR82,KunNS94}]
Defined the antisymmetric projectors:%
\begin{equation}
P_{1,...,m}^{-}=\frac{1}{m!} \sum_{\pi \in S_{m}}\left( -1\right) ^{\sigma _{\pi
}}P_{\pi },
\end{equation}%
where $S_m$ is the symmetric group of rank $m$, $\sigma _{\pi
}$ the signature of the permutation $\pi$ and 
\begin{equation}
P_{\pi}(v_{1}\otimes \cdots \otimes v_{m})=v_{\pi (1)}\otimes \cdots
\otimes v_{\pi (m)},
\end{equation}%
then the following quantum spectral invariants (the fused transfer matrices):%
\begin{equation}
T_{m}^{\left( K\right) }(\lambda )\equiv \mathrm{tr}_{1,...,m}\left[
P_{1,...,m}^{-}M_{1}^{(K)}(\lambda )M_{2}^{\left( K\right) }(\lambda -\eta
)\cdots M_{m}^{\left( K\right) }(\lambda -(m-1)\eta )\right] ,\quad \forall m\in
\{1,2,3\},
\end{equation}%
are one parameter families of mutual commuting operators. Furthermore, the
quantum determinant $\operatorname{q-det}M^{(K)}(\lambda )\equiv T_{3}^{\left(
K\right) }(\lambda )$ is central, i.e. 
\begin{equation}
\left[ \operatorname{q-det}M^{(K)}(\lambda ),M_{a}^{(K)}(\mu ) \right]=0.
\end{equation}
\end{proposition}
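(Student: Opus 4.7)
The plan is to deploy the standard fusion procedure for the rational $gl_{3}$ R-matrix. The starting observation is the degeneration
\[
R_{a,b}(-\eta) = -\eta\, I_{a,b} + \eta\, \mathbb{P}_{a,b} = -2\eta\, P^{-}_{a,b},
\]
which identifies the antisymmetric projector on two factors with an $R$-matrix at the fusion point. Iterating this observation with the help of the Yang--Baxter equation shows that $P^{-}_{1,2,3}$ likewise arises, up to a non-zero scalar, from an ordered product of $R$-matrices evaluated at the shifted arguments in $\{-\eta,-2\eta\}$. In particular every antisymmetric projector $P^{-}_{1,\ldots ,m}$ with $m\le 3$ can be recovered inside an appropriate product of R-matrices at these fusion points; this is the content of the first step.

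The second step is to derive the fused $RTT$ relation. Denoting $\mathcal{M}_{\underline{a}}^{(K)}(\lambda )\equiv M_{a_{1}}^{(K)}(\lambda )\cdots M_{a_{m}}^{(K)}(\lambda -(m-1)\eta )$, repeated use of the ordinary $RTT$ relation for $M_{a}^{(K)}$ (which itself follows from the Yang--Baxter equation combined with the scalar Yang--Baxter equation $R_{12}K_{1}K_{2}=K_{2}K_{1}R_{12}$ satisfied by the twist) yields, for two disjoint ordered sets $\underline{a}$ and $\underline{b}$,
\[
\mathcal{R}_{\underline{a},\underline{b}}(\lambda -\mu )\,\mathcal{M}_{\underline{a}}^{(K)}(\lambda )\,\mathcal{M}_{\underline{b}}^{(K)}(\mu )=\mathcal{M}_{\underline{b}}^{(K)}(\mu )\,\mathcal{M}_{\underline{a}}^{(K)}(\lambda )\,\mathcal{R}_{\underline{a},\underline{b}}(\lambda -\mu ),
\]
where $\mathcal{R}_{\underline{a},\underline{b}}(\lambda -\mu )$ is a prescribed ordered product of ordinary R-matrices at shifted arguments. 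Multiplying by $P^{-}_{\underline{a}}$ and $P^{-}_{\underline{b}}$ on the appropriate sides and using the result of the first step to move these projectors through $\mathcal{R}_{\underline{a},\underline{b}}$ up to a non-vanishing scalar, one obtains the fused intertwining relation for $\widetilde{\mathcal{M}}_{\underline{a}}^{(K)} \equiv P^{-}_{\underline{a}}\,\mathcal{M}_{\underline{a}}^{(K)}$. Taking the trace over both fused auxiliary spaces, and invoking the invertibility of the fused R-matrix on $\operatorname{Im}P^{-}_{\underline{a}}\otimes \operatorname{Im}P^{-}_{\underline{b}}$ for generic $\lambda -\mu$, then yields the mutual commutativity $[T_{m}^{(K)}(\lambda ),T_{n}^{(K)}(\mu )]=0$ for all $m,n\in \{1,2,3\}$.

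For the centrality of $T_{3}^{(K)}(\lambda )=\operatorname{q-det}M^{(K)}(\lambda )$, the crucial point is that $V\cong \mathbb{C}^{3}$ makes $\dim \bigwedge^{3}V=1$, so $P^{-}_{1,2,3}$ has rank one. The fused auxiliary space for $m=3$ is therefore one-dimensional, and the fused intertwining relation with $n=1$ reduces to an identity in which the fused R-matrix $\mathcal{R}_{\underline{a},b}(\lambda -\mu )$ acts on this line by a non-vanishing scalar for generic $\lambda -\mu $. One can then divide through and obtain $T_{3}^{(K)}(\lambda )\,M_{b}^{(K)}(\mu )=M_{b}^{(K)}(\mu )\,T_{3}^{(K)}(\lambda )$ at the operator level, with no trace over $b$ needed, which is the claimed centrality.

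The main obstacle in the whole argument is not conceptual but combinatorial: one has to carry the bookkeeping of the fusion normalizations $(-2\eta )$ and of the ordering of $R$-matrix factors in $\mathcal{R}_{\underline{a},\underline{b}}$ with enough care to justify both the invertibility statement used in the trace argument and the rank-one collapse used in the centrality argument. The restriction $m\le 3=\dim V$ is precisely the regime in which these two checks succeed without obstruction, and it is also what forces the list of admissible fused transfer matrices in the proposition to terminate at $m=3$.
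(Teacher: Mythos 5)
Your proposal is correct and follows precisely the standard fusion argument of the references \cite{KulRS81,KulR82,KunNS94} to which the paper attributes this proposition without reproving it: the degeneration $R_{a,b}(-\eta)=-2\eta\,P^{-}_{a,b}$, the fused $RTT$ relation, and the rank-one collapse of $P^{-}_{1,2,3}$ on $\bigwedge^{3}\mathbb{C}^{3}$. The only point worth tightening is the centrality step: the space on which the fused $R$-matrix must act as a scalar is $\operatorname{Im}P^{-}_{\underline{a}}\otimes V_{b}\cong\mathbb{C}^{3}$, not a line, so you should invoke irreducibility of this tensor product (Schur's lemma for the $gl_{3}$-invariant rational $R$-matrix) or a direct computation rather than the one-dimensionality of $\operatorname{Im}P^{-}_{\underline{a}}$ alone.
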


Moreover, the general fusion identities \cite{KulRS81,KulR82,KunNS94} imply
the following
\begin{proposition}[\hspace{-0.02cm}\protect\cite{KulRS81,KulR82,KunNS94}]
The quantum determinant has the following explicit form:%
\begin{equation}
\operatorname{q-det}M^{(K)}(\lambda )=\det K\ \prod_{b=1}^{\mathsf{N%
}}\left[ (\lambda -\xi _{b}+\eta )\prod_{m=1}^{2}(\lambda -\xi _{b}-m\eta )%
\right] ,
\end{equation}%
and $T_{1}^{\left( K\right) }(\lambda )$ and $T_{2}^{\left( K\right)
}(\lambda )$ are degree $\mathsf{N}$ and $2\mathsf{N}$ in $\lambda $.
Their asymptotics are central and coincides with the corresponding two
spectral invariants of the matrix $K$:%
\begin{equation}
T_{1}^{(K,\infty )}\equiv \lim_{\lambda \rightarrow \infty }\lambda ^{-%
\mathsf{N}}T_{1}^{\left( K\right) }(\lambda )=\tr K, \quad%
T_{2}^{(K,\infty )}\equiv \lim_{\lambda \rightarrow \infty }\lambda ^{-2%
\mathsf{N}}T_{2}^{\left( K\right) }(\lambda )=\frac{\left( \tr K\right)
^{2}-\tr K^{2}}{2}.
\end{equation}
The fusion identities hold:%
\begin{equation}
T_{1}^{\left( K\right) }(\xi _{a})T_{m}^{\left( K\right) }(\xi _{a}-\eta
)=T_{m+1}^{\left( K\right) }(\xi _{a}),\text{ \ }\forall m\in \{1,2\},
\end{equation}%
and $T_{2}^{\left( K\right) }(\lambda )$ has the following $\mathsf{N}$
central zeroes%
\begin{equation}
T_{2}^{\left( K\right) }(\xi _{a}+\eta )=0.
\end{equation}
\end{proposition}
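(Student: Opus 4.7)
The plan is to prove the four statements in sequence, each by auxiliary-space manipulations that exploit the compatibility of the $R$-matrices with the antisymmetric projectors $P^{-}_{1,\dots,m}$.

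First, for the quantum determinant formula, I would use that $P^{-}_{1,2,3}$ is of rank one in $(\mathbb{C}^{3})^{\otimes 3}$, so the product $M_{1}^{(K)}(\lambda) M_{2}^{(K)}(\lambda-\eta) M_{3}^{(K)}(\lambda-2\eta)$, restricted to the image of $P^{-}_{1,2,3}$, acts as a scalar function of $\lambda$ times the identity on $\mathcal H$; centrality is automatic. To extract the explicit form, I would run a telescoping Yang-Baxter argument that factorizes this scalar as $\operatorname{tr}_{1,2,3}[P^{-}_{1,2,3} K_{1} K_{2} K_{3}] = \det K$ times a product over sites of single-site quantum determinants of the blocks $R_{a,b}(\lambda-\xi_{b})$. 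Each local factor is computed by a direct finite-dimensional calculation on the one-dimensional totally antisymmetric subspace of $(\mathbb{C}^{3})^{\otimes 3}$, yielding $(\lambda - \xi_{b} + \eta)(\lambda - \xi_{b} - \eta)(\lambda - \xi_{b} - 2\eta)$.

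Second, the polynomial degrees are immediate from the definitions, and the leading coefficients come from replacing each $R$-matrix by its asymptotic $\lambda I_{a,b}$ in the auxiliary spaces. This gives $T_{1}^{(K,\infty)} = \operatorname{tr}(K)$, and $T_{2}^{(K,\infty)} = \operatorname{tr}_{1,2}[P^{-}_{1,2} K_{1} K_{2}]$. Expanding $P^{-}_{1,2} = \tfrac{1}{2}(I - \mathbb{P}_{1,2})$ and using $\operatorname{tr}_{1,2}[\mathbb{P}_{1,2}(K_{1} K_{2})] = \operatorname{tr}(K^{2})$ yields the announced $\tfrac{1}{2}\bigl((\operatorname{tr} K)^{2} - \operatorname{tr} K^{2}\bigr)$.

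Third, the fusion identities rest on the specialization $R_{a,b}(0) = \eta \mathbb{P}_{a,b}$. At $\lambda = \xi_{a}$, inside $M_{1}^{(K)}(\xi_{a})$ the factor $R_{1,a}(0)$ becomes a permutation fusing the first auxiliary space with quantum site $a$. Pushing this permutation through the remaining monodromies by iterated Yang-Baxter, and then decomposing $P^{-}_{1,\dots,m+1}$ in terms of $P^{-}_{2,\dots,m+1}$ via the standard recursion $P^{-}_{1,\dots,m+1} = \tfrac{1}{m+1} P^{-}_{2,\dots,m+1}\bigl(I - \sum_{j=2}^{m+1} \mathbb{P}_{1,j}\bigr) P^{-}_{2,\dots,m+1}$, the trace over auxiliary space $1$ separates and produces exactly $T_{1}^{(K)}(\xi_{a}) \cdot T_{m}^{(K)}(\xi_{a} - \eta)$, as required.

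Finally, the central zeroes of $T_{2}^{(K)}$ follow by an analogous manipulation at $\lambda = \xi_{a} + \eta$. There the factor $R_{2,a}(0)$ in $M_{2}^{(K)}(\xi_{a})$ fuses auxiliary $2$ with quantum site $a$, and after this swap the factor $R_{1,a}(\eta)$ sitting inside $M_{1}^{(K)}(\xi_{a} + \eta)$ effectively becomes $R_{1,2}(\eta) = 2\eta P^{+}_{1,2}$ acting between the two auxiliary spaces. Since $P^{-}_{1,2} P^{+}_{1,2} = 0$, the whole trace vanishes. The main obstacle is the bookkeeping of the successive Yang-Baxter commutations required to isolate the vanishing factor $P^{-}_{1,2} R_{1,2}(\eta)$ and, for the fusion identities, to rearrange the nested antisymmetrizers into the two independent ones; this is the only genuinely technical piece, everything else being a scalar computation on a rank-one image or the extraction of a leading polynomial coefficient.
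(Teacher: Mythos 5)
This proposition is not proved in the paper at all: it is imported verbatim from \cite{KulRS81,KulR82,KunNS94}, so there is no in-paper argument to compare against. Judged on its own merits, your sketch is sound for three of the four claims. The quantum determinant computation (rank-one image of $P^{-}_{1,2,3}$, commutation of $R$-matrices on disjoint spaces to bring the three factors $R_{j,b}$ together, the local eigenvalue $(\lambda-\xi_b+\eta)(\lambda-\xi_b-\eta)(\lambda-\xi_b-2\eta)$, and $\tr_{1,2,3}[P^{-}_{1,2,3}K_1K_2K_3]=\det K$), the asymptotics via $P^-_{1,2}=\tfrac12(I-\mathbb{P}_{1,2})$, and the central zero $T_2^{(K)}(\xi_a+\eta)=0$ via $R_{2,a}(0)=\eta\,\mathbb{P}_{2,a}$ converting $R_{1,a}(\eta)$ into $2\eta P^+_{1,2}$, which is then annihilated by $P^-_{1,2}$ because the intervening fused blocks preserve $\mathrm{Im}\,P^+_{1,2}$ — all of this is the standard route and is correct.

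The fusion identity is where there is a genuine gap. The identity $T_1^{(K)}(\lambda)T_m^{(K)}(\lambda-\eta)=T_{m+1}^{(K)}(\lambda)$ is \emph{false} at generic $\lambda$: writing $I_1\otimes P^-_{2,\dots,m+1}=P^-_{1,\dots,m+1}+Q$, one has for all $\lambda$ the relation $T_1^{(K)}(\lambda)T_m^{(K)}(\lambda-\eta)=T_{m+1}^{(K)}(\lambda)+\overline{T}_m^{(K)}(\lambda)$, where $\overline{T}_m^{(K)}$ is the transfer matrix attached to the hook projector $Q$ and is not identically zero (for $m=1$ it is the symmetric-square transfer matrix). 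The entire content of the statement is therefore that $\overline{T}_m^{(K)}(\xi_a)=0$, i.e.\ that the hook transfer matrix has trivial zeros at the inhomogeneities — a statement of exactly the same nature as $T_2^{(K)}(\xi_a+\eta)=0$, requiring one to exhibit a symmetrizer inside $Q$ that gets annihilated after the degeneration $R_{1,a}(0)=\eta\,\mathbb{P}_{1,a}$. Your sketch instead asserts that after inserting the recursion $P^-_{1,\dots,m+1}=\tfrac{1}{m+1}P^-_{2,\dots,m+1}\bigl(I-\sum_{j}\mathbb{P}_{1,j}\bigr)P^-_{2,\dots,m+1}$ ``the trace over auxiliary space $1$ separates and produces exactly $T_1\cdot T_m$''; taken literally this mechanism makes no use of $\lambda=\xi_a$ and would prove the identity for all $\lambda$. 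Note also that the identity term of the recursion alone yields only $\tfrac{1}{m+1}T_1(\xi_a)T_m(\xi_a-\eta)$; the $m$ cross terms $\tr[P^-_{2,\dots,m+1}\mathbb{P}_{1,j}M_1\cdots M_{m+1}]$ must each be shown to contribute another $\tfrac{1}{m+1}T_1T_m$ precisely because of the degeneration at $\xi_a$, and this is where the actual work lies. Describing this as ``bookkeeping of Yang--Baxter commutations'' and ``rearranging nested antisymmetrizers'' mischaracterizes it: the step is not combinatorial but hinges on a vanishing that only occurs at the special points, and the sketch does not identify the mechanism that produces it.
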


Let us introduce the functions%
\begin{eqnarray}
g_{a,\underline{\mathbf{h}}}^{\left( m\right) }(\lambda ) &=&\prod_{b\neq
a,b=1}^{\mathsf{N}}\frac{\lambda -\xi _{b}^{(h_{b})}}{\xi _{a}^{(h_{a})}-\xi
_{b}^{(h_{b})}}\prod_{b=1}^{(m-1)\mathsf{N}}\frac{1}{\xi _{a}^{(h_{a})}-\xi
_{b}^{(-1)}}, \\
a(\lambda -\eta ) &=&d(\lambda )=\prod_{a=1}^{\mathsf{N}}(\lambda -\xi _{a}),%
\quad \xi _{b}^{(h)}=\xi _{b}-h\eta ,\quad \underline{\mathbf{h}}%
=\{h_{1},...,h_{\mathsf{N}}\},
\end{eqnarray}%
and%
\begin{equation}
T_{m,\underline{\mathbf{h}}}^{(K,\infty )}(\lambda )=T_{m}^{(K,\infty
)}\prod_{b=1}^{\mathsf{N}}(\lambda -\xi _{b}^{(h_{n})}) .
\end{equation}%
The known central zeroes and asymptotic behavior imply that the
transfer matrix $T_{2}^{\left( K\right) }(\lambda )$ is completely
characterized in terms of $T_{1}^{\left( K\right) }(\lambda )$, e.g. by the
following interpolation formula%
\begin{equation}
T_{2}^{\left( K\right) }(\lambda )=d(\lambda -\eta )\left( T_{2,\underline{%
\mathbf{h}}=\underline{\mathbf{0}}}^{(K,\infty )}(\lambda )+\sum_{a=1}^{%
\mathsf{N}}g_{a,\underline{\mathbf{h}}=\underline{\mathbf{0}}}^{\left(
2\right) }(\lambda )T_{1}^{\left( K\right) }(\xi _{a}-\eta )T_{1}^{\left(
K\right) }(\xi _{a})\right) ,  \label{T-Func-form-m}
\end{equation}%
where $\underline{\mathbf{h}}=\underline{\mathbf{0}}$ means that for all $%
k\in \{1,...,\mathsf{N}\}$ we have $h_{k}=0$. 

From now on when we
have an $\underline{\mathbf{h}}$ with all the elements equal to the integer
0, 1 or 2 we use directly the bold underlined notation $\underline{\mathbf{0}}
$, $\underline{\mathbf{1}}$ and $\underline{\mathbf{2}}$.

\subsection{On SoV bases construction in our approach}
\label{ssec:sov-bases-our-approach}

The general Proposition 2.4 of \cite{MaiN18} for the construction of the SoV
co-vector basis applies in particular to the fundamental representation of
the $gl_3$ rational Yang-Baxter algebra. Note that we have presented the
construction for the co-vector SoV basis just to get a factorized form of the
wave-functions of the transfer matrix eigenvectors in terms of the transfer
matrix eigenvalues. Evidently, the same construction applies as well to
define a vector SoV basis in which the wave-functions of the transfer matrix
eigenco-vectors have the same factorized form. In order to clarify this, we
present in the following a proposition for this $gl_3$ case. Let $K$ be a $%
3\times 3$ simple spectrum matrix and let us denote with $K_{J}$ the Jordan form of
the matrix $K$ and $W_{K}$ the invertible matrix defining the change of
basis:%
\begin{equation}
K=W_{K}K_{J}W_{K}^{-1}\text{ \ with \ }K_{J}=\left( 
\begin{array}{ccc}
\mathsf{k}_{0} & \mathsf{y}_{1} & 0 \\ 
0 & \mathsf{k}_{1} & \mathsf{y}_{2} \\ 
0 & 0 & \mathsf{k}_{2}%
\end{array}%
\right) .
\end{equation}%
The requirement $K$ simple spectrum implies that we can reduce ourselves to the
following three possible cases:%
\begin{align}
i)\text{ }\mathsf{k}_{i}& \neq \mathsf{k}_{j},\ \forall i,j\in \{0,1,2\}%
\text{ and }\mathsf{y}_{1}=\mathsf{y}_{2}=0, \\
ii)\text{ }\mathsf{k}_{0}& =\mathsf{k}_{1}\neq \mathsf{k}_{2},\text{ }%
\mathsf{y}_{1}=1,\text{ }\mathsf{y}_{2}=0, \\
ii)\text{ }\mathsf{k}_{0}& =\mathsf{k}_{1}=\mathsf{k}_{2},\text{ }\mathsf{y}%
_{1}=1,\text{ }\mathsf{y}_{2}=1.
\end{align}%
Then, 
\begin{proposition}
Let $K$ be a $3\times 3$ simple spectrum matrix, then for almost any choice of $%
\langle L|$, $|R\rangle $ and of the inhomogeneities under the condition $(%
\ref{Inhomog-cond})$, the following set of co-vectors and vectors:%
\begin{align}
\langle L|\prod_{n=1}^{\mathsf{N}}\Big(T_{1}^{(K)}(\xi _{n})\Big)^{h_{n}}\text{ \
for any }\{h_{1},...,h_{\mathsf{N}}\} &\in \{0,1,2\}^{\mathsf{N}},
\label{Left-SoV-B-0} \\
\prod_{n=1}^{\mathsf{N}}\Big(T_{1}^{(K)}(\xi _{n})\Big)^{h_{n}}|R\rangle \text{ \
for any }\{h_{1},...,h_{\mathsf{N}}\} &\in \{0,1,2\}^{\mathsf{N}}, \label{Right-SoV-B-0}
\end{align}%
forms a co-vector and vector basis of $\mathcal{H}$, respectively. In
particular, we can take the following tensor product forms:%
\begin{equation}
\langle L|=\bigotimes_{a=1}^{\mathsf{N}}(x,y,z)_{a}\Gamma _{W}^{-1},\text{ \ 
}|R\rangle \text{ }=\bigotimes_{a=1}^{\mathsf{N}}\Gamma
_{W}(r,s,t)_{a}^{t_{a}}\text{ },\text{ \ }\Gamma _{W}=\bigotimes_{a=1}^{%
\mathsf{N}}W_{K,a}
\end{equation}%
simply asking in the case i) $x\,y\,z\neq 0$ for the co-vector and $r\,s\,t\neq 0$
for the vector, in the case ii) $x\,z\neq 0$ for the co-vector and $s\,t\neq 0$
for the vector, in the case iii) $x\,\neq 0$ for the co-vector and $t\neq 0$
for the vector.
\end{proposition}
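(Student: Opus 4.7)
The plan is to prove the proposition in two stages: first establish that both sets form bases for almost any reference co-vector/vector and inhomogeneities, and second verify the explicit tensor product forms under the stated non-zero conditions.

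For the first stage, the co-vector basis claim is exactly Proposition~2.4 of \cite{MaiN18} applied to the $gl_3$ fundamental representation. The vector basis claim follows by an analogous argument or by a duality argument: once one knows (from \cite{MaiN18}) that the commutative algebra $\mathcal{B}_K \subset \mathrm{End}(\mathcal{H})$ generated by $\{T_1^{(K)}(\xi_n)\}_{n=1}^{\mathsf{N}}$ has dimension $3^{\mathsf{N}}$ and admits the $3^{\mathsf{N}}$ basic products $\prod_n (T_1^{(K)}(\xi_n))^{h_n}$ with $h_n \in \{0,1,2\}$ as a basis, the existence of a cyclic vector $|R\rangle$ (making the $\mathcal{B}_K$-orbit fill $\mathcal{H}$) is a standard Zariski-open condition equivalent to the existence of a cyclic co-vector, so the vector statement is equivalent to the co-vector one already established. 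The equivalence is transparent at the level of transposition of operators, which sends the family of transfer matrices with twist $K$ to an isomorphic commuting family on $\mathcal{H}^*$, preserving the simple-spectrum/Jordan structure.

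For the second stage, one exhibits the tensor product forms as explicit choices realising the cyclic condition. After conjugating by $\Gamma_W$ one may assume $K$ is already in Jordan form, so the analysis becomes a site-by-site study of the action of $T_1^{(K)}(\xi_n)$ on the factorized reference. The main tools are: the asymptotic relations giving $T_{m}^{(K,\infty)}$ as the elementary symmetric polynomials of the eigenvalues of $K$, the fusion identity $T_1^{(K)}(\xi_a)\, T_1^{(K)}(\xi_a - \eta) = T_2^{(K)}(\xi_a)$, and the local simplification $R_{a,n}(0) = \eta\,\mathbb{P}_{a,n}$ at the inhomogeneity which makes $T_1^{(K)}(\xi_n)$ act in a controlled way on the $n$-th factor of a tensor product reference. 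The genericity of $\{\xi_a\}$ imposed by \eqref{Inhomog-cond} enters as a Vandermonde-type non-degeneracy ensuring that the $3^{\mathsf{N}}$ products remain linearly independent.

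The hard part will be the non-diagonalizable cases (ii) and (iii), where the stated conditions $x z \neq 0$ and $x \neq 0$ are strictly weaker than the diagonal condition $x y z \neq 0$. The key phenomenon is that the nilpotent parts $\mathsf{y}_1, \mathsf{y}_2 = 1$ of the Jordan block of $K$ propagate through $T_1^{(K)}(\xi_n)$: repeated action with $(T_1^{(K)}(\xi_n))^{h_n}$ for $h_n \in \{0,1,2\}$ produces components in every direction of the local space at site $n$ even when the reference is supported only on the ``top'' Jordan vector. Making this precise reduces to a finite-dimensional algebraic condition at each single site, verifiable by direct inspection using the explicit form of the $R$-matrix and the monodromy at $\lambda = \xi_n$; checking this condition uniformly across the three cases (i), (ii), (iii) is the main technical content of the verification.
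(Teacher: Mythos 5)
Your proposal follows essentially the same route as the paper: invoke Proposition 2.4 of \cite{MaiN18} (polynomiality in the $\xi_n$ together with the degeneration of $T_1^{(K)}(\xi_n)$ to the local twist matrix in the asymptotic limit of the inhomogeneities) to reduce everything to the single-site condition that $(x,y,z),(x,y,z)K_J,(x,y,z)K_J^2$ (resp.\ the transposed family acting on $(r,s,t)^t$) span $\mathbb{C}^3$, and then read off the conditions $xyz\neq0$, $xz\neq0$, $x\neq0$ from the three explicit $3\times3$ determinants $-xyz\,V(\mathsf{k}_0,\mathsf{k}_1,\mathsf{k}_2)$, $x^2z\,V^2(\mathsf{k}_0,\mathsf{k}_2)$, $x^3$, which is exactly the content of the paper's proof. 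The only point to watch is your alternative ``duality'' shortcut for the right basis: cyclicity of $\mathcal{H}$ and of $\mathcal{H}^*$ over a commutative algebra of maximal dimension are equivalent only when that algebra is Frobenius (which does hold here, the Bethe algebra being generated by a single simple-spectrum operator, but this requires an extra observation), so the paper's direct repetition of the determinant computation for the vectors $K_J^{i-1}(r,s,t)^t$ is the cleaner route.
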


\begin{proof}
As shown in the general Proposition 2.4 of \cite{MaiN18}, the fact that the
transfer matrix in the inhomogeneity $\xi _{n}$ reduces to the twist matrix
in the local space $n$ dressed by invertible products of  $R$-matrices implies that the set of
co-vectors and vectors above defined form bases of $\mathcal{H}^{*}$ and $\mathcal{H}$, once the
following co-vectors and vectors (obtained by taking the asymptotic limit over
the $\xi _{a}$)%
\begin{eqnarray}
&&(x,y,z)W_{K}^{-1},(x,y,z)W_{K}^{-1}K,(x,y,z)W_{K}^{-1}K^{2}, \\
&&W_{K}(r,s,t)^{t},KW_{K}(r,s,t)^{t},K^{2}W_{K}(r,s,t)^{t},
\end{eqnarray}%
or equivalently:%
\begin{eqnarray}
&&(x,y,z),(x,y,z)K_{J},(x,y,z)K_{J}^{2}, \\
&&(r,s,t)^{t},K_{J}(r,s,t)^{t},K_{J}^{2}(r,s,t)^{t},
\end{eqnarray}%
form bases in $\mathbb{C}^{3}$, that is the next determinants are non-zero\footnote{Here and in the following, we denote by $V(x_1, \dots, x_n)$ the standard Vandermonde determinant $\prod_{i < j} (x_j -x_i)$. }:%
\begin{eqnarray}
\text{det}\left( (x,y,z)K_{J}^{i-1}e_{j}\right) _{i,j\in \{1,2,3\}}
&=&\left\{ 
\begin{array}{l}
-xyzV(\mathsf{k}_{0},\mathsf{k}_{1},\mathsf{k}_{2})\text{ \ \ in the case i)}
\\ 
x^{2}zV^{2}(\mathsf{k}_{0},\mathsf{k}_{2})\text{ in the case ii)} \\ 
x^{3}\text{ in the case iii)}%
\end{array}%
\right. , \\
\text{det} \left( e_{j}^{t}K_{J}^{i-1}(r,s,t)\right) _{i,j\in \{1,2,3\}}
&=&\left\{ 
\begin{array}{l}
rstV(\mathsf{k}_{0},\mathsf{k}_{1},\mathsf{k}_{2})\text{ \ \ in the case i)}
\\ 
s^{2}tV^{2}(\mathsf{k}_{0},\mathsf{k}_{2})\text{ in the case ii)} \\ 
t^{3}\text{ in the case iii)}%
\end{array}%
\right. ,
\end{eqnarray}%
which leads to the given requirements on the components $x,y,z,r,s,t\in 
\mathbb{C}$ of the three dimensional co-vector and vectors.
\end{proof}

Note that both these choices of co-vector and vector SoV bases are perfectly
fine to fix the transfer matrix spectrum, by factorized wave functions in
terms of transfer matrix eigenvalues for both eigenvectors and
eigenco-vectors. However, if we wish to go beyond the spectrum, and compute
matrix elements of local operators starting with scalar products of the
so-called separate states, we need an appropriate choice of the co-vector and
vector SoV bases. In the rank one quantum integrable models, the SoV
analysis so far developed \cite%
{Skl85,Skl90,Skl92a,Skl95,Skl96,BabBS96,Smi98a,Smi01,DerKM01,DerKM03,DerKM03b,BytT06,vonGIPS06,FraSW08,AmiFOW10,NicT10,Nic10a,Nic11,FraGSW11,GroMN12,GroN12,Nic12,Nic13,Nic13a,Nic13b,GroMN14,FalN14,FalKN14,KitMN14,NicT15,LevNT15,NicT16,KitMNT16,MarS16,KitMNT17,MaiNP17,MaiNP18}
leads to the expectation that the transfer matrix construction of the
co-vector and vector SoV bases can be defined in such a way that these are
orthogonal bases or similarly that the co-vector and vector Sklyanin's $B$%
-operator eigenbases both implement the separation of variables for the
transfer matrix spectrum. As we will see in the next, in the higher rank
quantum integrable models, this is not directly the case if the charges used
to construct the co-vector and vector SoV basis are simple powers, or even fusion, of the
transfer matrices for general twist $K$.

\section{Scalar products for co-vector/vector SoV bases}
\label{sec:scalar-products}

\subsection{Another construction of co-vector/vector SoV bases}
Let us first introduce a slight modification of the co-vector SoV basis w.r.t.
the standard one introduced in the previous section by changing the set of conserved charges used to construct them. It reads\footnote{Throughout this section we use compact notations for the left and right SoV bases defined as in \eqref{Co-vector-SoV-basis} and \eqref{Vector-SoV-basis}.}:
\begin{equation}
\langle {\underline{\mathbf{h}}}|\equiv \langle h_{1},...,h_{\mathsf{N}}|=\langle \underline{\mathbf{1}}%
|\prod_{n=1}^{\mathsf{N}}T_{2}^{\left( K\right) \delta _{h_{n},0}}(\xi
_{n}^{(1)})T_{1}^{\left( K\right) \delta _{h_{n},2}}(\xi _{n}), \quad%
\forall \text{ }h_{n}\in \{0,1,2\},  \label{Co-vector-SoV-basis}
\end{equation}%
where $\langle $\underline{$\mathbf{1}$}$|$ is some generic co-vector of $%
\mathcal{H}$. Let us remark that for an invertible twist matrix $K$\ using
the identification:%
\begin{equation}
\langle \underline{\mathbf{1}}| = \langle L| \prod_{n=1}^{\mathsf{N}}T_{1}^{(K)}(\xi _{n}) \ ,
\end{equation}%
the two sets of co-vectors defined in $(\ref{Left-SoV-B-0})$ and $(\ref{Co-vector-SoV-basis})$ are identical up to a non-zero normalization of each co-vector; hence the two sets are related by the action of a diagonal matrix. 
To be more precise, with such an identification and using the fact that for an invertible $K$-matrix the operator $T_{2}^{\left( K\right)}(\xi_{n}^{(1)})$ is proportional to the inverse of $T_{1}^{\left( K\right)}(\xi _{n})$ due to the fusion relations, we get:
\begin{equation}
\langle {\underline{\mathbf{h}}}|= \alpha_{{\underline{\mathbf{h}}}}  \langle L%
|\prod_{n=1}^{\mathsf{N}} T_{1}^{\left( K\right) \delta _{h_{n},2}-\delta _{h_{n},0}+1}(\xi _{n}), \quad%
\forall h_{n}\in \{0,1,2\},  
\end{equation}%
where $\alpha_{{\underline{\mathbf{h}}}}= \prod_{n=1}^{\mathsf{N}} (\operatorname{q-det}M^{(K)}(\xi_n))^{\delta _{h_{n},0}}$ is a non-zero coefficient. Then, being $\delta _{h_{n},2}-\delta _{h_{n},0}+1 = h_n$ for any $h_{n}\in \{0,1,2\}$, we get:
\begin{equation}
\langle {\underline{\mathbf{h}}}|= \alpha_{{\underline{\mathbf{h}}}}  \langle L%
|\prod_{n=1}^{\mathsf{N}} T_{1}^{\left( K\right) h_n}(\xi _{n})\text{\ }%
, \forall \text{ }h_{n}\in \{0,1,2\},  
\end{equation}%
thus proving that the two sets defined in \eqref{Left-SoV-B-0} and in \eqref{Co-vector-SoV-basis} are equivalent bases up to an invertible diagonal matrix made of the non-zero coefficients $\alpha_{{\underline{\mathbf{h}}}}$. Moreover, even if $K$ has zero determinant, it can be proven that the two sets \rf{Co-vector-SoV-basis} and \rf{Left-SoV-B-0} are both SoV bases (see next section), the linear transformation relating them being in that case more involved. 

\subsection{Pseudo-orthogonality conditions of these co-vector/vector SoV
bases}

Here, we show that for the SoV co-vector basis chosen as in $(\ref%
{Co-vector-SoV-basis})$, we can define a \textit{pseudo-orthogonal }%
vector SoV basis which is orthogonal to the left
one for a large set of co-vector/vector couples. We exactly characterize
these \textit{pseudo-orthogonality} conditions and the non-zero couplings of
these co-vector and vector SoV basis. The corresponding \textit{SoV-measure}, 
related to the inverse of the scalar products matrix, is completely characterized in the next subsection. It is the
starting ingredient to compute matrix elements of local operators in this
SoV framework. This will be further employed in forthcoming analysis in this 
$gl_{3}$ case as, despite the absence of direct orthogonality, the \textit{%
SoV-measure} stays reasonably simple to be used in practical computations.

Let us now introduced the vector $|\underline{\mathbf{0}}\rangle $ uniquely
characterized by%
\begin{equation}
\langle {\underline{\mathbf{k}}} |\underline{\mathbf{0}}\rangle =\prod_{a=1}^{%
\mathsf{N}}\delta _{0,k_{a}}.  \label{Def-R0}
\end{equation}%
Then we have the following
\begin{proposition}
\label{SoVbases}
Let $K$ be a $3\times 3$ simple spectrum matrix, then for
almost any choice of the co-vector $\langle $\underline{$\mathbf{1}$}$|$, of the vector $|%
\underline{\mathbf{0}}\rangle $ and 
of the inhomogeneities under the condition $(\ref{Inhomog-cond})$, the set
of co-vectors $(\ref{Co-vector-SoV-basis})$ 
\begin{equation}
\langle {\underline{\mathbf{h}}}|=\langle \underline{\mathbf{1}}%
|\prod_{n=1}^{\mathsf{N}}T_{2}^{\left( K\right) \delta _{h_{n},0}}(\xi
_{n}^{(1)})T_{1}^{\left( K\right) \delta _{h_{n},2}}(\xi _{n}),   \label{Co-vector-SoV-basis-def}
\end{equation}%
and the set of vectors:%
\begin{equation}
| {\underline{\mathbf{h}}}\rangle \equiv \prod_{n=1}^{\mathsf{N}%
}T_{2}^{\left( K\right) \delta _{h_{n},1}}(\xi _{n})T_{1}^{\left( K\right)
\delta _{h_{n},2}}(\xi _{n})|\underline{\mathbf{0}}\rangle ,
\label{Vector-SoV-basis}
\end{equation}%
form co-vector and vector basis of $\mathcal{H}^{*}$ and $\mathcal{H}$, respectively. In
particular, we can take $\langle $\underline{$\mathbf{1}$}$|$ of the
following tensor product form:%
\begin{equation}
\langle \underline{\mathbf{1}}|=\bigotimes_{a=1}^{\mathsf{N}%
}(x,y,z)_{a}\Gamma _{W}^{-1},\text{ \ \ }\Gamma _{W}=\bigotimes_{a=1}^{%
\mathsf{N}}W_{K,a},  \label{S-co-vector}
\end{equation}%
simply asking $x\,y\,z\neq 0$ in the case i), $x\,\,z\neq 0$ in the case
ii), $x\,\neq 0$ in the case iii). Then the associated vector $|%
\underline{\mathbf{0}}\rangle $ having the property \eqref{Def-R0} also has tensor product form:%
\begin{equation}
|\underline{\mathbf{0}}\rangle =\Gamma _{W}\bigotimes_{a=1}^{\mathsf{N}%
}|0,a\rangle ,  \label{S-vector}
\end{equation}%
where we have defined%
\begin{equation}
|0,a\rangle  = \frac{1}{\Delta} \left( 
\begin{array}{c}
\mathsf{k}_{2}(y\mathsf{k}_{0}-x\mathsf{y}_{1})(z\mathsf{k}_{2}+y\mathsf{y}%
_{2})-(y\mathsf{k}_{1}+x\mathsf{y}_{1})(x\mathsf{y}_{1}\mathsf{y}_{2}+%
\mathsf{k}_{0}(z\mathsf{k}_{1}-y\mathsf{y}_{2})) \\ 
x(x\mathsf{k}_{0}\mathsf{y}_{1}\mathsf{y}_{2}+\mathsf{k}_{0}^{2}(z\mathsf{k}%
_{1}-y\mathsf{y}_{2})-\mathsf{k}_{1}\mathsf{k}_{2}(z\mathsf{k}_{2}+y\mathsf{y%
}_{2})) \\ 
x(\mathsf{k}_{0}+\mathsf{k}_{1})\mathsf{k}_{2}(y\mathsf{k}_{1}+x\mathsf{y}%
_{1}-y\mathsf{k}_{0})%
\end{array}%
\right)_{a} , 
 \label{S-vector-a}
\end{equation}%
with 
\begin{equation}
\Delta = x \Big(y\mathsf{k}_{0}-y\mathsf{%
k}_{1}-x\mathsf{y}_{1}\Big) \Big(z(\mathsf{k}_{0}-\mathsf{k}_{2})(\mathsf{k}_{1}-%
\mathsf{k}_{2})+\mathsf{y}_{2}\big(y(\mathsf{k}_{2}-\mathsf{k}_{1})+x\mathsf{y}%
_{1}\big)\Big) \operatorname{q-det}M^{(I)}(\xi _{a}-2\eta ) .
\end{equation}
\end{proposition}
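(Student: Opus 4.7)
My plan has three stages: establish the co-vector basis property, derive the tensor product form of $|\underline{\mathbf{0}}\rangle$, and deduce the vector basis property. The crucial mechanism throughout is the local reduction at the inhomogeneities already used in Proposition 2.4 of \cite{MaiN18}: since $R_{a,n}(0)=\eta\mathbb{P}_{a,n}$, tracing out the auxiliary space at $\lambda=\xi_{n}$ shows that $T_{1}^{(K)}(\xi_{n})$ acts on site $n$ as the twist $K_{n}$, conjugated by invertible $R$-matrix products on the remaining sites, and the fusion relations propagate the same picture to $T_{2}^{(K)}(\xi_{n})$ and $T_{2}^{(K)}(\xi_{n}-\eta)$.

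For the co-vector basis \eqref{Co-vector-SoV-basis-def}, I would distinguish the two twist cases. When $K$ is invertible, the fusion identity $T_{1}^{(K)}(\xi_{n})T_{2}^{(K)}(\xi_{n}-\eta)=\operatorname{q-det}M^{(K)}(\xi_{n})$ makes $T_{2}^{(K)}(\xi_{n}-\eta)$ a non-zero multiple of $T_{1}^{(K)}(\xi_{n})^{-1}$, so the family \eqref{Co-vector-SoV-basis-def} differs from \eqref{Left-SoV-B-0} by a non-singular diagonal transformation and the basis property is inherited from the preceding section. When $K$ has zero determinant, I would work directly in the gauge-rotated tensor product basis: after factoring out the invertible $R$-matrix dressings, linear independence of the $3^{\mathsf{N}}$ co-vectors reduces, site by site, to the non-vanishing of the $3\times 3$ determinant whose rows are the images of $(x,y,z)_{a}\Gamma_{W}^{-1}$ under the identity, the local reduction of $T_{1}^{(K)}(\xi_{a})$, and the local reduction of $T_{2}^{(K)}(\xi_{a}-\eta)$. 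A case-by-case check in $i), ii), iii)$ shows this determinant is non-zero precisely under the stated hypotheses on $(x,y,z)$.

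For the explicit form \eqref{S-vector}--\eqref{S-vector-a} of $|\underline{\mathbf{0}}\rangle$, the tensor product structure of $\langle\underline{\mathbf{1}}|$ combined with the site-local reduction just described means that the $3^{\mathsf{N}}$ duality conditions \eqref{Def-R0} decouple into $\mathsf{N}$ independent local systems of three linear equations for the three components of $|0,a\rangle$. Solving by Cramer's rule in the Jordan frame produces the explicit vector \eqref{S-vector-a}; the denominator $\Delta$ is, up to the universal factor $\operatorname{q-det}M^{(I)}(\xi_{a}-2\eta)$ coming from the product of $R$-matrix dressings on the sites $b\neq a$, the determinant of the local linear system, and the genericity conditions on $(x,y,z)$ force $\Delta\neq 0$ in each Jordan case. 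The basis property of the vectors \eqref{Vector-SoV-basis} then follows by a dual version of the same argument: with $|\underline{\mathbf{0}}\rangle$ now explicitly non-zero and of tensor product form, the action of $T_{1}^{(K)}(\xi_{n})$ and $T_{2}^{(K)}(\xi_{n})$ on its site-$n$ component is captured by a local $3\times 3$ matrix whose non-vanishing determinant is inherited from the explicit formula for $|0,a\rangle$.

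The main obstacle, in my view, is the uniform bookkeeping behind \eqref{S-vector-a}: although each local problem is only $3\times 3$, extracting a single closed-form expression valid across the three Jordan cases i), ii), iii) requires a careful tracking of the quantum-determinant factors produced by the fusion relation when unpacking $T_{2}^{(K)}(\xi_{n}-\eta)$, together with the invertible $R$-matrix dressings acting on the remaining sites. This technical content is what appendix \ref{sec:explicit-form} is designed to handle; everything else in the proposition reduces to linear algebra at each quantum site.
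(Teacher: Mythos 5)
Your proposal is correct and arrives at the same local $3\times 3$ determinant conditions as the paper, but by a somewhat different route for the basis property. The paper exploits the polynomial dependence on the inhomogeneities: it suffices to verify the statement at a single point of parameter space, and the scaling $\xi_n=n\xi$, $\xi\to\infty$ is chosen so that $T_1^{(K)}(\xi_n)$, $T_2^{(K)}(\xi_n^{(1)})$ and $T_2^{(K)}(\xi_n)$ become, at leading order, genuinely local operators proportional to $K_n$, to the adjugate $\tilde K_n$ and to $K_n^2-K_n\operatorname{tr}K$; the site decoupling is then automatic and the whole question collapses to the non-vanishing of $\det\big((x,y,z),(x,y,z)K_J,(x,y,z)\tilde K_J\big)$ with no $R$-matrix bookkeeping. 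You instead work at the actual inhomogeneity points via $R(0)=\eta\mathbb{P}$, which is exact but means that for a product of transfer matrices over several sites the invertible dressings interleave; "factoring them out" site by site is precisely the Yang--Baxter reordering of formula \eqref{Tran-Prod} together with the eigenco-vector property of the tensor-product reference state, so your basis-property argument silently relies on the same Appendix \ref{sec:explicit-form} machinery that you invoke only for the explicit form of $|0,a\rangle$ --- this is a gap in emphasis rather than in substance, but it should be acknowledged. Your shortcut for invertible $K$, relating \eqref{Co-vector-SoV-basis-def} to \eqref{Left-SoV-B-0} through the fusion identity and a non-singular diagonal transformation, is exactly the observation the paper makes at the start of Section \ref{sec:scalar-products} and cleanly dispatches that case. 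The remaining ingredients --- Cramer's rule for $|0,a\rangle$ in the Jordan frame, the non-vanishing of $\Delta$ under the stated genericity conditions in each of the cases i), ii), iii), and the dual argument for the vector basis via the span of $|0,a\rangle$, $K|0,a\rangle$, $K^2|0,a\rangle$ --- match the paper's Appendix \ref{sec:explicit-form} and the closing lines of its proof.
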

\begin{proof}
The proof that these two sets are indeed bases of the Hilbert space and its dual can be performed along the same lines as the one presented already in \cite{MaiN18} and in the previous section. Namely, using the polynomial character of all the expressions involved in the inhomogeneity parameters $\xi_n$ it is enough to prove the proposition in some point in the parameter space. 
This is achieved by scaling the inhomogeneity parameters from a single sclar, as $\xi_n = n\xi$, and sending the parameter $\xi$ to infinity. In turn, this amounts to obtain the asymptotic behavior of the transfer matrices in that limit. 
The leading term for the operator $T_{1}^{\left( K\right)}(\xi _{n})$ is given by $\xi^{\mathsf{N}-1} K_n$ times some constant, while for the operator $T_{2}^{\left( K\right)}(\xi_{n}^{(1)})$ it is given by $\xi^{2(\mathsf{N}-1)} ( 2 K_n^2 - 2K_n \tr(K) + \tr(K)^2 - \tr(K^2))$ times some other constant. 
Hence, it is enough to exhibit a co-vector $\langle u|$ such that the set $\langle u|$, $\langle u|K$, $\langle u|K^2$ is a basis of $\mathbb{C}^{3}$, which is the case as soon as $K$ has simple spectrum. 
Similarly, the asymptotic of the operator $T_{2}^{\left( K\right)}(\xi_{n}^{(0)})$ is found proportional to the matrix $\xi^{2(\mathsf{N}-1)} (K_n^2 - K_n \tr(K))$, leading to the same conclusion. 
By these arguments, all we need to prove is that the co-vectors
\begin{equation}
(x,y,z)\tilde{K}_{J},(x,y,z),(x,y,z)K_{J},
\end{equation}%
where $\tilde{K}_{J}$ is the adjoint matrix of $K_{J}$, form a
tridimensional basis. If we denote by $M_{x,y,z,K_{J}}$ the $3\times 3$
matrix which lines are given by these three co-vectors, it holds:%
\begin{equation}
\det M_{x,y,z,K_{J}}=
\begin{cases}
-xyzV(\mathsf{k}_{0},\mathsf{k}_{1},\mathsf{k}_{2}) &\text{in the case i)}
\\ 
x^{2}zV^{2}(\mathsf{k}_{0},\mathsf{k}_{2}) &\text{in the case ii)} \\ 
x^{3} &\text{in the case iii)},%
\end{cases}%
\end{equation}%
so that in the case i) we take $x\,y\,z\neq 0$, in the case ii) we take $%
x\,z\neq 0$ and finally in the case iii) the condition is $x\neq 0$. The construction of the orthogonal vector is a standard computation  in 
$\mathbb{C}^{3}$ and the fact that it defines a vector basis by action of $K$ and $K^2$ follows from a direct computation. Another proof uses the characteristic equation of $K$. Finally, the fact that the reference vector for the right SoV basis can be then chosen of tensor product form is proven in the appendix A. 
\end{proof}

Let us now compute the scalar products of these two SoV bases as follows:
\begin{theorem}
\label{Central-T-det-non0}Let all the notations be the same as in Proposition \ref{SoVbases}, then the following \textit{pseudo-orthogonality} relations hold:%
\begin{equation} 
{\cal N}_{{\underline{\mathbf{h}}}, {\underline{\mathbf{k}}}} = \langle {\underline{\mathbf{h}}} |{\underline{\mathbf{k}}}\rangle =
\langle \underline{\mathbf{k}}|\underline{\mathbf{k}}\rangle \left(\delta _{\text{%
\underline{$\mathbf{h}$}},\text{\underline{$\mathbf{k}$}}}+C_{\text{%
\underline{$\mathbf{h}$}}}^{\text{\underline{$\mathbf{k}$}}}\sum_{r=1}^{n_{%
\text{\underline{$\mathbf{k}$}}}}\left( \det K%
\right) ^{r}\sum_{\substack{ \alpha \cup \beta \cup \gamma =\mathbf{1}_{\text{\underline{$\mathbf{k}$}}},  \\ \alpha ,\beta ,\gamma \text{ disjoint,}%
\#\alpha =\#\beta =r}}\delta _{\underline{\mathbf{h}},\underline{\mathbf{k}}%
_{\alpha ,\beta }^{\left( \underline{\mathbf{0}}\mathbf{,}\underline{\mathbf{%
2}}\right) }} \right),  \label{pseudo-ortho}
\end{equation}%
where the $C_{\text{\underline{$\mathbf{h}$}}}^{\text{\underline{$\mathbf{k}$%
}}}$ are non-zero and independent w.r.t.\ $\det K$, $n_{\text{\underline{$\mathbf{k}$}}}$
is the integer part of $(\sum_{a=1}^{%
\mathsf{N}}\delta _{k_{a},1})/2$. We have used the further notations%
\begin{align}
\underline{\mathbf{k}}_{\alpha ,\beta }^{\left( \underline{\mathbf{0}}%
\mathbf{,}\underline{\mathbf{2}}\right) } &\equiv (k_{1}(\alpha ,\beta
),...,k_{\mathsf{N}}(\alpha ,\beta ))\in \{0,1,2\}^{\mathsf{N}},\text{ } \\
\mathbf{1}_{\text{\underline{$\mathbf{k}$}}} &\equiv \{a\in \{1,...,\mathsf{N}%
\}:k_{a}=1\},
\end{align}%
with 
\begin{align}
k_{a}(\alpha ,\beta ) &= 0, \quad k_{b}(\alpha ,\beta )=2, \quad \forall
a\in \alpha ,b\in \beta \\
k_{c}(\alpha ,\beta ) &= k_{c}, \quad \forall c\in \{1,...,\mathsf{N}%
\}\backslash \{\alpha \cup \beta \}.
\end{align}%
Moreover, we prove that it holds:%
\begin{align}
\mathsf{N}_{{\underline{\mathbf{h}}}}& =\langle {\underline{\mathbf{h}}}|{\underline{\mathbf{h}}}\rangle \\
& = \left( \prod_{a=1}^{\mathsf{N}}\frac{d\big(\xi _{a}^{\left( 1\right) }\big)}{d\big(\xi_{a}^{( 1+\delta _{h_{a},1}+\delta _{h_{a},2}) }\big)} \right)
\frac{V^{2}(\xi _{1},...,\xi _{\mathsf{N}})}{V\Big(\xi_{1}^{(\delta _{h_{1},2}+\delta _{h_{1},1})},...,\xi _{\mathsf{N}}^{(\delta
_{h_{\mathsf{N}},1}+\delta _{h_{\mathsf{N}},2})}\Big) 
V\Big(\xi _{1}^{(\delta_{h_{1},2})},...,\xi _{\mathsf{N}}^{(\delta _{h_{\mathsf{N}},2})}\Big)}. \label{eq:diagonal-sov-measure}
\end{align}
\end{theorem}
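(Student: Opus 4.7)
The plan is to exploit the observation already formalised in the introduction, namely that $\langle \underline{\mathbf{h}}|\underline{\mathbf{k}}\rangle$ is nothing but a structure constant of the commutative algebra of conserved charges. Concretely, by commutativity of all transfer matrices involved, one writes
\[
\langle \underline{\mathbf{h}}|\underline{\mathbf{k}}\rangle = \langle \underline{\mathbf{1}}| \prod_{n=1}^{\mathsf{N}} T_2^{(K)\delta_{h_n, 0}}(\xi_n^{(1)}) T_1^{(K)\delta_{h_n, 2}}(\xi_n) T_2^{(K)\delta_{k_n, 1}}(\xi_n) T_1^{(K)\delta_{k_n, 2}}(\xi_n) |\underline{\mathbf{0}}\rangle ,
\]
expands the operator product in the left SoV basis generators $\{T^{L}_{\underline{\mathbf{h}}'}\}$, and extracts the coefficient of $T^{L}_{\underline{\mathbf{0}}} = \prod_n T_2^{(K)}(\xi_n^{(1)})$, since the defining property \eqref{Def-R0} yields $\langle \underline{\mathbf{h}}'|\underline{\mathbf{0}}\rangle = \delta_{\underline{\mathbf{h}}', \underline{\mathbf{0}}}$. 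The scalar product is thus reduced to a purely algebraic basis-change problem inside the Bethe subalgebra.

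Next, I would simplify each site-$n$ factor using the fusion identities stated in the proposition preceding \eqref{T-Func-form-m}. The case $(h_n,k_n)=(0,2)$ collapses to the central scalar $T_3^{(K)}(\xi_n) = \mathop{\textrm{q-det}} M^{(K)}(\xi_n) \propto \det K$; the off-diagonal $(0,1)$ combination gives $T_2^{(K)}(\xi_n^{(1)})T_2^{(K)}(\xi_n) = \mathop{\textrm{q-det}} M^{(K)}(\xi_n)\, T_1^{(K)}(\xi_n - \eta)$ via $T_2^{(K)}(\xi_n) = T_1^{(K)}(\xi_n)T_1^{(K)}(\xi_n - \eta)$; and $(2,1)$ gives $T_1^{(K)}(\xi_n)^2 T_1^{(K)}(\xi_n-\eta)$. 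The remaining "mixed" objects $T_1^{(K)}(\xi_n)^2$ and $T_1^{(K)}(\xi_n-\eta)$ are not themselves left-basis generators and must be re-expanded: the first is handled at the site level via Cayley--Hamilton applied to the simple-spectrum matrix $K$, and the second via Lagrange interpolation of the polynomial $T_1^{(K)}(\lambda)$ of degree $\mathsf{N}$ with leading asymptotic $\operatorname{tr} K$, producing a sum $T_1^{(K)}(\xi_n-\eta) = \operatorname{tr} K \cdot C_n \cdot I + \sum_m L_n^{(m)} T_1^{(K)}(\xi_m)$.

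The pairing structure of \eqref{pseudo-ortho} then emerges naturally from these cross-terms. At each site $n$ with $k_n=1$, the factor $T_2^{(K)}(\xi_n)$ can either remain (giving the diagonal choice $h_n = 1$, $n \in \gamma$) or be absorbed by a $T_2^{(K)}(\xi_n^{(1)})$ coming from $h_n=0$, producing $\det K$ together with a $T_1^{(K)}(\xi_n-\eta)$; after Lagrange expansion, the resulting $T_1^{(K)}(\xi_m)$ factor at another site $m$ must land precisely where $k_m=1$ and $h_m=2$ so that the contribution feeds into $T^{L}_{\underline{\mathbf{0}}}$ rather than a forbidden left-basis generator. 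This bookkeeping forces $\#\alpha=\#\beta=r$, produces the prefactor $(\det K)^r$ from the $r$ quantum-determinant fusions, and fixes the summation over disjoint $\alpha,\beta \subset \mathbf{1}_{\underline{\mathbf{k}}}$. Positions outside $\alpha \cup \beta$ must satisfy $h_n=k_n$, otherwise unmatched Lagrange cross-terms generate left-basis elements incompatible with $T^{L}_{\underline{\mathbf{0}}}$, which is why no other couplings appear.

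The main obstacle will be the combinatorial accounting of this cascade: one must verify that the many Lagrange expansions recombine exactly into the pairing sum, with no surviving cross-contamination between distinct pairs, and that the coefficients $C^{\underline{\mathbf{k}}}_{\underline{\mathbf{h}}}$ so obtained are non-zero and independent of $\det K$. For the diagonal formula \eqref{eq:diagonal-sov-measure}, I would compute $N^{\underline{\mathbf{0}}}_{\underline{\mathbf{h}},\underline{\mathbf{h}}}$ directly by induction on $\mathsf{N}$ (using the $\mathsf{N}=1$ case, where everything reduces to a $3\times 3$ matrix computation with $T_1^{(K)}(\xi_1)\propto K$ and $T_2^{(K)}(\xi_1^{(1)})\propto \tilde K$, as a base case): the Vandermonde factors arise as products of Lagrange weights $\prod_{b\neq a}(\xi_a^{(\cdot)}-\xi_b^{(\cdot)})^{\pm 1}$, while the $d$-ratios come from the explicit inhomogeneity-dependent part of $\mathop{\textrm{q-det}} M^{(K)}$ evaluated at the shifted arguments $\xi_a^{(1+\delta_{h_a,1}+\delta_{h_a,2})}$.
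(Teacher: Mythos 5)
Your opening move is sound and is in fact the philosophy the paper itself advertises in the introduction: since $\langle\underline{\mathbf{h}}|\underline{\mathbf{k}}\rangle$ equals the structure constant $N^{\underline{\mathbf{0}}}_{\underline{\mathbf{h}},\underline{\mathbf{k}}}$ of the Bethe algebra, and the site-by-site fusion reductions you give for $(h_n,k_n)=(0,2),(0,1),(2,1)$ are all correct. However, there is a concrete error at the next step: you cannot reduce $T_1^{(K)}(\xi_n)^2$ (equivalently, $T_1^{(K)}(\xi_n)^3$ once the factor hidden in $\langle\underline{\mathbf{1}}|$ is counted) "at the site level via Cayley--Hamilton applied to $K$". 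The operator $T_1^{(K)}(\xi_n)$ does not satisfy the characteristic polynomial of $K$ --- only its leading asymptotics in the inhomogeneities is proportional to $K_n$ --- and it generically has far more than three distinct eigenvalues, so no scalar cubic relation exists. The only available relations are the fusion identities $T_1^{(K)}(\xi_a)T_m^{(K)}(\xi_a-\eta)=T_{m+1}^{(K)}(\xi_a)$, which involve shifted points and therefore force you back into Lagrange interpolation; the reduction of excess powers to the basis $\prod_m T_1^{(K)}(\xi_m)^{h_m}$ is intrinsically non-local, and this is precisely where the cross-site couplings originate.

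The second, larger gap is that what you call "the main obstacle" --- verifying that the cascade of interpolations recombines into exactly the pairing sum --- \emph{is} the theorem; the proposal supplies no mechanism to close it. For instance, a Lagrange cross-term $T_1^{(K)}(\xi_m)$ landing on a site with $h_m=k_m=0$ combines with the $T_2^{(K)}(\xi_m^{(1)})$ already present into a central scalar ($\operatorname{q-det}$), not a "forbidden left-basis generator", so it does not obviously drop out of the coefficient of $T^L_{\underline{\mathbf{0}}}$; ruling such contributions out requires the kind of structured argument the paper gives in Appendix C, namely a three-stage induction (first $\underline{\mathbf{k}}\in\{0,2\}^{\mathsf{N}}$, then a single $k_a=1$, then induction on the number of $k_a=1$) supplemented by recursion lemmas that express the off-diagonal couplings in terms of the diagonal ones and exhibit the $(\det K)^r$ power one unit per recursion step. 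Finally, your proposed induction on $\mathsf{N}$ for the diagonal coupling \eqref{eq:diagonal-sov-measure} is doubtful, since removing a site changes every transfer matrix globally and there is no clean $\mathsf{N}\to\mathsf{N}-1$ reduction; the paper instead obtains $\mathsf{N}_{\underline{\mathbf{h}}}$ by a recursion in $\underline{\mathbf{h}}$ at fixed $\mathsf{N}$, evaluating $\langle\underline{\mathbf{h}}_a^{(1)}|T_2^{(K)}(\xi_a^{(1)})|\underline{\mathbf{h}}_a^{(0)}\rangle$ and $\langle\underline{\mathbf{h}}_a^{(1)}|T_1^{(K)}(\xi_a)|\underline{\mathbf{h}}_a^{(2)}\rangle$ in two ways using the already-established orthogonality.
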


\begin{proof}
The heavy proofs of the pseudo-orthogonality and of the expressions of non-zero SoV co-vector/vector couplings are given in Appendix C.
There, the coefficients $C^{\underline{\mathbf{k}}}_{\underline{\mathbf{h}}}$ are characterized completely, but implicitly, by an unwieldy recursion that we do not solve for the generic case. We compute them in the simplest case, see \eqref{eq:coeff_one_pair}.
\end{proof}

It is worth to make some remarks on the above theorem. Let us first comment
that the sum in $(\ref{pseudo-ortho})$, for any fixed \underline{$\mathbf{k}$} and \underline{$\mathbf{h}$} , always reduces to at most one single non-zero term. 
Indeed, fixing $\underline{\mathbf{k}}\neq \underline{\mathbf{h}}$, we can have a
non-zero coupling between the vector and co-vector associated if and
only if there exists a couple of sets $(\alpha ,\beta )\subset \mathbf{1}_{\text{\underline{$\mathbf{k}$}}}$ with the same cardinality $r\leq n_{%
\underline{\mathbf{k}}}$ such that $\underline{\mathbf{h}}=\underline{%
\mathbf{k}}_{\alpha ,\beta }^{\left( \underline{\mathbf{0}}\mathbf{,}%
\underline{\mathbf{2}}\right) }$, and of course if the couple $(\alpha ,\beta
)$ exists it is unique. The above condition means that if $\sum_{a=1}^{%
\mathsf{N}}\delta _{k_{a},1}$ is smaller or equal to one, then the standard
orthogonality works, i.e.\ only $\underline{\mathbf{h}}=\underline{\mathbf{k}}$ produces a non-zero co-vector/vector coupling. 
While if $\sum_{a=1}^{\mathsf{N}}\delta _{k_{a},1}$ is bigger or equal to two, we have
non-zero couplings also for all the co-vectors of $(\ref{Co-vector-SoV-basis})$
with\footnote{That is for the \underline{$\mathbf{h}$} obtained from \underline{$\mathbf{k}
$} removing one or more\ couples of $(k_{a}=1,k_{b}=1)$ and substituting
them with $(k_{a}(\alpha ,\beta )=0,$ $k_{b}(\alpha ,\beta )=2)$.} 
$\underline{\mathbf{h}}=\underline{\mathbf{k}}_{\alpha ,\beta
}^{\left( \underline{\mathbf{0}}\mathbf{,}\underline{\mathbf{2}}\right) }$.
Let us remark that if one looks to this pseudo-orthogonality condition in
one quantum site, then the basis $(\ref{Vector-SoV-basis})$ naturally emerges
as the candidate to get the orthogonal basis to $(\ref{Co-vector-SoV-basis})$%
. Indeed, for one site, orthogonality is satisfied by them while the fact
that the orthogonality is not satisfied for higher number of quantum sites is
intrinsically related to the form of fusion relations of the transfer
matrices for higher rank. From these considerations follows our statement that if we want
to obtain mutually orthogonal co-vector/vector SoV bases, we have to use
different\footnote{w.r.t.\ those used above.} families of commuting conserved charges to generate the
co-vector and the vector SoV bases.

It is also useful to make some link with the preexisting work \cite{GroLMRV19} in
the SoV framework. In fact, the set of vectors $(\ref{Vector-SoV-basis})$
has been introduced recently in \cite{GroLMRV19} as the set of eigenvectors
of a $C$-operator, which plays a similar role to the Sklyanin's $B$-operator.
There, the starting vector, analogous to our $|\underline{\mathbf{0}}%
\rangle $, is taken as some not better defined eigenvector of this $C$%
-operator, and the proofs that $C$ is diagonalizable and that so $(\ref%
{Vector-SoV-basis})$ form a basis are not addressed, while the
co-vector/vector coupling of these SoV bases is represented with some
integral form.

In our paper, we prove that $(\ref{Vector-SoV-basis})$ is a basis, we fix the
tensor product form of the starting vector $|\underline{\mathbf{0}}\rangle $%
in terms of the starting co-vector $\langle $\underline{$\mathbf{1}$}$|$
and the general twist matrix $K$, we characterize completely the form of
the co-vector/vector couplings of the two SoV bases and from them the $SoV$-$measure$.

Let us also remark that in \cite{GroLMRV19} is given a selection rule which
selects sectors of the quantum space which are orthogonal, which translates in
our setting as
\begin{equation}
\langle {\underline{\mathbf{h}}}|{\underline{\mathbf{k}}}\rangle =0\text{ \
if }\sum_{a=1}^{\mathsf{N}}\delta _{h_{a},1}\neq \sum_{a=1}^{\mathsf{N}%
}\delta _{k_{a},1}.
\end{equation}%
This is compatible with our result \eqref{pseudo-ortho}, but much less restrictive as one can
easily understand by looking, for example, to our formula for $r=1$. In this
case, the \underline{$\mathbf{h}$} fixing the co-vector in $(\ref%
{Vector-SoV-basis})$ and \underline{$\mathbf{k}$} fixing the vector in $(\ref%
{Co-vector-SoV-basis})$ differ only on one couple of index $(h_{a},h_{b})\neq
(k_{a},k_{b})$. The above selection rule only imposes that $\langle \text{%
\underline{$\mathbf{h}$}}|\text{\underline{$\mathbf{k}$}}\rangle =0$ if $%
h_{a}+h_{b}\neq k_{a}+k_{b}$, while our formula instead specifies that $\langle 
\text{\underline{$\mathbf{h}$}}|\text{\underline{$\mathbf{k}$}}\rangle =0$\
unless $k_{a}=k_{b}=1$ and $h_{a}+h_{b}=2$.

\subsection{On higher rank SoV measure}
\label{ssec:higher-rank-sov-measure}

In the Theorem \ref{Central-T-det-non0}, we have shown that the original
higher rank SoV co-vector and vector bases as defined in (\ref{Vector-SoV-basis}) and (\ref{Co-vector-SoV-basis}) are not mutual orthogonal
basis if the twist matrix is invertible. Here, we want to show that from the
Theorem \ref{Central-T-det-non0}, we can also characterize the SoV measure
associated to these bases, i.e. the measure to be used in the computation of
scalar products of separate states in these co-vector and vector bases.

Let us start introducing the following sets of co-vectors and vectors that are bases of the Hilbert space orthogonal to our left and right SoV bases:%
\begin{equation}
_{p}\langle \underline{\mathbf{h}}|\text{ and }|\underline{\mathbf{h}}%
\rangle _{p}, \quad \forall \underline{\mathbf{h}}\in \{0,1,2\}^{\mathsf{N}%
},
\end{equation}%
uniquely characterized by the following orthogonality conditions\footnote{Note here that we have included, for convenience,  in the orthogonality relations the normalisation factor $\langle \underline{\mathbf{h}}|\underline{\mathbf{h}}\rangle$ as it leads to more natural identifications in the particular case where the right and left SoV bases are directly orthogonal to each other.}:%
\begin{equation}
_{p}\langle \underline{\mathbf{k}}|\underline{\mathbf{h}}\rangle =\delta _{%
\underline{\mathbf{k}},\underline{\mathbf{h}}}\langle \underline{\mathbf{h}}|%
\underline{\mathbf{h}}\rangle ,\text{\ \ }\langle \underline{\mathbf{k}}|%
\underline{\mathbf{h}}\rangle _{p}=\delta _{\underline{\mathbf{k}},%
\underline{\mathbf{h}}}\langle \underline{\mathbf{h}}|\underline{\mathbf{h}}%
\rangle , \quad \forall \underline{\mathbf{h}},\underline{\mathbf{k}}\in
\{0,1,2\}^{\mathsf{N}},  \label{Orthogonality-Con}
\end{equation}%
where $|\underline{\mathbf{h}}\rangle $ and $\langle \underline{\mathbf{h}}|$
are the vectors and co-vectors of the SoV basis (\ref{Vector-SoV-basis}) and (%
\ref{Co-vector-SoV-basis}), respectively. Clearly, the set generated by the $%
_{p}\langle \underline{\mathbf{h}}|$ and $|\underline{\mathbf{h}}\rangle
_{p} $ are bases of the Hilbert space, and moreover we have the
following decompositions of the identity:%
\begin{equation}
\mathbb{I}=\sum_{\underline{\mathbf{h}}} \frac{ |{\underline{\mathbf{h}}}\rangle _{p}\text{ }\langle {\underline{\mathbf{h}}}|}{\mathsf{N}%
_{\underline{\mathbf{h}}}} = \sum_{\underline{\mathbf{h}}} \frac{%
|\underline{\mathbf{h}} \rangle \text{ }_{p}\langle \underline{\mathbf{h}}|}{\mathsf{N}_{\underline{\mathbf{h}}}},
\end{equation}%
where the sums run over all the possible values of the multiple index ($\underline{\mathbf{h}}$). As a consequence,  the transfer matrix eigenco-vectors and eigenvectors admit the following SoV representations in terms of their eigenvalues:
\begin{align}
|t_{a}\rangle &=\sum_{\underline{\mathbf{h}}}\prod_{n=1}^{\mathsf{N}%
}t_{2,a}^{\delta _{h_{n},0}}(\xi _{n}^{\left( 1\right) })t_{1,a}^{\delta
_{h_{n},2}}(\xi _{n})\text{ }\frac{|\underline{\mathbf{h}}\rangle _{p}}{%
\mathsf{N}_{\underline{\mathbf{h}}}}, \\
\langle t_{a}| &=\sum_{\underline{\mathbf{h}}}\prod_{n=1}^{\mathsf{N%
}}t_{2,a}^{\delta _{h_{n},1}}(\xi _{n})t_{1,a}^{\delta _{h_{n},2}}(\xi _{n})%
\text{ }\frac{_{p}\langle \underline{\mathbf{h}}|}{\mathsf{N}%
_{\underline{\mathbf{h}}}}.
\end{align}%
Thus, we can naturally give the following
definitions of separate vectors :%
\begin{equation}
|\alpha \rangle =\sum_{\underline{\mathbf{k}}} \alpha _{\underline{\mathbf{k}}}\text{ }\frac{|\underline{\mathbf{k}}\rangle _{p}}{%
\mathsf{N}_{\underline{\mathbf{k}}}},\quad 
\alpha_{\underline{\mathbf{h}}} \equiv \prod_{a=1}^{\mathsf{N}%
}\alpha _{a}^{(h_{a})}\text{ }
\end{equation}%
with factorized coordinates $\alpha _{\underline{\mathbf{h}}}$ and separate co-vectors,
\begin{equation}
\langle \beta|=\sum_{\underline{\mathbf{h}}} \beta_{\underline{\mathbf{h}}}\text{ }\frac{_{p}\langle \underline{\mathbf{h}}|}{\mathsf{N}%
_{\underline{\mathbf{h}}}}, \quad 
\beta_{\underline{\mathbf{h}}} \equiv \prod_{a=1}^{\mathsf{N}%
}\beta _{a}^{(h_{a})}\text{ }
\end{equation}
with factorized coordinates $\beta _{\underline{\mathbf{h}}}$ on the respective bases. The scalar product of such two separate vector and co-vector reads:
\begin{equation}
\langle \beta|\alpha\rangle = \sum_{\underline{\mathbf{h}},\underline{\mathbf{k}}} \beta_{\underline{\mathbf{h}}} {\cal M}_{{\underline{\mathbf{h}}}, {\underline{\mathbf{k}}}} \alpha_{\underline{\mathbf{k}}}
\end{equation} 
with the {\em SoV measure} ${\cal M}_{{\underline{\mathbf{h}}}, {\underline{\mathbf{k}}}}$ defined as:
\begin{equation}
{\cal M}_{{\underline{\mathbf{h}}}, {\underline{\mathbf{k}}}} = \frac{_{p}\langle \underline{\mathbf{h}}|\underline{\mathbf{k}}\rangle_{p}}{\mathsf{N}_{\underline{\mathbf{h}}} . \mathsf{N}_{\underline{\mathbf{k}}}} .
\end{equation}
It can be obtained from the knowledge of the scalar products between the vectors and co-vectors of the two bases orthogonal to our chosen SoV bases:%
\begin{equation}
_{p}\langle \underline{\mathbf{h}}|\underline{\mathbf{k}}\rangle_{p},%
\quad \forall \underline{\mathbf{h}},\underline{\mathbf{k}}\in \{0,1,2\}^{\mathsf{N%
}}.
\end{equation}
Let us note here that these two matrices ${\cal N}_{{\underline{\mathbf{h}}}, {\underline{\mathbf{k}}}}$ and ${\cal M}_{{\underline{\mathbf{h}}}, {\underline{\mathbf{k}}}}$, modulo some normalisation,  have direct interpretation as change of bases matrices between the bases $\langle \underline{\mathbf{h}}|$ and $_{p}\langle \underline{\mathbf{k}}|$, namely we have:
\begin{equation}
\langle \underline{\mathbf{h}}| = \sum_{\underline{\mathbf{k}}} {\cal N}_{{\underline{\mathbf{h}}}, {\underline{\mathbf{k}}}} \frac{_{p}\langle \underline{\mathbf{k}}|}{\mathsf{N}%
_{\underline{\mathbf{k}}}} ,
\end{equation}
and conversely,
\begin{equation}
\frac{_{p}\langle \underline{\mathbf{k}}|}{\mathsf{N}_{\underline{\mathbf{k}}}}  = \sum_{\underline{\mathbf{h}}} {\cal M}_{{\underline{\mathbf{k}}}, {\underline{\mathbf{h}}}} \langle \underline{\mathbf{h}}|.
\end{equation}
We also have similar relations (with transposition) for the bases $|\underline{\mathbf{k}}\rangle$ and $|\underline{\mathbf{h}}\rangle_{p}$. Moreover,  it easy to verify that these two matrices are inverse to each other:
\begin{equation}
\sum_{\underline{\mathbf{h}}} {\cal M}_{{\underline{\mathbf{k}}}, {\underline{\mathbf{h}}}} \cdot  {\cal N}_{{\underline{\mathbf{h}}}, {\underline{\mathbf{l}}}} = \delta_{\underline{\mathbf{k}},\underline{\mathbf{l}}}
\end{equation}
Hence to compute the SoV measure  ${\cal M}_{{\underline{\mathbf{k}}}, {\underline{\mathbf{h}}}}$, we just need to get the inverse of the matrix of scalar products ${\cal N}_{{\underline{\mathbf{k}}}, {\underline{\mathbf{h}}}}$; in the following we show how to characterize it in terms of ${\cal N}_{{\underline{\mathbf{k}}}, {\underline{\mathbf{h}}}}$, proving in particular that it has in fact the same form as ${\cal N}_{{\underline{\mathbf{k}}}, {\underline{\mathbf{h}}}}$\footnote{Somehow this is not surprising as in the appropriate labelling of the SoV bases, the matrix ${\cal N}_{{\underline{\mathbf{k}}}, {\underline{\mathbf{h}}}}$ is lower-triangular with finite depth out of the diagonal and hence its inverse should have a similar form.}.\\

Let us start proving the following:

\begin{lemma}
The vectors $|\underline{\mathbf{h}}\rangle _{p}$ of the basis orthogonal to
the SoV co-vector basis (\ref{Co-vector-SoV-basis}) admit the following
decompositions in the SoV vector basis (\ref{Vector-SoV-basis}):%
\begin{equation}
|\underline{\mathbf{h}}\rangle _{p}=|\underline{\mathbf{h}}\rangle
+\sum_{r=1}^{n_{\text{\underline{$\mathbf{h}$}}}}\mathsf{c}^{r}\sum 
_{\substack{ \alpha \cup \beta \cup \gamma =\mathbf{1}_{\text{\underline{$\mathbf{h}$}}},  \\ \alpha ,\beta ,\gamma \text{ disjoint,} \\ \#\alpha =\#\beta
=r}}B_{\alpha ,\beta ,\underline{\mathbf{h}}}|\underline{\mathbf{h}}_{\alpha
,\beta }^{\left( \underline{\mathbf{0}}\mathbf{,}\underline{\mathbf{2}}%
\right) }\rangle ,  \label{Expansion-SoV-R}
\end{equation}%
where the coefficients $B_{\alpha ,\beta ,\underline{\mathbf{h}}}$ are
completely characterized by the following recursion formula:%
\begin{equation}
B_{\alpha ,\beta ,\underline{\mathbf{h}}}=
-\left(\bar{C}_{\underline{\mathbf{h}}%
_{\alpha ,\beta }^{\left( \underline{\mathbf{0}}\mathbf{,}\underline{\mathbf{%
2}}\right) }}^{\underline{\mathbf{h}}}+\sum_{\substack{ \alpha ^{\prime
}\subset \alpha ,\ \beta ^{\prime }\subset \beta ,  \\ 1\leq \#\alpha ^{\prime
}=\#\beta ^{\prime }\leq \#\alpha -1}}B_{\alpha ^{\prime },\beta ^{\prime },%
\underline{\mathbf{h}}}\,\bar{C}_{\underline{\mathbf{h}}_{\alpha ,\beta
}^{\left( \underline{\mathbf{0}}\mathbf{,}\underline{\mathbf{2}}\right) }}^{%
\underline{\mathbf{h}}_{\alpha ^{\prime },\beta ^{\prime }}^{\left( 
\underline{\mathbf{0}}\mathbf{,}\underline{\mathbf{2}}\right) }}\right),
\label{SoV-Ms}
\end{equation}%
and 
\begin{equation}
\bar{C}_{\underline{\mathbf{s}}}^{\underline{\mathbf{r}}}=\frac{\langle 
\underline{\mathbf{r}}|\underline{\mathbf{r}}\rangle }{\langle \underline{%
\mathbf{s}}|\underline{\mathbf{s}}\rangle }C_{\underline{\mathbf{s}}}^{%
\underline{\mathbf{r}}},
\end{equation}%
where $C_{\underline{\mathbf{s}}}^{\underline{\mathbf{r}}}$ are the coefficients
of the measure (\ref{pseudo-ortho}).
\end{lemma}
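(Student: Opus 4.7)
The form of the expansion (\ref{Expansion-SoV-R}) is essentially forced by the pseudo-orthogonality of Theorem~\ref{Central-T-det-non0}, and the strategy is to treat (\ref{Expansion-SoV-R}) as an ansatz, impose the defining conditions (\ref{Orthogonality-Con}), and read off the coefficients $B_{\alpha,\beta,\underline{\mathbf{h}}}$. A first combinatorial observation is that, if $\underline{\mathbf{k}}=\underline{\mathbf{h}}_{\alpha,\beta}^{(\underline{\mathbf{0}},\underline{\mathbf{2}})}$ and $\langle\underline{\mathbf{m}}|\underline{\mathbf{k}}\rangle\neq 0$, then $\underline{\mathbf{m}}=\underline{\mathbf{k}}_{\alpha^{\prime\prime},\beta^{\prime\prime}}^{(\underline{\mathbf{0}},\underline{\mathbf{2}})}$ for some disjoint $(\alpha^{\prime\prime},\beta^{\prime\prime})$, and composing the two (0,2)-replacements gives $\underline{\mathbf{m}}=\underline{\mathbf{h}}_{\alpha\cup\alpha^{\prime\prime},\beta\cup\beta^{\prime\prime}}^{(\underline{\mathbf{0}},\underline{\mathbf{2}})}$. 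Consequently the SoV co-vectors $\langle\underline{\mathbf{m}}|$ that can possibly have non-zero overlap with the right-hand side of (\ref{Expansion-SoV-R}) are restricted to $\underline{\mathbf{m}}=\underline{\mathbf{h}}_{\alpha^{\ast},\beta^{\ast}}^{(\underline{\mathbf{0}},\underline{\mathbf{2}})}$ for some subsets $\alpha^{\ast},\beta^{\ast}\subseteq\mathbf{1}_{\underline{\mathbf{h}}}$ with $\#\alpha^{\ast}=\#\beta^{\ast}$, and the orthogonality relations (\ref{Orthogonality-Con}) are automatic for every other $\underline{\mathbf{m}}$.

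The proof then splits according to $r^{\ast}=\#\alpha^{\ast}$. For $r^{\ast}=0$, i.e. $\underline{\mathbf{m}}=\underline{\mathbf{h}}$, the only non-vanishing contribution in $\langle\underline{\mathbf{h}}|\underline{\mathbf{h}}\rangle_{p}$ comes from the term $|\underline{\mathbf{h}}\rangle$ of the ansatz, since every $\underline{\mathbf{h}}_{\alpha,\beta}^{(\underline{\mathbf{0}},\underline{\mathbf{2}})}$ with $\#\alpha\geq 1$ has strictly fewer 1's than $\underline{\mathbf{h}}$ and so $\underline{\mathbf{h}}$ cannot be obtained from it by further (0,2)-replacements; this yields $\langle\underline{\mathbf{h}}|\underline{\mathbf{h}}\rangle_{p}=\langle\underline{\mathbf{h}}|\underline{\mathbf{h}}\rangle$ with no constraint on the $B$'s. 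For $r^{\ast}\geq 1$, a direct check shows that among the vectors appearing in the ansatz, exactly those with $\alpha\subseteq\alpha^{\ast}$ and $\beta\subseteq\beta^{\ast}$ have non-zero pairing with $\langle\underline{\mathbf{m}}|$, with $\alpha=\alpha^{\ast},\beta=\beta^{\ast}$ giving the diagonal term $B_{\alpha^{\ast},\beta^{\ast},\underline{\mathbf{h}}}\,\langle\underline{\mathbf{m}}|\underline{\mathbf{m}}\rangle$ and the strictly smaller $(\alpha^{\prime},\beta^{\prime})$ contributing through the pseudo-orthogonality formula of Theorem~\ref{Central-T-det-non0}.

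Imposing $\langle\underline{\mathbf{m}}|\underline{\mathbf{h}}\rangle_{p}=0$ and dividing by $\langle\underline{\mathbf{m}}|\underline{\mathbf{m}}\rangle$ then turns each overlap $\langle\underline{\mathbf{m}}|\underline{\mathbf{h}}_{\alpha^{\prime},\beta^{\prime}}^{(\underline{\mathbf{0}},\underline{\mathbf{2}})}\rangle$ into $\bar{C}_{\underline{\mathbf{m}}}^{\underline{\mathbf{h}}_{\alpha^{\prime},\beta^{\prime}}^{(\underline{\mathbf{0}},\underline{\mathbf{2}})}}$ after using the definition $\bar{C}_{\underline{\mathbf{s}}}^{\underline{\mathbf{r}}}=(\langle\underline{\mathbf{r}}|\underline{\mathbf{r}}\rangle/\langle\underline{\mathbf{s}}|\underline{\mathbf{s}}\rangle)\,C_{\underline{\mathbf{s}}}^{\underline{\mathbf{r}}}$; the global factor $(\det K)^{r^{\ast}}$ produced by Theorem~\ref{Central-T-det-non0} is absorbed in the powers $\mathsf{c}^{r}$ of the ansatz. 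Solving for $B_{\alpha^{\ast},\beta^{\ast},\underline{\mathbf{h}}}$ produces exactly the recursion (\ref{SoV-Ms}), which is well-posed by induction on $r^{\ast}$: the base case $r^{\ast}=1$ gives $B_{\alpha,\beta,\underline{\mathbf{h}}}=-\bar{C}_{\underline{\mathbf{h}}_{\alpha,\beta}^{(\underline{\mathbf{0}},\underline{\mathbf{2}})}}^{\underline{\mathbf{h}}}$, and each subsequent $r^{\ast}$ only invokes $B$'s of strictly smaller size. The main obstacle I expect is the purely combinatorial bookkeeping of matching (0,2)-replacement data $(\alpha,\beta)$, $(\alpha^{\prime},\beta^{\prime})$ and $(\alpha^{\ast},\beta^{\ast})$ while keeping track of the $(\det K)$ and $\langle\underline{\mathbf{k}}|\underline{\mathbf{k}}\rangle$ factors so that the final expression matches (\ref{SoV-Ms}) exactly; this is tedious but linear.
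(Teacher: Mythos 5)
Your proposal is correct and follows essentially the same route as the paper's proof: treat the expansion as an ansatz, observe that the pseudo-orthogonality of Theorem \ref{Central-T-det-non0} makes the conditions (\ref{Orthogonality-Con}) automatic except for $\underline{\mathbf{m}}=\underline{\mathbf{h}}_{\alpha^{\ast},\beta^{\ast}}^{(\underline{\mathbf{0}},\underline{\mathbf{2}})}$, restrict the surviving overlaps to $\alpha\subseteq\alpha^{\ast}$, $\beta\subseteq\beta^{\ast}$, and solve the resulting triangular system by induction on $\#\alpha^{\ast}$. The only remaining work is the bookkeeping you already flag, namely checking that the $\mathsf{c}$-powers and the ratio $\langle\underline{\mathbf{r}}|\underline{\mathbf{r}}\rangle/\langle\underline{\mathbf{s}}|\underline{\mathbf{s}}\rangle$ combine into the coefficients $\bar{C}_{\underline{\mathbf{s}}}^{\underline{\mathbf{r}}}$ exactly as in (\ref{SoV-Ms}).
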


\begin{proof}
The fact that we can write each vector $|\underline{\mathbf{h}}\rangle _{p}$%
, satisfying (\ref{Orthogonality-Con}), in terms of the SoV vectors $|%
\underline{\mathbf{k}}\rangle $ follows from the fact that these last ones
form a basis. Here we have to prove that the above expression for $|%
\underline{\mathbf{h}}\rangle _{p}$ and for its coefficients indeed imply
the orthogonality condition (\ref{Orthogonality-Con}).

Let us start observing that this is the case for the diagonal term. Indeed,
the following identity follows:%
\begin{equation}
\langle \underline{\mathbf{h}}|\underline{\mathbf{h}}\rangle _{p}=\langle 
\underline{\mathbf{h}}|\underline{\mathbf{h}}\rangle ,
\end{equation}%
by the measure (\ref{pseudo-ortho}) being%
\begin{equation}
{\text{\underline{$\mathbf{h}$}}}\underset{\text{(\ref{Gen-ortho-C})}}{\neq }%
\underline{\mathbf{h}}_{\alpha ,\beta }^{\left( \underline{\mathbf{0}}, %
\mathbf{,}\underline{\mathbf{2}}\right) }, \quad\forall \alpha ,\beta
\subset \mathbf{1}_{\text{\underline{$\mathbf{h}$}}},\text{ disjoint with }%
1\leq \#\alpha =\#\beta \leq n_{\text{\underline{$\mathbf{h}$}}}.
\end{equation}%
So we are left with the proof of the orthogonality of
\begin{equation}
\langle \underline{\mathbf{k}}|\underline{\mathbf{h}}\rangle _{p}=0,
\quad \forall \underline{\mathbf{k}}\neq \underline{\mathbf{h}},\underline{\mathbf{%
k}}\in \{0,1,2\}^{\mathsf{N}}.
\end{equation}%
Let us start observing that for any $\underline{\mathbf{k}}$ such that%
\begin{equation}
\underline{\mathbf{k}}\underset{\text{(\ref{Gen-ortho-C})}}{\neq }\underline{%
\mathbf{h}},
\end{equation}%
then it also follows 
\begin{equation}
\underline{\mathbf{k}}\underset{\text{(\ref{Gen-ortho-C})}}{\neq }\underline{%
\mathbf{h}}_{\alpha ,\beta }^{\left( \underline{\mathbf{0}}\mathbf{,}%
\underline{\mathbf{2}}\right) }, \quad \forall \alpha ,\beta \subset 
\mathbf{1}_{\text{\underline{$\mathbf{h}$}}},\text{ disjoint with }1\leq
\#\alpha =\#\beta \leq n_{\text{\underline{$\mathbf{h}$}}},
\end{equation}%
and so by the measure (\ref{pseudo-ortho}) the orthogonality holds.

So, we are left with the proof of the orthogonality for the case $\underline{%
\mathbf{k}}=\underline{\mathbf{h}}_{\mu ,\delta }^{\left( \underline{\mathbf{%
0}}\mathbf{,}\underline{\mathbf{2}}\right) }$ for any fixed disjoint sets $%
\mu \subset \mathbf{1}_{\text{\underline{$\mathbf{h}$}}}$ and $\delta
\subset \mathbf{1}_{\text{\underline{$\mathbf{h}$}}}$ such that $1\leq \#\mu
=\#\delta \leq n_{\text{\underline{$\mathbf{h}$}}}$. Let us observe that the
following inequalities holds:%
\begin{equation}
\underline{\mathbf{h}}_{\alpha ,\beta }^{\left( \underline{\mathbf{0}}%
\mathbf{,}\underline{\mathbf{2}}\right) }\underset{\text{(\ref{Gen-ortho-C})}%
}{\neq }\underline{\mathbf{h}}_{\mu ,\delta }^{\left( \underline{\mathbf{0}}%
\mathbf{,}\underline{\mathbf{2}}\right) },
\end{equation}%
for any disjoint sets $\alpha $ and $\beta $ contained in $\mathbf{1}_{\text{%
\underline{$\mathbf{h}$}}}$ with $\#\alpha =\#\beta $ such that%
\begin{equation}
\alpha \nsubseteq \mu \text{ and }\beta \nsubseteq \delta .
\end{equation}%
Then, by the measure (\ref{pseudo-ortho}), we get the following
co-vector/vector coupling:%
\begin{align}
\langle \underline{\mathbf{h}}_{\mu ,\delta }^{\left( \underline{\mathbf{0}}%
\mathbf{,}\underline{\mathbf{2}}\right) }|\underline{\mathbf{h}}\rangle
_{p}& =\langle \underline{\mathbf{h}}_{\mu ,\delta }^{\left( \underline{%
\mathbf{0}}\mathbf{,}\underline{\mathbf{2}}\right) }|\underline{\mathbf{h}}%
\rangle +\sum_{r=1}^{n_{\text{\underline{$\mathbf{h}$}}}}\mathsf{c}^{r}\sum 
_{\substack{ \alpha \cup \beta \cup \gamma =\mathbf{1}_{\text{\underline{$%
\mathbf{h}$}}},  \\ \alpha ,\beta ,\gamma \text{ disjoint,}\,\#\alpha =\#\beta
=r}}B_{\alpha ,\beta ,\underline{\mathbf{h}}}\langle \underline{\mathbf{h}}%
_{\mu ,\delta }^{\left( \underline{\mathbf{0}}\mathbf{,}\underline{\mathbf{2}%
}\right) }|\underline{\mathbf{h}}_{\alpha ,\beta }^{\left( \underline{%
\mathbf{0}}\mathbf{,}\underline{\mathbf{2}}\right) }\rangle \\
& =\langle \underline{\mathbf{h}}_{\mu ,\delta }^{\left( \underline{\mathbf{0%
}}\mathbf{,}\underline{\mathbf{2}}\right) }|\underline{\mathbf{h}}\rangle +%
\mathsf{c}^{\#\delta }B_{\mu ,\delta ,\underline{\mathbf{h}}}\langle 
\underline{\mathbf{h}}_{\mu ,\delta }^{\left( \underline{\mathbf{0}}\mathbf{,%
}\underline{\mathbf{2}}\right) }|\underline{\mathbf{h}}_{\mu ,\delta
}^{\left( \underline{\mathbf{0}}\mathbf{,}\underline{\mathbf{2}}\right)
}\rangle +\sum_{r=1}^{\#\delta -1}\mathsf{c}^{r}\sum_{\substack{ \alpha
\subset \mu, \,\beta \subset \delta ,  \\ \#\alpha =\#\beta =r}}B_{\alpha
,\beta ,\underline{\mathbf{h}}}\langle \underline{\mathbf{h}}_{\mu ,\delta
}^{\left( \underline{\mathbf{0}}\mathbf{,}\underline{\mathbf{2}}\right) }|%
\underline{\mathbf{h}}_{\alpha ,\beta }^{\left( \underline{\mathbf{0}}%
\mathbf{,}\underline{\mathbf{2}}\right) }\rangle ,
\end{align}%
that we impose to be zero to satisfy the orthogonality condition:%
\begin{equation}
\langle \underline{\mathbf{h}}_{\mu ,\delta }^{\left( \underline{\mathbf{0}}%
\mathbf{,}\underline{\mathbf{2}}\right) }|\underline{\mathbf{h}}\rangle
_{p}=0, \quad \forall \mu \subset \mathbf{1}_{\text{\underline{$\mathbf{h%
}$}}},\delta \subset \mathbf{1}_{\text{\underline{$\mathbf{h}$}}}\text{
disjoint with }1\leq \#\mu =\#\delta \leq n_{\text{\underline{$\mathbf{h}$}}%
}.
\end{equation}%
Here, the main observation is that this can be seen as one equation in one
unknown $B_{\mu ,\delta ,\underline{\mathbf{h}}}$, and solved as it follows:%
\begin{equation}
B_{\mu ,\delta ,\underline{\mathbf{h}}}=-\frac{\langle \underline{\mathbf{h}}%
_{\mu ,\delta }^{\left( \underline{\mathbf{0}}\mathbf{,}\underline{\mathbf{2}%
}\right) }|\underline{\mathbf{h}}\rangle \mathsf{c}^{-\#\delta }}{\langle 
\underline{\mathbf{h}}_{\mu ,\delta }^{\left( \underline{\mathbf{0}}\mathbf{,%
}\underline{\mathbf{2}}\right) }|\underline{\mathbf{h}}_{\mu ,\delta
}^{\left( \underline{\mathbf{0}}\mathbf{,}\underline{\mathbf{2}}\right)
}\rangle }-\sum_{r=1}^{\#\delta -1}\mathsf{c}^{r-\#\delta }\sum_{\substack{ %
\alpha \subset \mu ,\, \beta \subset \delta ,  \\ \#\alpha =\#\beta =r}}%
B_{\alpha ,\beta ,\underline{\mathbf{h}}}\frac{\langle \underline{\mathbf{h}}%
_{\mu ,\delta }^{\left( \underline{\mathbf{0}}\mathbf{,}\underline{\mathbf{2}%
}\right) }|\underline{\mathbf{h}}_{\alpha ,\beta }^{\left( \underline{%
\mathbf{0}}\mathbf{,}\underline{\mathbf{2}}\right) }\rangle }{\langle 
\underline{\mathbf{h}}_{\mu ,\delta }^{\left( \underline{\mathbf{0}}\mathbf{,%
}\underline{\mathbf{2}}\right) }|\underline{\mathbf{h}}_{\mu ,\delta
}^{\left( \underline{\mathbf{0}}\mathbf{,}\underline{\mathbf{2}}\right)
}\rangle },
\end{equation}%
in terms of the known SoV co-vector/vector couplings and of the coefficients $%
B_{\alpha ,\beta ,\underline{\mathbf{h}}}$ for any $\alpha \subset \mu
,\beta \subset \delta$, with $1\leq \#\alpha =\#\beta \leq \#\delta -1$.
Then, by using the formulae (\ref{pseudo-ortho}) we get:%
\begin{equation}
\frac{\langle \underline{\mathbf{h}}_{\mu ,\delta }^{\left( \underline{%
\mathbf{0}}\mathbf{,}\underline{\mathbf{2}}\right) }|\underline{\mathbf{h}}%
\rangle \mathsf{c}^{-\#\delta }}{\langle \underline{\mathbf{h}}_{\mu ,\delta
}^{\left( \underline{\mathbf{0}}\mathbf{,}\underline{\mathbf{2}}\right) }|%
\underline{\mathbf{h}}_{\mu ,\delta }^{\left( \underline{\mathbf{0}}\mathbf{,%
}\underline{\mathbf{2}}\right) }\rangle }=\bar{C}_{\underline{\mathbf{h}}%
_{\mu ,\delta }^{\left( \underline{\mathbf{0}}\mathbf{,}\underline{\mathbf{2}%
}\right) }}^{\underline{\mathbf{h}}},\text{ \ }\left. \frac{\langle 
\underline{\mathbf{h}}_{\mu ,\delta }^{\left( \underline{\mathbf{0}}\mathbf{,%
}\underline{\mathbf{2}}\right) }|\underline{\mathbf{h}}_{\alpha ,\beta
}^{\left( \underline{\mathbf{0}}\mathbf{,}\underline{\mathbf{2}}\right)
}\rangle \mathsf{c}^{r-\#\delta }}{\langle \underline{\mathbf{h}}_{\mu
,\delta }^{\left( \underline{\mathbf{0}}\mathbf{,}\underline{\mathbf{2}}%
\right) }|\underline{\mathbf{h}}_{\mu ,\delta }^{\left( \underline{\mathbf{0}%
}\mathbf{,}\underline{\mathbf{2}}\right) }\rangle }\right\vert _{_{\substack{
\alpha \subset \mu, \,\beta \subset \delta ,  \\ \#\alpha =\#\beta =r}}}=\bar{C%
}_{\underline{\mathbf{h}}_{\mu ,\delta }^{\left( \underline{\mathbf{0}}%
\mathbf{,}\underline{\mathbf{2}}\right) }}^{\underline{\mathbf{h}}_{\alpha
,\beta }^{\left( \underline{\mathbf{0}}\mathbf{,}\underline{\mathbf{2}}%
\right) }},
\end{equation}%
from which our formula (\ref{SoV-Ms}) easily follows.

Now, it is simple to argue that (\ref{SoV-Ms}) gives us recursively all the
coefficient $B_{\mu ,\delta ,\underline{\mathbf{h}}}$ for any $\mu \subset 
\mathbf{1}_{\text{\underline{$\mathbf{h}$}}},\delta \subset \mathbf{1}_{%
\text{\underline{$\mathbf{h}$}}}$ disjoint, with $1\leq \#\mu =\#\delta \leq
n_{\text{\underline{$\mathbf{h}$}}}$.

In the case $\#\mu =\#\delta =1$, the formula (\ref{SoV-Ms}) reads:%
\begin{equation}
B_{a,b,\underline{\mathbf{h}}}=-\bar{C}_{\underline{\mathbf{h}}%
_{a,b}^{\left( \mathbf{0,}2\right) }}^{\underline{\mathbf{h}}}, \quad%
\forall a\neq b\in \mathbf{1}_{\text{\underline{$\mathbf{h}$}}},
\end{equation}%
which fixes completely these coefficients. Then, we can consider the case of
the generic couple of disjoint sets $\mu \subset \mathbf{1}_{\text{%
\underline{$\mathbf{h}$}}},\delta \subset \mathbf{1}_{\text{\underline{$%
\mathbf{h}$}}}$, with $\#\mu =\#\delta =2$. In these cases, we have that the
formula (\ref{SoV-Ms}) reads:%
\begin{equation}
B_{\mu ,\delta ,\underline{\mathbf{h}}}=-\bar{C}_{\underline{\mathbf{h}}%
_{\mu ,\delta }^{\left( \underline{\mathbf{0}}\mathbf{,}\underline{\mathbf{2}%
}\right) }}^{\underline{\mathbf{h}}}-\sum_{a\in \mu ,b\in \delta }B_{a,b,%
\underline{\mathbf{h}}}\bar{C}_{\underline{\mathbf{h}}_{\mu ,\delta
}^{\left( \underline{\mathbf{0}}\mathbf{,}\underline{\mathbf{2}}\right) }}^{%
\underline{\mathbf{h}}_{a,b}^{\left( \mathbf{0,}2\right) }},
\end{equation}%
which fixes completely these coefficients in terms of those computed in the
first step of the recursion. 

In this way the formula (\ref{SoV-Ms}) fixes
the coefficients $B_{\mu ^{\prime },\delta ^{\prime },\underline{\mathbf{h}}%
} $ for any fixed couple of disjoint sets $\mu ^{\prime }\subset \mathbf{1}_{%
\text{\underline{$\mathbf{h}$}}},\delta ^{\prime }\subset \mathbf{1}_{\text{%
\underline{$\mathbf{h}$}}}$, with $\#\mu ^{\prime }=\#\delta ^{\prime
}=m+1\leq n_{\text{\underline{$\mathbf{h}$}}}$, in terms of those already
computed, i.e. the $B_{\mu ,\delta ,\underline{\mathbf{h}}}$ for any fixed
couple of disjoint sets $\mu \subset \mu ^{\prime }\subset \mathbf{1}_{\text{%
\underline{$\mathbf{h}$}}},\delta \subset \delta ^{\prime }\subset \mathbf{1}%
_{\text{\underline{$\mathbf{h}$}}}$, with $\#\mu =\#\delta \leq m$.
\end{proof}

Let us note that the coefficients $B_{\alpha, \beta, \underline{\mathbf{h}}}$ are, as previously with the coefficients $C^{\underline{\mathbf{k}}}_{\underline{\mathbf{h}}}$, also characterized in a recursive manner, and their generic expression is missing from this Lemma. 

The previous lemma implies the following corollary, which completely
characterizes the SoV measure.

\begin{corollary} \label{cor:pseudo-ortho-of-ortho-basis}
Under the same condition of Theorem \ref{Central-T-det-non0}, the SoV
measure is defined by the following \textit{pseudo-orthogonality} relations:%
\begin{equation}
_{p}\langle \underline{\mathbf{h}}|\underline{\mathbf{k}}\rangle
_{p}=\langle \underline{\mathbf{h}}|\underline{\mathbf{h}}\rangle 
\left(\delta _{\text{\underline{$\mathbf{h}$}},\text{\underline{$\mathbf{k}$}}%
}+\sum_{r=1}^{n_{\text{\underline{$\mathbf{k}$}}}}\mathsf{c}^{r}\sum 
_{\substack{ \alpha \cup \beta \cup \gamma =\mathbf{1}_{\text{\underline{$%
\mathbf{k}$}}},  \\ \alpha ,\beta ,\gamma \text{ disjoint, }\#\alpha =\#\beta
=r}}B_{\alpha ,\beta ,\underline{\mathbf{k}}}\delta _{\underline{\mathbf{h}},%
\underline{\mathbf{k}}_{\alpha ,\beta }^{\left( \underline{\mathbf{0}}%
\mathbf{,}\underline{\mathbf{2}}\right) }}\right),  \label{SoV-Measure}
\end{equation}
\end{corollary}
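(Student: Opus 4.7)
The plan is to derive the Corollary directly from the expansion provided by the preceding Lemma, combined with the defining orthogonality relations \eqref{Orthogonality-Con}.

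First, I would take the expansion \eqref{Expansion-SoV-R} of $|\underline{\mathbf{k}}\rangle_p$ in the SoV vector basis:
\begin{equation*}
|\underline{\mathbf{k}}\rangle _{p}=|\underline{\mathbf{k}}\rangle
+\sum_{r=1}^{n_{\underline{\mathbf{k}}}}\mathsf{c}^{r}\sum_{\substack{ \alpha \cup \beta \cup \gamma =\mathbf{1}_{\underline{\mathbf{k}}}, \\ \alpha ,\beta ,\gamma \text{ disjoint,}\,\#\alpha =\#\beta =r}} B_{\alpha ,\beta ,\underline{\mathbf{k}}}|\underline{\mathbf{k}}_{\alpha ,\beta }^{( \underline{\mathbf{0}},\underline{\mathbf{2}}) }\rangle ,
\end{equation*}
and pair it with the co-vector $_{p}\langle \underline{\mathbf{h}}|$. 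By the defining property of $_{p}\langle \underline{\mathbf{h}}|$ recalled in \eqref{Orthogonality-Con}, namely $_{p}\langle \underline{\mathbf{h}}|\underline{\mathbf{l}}\rangle = \delta_{\underline{\mathbf{h}},\underline{\mathbf{l}}}\,\langle \underline{\mathbf{h}}|\underline{\mathbf{h}}\rangle$ for every SoV vector $|\underline{\mathbf{l}}\rangle$ in the basis \eqref{Vector-SoV-basis}, the pairing picks out at most one term from the expansion.

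The next step is then to sort the contributions. The diagonal term $|\underline{\mathbf{k}}\rangle$ contributes $\delta_{\underline{\mathbf{h}},\underline{\mathbf{k}}}\,\langle \underline{\mathbf{h}}|\underline{\mathbf{h}}\rangle$, and each term $|\underline{\mathbf{k}}_{\alpha,\beta}^{(\underline{\mathbf{0}},\underline{\mathbf{2}})}\rangle$ in the sum contributes $\delta_{\underline{\mathbf{h}},\underline{\mathbf{k}}_{\alpha,\beta}^{(\underline{\mathbf{0}},\underline{\mathbf{2}})}}\,\langle \underline{\mathbf{h}}|\underline{\mathbf{h}}\rangle$, exactly reproducing the right-hand side of \eqref{SoV-Measure}. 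Since the indices $\underline{\mathbf{k}}_{\alpha,\beta}^{(\underline{\mathbf{0}},\underline{\mathbf{2}})}$ appearing in the sum are pairwise distinct (any two choices of $(\alpha,\beta)$ give distinct configurations in $\{0,1,2\}^{\mathsf{N}}$), the Kronecker $\delta$'s unambiguously single out the one relevant $(\alpha,\beta)$ if it exists, which explains why the sum collapses to a single term or to zero, in parallel with the remark made after Theorem \ref{Central-T-det-non0}.

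This argument is essentially bookkeeping once the Lemma is in place, so there is no real obstacle; the content is entirely carried by the Lemma which is the hard part of the derivation. The only minor point to check is that the expansion \eqref{Expansion-SoV-R}, established for $|\underline{\mathbf{k}}\rangle_p$, together with the orthogonality $_p\langle \underline{\mathbf{h}}|\underline{\mathbf{l}}\rangle = \delta_{\underline{\mathbf{h}},\underline{\mathbf{l}}}\,\langle \underline{\mathbf{h}}|\underline{\mathbf{h}}\rangle$, produces a coefficient of $\langle \underline{\mathbf{h}}|\underline{\mathbf{h}}\rangle$ rather than of $\langle \underline{\mathbf{k}}|\underline{\mathbf{k}}\rangle$ in \eqref{SoV-Measure}; this is consistent with the convention adopted in \eqref{Orthogonality-Con}, where the normalisation is attached to the vector defining the orthogonal basis and not to the dual one being paired with it.
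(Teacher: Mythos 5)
Your proposal is correct and follows essentially the same route as the paper: expand $|\underline{\mathbf{k}}\rangle_p$ via the preceding Lemma, pair with $_p\langle\underline{\mathbf{h}}|$, and use the defining relation \eqref{Orthogonality-Con} (together with $_p\langle\underline{\mathbf{h}}|\underline{\mathbf{h}}\rangle=\langle\underline{\mathbf{h}}|\underline{\mathbf{h}}\rangle$) to collapse each pairing to a Kronecker delta times $\langle\underline{\mathbf{h}}|\underline{\mathbf{h}}\rangle$. Your closing remark about the normalisation being attached to $\langle\underline{\mathbf{h}}|\underline{\mathbf{h}}\rangle$ rather than $\langle\underline{\mathbf{k}}|\underline{\mathbf{k}}\rangle$ is exactly the point the paper makes in its last line.
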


\begin{proof}
We have to use just the expression derived in the previous lemma for the
generic vector%
\begin{equation}
|\underline{\mathbf{k}}\rangle _{p}=|\underline{\mathbf{k}}\rangle
+\sum_{r=1}^{n_{\text{\underline{$\mathbf{k}$}}}}\mathsf{c}^{r}\sum 
_{\substack{ \alpha \cup \beta \cup \gamma =\mathbf{1}_{\text{\underline{$%
\mathbf{k}$}}},  \\ \alpha ,\beta ,\gamma \text{ disjoint, }\#\alpha =\#\beta
=r}}B_{\alpha ,\beta ,\underline{\mathbf{k}}}|\underline{\mathbf{k}}_{\alpha
,\beta }^{\left( \underline{\mathbf{0}}\mathbf{,}\underline{\mathbf{2}}%
\right) }\rangle ,
\end{equation}%
and the definition of the co-vectors $_{p}\langle \underline{\mathbf{h}}|$
for which it holds:%
\begin{align}
_{p}\langle \underline{\mathbf{h}}|\underline{\mathbf{k}}\rangle _{p}
&=\,_{p}\langle \underline{\mathbf{h}}|\underline{\mathbf{k}}\rangle
+\sum_{r=1}^{n_{\text{\underline{$\mathbf{k}$}}}}\mathsf{c}^{r}\sum 
_{\substack{ \alpha \cup \beta \cup \gamma =\mathbf{1}_{\text{\underline{$%
\mathbf{k}$}}},  \\ \alpha ,\beta ,\gamma \text{ disjoint, }\#\alpha =\#\beta
=r}}B_{\alpha ,\beta ,\underline{\mathbf{k}}\,}\text{ }_{p}\langle 
\underline{\mathbf{h}}|\underline{\mathbf{k}}_{\alpha ,\beta }^{\left( 
\underline{\mathbf{0}}\mathbf{,}\underline{\mathbf{2}}\right) }\rangle \\
&=\,_{p}\langle \underline{\mathbf{h}}|\underline{\mathbf{h}}\rangle
\left(\delta _{\text{\underline{$\mathbf{h}$}},\text{\underline{$\mathbf{k}$}}%
}+\sum_{r=1}^{n_{\text{\underline{$\mathbf{k}$}}}}\mathsf{c}^{r}\sum 
_{\substack{ \alpha \cup \beta \cup \gamma =\mathbf{1}_{\text{\underline{$%
\mathbf{k}$}}},  \\ \alpha ,\beta ,\gamma \text{ disjoint, }\#\alpha =\#\beta
=r}}B_{\alpha ,\beta ,\underline{\mathbf{k}}\,}\text{ }\delta _{\underline{%
\mathbf{h}},\underline{\mathbf{k}}_{\alpha ,\beta }^{\left( \underline{%
\mathbf{0}}\mathbf{,}\underline{\mathbf{2}}\right) }}\right),
\end{align}%
and being $_{p}\langle \underline{\mathbf{h}}|\underline{\mathbf{h}}\rangle
=\langle \underline{\mathbf{h}}|\underline{\mathbf{h}}\rangle$, our result
follows.
\end{proof}


\section{On the construction of orthogonal co-vector/vector SoV bases}
\label{sec:orthogonal-bases}

We would like now to introduce a new family of commuting conserved
charges in order to construct from them orthogonal co-vector/vector SoV bases.
We first describe our construction for the class of simple spectrum 
and non-invertible $K$-matrices. Then, from this class, we will define a new family of
commuting conserved charges $\mathbb{T}(\lambda )$ which allows for the
construction of the co-vector/vector orthogonal SoV bases for a generic simple spectrum $K$-matrix. The scalar product of separate states w.r.t.\ the
charges $\mathbb{T}(\lambda )$, a class of co-vector/vector which contains
the transfer matrix eigenstates, are computed and shown to have a form
similar to those of the $gl_{2}$ case once one of the two states is a $%
\mathbb{T}(\lambda )$ eigenvector.

\subsection{The case of non-invertible \texorpdfstring{$\hat{K}$}{K}-matrices with simple spectrum}
\label{ssec:non-invertible-twist-case}

In the $gl_3$ case, the construction of a vector SoV basis orthogonal to the
left one is not automatic, as it was in the $gl_2$ case. Here, it seems that
the choice of the appropriate family of commuting conserved charges to
construct the basis plays a fundamental role. In this sub-section, we consider
the special case of a simple spectrum $\hat{K}$-matrix with one zero eigenvalue. The orthogonal co-vector and vector SoV bases will be constructed using the transfer matrices as in the previous section.

\begin{theorem}
\label{Central-T-det0}Let $\hat{K}$ be a $3\times 3$ simple spectrum matrix with
one zero eigenvalue. For almost any choice of the co-vector $\langle $%
\underline{$\mathbf{1}$}$|$ and of the inhomogeneities under the condition $(%
\ref{Inhomog-cond})$, the set of co-vectors $(\ref%
{Co-vector-SoV-basis})$ and vectors $(\ref{Vector-SoV-basis})$ form SoV
co-vector and SoV vector bases of $\mathcal{H}^*$ and $\mathcal{H}$, respectively. In particular,
we can take $\langle $\underline{$\mathbf{1}$}$|$ of the tensor product form 
$(\ref{S-co-vector})$, then the associated vector $|\underline{\mathbf{0}}%
\rangle $ has the tensor product form defined in $(\ref{S-vector})$-$(\ref%
{S-vector-a})$ and $(\ref{Co-vector-SoV-basis})$ and $(\ref{Vector-SoV-basis}%
) $ are basis of $\mathcal{H}$ simply asking $x\,y\,z\neq 0$ in the case i), 
$x\,\,z\neq 0$ in the case ii), $x\,\neq 0$ in the case iii).

Furthermore, $(\ref{Co-vector-SoV-basis})$ and $(\ref{Vector-SoV-basis})$ are
mutually orthogonal SoV bases, i.e. they define the
following decomposition of the identity:%
\begin{equation}
\mathbb{I}\equiv \sum_{\underline{\mathbf{h}}}\frac{|\underline{\mathbf{h}}\rangle \langle \underline{\mathbf{h}}|}{\mathsf{N}%
_{\underline{\mathbf{h}}}},
\end{equation}%
with%
\begin{equation}
\mathsf{N}_{\underline{\mathbf{h}}}=
\prod_{a=1}^{\mathsf{N}}\frac{d(\xi _{a}^{\left( 1\right) })}{d\Big(\xi
_{a}^{\left( 1+\delta _{h_{a},1}+\delta _{h_{a},2}\right) }\Big)}
\frac{V^{2}(\xi _{1},...,\xi _{\mathsf{N}})}	{V\Big(\xi
_{1}^{(\delta _{h_{1},2}+\delta _{h_{1},1})},...,\xi _{\mathsf{N}}^{(\delta
_{h_{\mathsf{N}},1}+\delta _{h_{\mathsf{N}},2})}\Big) V\Big(\xi _{1}^{(\delta
_{h_{1},2})},...,\xi _{\mathsf{N}}^{(\delta _{h_{\mathsf{N}},2})}\Big)}.
\label{Sk-measure}
\end{equation}
\end{theorem}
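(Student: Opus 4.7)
The plan is to obtain Theorem~\ref{Central-T-det0} as a direct specialization of Proposition~\ref{SoVbases} and Theorem~\ref{Central-T-det-non0} to the case $\det \hat{K}=0$. The hypothesis of simple spectrum plus one zero eigenvalue fits into the classification of cases i)--iii) already covered in those two results, so nothing new needs to be built; what is needed is to track where invertibility was (or was not) actually used, and to read off the consequences of setting $\det K =0$ in the pseudo-orthogonality formula.

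First, I would establish the basis claim and the tensor product form of $|\underline{\mathbf{0}}\rangle$ by invoking Proposition~\ref{SoVbases} verbatim. Its proof proceeds by the scaling $\xi_n = n\xi \to \infty$ and reduces the statement to the fact that, for a suitable $\langle u|\in\mathbb{C}^3$, the triple $\langle u|, \langle u| K, \langle u| K^2$ is a basis of $\mathbb{C}^3$. This is equivalent to $K$ admitting a cyclic vector, which is guaranteed by the simple spectrum hypothesis alone; invertibility is nowhere used. The conditions on $x,y,z$ in cases i)--iii), as well as the explicit tensor product expression (\ref{S-vector})--(\ref{S-vector-a}) and its non-degeneracy $\Delta\neq 0$, then carry over to the present $\hat K$ without modification.

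Second, I would deduce orthogonality together with the explicit normalisation (\ref{Sk-measure}) directly from Theorem~\ref{Central-T-det-non0}. The key structural fact proved there is that in
\[
\langle \underline{\mathbf{h}}|\underline{\mathbf{k}}\rangle =\langle \underline{\mathbf{k}}|\underline{\mathbf{k}}\rangle \left(\delta_{\underline{\mathbf{h}},\underline{\mathbf{k}}}+C_{\underline{\mathbf{h}}}^{\underline{\mathbf{k}}}\sum_{r=1}^{n_{\underline{\mathbf{k}}}}(\det K)^{r}\sum_{\alpha,\beta}\delta_{\underline{\mathbf{h}},\underline{\mathbf{k}}_{\alpha,\beta}^{(\underline{\mathbf{0}},\underline{\mathbf{2}})}}\right)
\]
every off-diagonal contribution is weighted by a positive power of $\det K$, with the coefficient $C_{\underline{\mathbf{h}}}^{\underline{\mathbf{k}}}$ explicitly independent of $\det K$. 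Substituting $\det \hat K = 0$ kills every $r\geq 1$ term, leaving $\langle \underline{\mathbf{h}}|\underline{\mathbf{k}}\rangle = \delta_{\underline{\mathbf{h}},\underline{\mathbf{k}}}\,\mathsf{N}_{\underline{\mathbf{h}}}$ with $\mathsf{N}_{\underline{\mathbf{h}}}$ given by (\ref{eq:diagonal-sov-measure}), which is identical to (\ref{Sk-measure}). The decomposition of the identity then follows from mutual orthogonality once one checks $\mathsf{N}_{\underline{\mathbf{h}}}\neq 0$ for all $\underline{\mathbf{h}}\in\{0,1,2\}^{\mathsf N}$; this is immediate from the explicit product structure and the generic inhomogeneity condition (\ref{Inhomog-cond}), which ensures none of the Vandermonde factors or $d$-factors in the numerator/denominator can vanish.

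The main obstacle therefore does not lie in Theorem~\ref{Central-T-det0} itself but in the algebraic fact, proved in Appendix~\ref{sec:proof-thm}, that all off-diagonal couplings $\langle \underline{\mathbf{h}}|\underline{\mathbf{k}}\rangle$ factor through strictly positive powers of $\det K$. That factorisation is what converts the generic pseudo-orthogonality into genuine orthogonality in the $\det\hat K=0$ specialisation. Once Theorem~\ref{Central-T-det-non0} is accepted, Theorem~\ref{Central-T-det0} is essentially a one-line corollary, which also explains why this is the first class of twists for which orthogonal transfer-matrix-generated SoV bases exist, and motivates the more elaborate construction of charges $\mathbb{T}(\lambda)$ in subsection~\ref{ssec:extension-simple-spectrum-case} that extends the orthogonality beyond the non-invertible case.
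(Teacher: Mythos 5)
Your proposal is correct and coincides with the paper's own (primary) proof: Theorem \ref{Central-T-det0} is stated there to follow immediately from Theorem \ref{Central-T-det-non0} by setting $\det K=0$, since every off-diagonal coupling in \eqref{pseudo-ortho} carries a strictly positive power of $\det K$ with a $\det K$-independent coefficient, and the diagonal coupling \eqref{eq:diagonal-sov-measure} is exactly \eqref{Sk-measure}. Be aware only that the paper additionally supplies a self-contained ``direct proof'' using the simplified fusion relations $T_{2}^{(\hat K)}(\xi_a^{(1)})T_{1}^{(\hat K)}(\xi_a)=0$ and an induction on the SoV vector labels, which is what later allows the orthogonality to be transported to the invertible case via the charges $\mathbb{T}_{j}^{(K)}(\lambda)$; your route, while valid, rests entirely on the heavy Appendix C machinery.
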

 \begin{proof} This theorem follows immediately from the results of Theorem \ref{Central-T-det-non0} putting to zero the determinant of the matrix $K$. 
 \end{proof}
 However, the proof of our Theorem \ref{Central-T-det-non0} is rather involved and takes quite numerous steps that we give in the appendices. It is therefore of interest to have a more elementary proof in the case at hand, namely whenever the simple spectrum twist matrix $K$ has zero determinant or better to say as soon as the fusion relations for the transfer matrices simplify due to the vanishing of its associated quantum determinant. In fact, this case will provide the generic idea to get orthogonal left and right SoV bases in the general situation. So let us explain from now on a direct proof of this theorem.
\begin{proof}[Idea of the direct proof]
The statement that $(\ref{Co-vector-SoV-basis})$ is a co-vector basis of $%
\mathcal{H}$ is proven as in the previous proposition. Indeed the main condition:%
\begin{equation}
\det M_{x,y,z,\hat{K}_{J}}\neq 0
\end{equation}%
can be satisfied as well in the case $\det\hat{K}=0$. In fact, if the matrix $%
\hat{K}$ satisfies the case i), we take $\mathsf{k}_{2}=0$ and the condition
is still $x\,y\,z\neq 0$; if the matrix $\hat{K}$ satisfies the case ii), we
take $\mathsf{k}_{0}=0$ or $\mathsf{k}_{2}=0$ and the condition is still $%
x\,\,z\neq 0$. Finally in the case iii) with $\mathsf{k}_{0}=\mathsf{k}_{1}=%
\mathsf{k}_{2}=0$ the condition is still $x\,\neq 0$. So that we are left with the proof of the orthogonality conditions.  which can be proven by using the next results.
\end{proof}

The first step in the direct proof of the above theorem is to obtain the 
SoV representations for the action of the transfer matrices in the case where the fusion relations simplify due to the vanishing of its associated quantum determinant. It is given in the following Proposition.

\begin{proposition}
Under the same conditions of the above theorem, the following
interpolation formulae hold for the transfer matrices:

i) On the SoV co-vector basis:%
\begin{equation}
\langle \text{\underline{$\mathbf{h}$}}|T_{2}^{(\hat{K})}(\lambda )=d\left(
\lambda -\eta \right) \left( \sum_{a=1}^{\mathsf{N}}\delta _{h_{a},1}g_{a,%
\underline{\mathbf{z}}(\underline{\mathbf{h}})}^{\left( 2\right) }(\lambda
)\langle \text{\underline{$\mathbf{h}$}}|\boldsymbol{T}_{a}^{-}+T_{2,%
\underline{\mathbf{z}}(\underline{\mathbf{h}})}^{(\hat{K},\infty )}(\lambda
)\langle \text{\underline{$\mathbf{h}$}}|\right) ,
\end{equation}%
and%
\begin{align}
\langle \text{\underline{$\mathbf{h}$}}|T_{1}^{(\hat{K})}(\lambda )& =T_{1,%
\underline{\mathbf{y}}(\underline{\mathbf{h}})}^{(\hat{K},\infty )}(\lambda
)\langle \text{\underline{$\mathbf{h}$}}|+\sum_{a=1}^{\mathsf{N}}\delta
_{h_{a},1}g_{a,\underline{\mathbf{y}}(\underline{\mathbf{h}})}^{\left(
1\right) }(\lambda )\langle \text{\underline{$\mathbf{h}$}}|\boldsymbol{T}%
_{a}^{+}+\sum_{a=1}^{\mathsf{N}}\delta _{h_{a},2}g_{a,\underline{\mathbf{y}}(%
\underline{\mathbf{h}})}^{\left( 1\right) }(\lambda )d(\xi _{a}^{\left(
1\right) })  \notag \\
& \times \left( \sum_{b=1}^{\mathsf{N}}\delta _{h_{b}(a),1}g_{b,\underline{%
\mathbf{z}}(\underline{\mathbf{h}}_{a}^{(1)})}^{\left( 2\right) }(\xi
_{a})\langle \underline{\mathbf{h}}_{a}^{(1)}|\boldsymbol{T}_{b}^{-}+T_{2,%
\underline{\mathbf{z}}(\underline{\mathbf{h}}_{a}^{(1)})}^{(\hat{K},\infty
)}(\xi _{a})\langle \underline{\mathbf{h}}_{a}^{(1)}|\right) ,
\end{align}%
where%
\begin{align}
\underline{\mathbf{z}}(\underline{\mathbf{h}}) &=\{\delta _{h_{1},1}+\delta
_{h_{1},2},...,\delta _{h_{\mathsf{N}},1}+\delta _{h_{\mathsf{N}},2}\},\text{
}\ \underline{\mathbf{y}}(\underline{\mathbf{h}})=\{\delta
_{h_{1},2},...,\delta _{h_{\mathsf{N}},2}\}, \\
\underline{\mathbf{h}}_{a}^{\left( 1\right) } &=\underline{\mathbf{h}}%
-(h_{a}-1)\underline{\mathbf{e}}_{a}\text{ \ with \ }\underline{\mathbf{e}}%
_{a}=\{\delta _{1,a},...,\delta _{\mathsf{N},a}\}.
\end{align}%
and%
\begin{equation}
\langle h_{1},...,h_{a},...,h_{\mathsf{N}}|\boldsymbol{T}_{a}^{\pm
}\left. =\right. \langle h_{1},...,h_{a}\pm 1,...,h_{\mathsf{N}}|.
\end{equation}%
ii) On the SoV vector basis:%
\begin{equation}
T_{2}^{(\hat{K})}(\lambda )|\text{\underline{$\mathbf{h}$}}\rangle =d\left(
\lambda -\eta \right) \left( \sum_{a=1}^{\mathsf{N}}\delta _{h_{a},0}g_{a,%
\underline{\mathbf{z}}(\underline{\mathbf{h}})}^{\left( 2\right) }(\lambda )%
\boldsymbol{T}_{a}^{+}|\text{\underline{$\mathbf{h}$}}\rangle +|\text{%
\underline{$\mathbf{h}$}}\rangle T_{2,\underline{\mathbf{z}}(\underline{%
\mathbf{h}})}^{(\hat{K},\infty )}(\lambda )\right) ,
\end{equation}%
and%
\begin{align}
T_{1}^{(\hat{K})}(\lambda )|\text{\underline{$\mathbf{h}$}}\rangle & =|\text{%
\underline{$\mathbf{h}$}}\rangle T_{1,\underline{\mathbf{y}}(\underline{%
\mathbf{h}})}^{(\hat{K},\infty )}(\lambda )+\sum_{a=1}^{\mathsf{N}}\delta
_{h_{a},0}g_{a,\underline{\mathbf{y}}(\underline{\mathbf{h}})}^{\left(
1\right) }(\lambda )\left( \boldsymbol{T}_{a}^{+}\right) ^{2}|\text{%
\underline{$\mathbf{h}$}}\rangle +\sum_{a=1}^{\mathsf{N}}\delta
_{h_{a},2}g_{a,\underline{\mathbf{y}}(\underline{\mathbf{h}})}^{\left(
1\right) }(\lambda )\boldsymbol{T}_{a}^{-}|\text{\underline{$\mathbf{h}$}}%
\rangle  \notag \\
& +\sum_{a=1}^{\mathsf{N}}\delta _{h_{a},1}g_{a,\underline{\mathbf{y}}(%
\underline{\mathbf{h}})}^{\left( 1\right) }(\lambda )d(\xi _{a}^{\left(
1\right) })\left( \sum_{b=1}^{\mathsf{N}}\delta _{\bar{h}_{b}(a),0}g_{b,%
\underline{\mathbf{z}}(\underline{\mathbf{h}}_{a}^{(2)})}^{\left( 2\right)
}(\xi _{a})\boldsymbol{T}_{b}^{+}|\underline{\mathbf{h}}_{a}^{(2)}\rangle
+|\underline{\mathbf{h}}_{a}^{(2)}\rangle T_{2,\mathbf{z}(\underline{\mathbf{%
h}}_{a}^{(2)}\mathbf{)}}^{(\hat{K},\infty )}(\xi _{a})\right) ,
\end{align}%
where%
\begin{equation}
\underline{\mathbf{h}}_{a}^{(2)}=\underline{\mathbf{h}}-(h_{a}-2)\underline{%
\mathbf{e}}_{a}.
\end{equation}%
and%
\begin{equation}
\boldsymbol{T}_{a}^{\pm }|h_{1},...,h_{a},...,h_{\mathsf{N}}\rangle
\left. =\right. |h_{1},...,h_{a}\pm 1,...,h_{\mathsf{N}}\rangle .
\end{equation}
\end{proposition}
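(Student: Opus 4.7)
The plan is to derive all four formulas by Lagrange interpolation in the spectral parameter $\lambda$, exploiting the polynomial structure of the transfer matrices combined with the fundamental simplification $T_3^{(\hat{K})}(\lambda)\equiv 0$ implied by $\det\hat{K}=0$. The four cases are entirely parallel, so one writes out $\langle\underline{\mathbf{h}}|T_2^{(\hat{K})}(\lambda)$ in full detail. First, one factorizes the known central zeros to get $T_2^{(\hat{K})}(\lambda)=d(\lambda-\eta)\tilde{T}_2(\lambda)$ with $\tilde{T}_2$ of degree $\mathsf{N}$ and leading coefficient $T_2^{(\hat{K},\infty)}$. Applied to $\langle\underline{\mathbf{h}}|$, Lagrange interpolation at the $\mathsf{N}$ nodes $\xi_a^{(z_a)}$ with $z_a=\delta_{h_a,1}+\delta_{h_a,2}\in\{0,1\}$ (pairwise distinct under $(\ref{Inhomog-cond})$) reduces the whole statement to computing the $\mathsf{N}$ evaluations $\langle\underline{\mathbf{h}}|T_2^{(\hat{K})}(\xi_a^{(z_a)})$, and the Lagrange weight $g_{a,\underline{\mathbf{z}}(\underline{\mathbf{h}})}^{(2)}(\lambda)$ automatically absorbs the factor $1/d(\xi_a^{(z_a+1)})$ arising from $\tilde{T}_2=T_2^{(\hat{K})}/d(\lambda-\eta)$.

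Second, since all the operators involved commute (they belong to the abelian family of transfer matrices), one slides $T_2^{(\hat{K})}(\xi_a^{(z_a)})$ past the product $\prod_n T_2^{(\hat{K})\delta_{h_n,0}}(\xi_n^{(1)})T_1^{(\hat{K})\delta_{h_n,2}}(\xi_n)$ defining $\langle\underline{\mathbf{h}}|$ and applies at site $a$ the two fusion identities
\begin{equation*}
T_1^{(\hat{K})}(\xi_a)\,T_1^{(\hat{K})}(\xi_a^{(1)})=T_2^{(\hat{K})}(\xi_a),\qquad T_1^{(\hat{K})}(\xi_a)\,T_2^{(\hat{K})}(\xi_a^{(1)})=T_3^{(\hat{K})}(\xi_a)=0.
\end{equation*}
For $h_a=2$, the factor $T_1^{(\hat{K})}(\xi_a)$ already in $\langle\underline{\mathbf{h}}|$ combines with $T_2^{(\hat{K})}(\xi_a^{(1)})$ to give $T_3^{(\hat{K})}(\xi_a)=0$; for $h_a=0$, one rewrites $T_2^{(\hat{K})}(\xi_a)=T_1^{(\hat{K})}(\xi_a)T_1^{(\hat{K})}(\xi_a^{(1)})$ and combines the new $T_1^{(\hat{K})}(\xi_a)$ with the $T_2^{(\hat{K})}(\xi_a^{(1)})$ already present in $\langle\underline{\mathbf{h}}|$ for the same vanishing; and for $h_a=1$, the insertion of the missing factor $T_2^{(\hat{K})}(\xi_a^{(1)})$ at the empty site converts $\langle\underline{\mathbf{h}}|$ exactly into $\langle\underline{\mathbf{h}}|\boldsymbol{T}_a^-$, yielding the stated formula. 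The companion formula for $\langle\underline{\mathbf{h}}|T_1^{(\hat{K})}(\lambda)$ follows in the same way: $T_1^{(\hat{K})}$ has degree $\mathsf{N}$, interpolated at the $\mathsf{N}$ nodes $\xi_a^{(y_a)}$ with $y_a=\delta_{h_a,2}$, and the cases $h_a=0,1$ are again handled by $T_3^{(\hat{K})}=0$ and by the direct shift $\boldsymbol{T}_a^+$. The genuinely new step is the case $h_a=2$ at the node $\xi_a^{(1)}$: one uses the first fusion identity to rewrite the evaluation as $\langle\underline{\mathbf{h}}_a^{(1)}|T_2^{(\hat{K})}(\xi_a)$, and then re-injects the $T_2^{(\hat{K})}$-formula just established at $\lambda=\xi_a$ on the shifted multi-index $\underline{\mathbf{h}}_a^{(1)}$, which produces exactly the nested sum and the explicit $d(\xi_a^{(1)})$ factor of the statement.

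The two vector-basis formulas $T_m^{(\hat{K})}(\lambda)|\underline{\mathbf{h}}\rangle$ are obtained by the mirror argument acting from the right on $|\underline{\mathbf{0}}\rangle$, where the different definition \eqref{Vector-SoV-basis} of $|\underline{\mathbf{h}}\rangle$ swaps the roles of "empty site" and "filled site": the recursion now originates from $h_a=1$ rather than $h_a=2$ and reaches the shifted index $\underline{\mathbf{h}}_a^{(2)}$, in agreement with the stated formulas. The main obstacle, and the step deserving most care, is precisely this self-referential recursion: one must verify that after the fusion-induced reduction to $T_2^{(\hat{K})}$ acting on the shifted index, the labels $\underline{\mathbf{z}}$ and $\underline{\mathbf{y}}$ produced by the shift are consistent with the Lagrange weights and asymptotic polynomials appearing in the formulas, and that the factor $d(\xi_a^{(1)})$ generated by the recursion combines correctly with the $1/d(\xi_a^{(y_a+1)})$ absorbed into $g_{a,\underline{\mathbf{y}}(\underline{\mathbf{h}})}^{(1)}$. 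Everything else --- polynomial interpolation, central zeros and asymptotics --- is standard and uniquely determined by the data, so that the whole proposition ultimately reduces to the two elementary fusion identities displayed above.
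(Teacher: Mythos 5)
Your proposal is correct and follows essentially the same route as the paper's proof: Lagrange interpolation of $T_1^{(\hat K)}$ and $T_2^{(\hat K)}$ at the shifted nodes $\xi_a^{(\delta_{h_a,2})}$ and $\xi_a^{(\delta_{h_a,1}+\delta_{h_a,2})}$, evaluation of each node via the simplified fusion identities $T_1^{(\hat K)}(\xi_a)T_1^{(\hat K)}(\xi_a^{(1)})=T_2^{(\hat K)}(\xi_a)$ and $\operatorname{q-det}M^{(\hat K)}(\xi_a)=0$, and re-injection of the already established $T_2^{(\hat K)}$-formula at $\lambda=\xi_a$ to produce the nested terms (the paper packages the vanishing of $T_2^{(\hat K)}(\xi_a^{(1)})T_2^{(\hat K)}(\xi_a)$ as a third fusion relation, which you recover by substituting the first identity into the second). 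All the key mechanisms — which node survives for each value of $h_a$, the shift operators $\boldsymbol{T}_a^\pm$, and the swap of roles between $h_a=1$ and $h_a=2$ on the vector side — are correctly identified.
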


\begin{proof}
The fusion identities take the following form in the case $\det\hat{K}=0$:%
\begin{align}
T_{2}^{(\hat{K})}(\xi _{a}^{\left( 1\right) })T_{1}^{(\hat{K})}(\xi _{a})
&=\operatorname{q-det}M^{(\hat{K})}(\xi _{a})=0, \\
T_{1}^{(\hat{K})}(\xi _{a}^{\left( 1\right) })T_{1}^{(\hat{K})}(\xi _{a})
&=T_{2}^{(\hat{K})}(\xi _{a}), \\
T_{2}^{(\hat{K})}(\xi _{a}^{\left( 1\right) })T_{2}^{(\hat{K})}(\xi _{a})
&=T_{1}^{(\hat{K})}(\xi _{a}^{\left( 1\right) })\operatorname{q-det}M_{a}^{(\hat{K%
})}(\xi _{a})=0.
\end{align}%
Let us take the generic co-vector\footnote{For convenience, in this section, we do not use uniformly compact notations for the SoV basis co-vectors as their explicit form is sometimes more convenient to write the action of the transfer matrices on them.} $\langle h_{1},...,h_{\mathsf{N}}|$ and
then use the interpolation formula:%
\begin{equation}
T_{2}^{(\hat{K})}(\lambda )=d(\lambda -\eta )\left( T_{2,\underline{\mathbf{z%
}}(\underline{\mathbf{h}})}^{(\hat{K},\infty )}(\lambda )+\sum_{a=1}^{%
\mathsf{N}}g_{a,\underline{\mathbf{z}}(\underline{\mathbf{h}})}^{\left(
2\right) }(\lambda )T_{2}^{(\hat{K})}\big(\xi _{a}^{(\delta _{h_{a},1}+\delta
_{h_{a},2})}\big)\right) ,
\end{equation}%
to compute the action of $T_{2}^{(\hat{K})}(\lambda )$ on $\langle
h_{1},...,h_{\mathsf{N}}|$:%
\begin{equation}
\langle \text{\underline{$\mathbf{h}$}}|T_{2}(\lambda )=d(\lambda -\eta
)\left( T_{2,\underline{\mathbf{z}}(\underline{\mathbf{h}})}^{(\hat{K}%
,\infty )}(\lambda )\langle \text{\underline{$\mathbf{h}$}}|+\sum_{a=1}^{%
\mathsf{N}}g_{a,\underline{\mathbf{z}}(\underline{\mathbf{h}})}^{\left(
2\right) }(\lambda )\langle \text{\underline{$\mathbf{h}$}}|T_{2}^{(\hat{K}%
)}\big(\xi _{a}^{(\delta _{h_{a},1}+\delta _{h_{a},2})}\big)\right) ,
\end{equation}%
where it holds:%
\begin{equation}
\langle h_{1},...,h_{a},...,h_{\mathsf{N}}|T_{2}^{(\hat{K})}\big(\xi
_{a}^{(\delta _{h_{a},2}+\delta _{h_{a},1})}\big)=\delta _{h_{a},1}\langle
h_{1},...,h_{a}^{\prime }=0,...,h_{\mathsf{N}}|\,,
\end{equation}%
being by the fusion identities:%
\begin{align}
\langle h_{1},...,h_{a} =2,...,h_{\mathsf{N}}|T_{2}^{(\hat{K})}(\xi
_{a}^{(1)})&=0, \\
\langle h_{1},...,h_{a} =0,...,h_{\mathsf{N}}|T_{2}^{(\hat{K})}(\xi
_{a})&=0.
\end{align}%
This proves our interpolation formula for the action of $T_{2}^{(\hat{K}%
)}(\lambda )$ on the generic element of the co-vector basis $\langle
h_{1},...,h_{\mathsf{N}}|$. Let us now use the following interpolation
formula:%
\begin{equation}
T_{1}^{(\hat{K})}(\lambda )=T_{1,\underline{\mathbf{y}}(\underline{\mathbf{h}%
})}^{(\hat{K},\infty )}(\lambda )+\sum_{a=1}^{\mathsf{N}}g_{a,\underline{%
\mathbf{y}}(\underline{\mathbf{h}})}^{\left( 1\right) }(\lambda )T_{1}^{(%
\hat{K})}(\xi _{a}^{(\delta _{h_{a},2})}),
\end{equation}%
to compute the action of $T_{1}^{(\hat{K})}(\lambda )$ on $\langle
h_{1},...,h_{\mathsf{N}}|$: 
\begin{equation}
\langle \text{\underline{$\mathbf{h}$}}|T_{1}^{(\hat{K})}(\lambda )=T_{1,%
\underline{\mathbf{y}}(\underline{\mathbf{h}})}^{(\hat{K},\infty )}(\lambda
)\langle \text{\underline{$\mathbf{h}$}}|+\sum_{a=1}^{\mathsf{N}}g_{a,%
\underline{\mathbf{y}}(\underline{\mathbf{h}})}^{\left( 1\right) }(\lambda
)\left( \delta _{h_{a},1}\langle \text{\underline{$\mathbf{h}$}}|T_{1}^{(%
\hat{K})}(\xi _{a})+\delta _{h_{a},2}\langle \text{\underline{$\mathbf{h}$}}%
|T_{1}^{(\hat{K})}(\xi _{a}^{\left( 1\right) })\right) ,
\end{equation}%
where we have used that by the fusion identity it holds:%
\begin{equation}
\langle h_{1},...,h_{a}=0,...,h_{\mathsf{N}}|T_{1}^{(\hat{K})}(\xi _{a})=0,
\end{equation}%
so that the above formula reduces to:%
\begin{equation}
\langle \text{\underline{$\mathbf{h}$}}|T_{1}^{(\hat{K})}(\lambda )=T_{1,%
\underline{\mathbf{y}}(\underline{\mathbf{h}})}^{(\hat{K},\infty )}(\lambda
)\langle \text{\underline{$\mathbf{h}$}}|+\sum_{a=1}^{\mathsf{N}}g_{a,%
\underline{\mathbf{y}}(\underline{\mathbf{h}})}^{\left( 1\right) }(\lambda
)\delta _{h_{a},1}\langle \text{\underline{$\mathbf{h}$}}|T_{1}^{(\hat{K}%
)}(\xi _{a})+\sum_{a=1}^{\mathsf{N}}g_{a,\underline{\mathbf{y}}(\underline{%
\mathbf{h}})}^{\left( 1\right) }(\lambda )\delta _{h_{a},2}\langle \text{%
\underline{$\mathbf{h}$}}_{a}^{(1)}|T_{2}^{(\hat{K})}(\xi _{a}).
\end{equation}%
This leads to our result for the action of $T_{1}^{(\hat{K})}(\lambda )$ on 
$\langle \underline{\mathbf{h}}|$ once we use the proven formula for the
action of $T_{2}^{(\hat{K})}(\lambda )$ on $\langle \underline{\mathbf{h}}| $.

Let us now prove the interpolation formulae for the action on SoV vectors.
The fusion identities for the case $\det\hat{K}=0$ imply\footnote{It is important to remark that the only ingredient of the proof for this theorem involve only the simplified fusion relations.}:%
\begin{equation}
T_{2}^{(\hat{K})}(\xi _{a}^{\left( 1\right) })|h_{1},...,h_{a}\neq 0,...,h_{%
\mathsf{N}}\rangle \left. =\right. 0,
\end{equation}%
and so the only contributions to the action of $T_{2}^{(\hat{K})}(\lambda )$
on a vector $|$\underline{$\mathbf{h}$}$\rangle $ come from the central
asymptotic term and the terms for $h_{a}=0$, from which the action of $%
T_{2}^{(\hat{K})}(\lambda )$ easily follows. Let us now remark that the
fusion identities together with the commutativity of the transfer matrices
also imply the following actions:%
\begin{align}
T_{1}^{(\hat{K})}(\xi _{a})|h_{1},...,h_{a} =1,...,h_{\mathsf{N}}\rangle
 &= T_{2}^{(\hat{K})}(\xi _{a})|h_{1},...,h_{a}=2,...,h_{\mathsf{%
N}}\rangle , \\
T_{1}^{(\hat{K})}(\xi _{a}^{\left( 1\right) })|h_{1},...,h_{a} =2,...,h_{%
\mathsf{N}}\rangle  &= |h_{1},...,h_{a}=1,...,h_{\mathsf{N}%
}\rangle , \\
T_{1}^{(\hat{K})}(\xi _{a})|h_{1},...,h_{a} =0,...,h_{\mathsf{N}}\rangle
 &= |h_{1},...,h_{a}=2,...,h_{\mathsf{N}}\rangle ,
\end{align}%
from which we get the following action by interpolation formula%
\begin{align}
T_{1}^{(\hat{K})}(\lambda )|\text{\underline{$\mathbf{h}$}}\rangle & =|\text{%
\underline{$\mathbf{h}$}}\rangle T_{1,\underline{\mathbf{y}}(\underline{%
\mathbf{h}})}^{(\hat{K},\infty )}(\lambda )+\sum_{a=1}^{\mathsf{N}}\delta
_{h_{a},0}g_{a,\underline{\mathbf{y}}(\underline{\mathbf{h}})}^{\left(
1\right) }(\lambda )\left( \boldsymbol{T}_{a}^{+}\right) ^{2}|\text{%
\underline{$\mathbf{h}$}}\rangle  \notag \\
& +\sum_{a=1}^{\mathsf{N}}\delta _{h_{a},2}g_{a,\underline{\mathbf{y}}(%
\underline{\mathbf{h}})}^{\left( 1\right) }(\lambda )\boldsymbol{T}%
_{a}^{-}|\text{\underline{$\mathbf{h}$}}\rangle +\sum_{a=1}^{\mathsf{N}%
}\delta _{h_{a},1}g_{a,\underline{\mathbf{y}}(\underline{\mathbf{h}}%
)}^{\left( 1\right) }(\lambda )T_{2}^{(\hat{K})}(\xi _{a})\boldsymbol{T}%
_{a}^{+}|\text{\underline{$\mathbf{h}$}}\rangle ,
\end{align}%
from which our formula for $T_{1}^{(\hat{K})}(\lambda )$ on $|$\underline{$%
\mathbf{h}$}$\rangle $ follows by using the one proven for $T_{2}^{(\hat{K}%
)}(\lambda )$ on $|$\underline{$\mathbf{h}$}$\rangle $.
\end{proof}

We can complete now the proof of the Theorem \ref{Central-T-det0}:

\begin{proof}[Proof of Theorem \protect\ref{Central-T-det0}]
Let us start proving the orthogonality condition:%
\begin{equation}
\langle h_{1},...,h_{\mathsf{N}}|k_{1},...,k_{\mathsf{N}}\rangle =0, \quad \forall \{k_{1},...,k_{\mathsf{N}}\}\neq \{h_{1},...,h_{\mathsf{N}%
}\}\in \{0,1,2\}^{\otimes \mathsf{N}}.
\end{equation}%
The proof is done by induction, assuming that it is true for any vector $%
|k_{1},...,k_{\mathsf{N}}\rangle $ such that  $\sum_{n=1}^{\mathsf{N}}(\delta
_{k_{n},1}+$ $\delta _{k_{n},2})=l$, $l\leq \mathsf{N}-1$, and proving
it for vectors $|k_{1}^{\prime },...,k_{\mathsf{N}}^{\prime }\rangle $ with $%
\sum_{n=1}^{\mathsf{N}}(\delta _{k_{n}^{\prime },1}+\delta _{k_{n}^{\prime
},2})=l+1$. To this aim we fix a vector $|k_{1},...,k_{\mathsf{N}}\rangle $
with $\sum_{n=1}^{\mathsf{N}}\left( \delta _{k_{n},1}+\delta
_{k_{n},2}\right) =l$ and we denote by $\pi $ a permutation on the set $%
\{1,...,\mathsf{N}\}$ such that:%
\begin{equation}
\delta _{k_{\pi (a)},1}+\delta _{k_{\pi (a)},2}=1\text{ \ for }a\leq l\text{%
\ \ and }k_{\pi (a)}=0\text{ for }l<a.
\end{equation}%

\paragraph*{a)} Let us first compute:%
\begin{equation}
\langle h_{1},...,h_{\mathsf{N}}|T_{2}^{(\hat{K})}(\xi _{\pi
(l+1)})|k_{1},...,k_{\mathsf{N}}\rangle =\langle h_{1},...,h_{\mathsf{N}%
}|k_{1}^{\prime },...,k_{\mathsf{N}}^{\prime }\rangle
\end{equation}%
where we have defined:%
\begin{equation}
k_{\pi (a)}^{\prime }=k_{\pi (a)}, \ \forall a\in \{1,...,\mathsf{N}%
\}\backslash \{l+1\}\quad \text{and}\quad k_{\pi (l+1)}^{\prime }=1,
\end{equation}%
for any $\{h_{1},...,h_{\mathsf{N}}\}\neq \{k_{1}^{\prime },...,k_{\mathsf{N}%
}^{\prime }\}\in \{0,1\}^{\otimes \mathsf{N}}$. There are three cases. The
first case is $h_{\pi (l+1)}=0$, then the fusion identity implies:%
\begin{equation}
\langle h_{1},...,h_{\mathsf{N}}|T_{2}^{(\hat{K})}(\xi _{\pi
(l+1)})|k_{1},...,k_{\mathsf{N}}\rangle =0.
\end{equation}%
In the remaining two cases $h_{\pi (l+1)}=1$ or $h_{\pi (l+1)}=2$, we can
use the interpolation formula to compute the action of $T_{2}^{(\hat{K}%
)}(\xi _{\pi (l+1)})$ on the co-vector $\langle h_{1},...,h_{\mathsf{N}}|$:%
\begin{align}
\langle h_{1},...,h_{\mathsf{N}}|T_{2}^{(\hat{K})}(\xi _{\pi
(l+1)})|k_{1},...,k_{\mathsf{N}}\rangle & =d(\xi _{\pi (l+1)}^{\left(
1\right) })T_{2,\underline{\mathbf{z}}(\underline{\mathbf{h}})}^{(\hat{K}%
,\infty )}(\xi _{\pi (l+1)})\langle h_{1},...,h_{\mathsf{N}}|k_{1},...,k_{%
\mathsf{N}}\rangle \\
 +d(\xi _{\pi (l+1)}^{\left( 1\right) })&\sum_{a=1}^{\mathsf{N}}\delta
_{h_{a},1}g_{a,\underline{\mathbf{z}}(\underline{\mathbf{h}})}^{\left(
2\right) }(\xi _{\pi (l+1)})\langle h_{1},...,h_{a}^{\prime }\left. =\right.
0,...,h_{\mathsf{N}}|k_{1},...,k_{\mathsf{N}}\rangle.
\end{align}%
Let us remark now that from $\{h_{1},...,h_{\mathsf{N}}\}\neq
\{k_{1}^{\prime },...,k_{\mathsf{N}}^{\prime }\}$ and $h_{\pi (l+1)}=1$ or $%
h_{\pi (l+1)}=2$, it follows also that $\{h_{1},...,h_{\mathsf{N}}\}\neq
\{k_{1},...,k_{\mathsf{N}}\}$, being $k_{\pi (l+1)}=0$, so that:%
\begin{equation}
\langle h_{1},...,h_{\mathsf{N}}|k_{1},...,k_{\mathsf{N}}\rangle =0.
\end{equation}%
Moreover, it holds:%
\begin{equation}
\delta _{h_{a},1}\langle h_{1},...,h_{a}^{\prime }\left. =\right. 0,...,h_{%
\mathsf{N}}|k_{1},...,k_{\mathsf{N}}\rangle =0, \quad \forall a\in
\{1,...,\mathsf{N}\}.
\end{equation}%
Indeed, if $a\in\{1,...,\mathsf{N}\}\backslash \{\pi (l+1)\}$ and $%
h_{a}=1$, we have $\{h_{1},...,h_{a}^{\prime }\left. =\right. 0,...,h_{%
\mathsf{N}}\}\neq \{k_{1},...,k_{\mathsf{N}}\}$, being $k_{\pi (l+1)}=0\neq
h_{\pi (l+1)}\in \{1,2\}$. While in the case $a=\pi (l+1)$ either $h_{\pi
(l+1)}=2$, so that $\delta _{h_{\pi (l+1)},1}=0$, or $h_{\pi (l+1)}=1$ and
the condition $\{h_{1},...,h_{\mathsf{N}}\}\neq \{k_{1}^{\prime },...,k_{%
\mathsf{N}}^{\prime }\}$ implies that there exists at least a $j\neq \pi
(l+1)$ such that $h_{j}\neq k_{j}$, so that we have still $%
\{h_{1},...,h_{\pi (l+1)}^{\prime }\left. =\right. 0,...,h_{\mathsf{N}%
}\}\neq \{k_{1},...,k_{\mathsf{N}}\}$.

\paragraph*{b)} Let us compute now:%
\begin{equation}
\langle h_{1},...,h_{\mathsf{N}}|T_{1}^{(\hat{K})}(\xi _{\pi
(l+1)})|k_{1},...,k_{\mathsf{N}}\rangle =\langle h_{1},...,h_{\mathsf{N}%
}|k_{1}^{\prime },...,k_{\mathsf{N}}^{\prime }\rangle ,
\end{equation}%
where we have defined: 
\begin{equation}
k_{\pi (a)}^{\prime }=k_{\pi (a)},\ \forall a\in \{1,...,\mathsf{N}%
\}\backslash \{l+1\}\quad\text{and}\quad k_{\pi (l+1)}^{\prime }=2,
\end{equation}%
for any $\{h_{1},...,h_{\mathsf{N}}\}\neq \{k_{1}^{\prime },...,k_{\mathsf{N}%
}^{\prime }\}\in \{0,1\}^{\otimes \mathsf{N}}$. There are three cases as well. The
first case is $h_{\pi (l+1)}=0$, then the fusion identity implies:%
\begin{equation}
\langle h_{1},...,h_{\mathsf{N}}|T_{1}^{(\hat{K})}(\xi _{\pi
(l+1)})|k_{1},...,k_{\mathsf{N}}\rangle =0,
\end{equation}%
For the second case for $h_{\pi (l+1)}=1$, it holds:%
\begin{align}
& \langle h_{1},...,h_{\mathsf{N}}|T_{1}^{(\hat{K})}(\xi _{\pi
(l+1)})|k_{1},...,k_{\mathsf{N}}\rangle  \notag \\
& =\langle h_{1},...,h_{\pi (l+1)}=2,...,h_{\mathsf{N}}|k_{1},...,k_{\pi
(l+1)}=0,...,k_{\mathsf{N}}\rangle =0.
\end{align}%
So we are left with the case $h_{\pi (l+1)}=2$. Note that in this case the
condition $\{h_{1},...,h_{\mathsf{N}}\}\neq \{k_{1}^{\prime },...,k_{\mathsf{%
N}}^{\prime }\}$ implies that there exists a $j\neq \pi (l+1)$ such that $%
h_{j}\neq k_{j}$, being by definition $h_{\pi (l+1)}=k_{\pi (l+1)}^{\prime
}=2$. We can use the following interpolation formula to compute the action
of $T_{1}^{(\hat{K})}(\xi _{\pi (l+1)})$ on the co-vector $\langle
h_{1},...,h_{\mathsf{N}}|$:%
\begin{align}
\langle h_{1},...,h_{\mathsf{N}}|T_{1}^{(\hat{K})}(\xi _{\pi (l+1)})|\text{%
\underline{$\mathbf{k}$}}\rangle & =T_{1,\underline{\mathbf{y}}(\underline{%
\mathbf{h}})}^{(\hat{K},\infty )}(\xi _{\pi (l+1)})\langle h_{1},...,h_{%
\mathsf{N}}|\text{\underline{$\mathbf{k}$}}\rangle  \notag \\
& +\sum_{a=1}^{\mathsf{N}}g_{a,\underline{\mathbf{y}}(\underline{\mathbf{h}}%
)}^{\left( 1\right) }(\xi _{\pi (l+1)})\delta _{h_{a},1}\langle h_{1},...,h_{%
\mathsf{N}}|T_{1}^{(\hat{K})}(\xi _{a})|\text{\underline{$\mathbf{k}$}}%
\rangle  \notag \\
& +\sum_{a=1}^{\mathsf{N}}g_{a,\underline{\mathbf{y}}(\underline{\mathbf{h}}%
)}^{\left( 1\right) }(\xi _{\pi (l+1)})\delta _{h_{a},2}\langle h_{1},...,h_{%
\mathsf{N}}|T_{1}^{(\hat{K})}(\xi _{a}^{\left( 1\right) })|\text{\underline{$%
\mathbf{k}$}}\rangle .  \label{b-Ortho}
\end{align}%
From $\{h_{1},...,h_{\mathsf{N}}\}\neq \{k_{1},...,k_{\mathsf{N}}\}$ it
follows:%
\begin{equation}
\langle h_{1},...,h_{\mathsf{N}}|\text{\underline{$\mathbf{k}$}}\rangle =0,
\end{equation}%
and, moreover, it holds:%
\begin{equation}
\delta _{h_{a},1}\langle h_{1},...,h_{\mathsf{N}}|T_{1}^{(\hat{K})}(\xi
_{a})|\text{\underline{$\mathbf{k}$}}\rangle =\delta _{h_{a},1}\langle
h_{1},...,h_{a}^{\prime }=2,...,h_{\mathsf{N}}|\text{\underline{$\mathbf{k}$}%
}\rangle =0,
\end{equation}%
as for $a=\pi (l+1)$ it holds $\delta _{h_{a},1}=0$, because $h_{\pi (l+1)}=2$,
while for $a\neq \pi (l+1)$ we have still $h_{\pi (l+1)}=2\neq k_{\pi
(l+1)}=0$ so that:%
\begin{equation}
\langle h_{1},...,h_{a}^{\prime }=2,...,h_{\mathsf{N}}|\text{\underline{$%
\mathbf{k}$}}\rangle =0.
\end{equation}%
So, we are left with the last sum in $(\ref{b-Ortho})$, for which it holds:%
\begin{equation}
\delta _{h_{a},2}\langle h_{1},...,h_{\mathsf{N}}|T_{1}^{(\hat{K})}(\xi
_{a}^{\left( 1\right) })|\text{\underline{$\mathbf{k}$}}\rangle =\delta
_{h_{a},2}\langle h_{1},...,h_{a}^{\prime }=1,...,h_{\mathsf{N}}|T_{2}^{(%
\hat{K})}(\xi _{a})|\text{\underline{$\mathbf{k}$}}\rangle .
\label{b-ortho-T2}
\end{equation}%
i) For $a=\pi (r)$ for $r\geq l+1$ it holds:%
\begin{align}
\delta _{h_{a},2}\langle h_{\pi (1)},...,h_{\pi (r)}^{\prime }
&=1,...,h_{\pi (\mathsf{N})}|T_{2}^{(\hat{K})}(\xi _{\pi (r)})|\text{%
\underline{$\mathbf{k}$}}\rangle  \notag \\
&=\delta _{h_{a},2}\langle h_{\pi (1)},...,h_{\pi (r)}^{\prime
}=1,...,h_{\pi (\mathsf{N})}|k_{\pi (1)},...,k_{\pi (r)}^{\prime \prime
}=1,...,k_{\pi (\mathsf{N})}\rangle ,
\end{align}%
with $\{h_{\pi (1)},...,h_{\pi (r)}^{\prime }=1,...,h_{\pi (\mathsf{N}%
)}\}\neq \{k_{\pi (1)},...,k_{\pi (r)}^{\prime \prime }=1,...,k_{\pi (%
\mathsf{N})}\}$. Indeed, if $r=l+1$ we have shown that there is a $j\neq \pi
(l+1)$ such that $h_{j}\neq k_{j}$, while if $r\geq l+2$ we have still $%
h_{\pi (l+1)}=2\neq k_{\pi (l+1)}=0$.

So that for $a=\pi (r)$ for $r\geq l+1$, we can use the step a) of the proof
to get:%
\begin{equation}
\langle h_{\pi (1)},...,h_{\pi (r)}^{\prime }=1,...,h_{\pi (\mathsf{N}%
)}|k_{\pi (1)},...,k_{\pi (r)}^{\prime \prime }=1,...,k_{\pi (\mathsf{N}%
)}\rangle =0.
\end{equation}
ii) For $a=\pi (r)$ for $r\leq l$. If $k_{\pi (r)}=2$, then we can write the l.h.s. of $(\ref{b-ortho-T2})$ as
it follows:%
\begin{equation}
\delta _{h_{\pi (r)},2}\langle h_{1},...,h_{\mathsf{N}}|T_{1}^{(\hat{K}%
)}(\xi _{\pi (r)}^{\left( 1\right) })|\text{\underline{$\mathbf{k}$}}\rangle
=\delta _{h_{\pi (r)},2}\langle h_{1},...,h_{\mathsf{N}}|k_{\pi
(1)},...,k_{\pi (r)}^{\prime \prime }=1,...,k_{\pi (\mathsf{N})}\rangle ,
\end{equation}%
which is zero being $h_{\pi (r)}=2\neq k_{\pi (r)}^{\prime \prime }=1$.

If $k_{\pi (r)}=1$, then we use the interpolation formula:%
\begin{equation}
T_{2}^{(\hat{K})}(\xi _{\pi (r)})|\text{\underline{$\mathbf{k}$}}\rangle
=d(\xi _{\pi (r)}^{\left( 1\right) })\left( |\text{\underline{$\mathbf{k}$}}%
\rangle T_{2,\underline{\mathbf{z}}(\underline{\mathbf{h}})}^{(\hat{K}%
,\infty )}(\xi _{\pi (r)})+\sum_{n=l+1}^{\mathsf{N}}g_{n,\underline{\mathbf{z%
}}(\underline{\mathbf{h}})}^{\left( 2\right) }(\xi _{\pi (r)})T_{2}^{(\hat{K}%
)}(\xi _{\pi (n)})|\text{\underline{$\mathbf{k}$}}\rangle \right) ,
\end{equation}%
where we have used that, by the fusion identity:%
\begin{equation}
T_{2}^{(\hat{K})}(\xi _{\pi (n)}^{(1)})|\text{\underline{$\mathbf{k}$}}%
\rangle =0\text{ for }n\leq l\text{ as }T_{2}^{(\hat{K})}(\xi _{\pi
(n)}^{(1)})T_{2-\delta _{k_{\pi (n)},2}}^{(\hat{K})}(\xi _{\pi (n)})=0.
\end{equation}%
Then, for $a=\pi (r)$ for $r\leq l$ and if $k_{\pi (r)}=1$, $(\ref%
{b-ortho-T2})$ reads:%
\begin{align}
\delta _{h_{\pi (r)},2}\langle h_{\pi (1)},...,h_{\pi (r)}^{\prime }&
=1,...,h_{\pi (\mathsf{N})}|T_{2}^{(\hat{K})}(\xi _{a})|\text{\underline{$%
\mathbf{k}$}}\rangle  \notag \\
& =\delta _{h_{\pi (r)},2}d(\xi _{\pi (r)}^{\left( 1\right) })(T_{2,%
\underline{\mathbf{z}}(\underline{\mathbf{h}})}^{(\hat{K},\infty )}(\xi
_{\pi (r)})\langle h_{\pi (1)},...,h_{\pi (r)}^{\prime }=1,...,h_{\pi (%
\mathsf{N})}|\text{\underline{$\mathbf{k}$}}\rangle  \notag \\
 +\sum_{n=l+1}^{\mathsf{N}}&g_{n,\underline{\mathbf{z}}(\underline{\mathbf{h}%
})}^{\left( 2\right) }(\xi _{\pi (r)})\langle h_{\pi (1)},...,h_{\pi
(r)}^{\prime }\left. =\right. 1,...,h_{\pi (\mathsf{N})}|k_{\pi
(1)},...,k_{\pi (n)}^{\prime \prime }\left. =\right. 1,...,k_{\pi (\mathsf{N}%
)}\rangle) .
\end{align}%
Here we have:%
\begin{equation}
\langle h_{\pi (1)},...,h_{\pi (r)}^{\prime }=1,...,h_{\pi (\mathsf{N})}|%
\text{\underline{$\mathbf{k}$}}\rangle =0,
\end{equation}%
being $h_{\pi (l+1)}=2\neq k_{\pi (l+1)}=0$. Moreover, the remaining matrix
elements%
\begin{equation}
\langle h_{\pi (1)},...,h_{\pi (r)}^{\prime }\left. =\right. 1,...,h_{\pi (%
\mathsf{N})}|k_{\pi (1)},...,k_{\pi (n)}^{\prime \prime }\left. =\right.
1,...,k_{\pi (\mathsf{N})}\rangle ,  \label{Last-I}
\end{equation}%
for $r\leq l$ and $l+1\leq n$ are such that $\{h_{\pi (1)},...,h_{\pi
(r)}^{\prime }=1,...,h_{\pi (\mathsf{N})}\}\neq \{k_{\pi (1)},...,k_{\pi
(n)}^{\prime \prime }=1,...,k_{\pi (\mathsf{N})}\}$. Indeed, for $n=l+1$ it
holds $h_{\pi (l+1)}=2\neq k_{\pi (l+1)}^{\prime \prime }=1$ while for $%
l+2\leq n$ it still holds $h_{\pi (l+1)}=2\neq k_{\pi (l+1)}=0$.

Finally, we can apply the step a) of our proof to show that $(\ref{Last-I})$
is zero\ for any fixed $l+1\leq n$, just exchanging the permutation $\pi $
with the following one%
\begin{equation}
\pi _{n}(a)=\pi (a)(1-\delta _{a,l+1})(1-\delta _{a,n})+\pi (n)\delta
_{a,l+1}+\pi (l+1)\delta _{a,n}.
\end{equation}

The computation of the SoV measure is standard \cite{GroMN12,Nic12} once one uses the interpolation formulae of the transfer matrices given
above. Let us write the elements of the proof. Let us first define:%
\begin{equation*}
\underline{\mathbf{h}}_{a}^{\left( j\right) }=\underline{\mathbf{h}}%
-(h_{a}-j)\underline{\mathbf{e}}_{a}, \quad \forall a,j\in \{1,...,\mathsf{%
N}\}\times \{0,1,2\},
\end{equation*}%
and compute the matrix elements:%
\begin{equation}
\langle \underline{\mathbf{h}}_{a}^{\left( 1\right) }|T_{2}^{(\hat{K})}(\xi
_{a}^{(1)})|\underline{\mathbf{h}}_{a}^{\left( 0\right) }\rangle \left.
=\right. \langle \underline{\mathbf{h}}_{a}^{\left( 0\right) }|\underline{%
\mathbf{h}}_{a}^{\left( 0\right) }\rangle.
\end{equation}%
We compute the action of $T_2^{(\hat{K})}(\xi_a^{(1)})$ on the right by using the corresponding
interpolation formula, and from the orthogonality conditions we get that there is
only one term with non-zero contribution, which reads:%
\begin{equation}
\langle \underline{\mathbf{h}}_{a}^{\left( 1\right) }|T_{2}^{(\hat{K})}(\xi
_{a}^{(1)})|\underline{\mathbf{h}}_{a}^{\left( 0\right) }\rangle =\langle 
\underline{\mathbf{h}}_{a}^{\left( 1\right) }|\underline{\mathbf{h}}%
_{a}^{\left( 1\right) }\rangle \frac{d(\xi _{a}^{(2)})}{d(\xi _{a}^{(1)})}%
\prod_{n\neq a,n=1}^{\mathsf{N}}\frac{\xi _{a}^{(1)}-\xi _{n}^{\left( \delta
_{h_{n},1}+\delta _{h_{n},2}\right) }}{\xi _{a}-\xi _{n}^{\left( \delta
_{h_{n},1}+\delta _{h_{n},2}\right) }}.
\end{equation}%
Similarly, we want to compute:%
\begin{equation}
\langle \underline{\mathbf{h}}_{a}^{\left( 1\right) }|T_{1}^{(\hat{K})}(\xi
_{a})|\underline{\mathbf{h}}_{a}^{\left( 2\right) }\rangle \left. =\right.
\langle \underline{\mathbf{h}}_{a}^{\left( 2\right) }|\underline{\mathbf{h}}%
_{a}^{\left( 2\right) }\rangle ,
\end{equation}%
by using the interpolation formula for the right action of $T_{1}^{(\hat{K}%
)}(\xi _{a})$, we obtain that once again there is just one term that give a
non-zero contribution due to the orthogonality and it reads:%
\begin{equation}
\langle \underline{\mathbf{h}}_{a}^{\left( 1\right) }|T_{1}^{(\hat{K})}(\xi
_{a})|\underline{\mathbf{h}}_{a}^{\left( 2\right) }\rangle =\langle 
\underline{\mathbf{h}}_{a}^{\left( 1\right) }|\underline{\mathbf{h}}%
_{a}^{\left( 1\right) }\rangle \prod_{n\neq a,n=1}^{\mathsf{N}}\frac{\xi
_{a}-\xi _{n}^{(\delta _{h_{n},2})}}{\xi _{a}^{\left( 1\right) }-\xi
_{n}^{(\delta _{h_{n},2})}},
\end{equation}%
from which our formula for the normalization holds.
\end{proof}

The following corollary holds:

\begin{corollary}
Let $\hat{K}$ be a $3\times 3$ simple spectrum matrix with one zero eigenvalue. 
Then for almost any choice of the co-vector $\langle $\underline{$\mathbf{1}$}%
$|$ and of the inhomogeneities under the condition $(\ref{Inhomog-cond})$
the states%
\begin{equation}
\langle \underline{\mathbf{0}}|=\langle h_{1}=0,...,h_{\mathsf{N}}=0|,\text{
\ }\langle \underline{\mathbf{2}}|=\langle h_{1}=2,...,h_{\mathsf{N}}=2|
\end{equation}%
are $T_{2}^{(\hat{K})}(\lambda )$ eigenstates:%
\begin{align}
\langle \underline{\mathbf{0}}|T_{2}^{(\hat{K})}(\lambda )&
=t_{2,0}d(\lambda -\eta )d(\lambda )\langle \underline{\mathbf{0}}|, \\
\langle \underline{\mathbf{2}}|T_{2}^{(\hat{K})}(\lambda )&
=t_{2,0}d(\lambda -\eta )d(\lambda +\eta )\langle \underline{\mathbf{2}}|, \\
T_{2}^{(\hat{K})}(\lambda )|\underline{\mathbf{h}}\rangle & =|\underline{%
\mathbf{h}}\rangle t_{2,0}d(\lambda -\eta )d(\lambda +\eta ), \quad\forall 
\underline{\mathbf{h}}\in \{1,2\}^{\mathsf{N}}
\end{align}%
while $\langle $\underline{$\mathbf{0}$}$|$ is also $T_{1}^{(\hat{K}%
)}(\lambda )$ eigenstate:%
\begin{equation}
\langle \underline{\mathbf{0}}|T_{1}^{(\hat{K})}(\lambda )=t_{1,0}d(\lambda
)\langle \underline{\mathbf{0}}|.
\end{equation}
\end{corollary}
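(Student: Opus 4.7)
The plan is to read off each of the four statements of the corollary as a direct specialisation of the interpolation formulae for the action of $T_1^{(\hat{K})}(\lambda)$ and $T_2^{(\hat{K})}(\lambda)$ on the SoV co-vector and vector bases, which were established in the proposition immediately preceding Theorem~\ref{Central-T-det0}. The strategy is therefore entirely mechanical: in each case, evaluate the Kronecker symbols $\delta_{h_a,0}$, $\delta_{h_a,1}$, $\delta_{h_a,2}$ appearing in those formulae, and reduce the asymptotic prefactor $T_{m,\underline{\mathbf{h}}}^{(\hat{K},\infty)}(\lambda)$ to an explicit polynomial in $\lambda$.

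For $\langle \underline{\mathbf{0}}|$, every coordinate satisfies $h_a=0$, so both sums in the interpolation formula for $\langle \underline{\mathbf{h}}|T_{2}^{(\hat{K})}(\lambda)$ (indexed respectively by $\delta_{h_a,1}$ and $\delta_{h_a,2}$) vanish. Only the asymptotic term remains and its index $\underline{\mathbf{z}}(\underline{\mathbf{0}})=\underline{\mathbf{0}}$ yields $T_{2,\underline{\mathbf{0}}}^{(\hat{K},\infty)}(\lambda)=t_{2,0}\,d(\lambda)$, whence the first eigenvalue equation. Similarly for $\langle \underline{\mathbf{2}}|$, all $\delta_{h_a,1}=0$ so again only the asymptotic term survives, but now $\underline{\mathbf{z}}(\underline{\mathbf{2}})=\underline{\mathbf{1}}$, giving $T_{2,\underline{\mathbf{1}}}^{(\hat{K},\infty)}(\lambda)=t_{2,0}\,d(\lambda+\eta)$, which is the second equation. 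For the $T_1^{(\hat{K})}(\lambda)$ eigenvalue equation on $\langle\underline{\mathbf{0}}|$, all three Kronecker terms indexed by $\delta_{h_a,1}$ and $\delta_{h_a,2}$ vanish simultaneously; the asymptotic term with $\underline{\mathbf{y}}(\underline{\mathbf{0}})=\underline{\mathbf{0}}$ gives $T_{1,\underline{\mathbf{0}}}^{(\hat{K},\infty)}(\lambda)=t_{1,0}\,d(\lambda)$, as claimed.

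The vector statement for $|\underline{\mathbf{h}}\rangle$ with $\underline{\mathbf{h}}\in\{1,2\}^{\mathsf{N}}$ is treated in the same manner using the right-action interpolation formula: since no $h_a$ equals $0$, the sum over $\delta_{h_a,0}$ disappears, and the asymptotic term has index $\underline{\mathbf{z}}(\underline{\mathbf{h}})=\underline{\mathbf{1}}$, producing $t_{2,0}\,d(\lambda+\eta)$. Thus $T_{2}^{(\hat{K})}(\lambda)$ acts as multiplication by $t_{2,0}\,d(\lambda-\eta)\,d(\lambda+\eta)$ on each such vector.

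There is no real obstacle here; the content of the corollary is that the specific choices $\underline{\mathbf{h}}=\underline{\mathbf{0}}$ and $\underline{\mathbf{h}}=\underline{\mathbf{2}}$ (and the set $\{1,2\}^{\mathsf{N}}$ on the right) are exactly those for which the non-asymptotic contributions in the interpolation formulae are suppressed by the Kronecker deltas. The only point deserving a short verification is the consistency of $t_{1,0}=\tr\hat{K}$ and $t_{2,0}=\tfrac{(\tr\hat{K})^{2}-\tr\hat{K}^{2}}{2}$ with the central asymptotics recalled earlier, which is immediate from the definitions of $T_{1}^{(K,\infty)}$ and $T_{2}^{(K,\infty)}$ given in the preceding subsection.
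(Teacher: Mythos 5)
Your proposal is correct and is exactly the paper's argument: the paper's own proof of this corollary is the one-line remark that it suffices to apply the interpolation formulae of the preceding proposition to these states, and your computation simply spells out that specialisation (vanishing of the Kronecker-delta sums and evaluation of $T_{m,\underline{\mathbf{z}}(\underline{\mathbf{h}})}^{(\hat{K},\infty)}(\lambda)$ and $T_{1,\underline{\mathbf{y}}(\underline{\mathbf{h}})}^{(\hat{K},\infty)}(\lambda)$) in full detail, with all eigenvalues matching.
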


\begin{proof}
It is enough to take the interpolation formulae for the transfer matrices
and apply them over these states.
\end{proof}

\begin{theorem}
Let $\hat{K}$ be a $3\times 3$ simple spectrum matrix with one zero eigenvalue and
with the inhomogeneities under the condition $(\ref{Inhomog-cond})$. Then
the transfer matrix spectrum is simple and, for almost any choice of the
co-vector $\langle $\underline{$\mathbf{1}$}$|$, the vector $|t_{a}\rangle $
and the co-vector $\langle t_{a}|$ are transfer matrix eigenstates if and
only if they admit (up to an overall normalization) the following separate
form in the co-vector and vector SoV$\mathcal{\ }$eigenbasis:%
\begin{align}
|t_{a}\rangle &=\sum_{\underline{\mathbf{h}}}\prod_{n=1}^{\mathsf{N}%
}t_{2,a}^{\delta _{h_{n},0}}(\xi _{n}^{\left( 1\right) })t_{1,a}^{\delta
_{h_{n},2}}(\xi _{n})\text{ }\frac{|\underline{\mathbf{h}}\rangle }{%
\mathsf{N}_{\underline{\mathbf{h}}}}, \\
\langle t_{a}| &=\sum_{\underline{\mathbf{h}}}\prod_{n=1}^{\mathsf{N%
}}t_{2,a}^{\delta _{h_{n},1}}(\xi _{n})t_{1,a}^{\delta _{h_{n},2}}(\xi _{n})%
\text{ }\frac{\langle \underline{\mathbf{h}}|}{\mathsf{N}_{\underline{\mathbf{h}}}},
\end{align}%
where the index $a$ run in the set of the transfer matrix eigenvalues of $%
T_{1}^{(\hat{K})}(\lambda )$ and the coefficients of the states are written
in terms of the corresponding eigenvalues:%
\begin{align}
T_{1}^{(\hat{K})}(\lambda )|t_{a}\rangle &=|t_{a}\rangle t_{1,a}(\lambda ),%
\quad T_{2}^{(\hat{K})}(\lambda )|t_{a}\rangle =|t_{a}\rangle
t_{2,a}(\lambda ), \\
\langle t_{a}|T_{1}^{(\hat{K})}(\lambda ) &=t_{1,a}(\lambda )\langle t_{a}|,%
\quad \langle t_{a}|T_{2}^{(\hat{K})}(\lambda )=t_{2,a}(\lambda )\langle
t_{a}|.
\end{align}%
Finally, if the matrix $\hat{K}$ has simple spectrum and is diagonalizable, the same is
true for the transfer matrix $T_{1}^{(\hat{K})}(\lambda )$, which therefore admits $%
3^{\mathsf{N}}$ distinct eigenvalues $t_{1,a}(\lambda )$ with $a\in
\{1,...,3^{\mathsf{N}}\}$.
\end{theorem}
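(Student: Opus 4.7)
The plan is to exploit the decomposition of the identity
$\mathbb{I}=\sum_{\underline{\mathbf{h}}}|\underline{\mathbf{h}}\rangle\langle\underline{\mathbf{h}}|/\mathsf{N}_{\underline{\mathbf{h}}}$
furnished by Theorem~\ref{Central-T-det0}, together with the explicit construction of both SoV bases from the family of commuting transfer matrices. For the direct implication, suppose $|t_a\rangle$ is a common eigenvector with eigenvalues $t_{1,a}(\lambda)$, $t_{2,a}(\lambda)$. Using the definition \eqref{Co-vector-SoV-basis-def} of $\langle\underline{\mathbf{h}}|$, the commutativity of the transfer matrices allows us to slide every operator in that product past $|t_a\rangle$ and replace it by its eigenvalue, yielding
\begin{equation*}
\langle \underline{\mathbf{h}}|t_a\rangle \;=\; \langle\underline{\mathbf{1}}|t_a\rangle \prod_{n=1}^{\mathsf{N}} t_{2,a}^{\delta_{h_n,0}}(\xi_n^{(1)})\, t_{1,a}^{\delta_{h_n,2}}(\xi_n).
\end{equation*}
Reinserting this into the right-basis expansion of $|t_a\rangle$ yields the claimed separate form, with overall normalization $\langle\underline{\mathbf{1}}|t_a\rangle$. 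The treatment of $\langle t_a|$ is the mirror statement, relying on \eqref{Vector-SoV-basis} and the reference vector $|\underline{\mathbf{0}}\rangle$.

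For the converse, I would insert the factorized ansatz into the right-action interpolation formulas for $T_1^{(\hat{K})}(\lambda)$ and $T_2^{(\hat{K})}(\lambda)$ established in the preceding proposition. After reindexing the sums generated by the local shifts $\boldsymbol{T}_a^{\pm}$, the coefficient of each basis vector reorganizes into a Lagrange-type interpolation identity that reproduces $t_{1,a}(\lambda)$ (resp.\ $t_{2,a}(\lambda)$) precisely when the proposed $t_{1,a}$ and $t_{2,a}$ satisfy the fusion constraints inherited from $\det\hat{K}=0$, namely $t_{1,a}(\xi_n)\,t_{2,a}(\xi_n^{(1)})=0$, $t_{2,a}(\xi_n+\eta)=0$, together with the asymptotic normalization and the interpolation formula \eqref{T-Func-form-m}. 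This is the standard SoV mechanism trading the spectral problem for the discrete characteristic equations on the $t_{k,a}(\xi_n^{(\cdot)})$.

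Simplicity of the transfer matrix spectrum is then immediate: two eigenvectors sharing the same pair $(t_{1,a},t_{2,a})$ would have identical SoV coordinates and hence be proportional; since $t_{2,a}$ is fully determined by $t_{1,a}$ through \eqref{T-Func-form-m}, equality of $T_1^{(\hat{K})}$ eigenvalues alone forces proportionality. For the final count of $3^{\mathsf{N}}$ distinct eigenvalues under the additional diagonalizability of $\hat{K}$, I would count the admissible polynomial solutions $t_{1,a}(\lambda)$ of the discrete fusion equations at the $\mathsf{N}$ points $\xi_n$, constrained by the central asymptotic $\operatorname{tr}\hat{K}$ and by the central zeroes of $t_{2,a}$. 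In the simple-spectrum diagonalizable case, the three eigenvalues of $\hat{K}$ provide three independent local branches at each site, yielding $3^{\mathsf{N}}$ solutions, which combined with simplicity and with the dimension of $\mathcal{H}$ being $3^{\mathsf{N}}$ exhausts the spectrum.

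The main obstacle is the converse direction: the interpolation formula for $T_1^{(\hat{K})}(\lambda)$ features nested contributions indexed by the composite shifts $\underline{\mathbf{h}}_a^{(1)}$, $\underline{\mathbf{h}}_a^{(2)}$, and one must verify that every such contribution reassembles cleanly into multiplication by $t_{1,a}(\lambda)$. This requires a careful case analysis on $h_n\in\{0,1,2\}$ and repeated use of both simplified fusion identities available when $\det\hat{K}=0$, paralleling the rank-one computation in Appendix~\ref{sec:gl2-sov} but with the extra bookkeeping forced by the rank-two local state space.
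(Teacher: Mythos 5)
Your forward direction coincides exactly with the paper's proof: compute $\langle \underline{\mathbf{h}}|t_a\rangle$ by acting with the conserved charges on the eigenvector, then reinsert via the resolution of the identity from Theorem \ref{Central-T-det0}; the co-vector case is the mirror image. The one place where you overcomplicate matters is the converse, which you flag as the main obstacle: since the theorem restricts the index $a$ to run over the actual transfer matrix eigenvalues, the ``if'' direction follows immediately from the ``only if'' direction together with simplicity of the spectrum --- for each eigenvalue the eigenvector exists, the forward computation shows it has precisely the claimed separate form, and any other vector with that same factorized set of SoV coordinates is proportional to it --- so no re-derivation of the eigenvalue equation from the interpolation formulas and fusion constraints is required. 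Your argument for simplicity (identical SoV coordinates force proportionality) is the standard one and is what the paper relies on implicitly; the count of $3^{\mathsf{N}}$ distinct eigenvalues is, in the paper, inherited from the diagonalizability results of \cite{MaiN18} rather than from counting solutions of the discrete fusion system, but your alternative route is viable given the completeness results.
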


\begin{proof}
Let us compute the matrix element:%
\begin{equation}
\langle \underline{\mathbf{h}}|t\rangle =\langle \underline{\mathbf{1}}%
|\prod_{a=1}^{\mathsf{N}}T_{2}^{(\hat{K})\delta _{h_{a},0}}(\xi _{a}^{\left(
1\right) })T_{1}^{(\hat{K})\delta _{h_{a},2}}(\xi _{a})|t\rangle =\langle 
\underline{\mathbf{1}}|t\rangle \prod_{a=1}^{\mathsf{N}}t_{2}^{\delta
_{h_{a},0}}(\xi _{a}^{\left( 1\right) })t_{1}^{\delta _{h_{a},2}}(\xi _{a}).
\end{equation}%
From our SoV decomposition of the identity, it holds:%
\begin{equation}
|t\rangle =\sum_{\underline{\mathbf{h}}}\langle \underline{\mathbf{h}}|t\rangle \text{ }\frac{|\underline{\mathbf{h}}\rangle }{\mathsf{N}%
_{\underline{\mathbf{h}}}},
\end{equation}%
and then fixing the normalization of the state $|t\rangle $ by imposing $%
\langle $\underline{$\mathbf{1}$}$|t\rangle =1$, our statement is proven.
\end{proof}

The functional equation characterization of the transfer matrix eigenvalues
and ABA like representations of the states hold also in the case where the $%
3\times 3$ simple spectrum matrix $\hat{K}$ has one zero eigenvalue.

\subsection{Scalar products of separate states in orthogonal SoV basis}
\label{ssec:scalar-product-separate-states}

Let us introduce the following class of "separate" co-vectors and vectors:%
\begin{equation}
|\alpha \rangle =\sum_{\underline{\mathbf{h}}}\prod_{a=1}^{\mathsf{N}%
}\alpha _{a}^{(h_{a})}\text{ }\frac{|\underline{\mathbf{h}}\rangle }{%
\mathsf{N}_{\underline{\mathbf{h}}}}, \quad 
\langle \alpha
|=\sum_{\underline{\mathbf{h}}}\prod_{a=1}^{\mathsf{N}}\alpha
_{a}^{(h_{a})}\text{ }\frac{\langle \underline{\mathbf{h}}|}{\mathsf{N}%
_{\underline{\mathbf{h}}}},
\end{equation}%
The eigenvectors and co-vectors of the transfer matrix are of this form, with coefficients $\alpha_a^{h_a}$ constrained by their eigenvalue.
We have the following scalar product formulae:
\begin{theorem}
\label{scalar product}Let $\hat{K}$ be a $3\times 3$ simple spectrum matrix with
one zero eigenvalue and let the inhomogeneity condition $(\ref{Inhomog-cond})$
be satisfied. Then, taken the generic transfer matrix eigenvector:%
\begin{equation}
|t_{n}\rangle =\sum_{\underline{\mathbf{h}}}\prod_{a=1}^{\mathsf{N}%
}t_{2,n}^{\delta _{h_{a},0}}(\xi _{a}^{\left( 1\right) })t_{1,n}^{\delta
_{h_{a},2}}(\xi _{a})\text{ }\frac{|\underline{\mathbf{h}}\rangle }{%
\mathsf{N}_{\underline{\mathbf{h}}}},
\end{equation}%
there exists a permutation $\pi _{n}$ of the set $\{1,...,\mathsf{N}\}$
such that:%
\begin{align}
t_{1,n}(\xi _{\pi _{n}(b)}) &=t_{2,n}(\xi _{\pi _{n}(a)}-\eta )=0,\quad %
\forall (a,b)\in A\times B,  \label{Zero} \\
t_{1,n}(\xi _{\pi _{n}(a)}) &\neq 0, \quad t_{2,n}(\xi _{\pi _{n}(b)}-\eta )\neq 0%
,\quad \forall (a,b)\in A\times B,  \label{NonZero}
\end{align}%
where we have defined:%
\begin{equation}
A\equiv \{1,...,\mathsf{M}_{n}\},\quad B\equiv \{\mathsf{M}_{n}+1,...,%
\mathsf{N}\}.
\end{equation}%
Moreover, the action of the generic separate co-vector $\langle \alpha |$ on
it reads:%
\begin{equation}
\begin{aligned}
\langle \alpha |t_{n}\rangle &=\prod_{a=1}^{\mathsf{N}}\frac{d(\xi
_{a}^{\left( 2\right) })}{d(\xi _{a}^{\left( 1\right) })}\frac{V\big(\xi _{\pi
_{n}(1)}^{(1)},...,\xi _{\pi _{n}(\mathsf{M}_{n})}^{(1)}\big)}{V\big(\xi _{\pi
_{n}(1)},...,\xi _{\pi _{n}(\mathsf{M}_{n})}\big)}  \\
&\phantom{=}\times \frac{\det_{\mathsf{N}-\mathsf{M}_{n}}\mathcal{M}_{+,\mathsf{N%
}-\mathsf{M}_{n}}^{\left( \alpha |x_{A}t_{2,n}\right) }}{V\big(\xi _{\pi _{n}(%
\mathsf{M}_{n}+1)},...,\xi _{\pi _{n}(\mathsf{N})}\big)}\frac{\det_{%
\mathsf{M}_{n}}\mathcal{M}_{-,\mathsf{M}_{n}}^{\left( \alpha
|x_{B}t_{1,n}\right) }}{V\big(\xi _{\pi _{n}(1)},...,\xi _{\pi _{n}(\mathsf{M}%
_{n})}\big)}  \label{Main-ScalarP}
\end{aligned}%
\end{equation}
where we have defined:%
\begin{align}
\left( \mathcal{M}_{+,\mathsf{N}-\mathsf{M}_{n}}^{\left( \alpha
|x_{A}t_{2,n}\right) }\right) _{\left( i,j\right) \in \{1,...,\mathsf{N}-%
\mathsf{M}_{n}\}^{2}}& =\sum_{h=0}^{1}\alpha _{\pi _{n}(\mathsf{M}%
_{n}+i)}^{(h)}x_{A}^{1-h}(\xi _{\pi _{n}(\mathsf{M}_{n}+i)})\underline{t}%
_{2,n}^{h}(\xi _{\pi _{n}(\mathsf{M}_{n}+i)}^{\left( 1\right) })(\xi _{\pi
_{n}(\mathsf{M}_{n}+i)}^{\left( h\right) })^{j-1}, \\
\left( \mathcal{M}_{-,\mathsf{M}_{n}}^{\left( \alpha |x_{B}t_{1,n}\right)
}\right) _{\left( i,j\right) \in \{1,...,\mathsf{M}_{n}\}^{2}}&
=\sum_{h=0}^{1}\alpha _{\pi _{n}(i)}^{(h+1)}x_{B}^{h}(\xi _{\pi
_{n}(i)})t_{1,n}^{h}(\xi _{\pi _{n}(i)})(\xi _{\pi _{n}(i)}^{\left( h\right)
})^{j-1},
\end{align}%
with%
\begin{equation}
\begin{aligned}
x_{A}(\lambda ) &= \prod_{a=1}^{\mathsf{M}_{n}}\frac{\lambda -\xi _{\pi
_{n}(a)}+\eta }{\lambda -\xi _{\pi _{n}(a)}},\quad x_{B}(\lambda
)=\prod_{a=1+\mathsf{M}_{n}}^{\mathsf{N}}\frac{\lambda -\xi _{\pi
_{n}(b)}-\eta }{\lambda -\xi _{\pi _{n}(b)}}, \\
\underline{t}_{2,n}(\lambda ) &= d(\lambda )t_{2,n}(\lambda )/d(\lambda
-\eta ).
\end{aligned}%
\end{equation}
We have the following identity for the action of the eigenco-vector $%
\langle t_{n}|$ on the eigenvector $|t_{n}\rangle $:%
\begin{equation}
\langle t_{n}|t_{n}\rangle =\prod_{a=1}^{\mathsf{N}}\frac{V\big(\xi _{\pi
_{n}(1)}^{(1)},...,\xi _{\pi _{n}(\mathsf{M}_{n})}^{(1)}\big)}{V\big(\xi _{\pi
_{n}(1)},...,\xi _{\pi _{n}(\mathsf{M}_{n})}\big)}\prod_{b=1+\mathsf{M}_{n}}^{%
\mathsf{N}}t_{2,n}(\xi _{\pi _{n}(b)}^{\left( 1\right) })x_{A}(\xi _{\pi
_{n}(b)})\prod_{a=1}^{\mathsf{M}_{n}}t_{1,n}(\xi _{\pi _{n}(a)})\det_{%
\mathsf{M}_{n}}\mathcal{T}_{\mathsf{M}_{n}},  \label{Norm}
\end{equation}%
where we have defined:%
\begin{equation}
\left( \mathcal{T}_{\mathsf{M}_{n}}\right) _{\left( i,j\right) \in
A^{2}}=\sum_{h=0}^{1}t_{1,n}(\xi _{\pi _{n}(i)}^{\left( 1-h\right)
})x_{B}^{h}(\xi _{\pi _{n}(i)})(\xi _{\pi _{n}(i)}^{\left( h\right) })^{j-1}.
\end{equation}
\end{theorem}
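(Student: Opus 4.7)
The plan is to combine the orthogonality of the SoV bases from Theorem~\ref{Central-T-det0} with the vanishing properties of the transfer matrix eigenvalues enforced by the degenerate fusion relations, and then to recognize the resulting restricted sums as determinants via multilinearity. Throughout, the crucial input is that $\det \hat{K} = 0$ forces the operator identity $T_{2}^{(\hat{K})}(\xi_{a}^{(1)})\, T_{1}^{(\hat{K})}(\xi_{a}) = \operatorname{q-det} M^{(\hat{K})}(\xi_a) = 0$, hence at the eigenvalue level $t_{2,n}(\xi_{a}-\eta)\, t_{1,n}(\xi_{a}) = 0$ at each site $a$. Generic inhomogeneities, together with the polynomial degree constraints on $t_{1,n}$ and $t_{2,n}$ (whose leading asymptotics are fixed by the spectral invariants of $\hat{K}$), preclude simultaneous vanishing of both factors, so $\pi_{n}$ can be defined to list first the $\mathsf{M}_{n}$ indices with $t_{2,n}(\xi_{a}-\eta)=0\neq t_{1,n}(\xi_{a})$ (forming $A$) and then those with $t_{1,n}(\xi_{a})=0\neq t_{2,n}(\xi_{a}-\eta)$ (forming $B$), which is precisely (\ref{Zero})--(\ref{NonZero}).

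Next, I would compute $\langle\alpha|t_{n}\rangle$. Inserting the separate expansions and applying the orthogonality relation $\langle\underline{\mathbf{h}}|\underline{\mathbf{k}}\rangle = \delta_{\underline{\mathbf{h}}, \underline{\mathbf{k}}}\,\mathsf{N}_{\underline{\mathbf{h}}}$ from Theorem~\ref{Central-T-det0} collapses the double sum to
\begin{equation*}
\langle\alpha|t_{n}\rangle = \sum_{\underline{\mathbf{h}}} \frac{1}{\mathsf{N}_{\underline{\mathbf{h}}}} \prod_{a=1}^{\mathsf{N}} \alpha_{a}^{(h_{a})}\, t_{2,n}^{\delta_{h_{a},0}}(\xi_{a}^{(1)})\, t_{1,n}^{\delta_{h_{a},2}}(\xi_{a}).
\end{equation*}
The vanishing conditions (\ref{Zero}) then force $h_{\pi_{n}(a)} \in \{1,2\}$ for $a \in A$ and $h_{\pi_{n}(b)} \in \{0,1\}$ for $b \in B$, so each local index effectively becomes binary. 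Substituting the explicit form (\ref{Sk-measure}) of $\mathsf{N}_{\underline{\mathbf{h}}}$, the two denominator Vandermondes split into fixed $A$-$A$, $B$-$B$ and $A$-$B$-cross pieces, plus one $\underline{\mathbf{h}}$-dependent Vandermonde on each side. The fixed $A$-$A$ and $B$-$B$ parts produce the prefactor $V(\xi_{\pi_{n}(1)}^{(1)},\ldots,\xi_{\pi_{n}(\mathsf{M}_{n})}^{(1)})/V(\xi_{\pi_{n}(1)},\ldots,\xi_{\pi_{n}(\mathsf{M}_{n})})$, while the $A$-$B$-cross pieces combined with the $d$-factors of $\mathsf{N}_{\underline{\mathbf{h}}}^{-1}$ reassemble into the rational functions $x_{A}(\lambda)$, $x_{B}(\lambda)$ and $\underline{t}_{2,n}(\lambda)$.

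The remaining $\underline{\mathbf{h}}$-dependent Vandermonde on each side is absorbed by the standard multilinearity identity
\begin{equation*}
\sum_{\underline{h}\in\{0,1\}^{m}} \prod_{i=1}^{m} f_{i}(h_{i})\, V\bigl(\xi_{1}^{(h_{1})},\ldots,\xi_{m}^{(h_{m})}\bigr) = \det_{m}\!\Bigl[\,\sum_{h=0}^{1} f_{i}(h)\,(\xi_{i}^{(h)})^{j-1}\,\Bigr]_{1\le i,j\le m},
\end{equation*}
which follows from $V(x_{1},\ldots,x_{m}) = \det(x_{i}^{j-1})$ and the multilinearity of the determinant in its rows. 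Applied once on $A$ (with local weights $\alpha^{(1)},\alpha^{(2)},t_{1,n},x_{B}$) and once on $B$ (with local weights $\alpha^{(0)},\alpha^{(1)},\underline{t}_{2,n},x_{A}$), it produces exactly the two matrices $\mathcal{M}_{\pm}$ and yields formula (\ref{Main-ScalarP}).

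Finally, the norm (\ref{Norm}) follows by specializing $\alpha_{a}^{(0)}=1$, $\alpha_{a}^{(1)}=t_{2,n}(\xi_{a})$, $\alpha_{a}^{(2)}=t_{1,n}(\xi_{a})$. For $b\in B$, using the fusion identity $t_{1,n}(\xi_{b}^{(1)})\,t_{1,n}(\xi_{b})=t_{2,n}(\xi_{b})$ combined with $t_{1,n}(\xi_{\pi_{n}(b)})=0$ gives $\alpha_{\pi_{n}(b)}^{(1)}=0$, so only the $h=0$ column of $\mathcal{M}_{+,\mathsf{N}-\mathsf{M}_{n}}$ survives and that determinant collapses to a pure Vandermonde which, combined with the residual $d$- and $x_A$-factors, yields the prefactor $\prod_{b\in B} t_{2,n}(\xi_{\pi_{n}(b)}^{(1)})\,x_{A}(\xi_{\pi_{n}(b)})$. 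On the $A$-side, extracting $t_{1,n}(\xi_{\pi_{n}(a)})$ from each row of $\mathcal{M}_{-,\mathsf{M}_{n}}$ via the same fusion identity yields the $\prod_{a\in A} t_{1,n}(\xi_{\pi_{n}(a)})$ prefactor, leaving the residual $\mathsf{M}_{n}\times\mathsf{M}_{n}$ determinant equal to $\det\mathcal{T}_{\mathsf{M}_{n}}$. The main obstacle will be the careful bookkeeping of the many Vandermonde and $d$-factor prefactors in the passage from the sum over $\underline{\mathbf{h}}$ to the determinant form, in particular verifying that all $d$-factors reassemble into the claimed $x_{A}$, $x_{B}$, $\underline{t}_{2,n}$ quantities and that the column-collapse in the $B$-sector reproduces the stated norm prefactors without residual contributions.
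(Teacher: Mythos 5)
Your proposal is correct and follows essentially the same route as the paper: orthogonality of the SoV bases collapses the double sum, the zero pattern \eqref{Zero} restricts each local index to a binary choice on the $A$- and $B$-sectors, the explicit measure \eqref{Sk-measure} is split into fixed prefactors and $\underline{\mathbf{h}}$-dependent Vandermondes, and multilinearity converts the two residual sums into the determinants $\mathcal{M}_{\pm}$; the norm then follows by the specialization $\alpha_a^{(h)}\to t_{2,n}^{\delta_{h,1}}(\xi_a)t_{1,n}^{\delta_{h,2}}(\xi_a)$ and the fusion-induced vanishing on the $B$-sector. The only divergence is that you sketch a derivation of the zero/non-zero pattern from the degenerate fusion relations where the paper simply cites \cite{MaiN19a}, which is consistent with that reference's argument.
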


\begin{proof}
It is worth recalling that the zero and non-zero pattern of \eqref{Zero} and\eqref{NonZero} has been derived in \cite{MaiN19a}. There, we have moreover
observed that the eigenvalue of the transfer matrix $T_{2}^{(\hat{K}%
)}(\lambda )$ is completely fixed by them, i.e. it holds 
\begin{equation}
t_{2,n}(\lambda )=T_{2}^{(K,\infty )}d(\lambda -\eta )\prod_{a=1}^{\mathsf{M}%
_{n}}(\lambda -\xi _{\pi _{n}(a)}^{(1)})\prod_{b=1+\mathsf{M}_{n}}^{\mathsf{N%
}}(\lambda -\xi _{\pi _{n}(b)}).
\end{equation}

The proof of this theorem is a direct consequence of the new found SoV measure $(\ref{Sk-measure}%
)$ and of the form of the separate states, from which we get%
\begin{align}
\langle \alpha |t_{n}\rangle & = 
\sum_{h_{1},...,h_{\mathsf{N}}=0}^{2}\prod_{a=1}^{\mathsf{N}}\frac{d(\xi
_{a}^{\left( 1+\delta _{h_{a},1}+\delta _{h_{a},2}\right) })}{d(\xi
_{a}^{\left( 1\right) })}t_{2,n}^{\delta _{h_{a},0}}(\xi _{a}^{\left(
1\right) })t_{1,n}^{\delta _{h_{a},2}}(\xi _{a})\text{ }\alpha _{a}^{(h_{a})}
\notag \\
&\phantom{=} \times \frac{V\Big(\xi _{1}^{(\delta _{h_{1},2}+\delta _{h_{1},1})},...,\xi _{%
\mathsf{N}}^{(\delta _{h_{\mathsf{N}},1}+\delta _{h_{\mathsf{N}},2})}\Big)V\Big(\xi
_{1}^{(\delta _{h_{1},2})},...,\xi _{\mathsf{N}}^{(\delta _{h_{\mathsf{N}%
},2})}\Big)}{V^{2}(\xi _{1},...,\xi _{\mathsf{N}})}.  \label{First-av}
\end{align}%
We now use the existence of the permutation $\pi _{n}$ and the
characterization of the zero and non-zero pattern for the transfer matrix
eigenvalues $(\ref{Zero})$ and $(\ref{NonZero})$ to factorize the above sum
into two sum and get our result. Indeed, by using them the r.h.s. of $(\ref{First-av})$ reads ($\mathsf{M}^{+}_{n} = \mathsf{M}_{n} + 1$):%
\begin{align}
& \sum_{h_{\pi _{n}(1)},...,h_{\pi _{n}(\mathsf{M}_{n})}=1}^{2}\ \ \sum_{h_{\pi
_{n}(\mathsf{M}_{n}+1)},...,h_{\pi _{n}(\mathsf{N})}=0}^{1}\prod_{a=1}^{%
\mathsf{M}_{n}}\frac{d(\xi _{\pi _{n}(a)}^{\left( 2\right) })}{d(\xi _{\pi
_{n}(a)}^{\left( 1\right) })}t_{1,n}^{\delta _{h_{\pi _{n}(a)},2}}(\xi _{\pi
_{n}(a)})\text{ }\alpha _{\pi _{n}(a)}^{(h_{\pi _{n}(a)})}  \notag \\
&\prod_{b=\mathsf{M}^{+}_{n}}^{\mathsf{N}}\frac{d\Big(\xi _{\pi
_{n}(b)}^{(1+\delta _{h_{\pi _{n}(b)},1})}\big)}{d(\xi _{\pi _{n}(b)}^{\left(
1\right) })}t_{2,n}^{\delta _{h_{\pi _{n}(b)},0}}(\xi _{\pi _{n}(b)}^{\left(
1\right) })\alpha _{\pi _{n}(b)}^{(h_{\pi _{n}(b)})}\frac{V\Big(\xi _{\pi
_{n}(1)}^{(1)},...,\xi _{\pi _{n}(\mathsf{M}_{n})}^{(1)}\Big)V\Big(\xi _{\pi
_{n}(1)}^{(\delta _{h_{\pi _{n}(1)},2})},...,\xi _{\pi _{n}(\mathsf{M}%
_{n})}^{(\delta _{h_{\pi _{n}(\mathsf{M}_{n})},2})}\Big)}{V(\xi _{\pi
_{n}(1)},...,\xi _{\pi _{n}(\mathsf{M}_{n})})V(\xi _{\pi _{n}(1)},...,\xi
_{\pi _{n}(\mathsf{M}_{n})})}  \notag \\
&\prod_{a=1}^{\mathsf{M}_{n}}\prod_{b=\mathsf{M}^{+}_{n}}^{\mathsf{N}}%
\frac{\xi _{\pi _{n}(b)}^{(\delta _{h_{\pi _{n}(b)},1})}-\xi _{\pi
_{n}(a)}^{\left( 1\right) }}{\xi _{\pi _{n}(b)}-\xi _{\pi _{n}(a)}}%
\prod_{a=1}^{\mathsf{M}_{n}}\prod_{b=\mathsf{M}^{+}_{n}}^{\mathsf{N}}\frac{\xi
_{\pi _{n}(a)}^{(\delta _{h_{\pi _{n}(a)},2})}-\xi _{\pi _{n}(b)}}{\xi _{\pi
_{n}(a)}-\xi _{\pi _{n}(b)}}\frac{V\Big(\xi _{\pi _{n}(\mathsf{M}%
_{n}+1)}^{(\delta _{h_{\pi _{n}(\mathsf{M}_{n}+1)},1})},...,\xi _{\pi _{n}(%
\mathsf{N})}^{(\delta _{h_{\pi _{n}(\mathsf{N})},1})}\Big)}{V(\xi _{\pi _{n}(%
\mathsf{M}_{n}+1)},...,\xi _{\pi _{n}(\mathsf{N})})}.
\end{align}%
We can then factorize out of the above sum the factors:%
\begin{equation}
\prod_{a=1}^{\mathsf{N}}\frac{d(\xi _{a}^{\left( 2\right) })}{d(\xi
_{a}^{\left( 1\right) })}\frac{V(\xi _{\pi _{n}(1)}^{(1)},...,\xi _{\pi _{n}(%
\mathsf{M}_{n})}^{(1)})}{V(\xi _{\pi _{n}(1)},...,\xi _{\pi _{n}(\mathsf{M}%
_{n})})},
\end{equation}%
being left with the product of the following two independent sum, i.e.%
\begin{equation}
\sum_{h_{\pi _{n}(1)},...,h_{\pi _{n}(\mathsf{M}_{n})}=1}^{2}\prod_{a=1}^{%
\mathsf{M}_{n}}t_{1,n}^{\delta _{h_{\pi _{n}(a)},2}}(\xi _{\pi _{n}(a)})%
\text{ }\alpha _{\pi _{n}(a)}^{(h_{\pi _{n}(a)})}x_{B}^{\left( h_{\pi
_{n}(a)}-1\right) }(\xi _{\pi _{n}(a)})\frac{V\Big(\xi _{\pi _{n}(1)}^{(\delta
_{h_{\pi _{n}(1)},2})},...,\xi _{\pi _{n}(\mathsf{M}_{n})}^{(\delta _{h_{\pi
_{n}(\mathsf{M}_{n})},2})}\Big)}{V(\xi _{\pi _{n}(1)},...,\xi _{\pi _{n}(\mathsf{%
M}_{n})})}  \label{Sum-1}
\end{equation}%
times%
\begin{equation}
\sum_{h_{\pi _{n}(\mathsf{M}^{+}_{n})},...,h_{\pi _{n}(\mathsf{N}%
)}=0}^{1}\prod_{b=\mathsf{M}^{+}_{n}}^{\mathsf{N}}\underline{t}_{2,n}^{\delta
_{h_{\pi _{n}(b)},0}}(\xi _{\pi _{n}(b)}^{\left( 1\right) })\alpha _{\pi_{n}(b)}^{(h_{\pi _{n}(b)})}x_{A}^{1-h_{\pi _{n}(b)}}(\xi _{\pi _{n}(b)})%
\frac{V\Big(\xi _{\pi _{n}(\mathsf{M}^{+}_{n})}^{(\delta _{h_{\pi _{n}(\mathsf{M}%
_{n}+1)},1})},...,\xi _{\pi _{n}(\mathsf{N})}^{(\delta _{h_{\pi _{n}(\mathsf{%
N})},1})}\Big)}{V(\xi _{\pi _{n}(\mathsf{M}^{+}_{n})},...,\xi _{\pi _{n}(\mathsf{N}%
)})}.  \label{Sum-2}
\end{equation}%
As previously remarked in \cite{GroMN12,Nic12}, these sums admit a
representation in terms of one determinant formulae, thanks to the
multi-linearity of the Vandermonde determinant. From this, our result $(\ref%
{Main-ScalarP})$ follows.

To derive the formula for the "norm" of the transfer matrix eigenvectors, we
just have to observe that by the definition of the vector SoV basis, it
holds:%
\begin{equation}
\langle t_{n}|\underline{\mathbf{h}}\rangle =\prod_{a=1}^{\mathsf{N}%
}t_{2,n}^{\delta _{h_{a},1}}(\xi _{a})t_{1,n}^{\delta _{h_{a},2}}(\xi
_{a})=\prod_{a=1}^{\mathsf{N}}t_{1,n}^{\delta _{h_{a},1}}(\xi _{a}^{\left(
1\right) })t_{1,n}^{(\delta _{h_{a},1}+\delta _{h_{a},2})}(\xi _{a}),
\end{equation}%
and so we have that%
\begin{equation}
\langle t_{n}|\underline{\mathbf{h}}\rangle =0, \quad \forall (h_{\pi _{n}(%
\mathsf{M}_{n}+1)},...,h_{\pi _{n}(\mathsf{N})})\neq (0,...,0).
\end{equation}%
Then the sum $(\ref{Sum-2})$ reduces to%
\begin{equation}
\prod_{b=1+\mathsf{M}_{n}}^{\mathsf{N}}\underline{t}_{2,n}(\xi _{\pi
_{n}(b)}^{\left( 1\right) })\,x_{A}(\xi _{\pi _{n}(b)}),
\end{equation}%
while the first one reads:%
\begin{align}
& \prod_{a=1}^{\mathsf{M}_{n}}t_{1,n}(\xi _{\pi _{n}(a)})\sum_{h_{\pi
_{n}(1)},...,h_{\pi _{n}(\mathsf{M}_{n})}=1}^{2}\prod_{a=1}^{\mathsf{M}%
_{n}}t_{1,n}(\xi _{\pi _{n}(a)}^{\left(
2-h_{\pi _{n}(a)}\right) })x_{B}^{\left( h_{\pi _{n}(a)}-1\right) }(\xi
_{\pi _{n}(a)})  \notag \\
& \times \frac{V\Big(\xi _{\pi _{n}(1)}^{(\delta _{h_{\pi _{n}(1)},2})},...,\xi
_{\pi _{n}(\mathsf{M}_{n})}^{(\delta _{h_{\pi _{n}(\mathsf{M}_{n})},2})}\Big)}{%
V(\xi _{\pi _{n}(1)},...,\xi _{\pi _{n}(\mathsf{M}_{n})})}.
\end{align}%
It is now quite direct to verify the formula $(\ref{Norm})$.
\end{proof}

\subsection{On the extension to the case of simple spectrum and invertible \texorpdfstring{$K$}{K}-matrices}
\label{ssec:extension-simple-spectrum-case}

The results of the previous subsections give us the possibility to define a
new family of conserved charges, from which we can introduce the orthogonal left and right 
SoV bases also in the case of a general simple spectrum $K$-matrix with
non-zero eigenvalues.

Let us assume that $K$ is $3\times 3$ simple spectrum and diagonalizable matrix
with non-zero eigenvalues. Then, by our previous results in the SoV approach 
\cite{MaiN18}, we know that the associated transfer matrix $%
T_{1}^{(K)}(\lambda )$ is diagonalizable with simple spectrum almost for any
value of the inhomogeneities under the condition $(\ref{Inhomog-cond})$, and
we have the SoV complete characterization of its spectrum.

Let $\{|t_{a}^{(K)}\rangle, \ a\in \{1,..,3^{\mathsf{N}}\}\}$ be the
eigenvector basis and let $\{\langle t_{a}^{(K)}|, \ a\in \{1,..,3^{%
\mathsf{N}}\}\}$\ be the eigenco-vector basis associated to the transfer
matrix $T_{1}^{(K)}(\lambda )$. We define the two new families of
conserved charges:%
\begin{equation} \label{eq:new-conserved-charges-T-bb}
\mathbb{T}_{j}^{(K)}(\lambda )=\sum_{a=1}^{3^{\mathsf{N}}}t_{j,a}^{(\hat{K}%
)}(\lambda )\frac{|t_{a}^{(K)}\rangle \langle t_{a}^{(K)}|}{\langle
t_{a}^{(K)}|t_{a}^{(K)}\rangle }, \quad \text{with } j\in \{1,2\}.
\end{equation}%
Here, we have denoted with $t_{j,a}^{(\hat{K})}(\lambda )$ the spectrum of
the transfer matrices $T_{j}^{(\hat{K})}(\lambda )$ associated to the matrix $%
\hat{K}$, obtained from $K$ by putting one of its eigenvalue to zero while
keeping its spectrum simplicity and its diagonalizable character, i.e.:%
\begin{align}
T_{1}^{(\hat{K})}(\lambda )|t_{a}^{(\hat{K})}\rangle &= |t_{a}^{(\hat{K}%
)}\rangle t_{1,a}^{(\hat{K})}(\lambda ),\quad T_{2}^{(\hat{K})}(\lambda
)|t_{a}^{(\hat{K})}\rangle =|t_{a}^{(\hat{K})}\rangle t_{2,a}^{(\hat{K}%
)}(\lambda ), \\
\langle t_{a}^{(\hat{K})}|T_{1}^{(\hat{K})}(\lambda ) &= t_{1,a}^{(\hat{K}%
)}(\lambda )\langle t_{a}^{(\hat{K})}|,\quad \langle t_{a}^{(\hat{K}%
)}|T_{2}^{(\hat{K})}(\lambda )=t_{2,a}^{(\hat{K})}(\lambda )\langle t_{a}^{(%
\hat{K})}|.
\end{align}%
Note that, by construction, the families $\mathbb{T}_{j}^{(K)}(\lambda )$ are
mutually commuting and they commute with the original transfer matrices as they have diagonal form in the eigenbasis of the original transfer matrix $T_{1}^{(K)}(\lambda )$:%
\begin{equation}
\left[ \mathbb{T}_{l}^{(K)}(\lambda ),\mathbb{T}_{m}^{(K)}(\lambda
)\right] = 
\left[T_{l}^{(K)}(\lambda ),\mathbb{T}_{m}^{(K)}(\lambda ) \right]=0, \quad
l,m\in \{1,2\},
\end{equation}%
and they share the same spectrum as the transfer matrices $T_{j}^{(%
\hat{K})}(\lambda )$.  Hence, they satisfy the
following fusion equations:%
\begin{align}
\mathbb{T}_{2}^{\left( K\right) }(\xi _{a}^{(1)})\mathbb{T}_{1}^{\left(
K\right) }(\xi _{a})& =\mathbb{T}_{2}^{\left( K\right) }(\xi _{a}^{\left(
1\right) })\mathbb{T}_{2}^{\left( K\right) }(\xi _{a})=0,
\label{Simpl-Fusion1} \\
\mathbb{T}_{1}^{\left( K\right) }(\xi _{a}^{(1)})\mathbb{T}_{1}^{\left(
K\right) }(\xi _{a})& =\mathbb{T}_{2}^{\left( K\right) }(\xi _{a}).
\label{Simpl-Fusion2}
\end{align}

We can now use these new family of conserved charges to construct SoV basis
according to $(\ref{Co-vector-SoV-basis})$ and $(\ref{Vector-SoV-basis})$ since the twist matrix $\hat{K}$ has simple spectrum:%
\begin{align}
\langle \widehat{\underline{\mathbf{h}}}| &\equiv \langle \underline{\mathbf{1}}%
|\prod_{n=1}^{\mathsf{N}}\mathbb{T}_{2}^{\left( K\right) \delta
_{h_{n},0}}(\xi _{n}^{(1)})\mathbb{T}_{1}^{\left( K\right) \delta
_{h_{n},2}}(\xi _{n}), \quad \forall \text{ }h_{n}\in \{0,1,2\}, \\
|\widehat{\underline{\mathbf{h}}}\rangle &\equiv \prod_{n=1}^{\mathsf{N}}%
\mathbb{T}_{2}^{\left( K\right) \delta _{h_{n},1}}(\xi _{n})\mathbb{T}%
_{1}^{\left( K\right) \delta _{h_{n},2}}(\xi _{n})|{\underline{\mathbf{0}}}\rangle,%
\quad \forall \text{ }h_{n}\in \{0,1,2\}.
\end{align}%
They are mutually orthogonal as the direct proof of Theorem \ref{Central-T-det0} uses only the fusion relations which are just identical to the above ones \eqref{Simpl-Fusion1} and \eqref{Simpl-Fusion2}:%
\begin{align}
\langle \widehat{\underline{\mathbf{k}}}|\widehat{\underline{\mathbf{h}}}%
\rangle &= \mathsf{N}_{\underline{\mathbf{h}}}\prod_{a=1}^{%
\mathsf{N}}\delta _{h_{a},k_{a}} \\
&= \prod_{a=1}^{\mathsf{N}}\delta _{h_{a},k_{a}}\frac{d(\xi
_{a}^{\left( 1\right) })}{d\Big(\xi _{a}^{\left( 1+\delta _{h_{a},1}+\delta
_{h_{a},2}\right) }\Big)}\frac{V\Big(\xi _{1}^{(\delta _{h_{1},2}+\delta
_{h_{1},1})},...,,\xi _{\mathsf{N}}^{(\delta _{h_{\mathsf{N}},1}+\delta _{h_{%
\mathsf{N}},2})}\Big) V\Big(\xi _{1}^{(\delta _{h_{1},2})},...,,\xi _{\mathsf{N}%
}^{(\delta _{h_{\mathsf{N}},2})}\Big)}{V^{2}(\xi _{1},...,,\xi _{\mathsf{N}})}.
\end{align}%
They are also SoV bases as the spectrum of the $\mathbb{T}_{j}^{(K)}(\lambda
)$ is separate in these bases. We have the following representation of
the vector and co-vector of the original transfer matrix $T_{1}^{(K)}(\lambda
)$:%
\begin{align}
|t_{a}^{(K)}\rangle &= \sum_{\underline{\mathbf{h}}}\prod_{n=1}^{%
\mathsf{N}}t_{2,a}^{(\hat{K})\delta _{h_{n},0}}(\xi _{n}^{\left( 1\right)
})t_{1,a}^{(\hat{K})\delta _{h_{n},2}}(\xi _{n})\text{ }\frac{|\widehat{%
\underline{\mathbf{h}}}\rangle }{\mathsf{N}_{\underline{\mathbf{h}}}}, \\
\langle t_{a}^{(K)}| &= \sum_{\underline{\mathbf{h}}}\prod_{n=1}^{%
\mathsf{N}}t_{2,a}^{(\hat{K})\delta _{h_{n},1}}(\xi _{n})t_{1,a}^{(\hat{K}%
)\delta _{h_{n},2}}(\xi _{n})\text{ }\frac{\langle \widehat{\underline{%
\mathbf{h}}}|}{\mathsf{N}_{\underline{\mathbf{h}}}}.
\end{align}%
Moreover, let us comment that separate states of the form 
\begin{equation}
|\alpha \rangle =\sum_{\underline{\mathbf{h}}}\prod_{a=1}^{\mathsf{N}%
}\alpha _{a}^{(h_{a})}\text{ }\frac{|\widehat{\underline{\mathbf{h}}}\rangle 
}{\mathsf{N}_{\underline{\mathbf{h}}}},\text{ }\langle \alpha
|=\sum_{\underline{\mathbf{h}}}\prod_{a=1}^{\mathsf{N}}\alpha
_{a}^{(h_{a})}\text{ }\frac{\langle \widehat{\underline{\mathbf{h}}}|}{%
\mathsf{N}_{\underline{\mathbf{h}}}},
\end{equation}%
satisfy the same Theorem \ref{scalar product} with the transfer matrix $%
T_{1}^{(K)}(\lambda )$ eigenvectors. This is easily derived by using the
representation of the transfer matrix eigenvector in the SoV bases
constructed by the conserved charges $\mathbb{T}_{1}^{(K)}(\lambda )$, since
from them one gets scalar product formulae similar to those of the $gl_{2}$ case,
even for the simple spectrum invertible $K$ matrix.

\section{Conclusions and perspectives}
\label{sec:conclusion}

In the present paper we have addressed the problem of computing the scalar products between the left and right SoV bases introduced earlier in \cite{MaiN18} (see also \cite{CavGLM19}) for the fundamental representations of the $\mathcal{Y}(gl_3)$ lattice model with $N$ sites. These SoV bases are determined from chosen sets of conserved charges generated by the transfer matrix. In the model at hand the left and right SoV bases following the construction given in \cite{MaiN18} can be written in terms of the transfer matrix $T_{1}^{(K)}(\lambda )$ and its fused transfer matrix $T_{2}^{(K)}(\lambda )$. An important feature of these SoV bases is that they are not orthogonal to each other for generic twist matrix $K$ having simple spectrum. The first key result of the present paper is the computation of the matrix of scalar products between these right and left SoV bases as stated in Theorem \protect\ref{Central-T-det-non0}. \\

Theorem \protect\ref{Central-T-det-non0} also shows that whenever the twist matrix $K$ has simple spectrum and zero determinant, the chosen left and right SoV bases are orthogonal to each others since the off-diagonal elements of the matrix of scalar products are all proportional to some strictly positive power of the determinant of $K$. Moreover, in that case, we have been able to give a direct proof of this result simply using the simplified fusion relations resulting from the vanishing of the corresponding quantum determinant. As a consequence, it leads to very simple formulae for the scalar products of the so-called separate states. In that case they are just given by products of determinants which are similar to the ones of $\mathcal{Y}(gl_2)$ type.\\

This observation leads us to consider the generalization of these features for the case of a generic twist matrix $K$ having simple spectrum and non zero determinant. This amounts to define new SoV bases constructed from different sets of conserved charges with respect to the one's so far considered. We have shown that such sets of conserved charges indeed exists and we have characterized them using their generating functional $\mathbb{T}_{j}^{(K)}(\lambda )$ for $j=1,2$ defined in \eqref{eq:new-conserved-charges-T-bb}. By using them, we have determined new left and right SoV bases that are indeed orthogonal to each other, leading to simple scalar products formula for their separate states, and in particular for the scalar products of separate states with transfer matrix eigenstates. They are given as products of $\mathcal{Y}(gl_2)$ type determinants. This paves the way for their use in computing form factors and possibly even correlation functions of local operators. For this we will need to be able to write the resolution of the quantum inverse scattering problem in a form suitable to act in a simple manner on separate states. This question is now under study.\\

Another important question, with regards to the key results obtained in the present paper is how to determine in general sets of charges having properties similar to the one determined in \eqref{eq:new-conserved-charges-T-bb}. 
One possible route to this could be to construct explicitly the similarity transformation between the operator families $\mathbb{T}_{j}^{(K)}(\lambda )$ and $T_{j}^{(%
\hat{K})}(\lambda )$. In a future publication, we plan to show for example how to compute%
\begin{equation}
\langle t_{a}^{(K)}|t_{b}^{(\hat{K})}\rangle, \quad \forall a,b\in
\{1,..,3^{\mathsf{N}}\},
\end{equation}%
which just define the matrix elements of the similarity transformation from $%
\mathbb{T}_{1}^{(K)}(\lambda )$ to $T_{1}^{(\hat{K})}(\lambda )$. 
This seems accessible thanks to the scalar products analyzed in Theorem \protect\ref{Central-T-det-non0}. Another
important open problem, deserving further analysis, is the possibility to
find a direct construction of the new family of conserved charges $\mathbb{T}_{j}^{(K)}(\lambda )$ in terms of the original transfer matrix $%
T_{1}^{(K)}(\lambda )$ or the associated known family of commuting
operators like the $T_{2}^{(K)}(\lambda )$ and the Baxter $Q$-operators for the general invertible twist $K.$ More generally, the purely algebraic 
construction of a family satisfying the same simplified form of the fusion
equations, like those written in $\left( \ref{Simpl-Fusion1}\right) $-$%
\left( \ref{Simpl-Fusion2}\right) $, is one of our future goals.

\section*{Acknowledgements}

J.M.M. and G.N. are supported by CNRS and ENS de Lyon. L.V. is supported by
Ecole Polytechnique and ENS de Lyon.

\appendix

\section{Explicit tensor product form of SoV starting co-vector/vector}
\label{sec:explicit-form}

Here, we want to prove the statements of the Proposition \ref{Central-T-det-non0}
about the fact that given the co-vector $\langle $\underline{$\mathbf{1}$}$|$
of tensor product type then we can write explicitly the vector $| \underline{%
\mathbf{0}}\rangle$ and it has a tensor product form too according to $(\ref%
{S-vector})$ and (\ref{S-vector-a}).

Let us start proving the following general property, that we state for the $%
gl_3$ case but that indeed can be extended to the $gl_n$ cases as well  for rational $R$-matrices:

\begin{proposition}
Let $K$ be a $3\times 3$ matrix, then we have the following explicit formula
for the product of transfer matrices:%
\begin{align}
\prod_{j=1}^{\mathsf{M}}T_{1}^{(K)}(\xi _{a_{j}}) &= n_{a_{1},...,a_{\mathsf{M%
}}}R^{}_{a_{1};1,\dots,a_{1}-1} \ \hat{R}_{a_2;1,\dots,a_{2}-1}^{(a_{1})}%
\cdots \hat{R}_{a_{j+1},1,\dots,a_{j+1}-1}^{(a_{1},...,a_{j})}\cdots \hat{R%
}_{a_{\mathsf{M}};1,\dots,a_{\mathsf{M}}-1}^{(a_{1},...,a_{\mathsf{M}-1})}
\notag \\
& \times \bigotimes_{j=1}^{\mathsf{M}}K_{a_{j}}\hat{R}_{a_{1};a_{1}+1,\dots,\mathsf{N}}^{(a_{2},...,a_{\mathsf{M}})}\cdots \hat{R}_{a_{j};a_{j}+1,\dots,\mathsf{N}}^{(a_{j+1},...,a_{\mathsf{M}})}\cdots \hat{R}_{a_{\mathsf{M}%
-1};a_{\mathsf{M}-1}+1,\dots,\mathsf{N}}^{(a_{\mathsf{M}})}R_{a_{\mathsf{M}%
};a_{\mathsf{M}}+1,\dots,\mathsf{N}} ,  \label{Tran-Prod}
\end{align}%
where we have taken $a_{1}<a_{2}<\cdots <a_{\mathsf{M}-1}<a_{\mathsf{M}}$
and $\mathsf{M}\leq \mathsf{N}$ and we have used the notation:%
\begin{equation}
R_{a;b_1,\dots,b_M}=R_{ab_M}(\xi _{a}-\xi _{b_M})\cdots R_{ab_1}(\xi _{a}-\xi
_{b_1}),
\end{equation}%
while ${\hat R}^{(b_{a_1},\dots,b_{a_k})}_{a;b_1,\dots,b_M}$ denotes the same product of $R$-matrices however with the factors  $R_{ab_{a_1}}$ up to $R_{ab_{a_k}}$ omitted and $n_{a_{1},...,a_{\mathsf{M}}}=\prod_{i<j} n_{a_i, a_j}$, with $n_{a_i, a_j}=\eta^2 - (\xi_{a_i} -\xi_{a_j})^2$.
Then, for any choice of $1\leq h_{a_{j}}\leq 2$ we have:%
\begin{align}
\langle \underline{\mathbf{0}}|\prod_{j=1}^{\mathsf{M}}T_{1}^{(K)}(\xi
_{a_{j}})^{h_{a_{j}}}& =\sum_{r_{1}\in A_{1},...,r_{\mathsf{M}}\in A_{%
\mathsf{M}}}\sum_{s_{1}\in B_{1},...,s_{\mathsf{M}}\in B_{\mathsf{M}%
}}C_{r_{1},...,r_{\mathsf{M}},s_{1},...,s_{\mathsf{M}}}  \notag \\
& \times V(r_{1},...,r_{\mathsf{M}})V(s_{1},...,s_{\mathsf{M}})\langle 
\underline{\mathbf{0}}|\bigotimes_{j=1}^{\mathsf{M}}K_{r_{j}}%
\bigotimes_{j=1}^{\mathsf{M}}K_{s_{j}}^{h_{j}-1},
\end{align}%
where we take the following tensor product form for the co-vector:%
\begin{equation}
\langle \underline{\mathbf{0}}|=\bigotimes_{a=1}^{\mathsf{N}}\langle 0,a| .
\end{equation}%
$V(x_{1},...,x_{\mathsf{M}})$ is the Vandermonde determinant, and 
$C_{r_{1},...,r_{\mathsf{M}},s_{1},...,s_{\mathsf{M}}}$ are some finite non-zero 
coefficients. We have defined:%
\begin{align}
A_{j}& =\{a_{j},...,\mathsf{N}\}\cup  
\begin{cases}
\{1,...,a_{j}-1\}& \text{if } h_{a_{j}}=2 ,\\ 
\qquad \emptyset & \text{if } h_{a_{j}}=1 ,%
\end{cases}%
\\
B_{j}& =a_{j}\cup 
\begin{cases}
\{a_{j}+1,...,\mathsf{N}\}&\text{if }h_{a_{j}}=2 ,\\ 
\qquad \emptyset &\text{if }h_{a_{j}}=1.
\end{cases}
\end{align}
\end{proposition}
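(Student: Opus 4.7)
The plan is to first prove the explicit formula \eqref{Tran-Prod} for the product of transfer matrices by induction on $\mathsf{M}$, and then compute its action on the factorized co-vector $\langle\underline{\mathbf{0}}|$ by using the tensor product structure together with the special form of the $R$-matrices at inhomogeneity differences.

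For the base case $\mathsf{M}=1$ of formula \eqref{Tran-Prod}, I would start from the definition $T_{1}^{(K)}(\xi_a)=\tr_v\left[K_v R_{v,\mathsf{N}}(\xi_a-\xi_{\mathsf{N}})\cdots R_{v,1}(\xi_a-\xi_1)\right]$ and use the key identity $R_{v,a}(0)=\eta P_{v,a}$. Pushing the permutation $P_{v,a}$ past the remaining $R$-matrices via the intertwining relations $R_{v,j}P_{v,a}=P_{v,a}R_{a,j}$ and $K_vP_{v,a}=P_{v,a}K_a$, and then applying $\tr_v[P_{v,a}X_a]=X_a$, one collapses the auxiliary space onto the quantum site $a$ and obtains $T_{1}^{(K)}(\xi_a)=\eta\,R_{a;1,\ldots,a-1}\,K_a\,R_{a;a+1,\ldots,\mathsf{N}}$, in agreement with \eqref{Tran-Prod} for $\mathsf{M}=1$ (after suitable conventions on $n_{a_1}$). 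For the inductive step, I would multiply the formula for $\{a_1,\ldots,a_{\mathsf{M}-1}\}$ by $T_{1}^{(K)}(\xi_{a_{\mathsf{M}}})$ written as a trace over a fresh auxiliary $v$, and carry $v$ past the already-activated quantum sites $a_1,\ldots,a_{\mathsf{M}-1}$ by means of the Yang-Baxter $RTT$ relation. Each time the new $R_{v,a_j}(\xi_{a_{\mathsf{M}}}-\xi_{a_j})$ meets its partner $R_{a_j,v}(\xi_{a_j}-\xi_{a_{\mathsf{M}}})$ inherited from the previous step, the unitarity identity $R_{xy}(\lambda)R_{yx}(-\lambda)=(\eta^2-\lambda^2)\,I$ extracts the scalar factor $n_{a_j,a_{\mathsf{M}}}$; collecting these factors over all pairs $i<j$ produces $n_{a_1,\ldots,a_{\mathsf{M}}}$. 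The surviving $R$-matrices are precisely the hatted products in \eqref{Tran-Prod}, since all $R$-matrices carrying two indices in $\{a_1,\ldots,a_{\mathsf{M}}\}$ are converted into scalars, while the others retain one site index $a_j$ and one ordinary quantum site outside the activated set.

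For the second formula, I would expand the tensor-product co-vector $\langle\underline{\mathbf{0}}|=\bigotimes_{a=1}^{\mathsf{N}}\langle 0,a|$ against the operator product established above, distributing the action on each quantum site. The point is that $R_{a,b}(\xi_a-\xi_b)=(\xi_a-\xi_b)I+\eta P_{a,b}$ acting on the factorized state $\langle 0,a|\otimes\langle 0,b|$ produces, combined with $K^{h_{a_j}}_{a_j}$ (with $h_{a_j}\in\{1,2\}$) and the diagonal/off-diagonal entries of $K$, a sum over a single transferred matrix index. Concretely, at each activated site $a_j$ the $K$-action splits into two scattering channels: one channel (producing the index $r_j$) carries the left action of $K_{a_j}^{h_{a_j}}$ to a target site in $\{a_j,\ldots,\mathsf{N}\}$, with an additional range $\{1,\ldots,a_j-1\}$ opened only when $h_{a_j}=2$; the other channel (producing $s_j$) stays at site $a_j$ when $h_{a_j}=1$ and otherwise propagates into $\{a_j+1,\ldots,\mathsf{N}\}$. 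This matches the definitions of $A_j$ and $B_j$. The Vandermonde factors $V(r_1,\ldots,r_{\mathsf{M}})$ and $V(s_1,\ldots,s_{\mathsf{M}})$ arise from the antisymmetrisation inherent in multiplying several $R$-matrices sharing overlapping spaces: the scattering amplitudes depend polynomially on the $\xi_{r_j}$ and $\xi_{s_j}$ in such a way that, once summed over the permutations of target sites implicit in the product of hatted $R$-matrices, they recombine into determinants of Vandermonde type.

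The main obstacle will be the second step, namely showing that the combinatorial sum from the $R$-matrix expansions organises itself precisely into the form asserted, with the two Vandermonde determinants and a well-defined finite non-zero coefficient $C_{r_1,\ldots,r_{\mathsf{M}},s_1,\ldots,s_{\mathsf{M}}}$. My plan is to proceed by induction on $\mathsf{M}$, adding one activated site at a time. Going from $\mathsf{M}-1$ to $\mathsf{M}$, the insertion of the blocks $R_{a_{\mathsf{M}};1,\ldots,a_{\mathsf{M}}-1}$ and $R_{a_{\mathsf{M}};a_{\mathsf{M}}+1,\ldots,\mathsf{N}}$ together with $K_{a_{\mathsf{M}}}^{h_{a_{\mathsf{M}}}}$ extends each previous sum by one new index $r_{\mathsf{M}}\in A_{\mathsf{M}}$ and $s_{\mathsf{M}}\in B_{\mathsf{M}}$. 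The promotion $V(r_1,\ldots,r_{\mathsf{M}-1})\mapsto V(r_1,\ldots,r_{\mathsf{M}})$ (and similarly for the $s$-Vandermonde) would follow from a Laplace expansion along the new row/column, matched against the polynomial dependence on $\xi_{r_{\mathsf{M}}}$ carried by the entries of $R_{a_{\mathsf{M}},r_{\mathsf{M}}}(\xi_{a_{\mathsf{M}}}-\xi_{r_{\mathsf{M}}})$. Finiteness and non-vanishing of the resulting coefficients $C_{r_1,\ldots,r_{\mathsf{M}},s_1,\ldots,s_{\mathsf{M}}}$ then follow from the genericity condition \eqref{Inhomog-cond}.
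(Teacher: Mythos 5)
Your treatment of \eqref{Tran-Prod} follows essentially the same route as the paper: reduce each $T_{1}^{(K)}(\xi_{a_j})$ to $R_{a_j;1,\dots,a_j-1}\,K_{a_j}\,R_{a_j;a_j+1,\dots,\mathsf{N}}$ via $R(0)=\eta\,\mathbb{P}$, then induct on $\mathsf{M}$, commuting the new left block through the previously accumulated right blocks by the Yang--Baxter equation and extracting the scalars $n_{a_i,a_j}$ from $R_{ab}(\lambda)R_{ba}(-\lambda)=(\eta^2-\lambda^2)I$; this is exactly the exchange relation the paper proves and iterates, so that part is sound.

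The second half has a genuine gap. The computation of $\langle\underline{\mathbf{0}}|\prod_j T_1^{(K)}(\xi_{a_j})^{h_{a_j}}$ hinges on a structural fact you never state or use: the local factors $\langle 0,a|$ are the \emph{same} covector at every site up to normalization (cf.\ \eqref{S-vector-a}), so $\langle\underline{\mathbf{0}}|$ is a left eigenvector of every permutation $\mathbb{P}_{ab}$ and hence of the entire left block $R_{a_1;1,\dots,a_1-1}\hat{R}^{(a_1)}_{a_2;1,\dots,a_2-1}\cdots$ in \eqref{Tran-Prod}, which therefore contributes only an overall scalar. This is precisely what confines the transferred index to $r_j\in\{a_j,\dots,\mathsf{N}\}$ when $h_{a_j}=1$: only the right block, sitting to the right of $\bigotimes_j K_{a_j}$, can displace a $K$. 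Your ``two scattering channels'' assert the ranges $A_j$, $B_j$ but supply no mechanism; in particular the opening of $\{1,\dots,a_j-1\}$ for $h_{a_j}=2$ does not arise from a second channel within a single application of \eqref{Tran-Prod} --- it arises from applying the product formula a \emph{second} time (splitting $\prod_j T_1^{h_{a_j}}$ into two successive products), at which point the covector already carries the deposited $K_{s_j}$'s, is no longer a permutation eigenvector, and the left block genuinely acts. Relatedly, your appeal to ``antisymmetrisation'' to produce the Vandermonde factors is not the operative mechanism: since $R=\lambda I+\eta\,\mathbb{P}$ preserves the multiset of local covector factors, distinct $K$'s automatically land on distinct sites, and $V(r_1,\dots,r_{\mathsf{M}})$ merely packages this non-coincidence together with the non-vanishing, under \eqref{Inhomog-cond}, of the residual product of $(\xi_{a_i}-\xi_b)$- and $(\xi_{a_i}-\xi_b+\eta)$-type factors. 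Without the eigencovector observation the expansion does not close and the asserted form of $A_j$, $B_j$ is not reached.
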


\begin{proof}
Let us consider the following product:%
\begin{align}
R_{a;a+1,...,\mathsf{N}}R_{a+l;1,...,a+l-1}& =R_{a;a+2,...,\mathsf{N}}R_{a+l;a+2,...,a+l-1}R_{a,a+1}R_{a+l,a+1}R_{a+l,a}R_{a+l;1,...,a-1} \\
& =R_{a;a+2,...,\mathsf{N}}R_{a+l;a+2,...,a+l-1}R_{a+l,a}R_{a+l,a+1}R_{a,a+1}R_{a+l;1,...,a-1} \\
& =R_{a;a+3,...,\mathsf{N}}R_{a+l;a+3,...,a+l-1}R_{a,a+2}R_{a+l,a+2}R_{a+l,a}  \notag \\
&\phantom{=} \times R_{a+l,a+1}R_{a+l;1,...,a-1}R_{a,a+1} \\
& =R_{a;a+3,...,\mathsf{N}}R_{a+l;a+3,...,a+l-1}R_{a+l,a}R_{a+l,a+2}R_{a+l,a+1}  \notag \\
&\phantom{=} \times R_{a+l;1,...,a-1}R_{a,a+2}R_{a,a+1},
\end{align}%
where we have used the commutativity of $R$-matrices on different spaces and
the Yang-Baxter equation. So, by iterating it, we get:%
\begin{equation}
R_{a;a+1,...,\mathsf{N}}R_{a+l;1,...,a+l-1}=n_{a,a+l}\hat{R}_{a+l;1,...,a+l-1}^{\left( a\right) }\hat{R}_{a;a+1,...,\mathsf{N}}^{\left( a+l\right) },
\end{equation}%
once we use that:%
\begin{equation}
R_{a,a+l}R_{a+l,a}=n_{a,a+l}.
\end{equation}%
From this identity we get:%
\begin{equation}
T_{1}^{(K)}(\xi _{a})T_{1}^{(K)}(\xi _{a+l})=n_{a,a+l}R_{a;1,...,a-1}\hat{R}_{a+l;1,...,a+l-1}^{(a)}K_{a}\bigotimes K_{a+l}\hat{%
R}_{a;a+1,...,\mathsf{N}}^{(a+l)}R_{a+l;a+l+1,...,\mathsf{N}},
\end{equation}%
from which we easily obtain our statement (\ref{Tran-Prod}) in the case $\mathsf{M}=2$. 
The general case is proven by induction on $\mathsf{M}$. 
To get the $\mathsf{M}+1$ case knowing that the formula  (\ref{Tran-Prod}) is satisfied for $\mathsf{M}$, we need to prove the following equality for $a_{l-1}<a_{l}<\cdots <a_{k}<a_{k+1}$:
\begin{equation}
{\hat R}^{(a_l,\dots,a_k)}_{a_{l-1};a_{l-1}+1,\dots,\mathsf{N}} \  {\hat R}^{(a_l,\dots,a_k)}_{a_{k+1};1,\dots,a_{k+1}-1} = n_{a_{k+1}, a_{l-1}} {\hat R}^{(a_{l-1},\dots,a_k)}_{a_{k+1};1,\dots,a_{k+1}-1} \  {\hat R}^{(a_l,\dots,a_{k+1})}_{a_{l-1};a_{l-1}+1,\dots,\mathsf{N}} .
\label{exchangerelation}
\end{equation}
We have the following chain of equalities using the Yang-Baxter commutation relations, then the unitarity relation for the $R$-matrix and in the last step the fact that two $R$-matrices acting on different spaces commute:
\begin{equation}
\begin{aligned}
&\phantom{=} {\hat{R}}^{(a_l,\dots,a_k)}_{a_{l-1};a_{l-1}+1,\dots,\mathsf{N}} \  {\hat R}^{(a_l,\dots,a_k)}_{a_{k+1};1,\dots,a_{k+1}-1}  =\\
&= {\hat R}^{(a_l,\dots,a_k)}_{a_{l-1};a_{k+1}+1,\dots,\mathsf{N}} \ R_{a_{l-1} a_{k+1}} \  {\hat R}^{(a_l,\dots,a_k)}_{a_{l-1};a_{l-1}+1,\dots,a_{k+1}-1}  
 {\hat R}^{(a_l,\dots,a_k)}_{a_{k+1};a_{l-1}+1,\dots,a_{k+1}-1} \ R_{ a_{k+1} a_{l-1}} \   {\hat R}^{(a_l,\dots,a_k)}_{a_{k+1};1,\dots,a_{l-1}-1}      \\
&= {\hat R}^{(a_l,\dots,a_k)}_{a_{l-1};a_{k+1}+1,\dots,\mathsf{N}} \   {\hat R}^{(a_l,\dots,a_k)}_{a_{k+1};a_{l-1}+1,\dots,a_{k+1}-1}     \  {\hat R}^{(a_l,\dots,a_k)}_{a_{l-1};a_{l-1}+1,\dots,a_{k+1}-1}  \ R_{a_{l-1} a_{k+1}} \ R_{ a_{k+1} a_{l-1}} \   {\hat R}^{(a_l,\dots,a_k)}_{a_{k+1};1,\dots,a_{l-1}-1} \\
&= n_{a_{k+1}, a_{l-1}} \  {\hat R}^{(a_l,\dots,a_k)}_{a_{k+1};a_{l-1}+1,\dots,a_{k+1}-1} \  {\hat R}^{(a_l,\dots,a_k)}_{a_{k+1};1,\dots,a_{l-1}-1} \  {\hat R}^{(a_l,\dots,a_k)}_{a_{l-1};a_{k+1}+1,\dots,\mathsf{N}} \ {\hat R}^{(a_l,\dots,a_k)}_{a_{l-1};a_{l-1}+1,\dots,a_{k+1}-1}\\
&=n_{a_{k+1}, a_{l-1}} {\hat R}^{(a_{l-1},\dots,a_k)}_{a_{k+1};1,\dots,a_{k+1}-1} \  {\hat R}^{(a_l,\dots,a_{k+1})}_{a_{l-1};a_{l-1}+1,\dots,\mathsf{N}} .
\end{aligned}
\end{equation}
et us prove the induction going from $\mathsf{M}$ to $\mathsf{M}+1$. We have:
\begin{equation}
\begin{aligned}
\prod_{j=1}^{\mathsf{M}+1}T_{1}^{(K)}(\xi _{a_{j}}) &= \prod_{j=1}^{\mathsf{M}}T_{1}^{(K)}(\xi _{a_{j}})\ T_{1}^{(K)}(\xi _{a_{\mathsf{M}+1}}) \\
&=n_{a_{1},...,a_{\mathsf{M%
}}}R^{}_{a_{1};1,\dots,a_{1}-1} \ \hat{R}_{a_2;1,\dots,a_{2}-1}^{(a_{1})}%
\cdots \hat{R}_{a_{j+1};1,\dots,a_{j+1}-1}^{(a_{1},...,a_{j})}\cdots \hat{R%
}_{a_{\mathsf{M}};1,\dots,a_{\mathsf{M}}-1}^{(a_{1},...,a_{\mathsf{M}-1})} \\
& \phantom{=}\times \bigotimes_{j=1}^{\mathsf{M}}K_{a_{j}}\hat{R}_{a_{1};a_{1}+1,\dots,\mathsf{N}}^{(a_{2},...,a_{\mathsf{M}})}\cdots \hat{R}_{a_{j};a_{j}+1,\dots,\mathsf{N}}^{(a_{j+1},...,a_{\mathsf{M}})}\cdots \hat{R}_{a_{\mathsf{M}%
-1};a_{\mathsf{M}-1}+1,\dots,\mathsf{N}}^{(a_{\mathsf{M}})}R_{a_{\mathsf{M}%
};a_{\mathsf{M}}+1,\dots,\mathsf{N}}  \\
& \phantom{=}\times R_{a_{\mathsf{M} +1};1,...,a_{{\mathsf{M}+1}} -1} K_{a_{{\mathsf{M}+1}}} R_{a_{\mathsf{M} +1};a_{\mathsf{M} +1} +1,...,\mathsf{N}}   .
\end{aligned}
\end{equation}
Then, keeping the last factor as it is and moving the term $R_{a_{\mathsf{M} +1};1,...,a_{{\mathsf{M}+1}}-1}$ to the left using the above proven exchange relation (\ref{exchangerelation}) successively, and then moving  $K_{a_{{\mathsf{M}+1}}}$ freely (there is no object acting in the same space) to the left until it will join the products of other matrices $K$, we get the desired result.

We have to use now that $\langle \underline{\mathbf{0}}|$ is an eigenco-vector for a generic product of rational 
$R$-matrices acting on the local quantum spaces, hence:
\begin{equation}
\langle \underline{\mathbf{0}}|R_{a_{1}; 1, ..., a_{1}-1}\hat{R}%
_{a_{2}; 1, ...,a_{2}-1}^{(a_{1})}\cdots \hat{R}_{a_{j+1}; 1, ..., a_{j+1}-1}^{(a_{1},...,a_{j})}\cdots \hat{R}_{a_{\mathsf{M}}; 1, ..., a_{%
\mathsf{M}}-1}^{(a_{1},...,a_{\mathsf{M}-1})}=m_{a_{1},...,a_{%
\mathsf{M}}}\langle \underline{\mathbf{0}}| ,
\end{equation}%
with $m_{a_{1},...,a_{\mathsf{M}}}$ some calculable non-zero coefficient.
Using the explicit formula for the $R$-matrix, this implies the following identity: 
\begin{equation}
\begin{aligned}
\langle \underline{\mathbf{0}}|\prod_{j=1}^{\mathsf{M}}T_{1}^{(K)}(\xi
_{a_{j}})& =n_{a_{1},...,a_{\mathsf{M}}}m_{a_{1},...,a_{\mathsf{M}}}\langle 
\underline{\mathbf{0}}|\bigotimes_{j=1}^{\mathsf{M}}K_{a_{j}}\hat{R}%
_{a_{1}; a_{1}+1, ..., \mathsf{N}}^{(a_{2},...,a_{\mathsf{M}})}\cdots \hat{R%
}_{a_{j}; a_{j}+1, ..., \mathsf{N}}^{(a_{j+1},...,a_{\mathsf{M}})} \cdots
\\
& \cdots \hat{R}_{a_{\mathsf{M}-1}; a_{\mathsf{M}-1}+1, ..., \mathsf{N}}^{(a_{\mathsf{M}})}R_{a_{\mathsf{M}}; a_{\mathsf{M}}+1, ..., \mathsf{N}},
\end{aligned}%
\end{equation}
and so:%
\begin{equation}
\langle \underline{\mathbf{0}}|\prod_{j=1}^{\mathsf{M}}T_{1}^{(K)}(\xi
_{a_{j}})=\sum_{r_{1}\in \{a_{1}+1,...,\mathsf{N}\},...,r_{\mathsf{M}}\in
\{a_{\mathsf{M}}+1,...,\mathsf{N}\}}C_{r_{1},...,r_{\mathsf{M}%
}}V(r_{1},...,r_{\mathsf{M}})\langle \underline{\mathbf{0}}%
|\bigotimes_{j=1}^{\mathsf{M}}K_{r_{j}}.
\end{equation}%
Applying once again this formula, we get our second statement.
\end{proof}

The following lemma holds for a general simple $K$ matrix.

\begin{lemma}
Let $K$ be a $3\times 3$ w-simple matrix, then if we chose the tensor
product form:%
\begin{equation}
\langle \underline{\mathbf{1}}|=\left( \bigotimes_{a=1}^{\mathsf{N}}\langle
1,a|\right) \Gamma _{W}^{-1}, \quad \Gamma _{W}=\bigotimes_{a=1}^{\mathsf{N}%
}W_{K,a},
\end{equation}%
we have that the vector $|\underline{\mathbf{0}}\rangle $ defined in $(\ref%
{Def-R0})$ has the tensor product form:%
\begin{equation}
|\underline{\mathbf{0}}\rangle =\Gamma _{W}\bigotimes_{a=1}^{\mathsf{N}%
}|0,a\rangle ,
\end{equation}%
where $|0,a\rangle $ has the form $(\ref{S-vector-a})$ and it satisfies the
following local properties%
\begin{align}
\langle 1,a|\tilde{K}_{J}^{\left( a\right) }|0,a\rangle &=1/\operatorname{q-det}%
M^{(I)}(\xi _{a}), \\
\langle 1,a|(K_{J}^{\left( a\right) })^{h}|0,a\rangle &=0, \quad\text{for }%
h=0,1,
\end{align}%
where $\tilde{K}_{J}$ is the adjoint matrix of $K_{J}$:%
\begin{equation}
\tilde{K}_{J}K_{J}=K_{J}\tilde{K}_{J}=\det K.
\end{equation}
\end{lemma}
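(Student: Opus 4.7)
The plan is to reduce the global orthogonality condition $\langle \underline{\mathbf{k}}|\underline{\mathbf{0}}\rangle = \prod_a \delta_{0,k_a}$ to local linear conditions on each site $a$, by exploiting two facts: the tensor product ansatz, and the locality of $T_1^{(K)}(\xi_n)$ and $T_2^{(K)}(\xi_n^{(1)})$ at the inhomogeneities. First, I would use the similarity transformation by $\Gamma_W$ to reduce to the case $K=K_J$: since $T_j^{(K)}(\lambda)=\Gamma_W T_j^{(K_J)}(\lambda)\Gamma_W^{-1}$, the tensor-product reference co-vector becomes $\bigotimes_a\langle 1,a|$ and the sought vector becomes $\bigotimes_a|0,a\rangle$, so all the analysis can be done in the Jordan basis.

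Second, I would establish the local action of the transfer matrices at the special points. Using $R_{ab}(0)=\eta\mathbb{P}_{ab}$, a standard reduction gives $T_1^{(K_J)}(\xi_n)$ as $K_{J,n}$ sandwiched between products of $R$-matrices on the remaining sites; a parallel computation based on the fusion structure gives $T_2^{(K_J)}(\xi_n^{(1)})$ as $\tilde{K}_{J,n}$ dressed similarly. Then I would extend the ordering formula of the preceding Proposition to mixed products $\prod_j T_1^{(K_J)}(\xi_{a_j})^{h_{a_j}}\prod_\ell T_2^{(K_J)}(\xi_{b_\ell}^{(1)})$: using the Yang--Baxter exchange relation already proven, the dressing $R$-matrices can all be pushed to the outside, leaving a central block that is a pure tensor product of $K_J$ or $\tilde{K}_J$ factors on the sites touched by the transfer matrices. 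Acting on the tensor product $\bigotimes_a\langle 1,a|$, the outer $R$-matrix block produces an overall non-zero scalar (by the same eigenvalue argument as in the preceding Proposition), and the global scalar product with $\bigotimes_a|0,a\rangle$ factorizes into a finite sum of products of local matrix elements of the form $\prod_a\langle 1,a|X_a|0,a\rangle$, with $X_a\in\{I,K_J,\tilde{K}_J,\dots\}$ depending on the configuration $\underline{\mathbf{k}}$.

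Third, I would impose the stated local conditions and verify that they realize $\langle \underline{\mathbf{k}}|\underline{\mathbf{0}}\rangle=\prod_a\delta_{0,k_a}$. For every $\underline{\mathbf{k}}\neq\underline{\mathbf{0}}$ there exists at least one site $a$ on which the local factor is either $I$ or $K_J$, so $\langle 1,a|I|0,a\rangle=0$ and $\langle 1,a|K_J|0,a\rangle=0$ annihilate each surviving term; any higher power $K_J^p$ with $p\geq 2$ reduces via the Cayley--Hamilton identity to a linear combination of $I$, $K_J$ and $\tilde{K}_J$, so the two orthogonality conditions plus $\langle 1,a|\tilde{K}_J|0,a\rangle$ suffice. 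The remaining case $\underline{\mathbf{k}}=\underline{\mathbf{0}}$ corresponds to all local factors being $\tilde{K}_J$, and the value $\langle 1,a|\tilde{K}_J|0,a\rangle=1/\operatorname{q-det}M^{(I)}(\xi_a)$ is precisely what is needed to cancel the coefficient coming from the product of dressing $R$-matrices (which, by direct evaluation at the shifted inhomogeneities, produces a factor $\prod_a\operatorname{q-det}M^{(I)}(\xi_a)$), yielding the final normalization $1$.

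Fourth, I would solve the three linear equations for $|0,a\rangle\in\mathbb{C}^3$. With $\langle 1,a|=(x,y,z)_a$, the coefficient matrix is exactly $M_{x,y,z,K_J}$ from the earlier Proposition, whose determinant is non-zero precisely under the case-by-case conditions $xyz\neq 0$, $xz\neq 0$, $x\neq 0$ already stated. Cramer's rule then produces the explicit closed form \eqref{S-vector-a}, with $\Delta$ being (up to the factor $\operatorname{q-det}M^{(I)}(\xi_a-2\eta)$) the determinant of this linear system. The main obstacle is step two and three: extending the rearrangement formula to cover products mixing $T_1$ and $T_2$ factors and carefully tracking the scalar coefficients generated by the dressing $R$-matrices, so that the diagonal normalization comes out to $1$ rather than to some residual factor; this is where the precise choice $1/\operatorname{q-det}M^{(I)}(\xi_a)$ in the local condition is forced.
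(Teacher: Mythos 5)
Your steps one and four are fine and coincide with the paper's: the reduction to the Jordan basis, and the observation that the three local conditions form a $3\times 3$ linear system with coefficient matrix $M_{x,y,z,K_{J}}$ (rows $(x,y,z)\tilde{K}_{J}$, $(x,y,z)$, $(x,y,z)K_{J}$), invertible under the case-by-case conditions and solved by Cramer's rule to give \eqref{S-vector-a}. The divergence, and the gap, is in steps two and three. The paper never expands mixed products of $T_{1}^{(K)}(\xi _{n})$ and $T_{2}^{(K)}(\xi _{n}^{(1)})$. It uses the fusion relation $T_{2}^{(K)}(\xi _{n}^{(1)})T_{1}^{(K)}(\xi _{n})=\operatorname{q-det}M^{(K)}(\xi _{n})$ to rewrite the whole family \eqref{Co-vector-SoV-basis} as pure powers $\prod_{n}T_{1}^{(K)}(\xi _{n})^{h_{n}}$ acting on the shifted reference $\langle \underline{\mathbf{0}}|=\langle \underline{\mathbf{1}}|\prod_{a}\operatorname{q-det}M^{(I)}(\xi _{a})\bigotimes_{a}\tilde{K}^{(a)}$, so that the adjugate factors are absorbed into the reference co-vector once and for all. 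The expansion of the preceding Proposition then applies verbatim, every monomial pairs as $\prod_{a}\langle 0,a|K_{J}^{k_{a}}|0,a\rangle$ with $\langle 0,a|\propto \langle 1,a|\tilde{K}_{J}$, hence $\langle 0,a|K_{J}^{k_{a}}|0,a\rangle \propto \det K\,\langle 1,a|K_{J}^{k_{a}-1}|0,a\rangle =0$ for $k_{a}\in \{1,2\}$; only the identity monomial survives, and its coefficient vanishes unless $\underline{\mathbf{h}}=\underline{\mathbf{0}}$.

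Your direct route leaves the central combinatorial step unproven, and as stated it fails. First, the claimed dressed form of $T_{2}^{(K_{J})}(\xi _{n}^{(1)})$ as $\tilde{K}_{J,n}$ and the extension of the exchange relations to mixed $T_{1}$/$T_{2}$ products are substantive claims not covered by the Proposition; the inversion implicit in them also needs $\det K\neq 0$, whereas the lemma covers w-simple $K$ with $\det K=0$ (the paper treats that case by a separate limiting argument at the end of its proof). Second, the annihilation argument is wrong term by term: the $R$-matrix dressing redistributes the $K$-factors, so the monomials $\bigotimes_{a}X_{a}$ in the expansion do not have $X_{a}$ determined by which transfer matrix is evaluated at site $a$ (in the Proposition the index $r_{j}$ ranges over $A_{j}\supset \{a_{j},\dots ,\mathsf{N}\}$, not over $\{a_{j}\}$), and a site can carry $X_{a}=K_{J}^{2}$. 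Since $K_{J}^{2}=\tilde{K}_{J}+(\operatorname{tr}K)K_{J}-\tfrac{(\operatorname{tr}K)^{2}-\operatorname{tr}K^{2}}{2}I$, your local conditions give $\langle 1,a|K_{J}^{2}|0,a\rangle =\langle 1,a|\tilde{K}_{J}|0,a\rangle =1/\operatorname{q-det}M^{(I)}(\xi _{a})\neq 0$: the Cayley--Hamilton reduction lets you \emph{evaluate} such terms but does not make them vanish. To close the argument along your lines you would have to show that the net contribution of all surviving $\tilde{K}_{J}$-components cancels for every $\underline{\mathbf{k}}\neq \underline{\mathbf{0}}$ --- exactly the bookkeeping that the paper's renormalized reference co-vector is designed to avoid.
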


\begin{proof}
Let us take the following normalization for the SoV co-vector basis:%
\begin{equation}
\langle h_{1},...,h_{\mathsf{N}}|=\langle \underline{\mathbf{0}}%
|\prod_{n=1}^{\mathsf{N}}\frac{T_{1}^{\left( K\right) }(\xi _{n})^{h_{n}}}{q%
\text{-det}M^{(K)}(\xi _{n})^{(1-\delta _{h_{n},0})}},
\end{equation}%
where we have defined:%
\begin{equation}
\langle \underline{\mathbf{0}}|=\langle \underline{\mathbf{1}}|\prod_{a=1}^{%
\mathsf{N}}q\text{-det}M^{(I)}(\xi _{a})\bigotimes_{a=1}^{\mathsf{N}}\tilde{K%
}^{\left( a\right) }.
\end{equation}%
Now we can use the previous lemma to get the following statement:%
\begin{equation}
\langle h_{1},...,h_{\mathsf{N}}|=\sum_{k_{1},...,k_{\mathsf{N}%
}=0,1,2}c_{h_{1},...,h_{\mathsf{N}}}^{k_{1},...,k_{\mathsf{N}}}\langle 
\underline{\mathbf{0}}|\bigotimes_{a=1}^{\mathsf{N}}K^{\left( a\right)
k_{a}},
\end{equation}%
with%
\begin{equation}
c_{h_{1},...,h_{\mathsf{N}}}^{k_{1}=0,...,k_{\mathsf{N}}=0}=0\text{ \ if }%
\exists j\in \{1,...,\mathsf{N}\}:h_{j}\neq 0.
\end{equation}%
By definition:%
\begin{equation}
\langle \underline{\mathbf{0}}|\bigotimes_{a=1}^{\mathsf{N}}K^{\left(
a\right) k_{a}}|\underline{\mathbf{0}}\rangle =\prod_{a=1}^{\mathsf{N}%
}\langle 0,a|K_{J}^{\left( a\right) k_{a}}|0,a\rangle =0\text{ \ if }\exists
a\in \{1,...,\mathsf{N}\}:k_{a}\neq 0,
\end{equation}%
so we get:
\begin{equation}
\langle h_{1},...,h_{\mathsf{N}}|\underline{\mathbf{0}}\rangle =0\text{ \ if 
}\exists j\in \{1,...,\mathsf{N}\}:h_{j}\neq 0.
\end{equation}%
The fact that:%
\begin{equation}
\langle \underline{\mathbf{0}}|\underline{\mathbf{0}}\rangle =\prod_{a=1}^{%
\mathsf{N}}\langle 0,a|0,a\rangle =1
\end{equation}%
is proven by direct computations.

Finally, let us observe that the following identities:%
\begin{equation}
\langle \underline{\mathbf{0}}|\prod_{n=1}^{\mathsf{N}}\frac{\left(
T_{1}^{\left( K\right) }(\xi _{n})\right) ^{h_{n}}}{\left( q\text{-det}%
M^{(K)}(\xi _{n})\right) ^{1-\delta _{h_{n},0}}}=\langle \underline{\mathbf{1%
}}|\prod_{n=1}^{\mathsf{N}}T_{2}^{\left( K\right) \delta _{h_{n},0}}(\xi
_{n}-\eta )T_{1}^{\left( K\right) \delta _{h_{n},2}}(\xi _{n}),
\label{Eq.1-0}
\end{equation}%
holds for any $h_{n}\in \{0,1,2\}$. 
Now, in the limit $\det K\rightarrow 0$,
keeping $K$ a $3\times 3$ w-simple matrix\footnote{%
That is according to the three cases considered in the Theorem \ref{Central-T-det-non0}.}, we have that the r.h.s.\ of the equation (\ref{Eq.1-0}) is well
defined and it defines the limit of the l.h.s., so that our co-vector SoV
basis goes back to the one defined in the case $\det K=0$. Moreover, the $%
|0,a\rangle $ are well defined and so the $|\underline{\mathbf{0}}\rangle $
above defined in this limit still satisfies (\ref{Def-R0}).
\end{proof}

\section{Orthogonal co-vector/vector SoV basis for \texorpdfstring{$gl_{2}$}{gl2} representations}
\label{sec:gl2-sov}

Here, we consider the fundamental representations of the $gl_{2}$
Yang-Baxter algebra associated to generic quasi-periodic boundary
conditions, with transfer matrix:%
\begin{equation}
T^{(K)}(\lambda )\equiv \tr_{V_{a}}K_{a}R_{a,\mathsf{N}}(\lambda -\xi _{%
\mathsf{N}})\cdots R_{a,1}(\lambda -\xi _{1})\, \in\operatorname{End}(\mathcal{H}),
\end{equation}%
where $\mathcal{H}$ is the quantum space of the representation, $R_{a,b}(\lambda )\in \operatorname{End}(V_{a}\otimes V_{b})$, $V_{a}\simeq 
\mathbb{C}^{2}$, $V_{b}\simeq \mathbb{C}^{2}$ is the rational
6-vertex R-matrix solution of the Yang-Baxter equation and the twist matrix reads 
\begin{equation}
K=
\begin{pmatrix}
a & b \\ 
c & d%
\end{pmatrix}
\in \text{End}(\mathbb{C}^{2}).
\end{equation}

The construction of the orthogonal co-vector and vector SoV bases for these $%
gl_{2}$ representations is here implemented to define a reference to compare with for
the more involved constructions that we have considered in
this paper for $gl_{3}$ representations. One should mention that up similarity transformations\footnote{As discussed in section 3.4 of \cite{MaiN18}} the SoV bases in these $gl_{2}$ cases
are already available in the literature using the framework of the traditional
Sklyanin's SoV construction, see for example \cite{Nic13} for the antiperiodic case
and \cite{JiaKKS16} for more general twists. However, here we are interested
in implementing these constructions entirely inside our new approach \cite%
{MaiN18}.

The following proposition allows to produce the orthogonal basis to the left
SoV basis%
\begin{equation}
\langle h_{1},...,h_{\mathsf{N}}|\equiv \langle \underline{\mathbf{0}}%
|\prod_{a=1}^{\mathsf{N}}\left(\frac{T^{(K)} (\xi _{a})}{a(\xi _{a})}\right)^{h_{a}}%
\text{ \ for any }\{h_{1},...,h_{\mathsf{N}}\}\in \{0,1\}^{\mathsf{N}},
\end{equation}%
and to show that itself is of SoV type just using the polynomial form of the
transfer matrix and the fusion equations.

Let us denote with $|\underline{\mathbf{0}}\rangle $ the non-zero vector
orthogonal to all the SoV co-vectors with the exception of $\langle 
\underline{\mathbf{0}}|$, i.e.%
\begin{equation}
\langle h_{1},...,h_{\mathsf{N}}|\underline{\mathbf{0}}\rangle =\frac{%
\prod_{n=1}^{\mathsf{N}}\delta _{h_{n},0}}{V^{2}(\xi _{1},...,\xi _{\mathsf{N%
}})}, \quad \forall \{h_{1},...,h_{\mathsf{N}}\}\in \{0,1\}^{\mathsf{N}},
\label{Vector-S-Ch}
\end{equation}%
with $\bra{h_1, \ldots, h_\mathsf{N}}$ the set of SoV co-vectors a basis. $|\underline{\mathbf{0%
}}\rangle $ is uniquely defined with the above normalization. Similarly, we
can introduce the non-zero vector $|\underline{\mathbf{1}}\rangle $
orthogonal to all the SoV co-vectors with the exception of $\langle 1,...,1|$%
, i.e.%
\begin{equation}
\langle h_{1},...,h_{\mathsf{N}}|\underline{\mathbf{1}}\rangle =\frac{%
\prod_{n=1}^{\mathsf{N}}\delta _{h_{n},1}}{V(\xi _{1},...,\xi _{\mathsf{N}%
})V(\xi _{1}^{\left( 1\right) },...,\xi _{\mathsf{N}}^{\left( 1\right) })} , 
\quad \forall \{h_{1},...,h_{\mathsf{N}}\}\in \{0,1\}^{\mathsf{N}},
\end{equation}%
which also fixes the normalization of $|\underline{\mathbf{1}}\rangle $.

\begin{proposition}
Under the same conditions assuring that the set of SoV co-vectors is a basis
(i.e. almost any choice of $\langle \underline{\mathbf{0}}|$, $K\neq xI$,
for any $x\in \mathbb{C}$, and the condition $(\ref{Inhomog-cond})$), then
the following set of vectors:%
\begin{equation}
|h_{1},...,h_{\mathsf{N}}\rangle =\prod_{a=1}^{\mathsf{N}}\left(\frac{T^{(K)}(\xi
_{a}-\eta )}{a(\xi _{a})}\right)^{1-h_{a}}|\underline{\mathbf{1}}\rangle , %
\quad \forall \{h_{1},...,h_{\mathsf{N}}\}\in \{0,1\}^{\mathsf{N}}
\end{equation}%
forms an orthogonal basis to the left SoV basis:%
\begin{equation}
\langle h_{1},...,h_{\mathsf{N}}|k_{1},...,k_{\mathsf{N}}\rangle =\frac{%
\prod_{n=1}^{\mathsf{N}}\delta _{h_{n},k_{n}}}{V(\xi _{1},...,\xi _{\mathsf{N%
}})V(\xi _{1}^{(h_{1})},...,\xi _{\mathsf{N}}^{(h_{\mathsf{N}})})}.
\end{equation}%
Let $t(\lambda )$ be an element of the spectrum of $T^{(K)}(\lambda )$, then
the uniquely defined eigenvector $|t\rangle $ and
eigenco-vector $\langle t|$ admit the following SoV representations:%
\begin{align}
|t\rangle & =\sum_{h_{1},...,h_{\mathsf{N}}=0}^{1}\prod_{a=1}^{\mathsf{N}}(%
\frac{t(\xi _{a})}{a(\xi _{a})})^{h_{a}}\text{ }V(\xi _{1}^{(h_{1})},...,\xi
_{\mathsf{N}}^{(h_{\mathsf{N}})})|h_{1},...,h_{\mathsf{N}}\rangle ,
\label{SoV-YB-T-eigenR} \\
\langle t|& =\sum_{h_{1},...,h_{\mathsf{N}}=0}^{1}\prod_{a=1}^{\mathsf{N}}(%
\frac{t(\xi _{a}-\eta )}{a(\xi _{a})})^{1-h_{a}}\text{ }V(\xi
_{1}^{(h_{1})},...,\xi _{\mathsf{N}}^{(h_{\mathsf{N}})})\langle h_{1},...,h_{%
\mathsf{N}}|,  \label{SoV-YB-T-eigenL}
\end{align}%
where we have fixed their normalization by imposing:%
\begin{equation}
\langle \underline{\mathbf{0}}|t\rangle =\langle t|\underline{\mathbf{1}}%
\rangle =1/V(\xi _{1},...,\xi _{\mathsf{N}}).  \label{Normalization}
\end{equation}
\end{proposition}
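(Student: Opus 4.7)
The plan is to prove the orthogonality of the two bases by a direct computation of $\langle h|k\rangle$, using the $gl_2$ fusion relation $T^{(K)}(\xi_a)T^{(K)}(\xi_a-\eta)=\operatorname{q-det}M^{(K)}(\xi_a)$ together with a weight-counting argument based on Lagrange interpolation. The SoV representations \eqref{SoV-YB-T-eigenR}--\eqref{SoV-YB-T-eigenL} of the eigenstates will then follow from the standard factorization of the wave functions in the SoV basis.

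First I would expand the pairing using the definitions and regroup site by site, exploiting the commutativity of the $T^{(K)}(\lambda)$. Partitioning sites according to the value of $(h_a,k_a)\in\{0,1\}^2$ into four subsets $S_{pq}=\{a:h_a=p,\,k_a=q\}$ and applying the fusion at each $a\in S_{10}$, the operator pair $T^{(K)}(\xi_a)T^{(K)}(\xi_a-\eta)/a(\xi_a)^2$ collapses to the central scalar $\operatorname{q-det}M^{(K)}(\xi_a)/a(\xi_a)^2$, which is proportional to $\det K$ and thus vanishes automatically when $\det K=0$ with $S_{10}\neq\emptyset$. After this collapse, the left action of $\prod_{a\in S_{11}}T^{(K)}(\xi_a)/a(\xi_a)$ on $\langle\underline{\mathbf{0}}|$ produces the SoV co-vector $\langle h''|$ of weight $|S_{11}|$ (where $h''_a=1$ iff $a\in S_{11}$), and one is reduced to evaluating $\langle h''|\prod_{b\in S_{00}}T^{(K)}(\xi_b-\eta)/a(\xi_b)|\underline{\mathbf{1}}\rangle$.

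The central technical step is a weight-counting argument that shows this reduced pairing vanishes unless $h=k$. Using the mixed Lagrange interpolation of $T^{(K)}(\lambda)$ at the $\mathsf{N}$ points $\xi_c^{(h''_c)}$ together with infinity, adapted to the current SoV index, the right action of each $T^{(K)}(\xi_b-\eta)$ on $\langle h''|$ decomposes into: weight-raising terms $a(\xi_c)\langle h''_{+c}|$ for $c$ with $h''_c=0$; weight-lowering terms $(\operatorname{q-det}M^{(K)}(\xi_c)/a(\xi_c))\langle h''_{-c}|$ for $c$ with $h''_c=1$, via the fusion relation; and a weight-preserving asymptotic term. Hence every application shifts the weight by at most $+1$, so after $|S_{00}|$ applications starting from weight $|S_{11}|$, the maximum weight reachable is $|S_{11}|+|S_{00}|=\mathsf{N}-|S_{10}|-|S_{01}|$. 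Since $\langle h''|\underline{\mathbf{1}}\rangle$ is nonzero only at the all-ones index of weight $\mathsf{N}$, the pairing vanishes whenever $|S_{10}|+|S_{01}|>0$, that is, whenever $h\neq k$.

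The diagonal case $h=k$ is the main obstacle: here $|S_{11}|+|S_{00}|=\mathsf{N}$ exactly, so the only way to reach the all-ones co-vector is to pick a weight-raising term at every application. I would extract these contributions by writing explicitly the Lagrange coefficients $g_c^{(h'')}(\xi_b-\eta)$, sum over the $|S_{00}|!$ orderings in which the remaining zeros are raised, and use the multilinearity of the Vandermonde determinant and standard identities to reorganize the resulting sum into the claimed ratio $1/[V(\xi_1,\dots,\xi_\mathsf{N})V(\xi_1^{(h_1)},\dots,\xi_\mathsf{N}^{(h_\mathsf{N})})]$. Once orthogonality and the diagonal formula are in hand, the vectors $\{|k\rangle\}$ automatically form a basis by non-singularity of the pairing matrix, and the SoV representations follow: expanding $|t\rangle=\sum_h(\langle h|t\rangle/\langle h|h\rangle)|h\rangle$, using the eigenvalue equation to factorize $\langle h|t\rangle=\langle\underline{\mathbf{0}}|t\rangle\prod_a(t(\xi_a)/a(\xi_a))^{h_a}$, and imposing the normalization \eqref{Normalization} reproduces \eqref{SoV-YB-T-eigenR}, with the formula for $\langle t|$ obtained analogously from the right basis.
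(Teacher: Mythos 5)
Your argument is correct in outline but follows a genuinely different route from the paper for the orthogonality. The paper proceeds by induction on the number of $k_a=1$ in the vector: it applies $T^{(K)}(\xi_{\pi(l+1)}^{(1)})$ to pass from a vector of weight $\mathsf{N}-l$ to one of weight $\mathsf{N}-(l+1)$, and shows via the fusion relation (case $h_{\pi(l+1)}=1$) or the interpolation formula (case $h_{\pi(l+1)}=0$) that every resulting term vanishes by the induction hypothesis. Your global weight-counting argument — collapsing the $S_{10}$ sites to the central quantum determinant, observing that each right action of $T^{(K)}(\xi_b-\eta)$ on a co-vector shifts its weight by at most $+1$, and concluding that the all-ones index of $|\underline{\mathbf{1}}\rangle$ is unreachable unless $|S_{10}|=|S_{01}|=0$ — is a clean, non-inductive alternative that uses exactly the same two ingredients (fusion and interpolation) and correctly covers all cases, including $\det K\neq 0$. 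The basis property and the eigenstate representations are then obtained as in the paper, from the non-singular diagonal pairing and the SoV decomposition of the identity.

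The one place where your plan is substantially weaker than the paper's is the diagonal normalization. The permutation sum you propose to evaluate is not a product of factors depending only on the pair $(b_j,\sigma(b_j))$: the Lagrange coefficients $g_c^{(\underline{\mathbf{h}})}(\xi_b-\eta)$ are taken at the \emph{current} intermediate index, which changes after each raising step, so the summands are history-dependent and the reorganization into $1/[V(\xi_1,\dots,\xi_{\mathsf{N}})V(\xi_1^{(h_1)},\dots,\xi_{\mathsf{N}}^{(h_{\mathsf{N}})})]$ is a genuine computation that you have not supplied. The paper avoids this entirely: once orthogonality is established, it computes the ratio $\langle h^{(a)}|h^{(a)}\rangle/\langle \bar h^{(a)}|\bar h^{(a)}\rangle$ of two norms differing at a single site by inserting one transfer matrix and using orthogonality to kill all but one term of its interpolation expansion, then telescopes. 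You should adopt that step — it is available to you as soon as your weight-counting orthogonality is in place, and it replaces the $|S_{00}|!$-term sum by a one-term identity per site.
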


\begin{proof}
Let us start proving the orthogonality condition:%
\begin{equation}
\langle h_{1},...,h_{\mathsf{N}}|k_{1},...,k_{\mathsf{N}}\rangle =0, \quad \forall \{k_{1},...,k_{\mathsf{N}}\}\neq \{h_{1},...,h_{\mathsf{N}%
}\}\in \{0,1\}^\mathsf{N}.
\end{equation}%
The proof is done by induction, assuming that it is true for any vector $%
|k_{1},...,k_{\mathsf{N}}\rangle $ with $\sum_{n=1}^{\mathsf{N}}k_{n}=%
\mathsf{N}-l$ for $l\leq \mathsf{N}-1$ and proving it for vectors $%
|k_{1}^{\prime },...,k_{\mathsf{N}}^{\prime }\rangle $ with $\sum_{n=1}^{%
\mathsf{N}}k_{n}^{\prime }=\mathsf{N}-(l+1)$. To this aim we fix a vector $%
|k_{1},...,k_{\mathsf{N}}\rangle $ with $\sum_{n=1}^{\mathsf{N}}k_{n}=%
\mathsf{N}-l$ and we denote with $\pi $ a permutation on the set $\{1,...,%
\mathsf{N}\}$ such that:%
\begin{equation}
k_{\pi (a)}=0\text{ for }a\leq l\text{\ \ and }k_{\pi (a)}=1\text{ for }l<a,
\end{equation}%
then we compute:%
\begin{equation}
\langle h_{1},...,h_{\mathsf{N}}|T^{(K)}(\xi _{\pi (l+1)}^{\left( 1\right)
})|k_{1},...,k_{\mathsf{N}}\rangle =a(\xi _{\pi (l+1)})\langle h_{1},...,h_{%
\mathsf{N}}|k_{1}^{\prime },...,k_{\mathsf{N}}^{\prime }\rangle
\end{equation}%
where we have defined: 
\begin{equation}
k_{\pi (a)}^{\prime }=k_{\pi (a)}, \ \forall a\in \{1,...,\mathsf{N}%
\}\backslash \{l+1\}\text{ and }k_{\pi (l+1)}^{\prime }=0,
\end{equation}%
for any $\{h_{1},...,h_{\mathsf{N}}\}\neq \{k_{1}^{\prime },...,k_{\mathsf{N}%
}^{\prime }\}\in \{0,1\}^\mathsf{N}$. There are two cases. The first case is 
$h_{\pi (l+1)}=1$, then it holds:%
\begin{equation}
\langle h_{1},...,h_{\mathsf{N}}|T^{(K)}(\xi _{\pi (l+1)}^{\left( 1\right)
})|k_{1},...,k_{\mathsf{N}}\rangle =\frac{q\text{-det}M^{(K)}(\xi _{\pi
(l+1)})}{a(\xi _{\pi (l+1)})}\langle h_{1}^{\prime },...,h_{\mathsf{N}%
}^{\prime }|k_{1},...,k_{\mathsf{N}}\rangle ,  \label{Caso-1}
\end{equation}%
where we have defined: 
\begin{equation}
h_{\pi (a)}^{\prime }=h_{\pi (a)}, \ \forall a\in \{1,...,\mathsf{N}%
\}\backslash \{l+1\}\text{ and }h_{\pi (l+1)}^{\prime }=0.
\end{equation}%
Then from $\{h_{1},...,h_{\mathsf{N}}\}\neq \{k_{1}^{\prime },...,k_{\mathsf{%
N}}^{\prime }\}\in \{0,1\}^\mathsf{N}$ it follows also that $\{h_{1}^{\prime
},...,h_{\mathsf{N}}^{\prime }\}\neq \{k_{1},...,k_{\mathsf{N}}\}\in \{0,1\}^%
\mathsf{N}$ and so the induction implies that the r.h.s. of (\ref{Caso-1})
is zero and so we get:%
\begin{equation}
\langle h_{1},...,h_{\mathsf{N}}|k_{1}^{\prime },...,k_{\mathsf{N}}^{\prime
}\rangle =0.  \label{l+1-Id}
\end{equation}%
The second case is $h_{\pi (l+1)}=0$. We can use the following
interpolation formula:%
\begin{equation}
T^{(K)}(\xi _{\pi (l+1)}^{\left( 1\right) })= (\tr K)%
\prod_{a=1}^{\mathsf{N}}(\xi _{\pi (l+1)}^{\left( 1\right) }-\xi _{\pi
(a)}^{\left( h_{\pi (a)}\right) })+\sum_{a=1}^{\mathsf{N}}\prod_{b\neq
a,b=1}^{\mathsf{N}}\frac{\xi _{\pi (l+1)}^{\left( 1\right) }-\xi _{\pi
(b)}^{(h_{\pi (b)})}}{\xi _{\pi (a)}^{\left( h_{\pi (a)}\right) }-\xi _{\pi
(b)}^{(h_{\pi (b)})}}T^{(K)}\Big(\xi _{\pi (a)}^{\left( h_{\pi (a)}\right) }\Big),
\label{Interp-T}
\end{equation}%
from which $\langle h_{1},...,h_{\mathsf{N}}|T^{(K)}(\xi _{\pi
(l+1)}^{\left( 1\right) },\{\xi \})|k_{1},...,k_{\mathsf{N}}\rangle $
reduces to the following sum:%
\begin{align}
& (\tr K) \prod_{a=1}^{\mathsf{N}}(\xi _{\pi
(l+1)}^{\left( 1\right) }-\xi _{\pi (a)}^{\left( h_{\pi (a)}\right) })\text{ 
}\langle h_{1},...,h_{\mathsf{N}}|k_{1},...,k_{\mathsf{N}}\rangle
+\sum_{a=1}^{\mathsf{N}}\prod_{b\neq a,b=1}^{\mathsf{N}}\frac{\xi _{\pi
(l+1)}^{\left( 1\right) }-\xi _{\pi (b)}^{(h_{\pi (b)})}}{\xi _{\pi
(a)}^{\left( h_{\pi (a)}\right) }-\xi _{\pi (b)}^{(h_{\pi (b)})}}  \notag \\
& \times \frac{( \operatorname{q-det} M^{(K)}(\xi _{\pi (a)}))^{h_{\pi (a)}}}{\left(
a(\xi _{\pi (a)})\right) ^{2h_{\pi (a)}-1}}\langle h_{1}^{(a)},...,h_{%
\mathsf{N}}^{(a)}|k_{1},...,k_{\mathsf{N}}\rangle ,
\end{align}%
where we have defined:%
\begin{equation}
h_{\pi (j)}^{(a)}=h_{\pi (j)}, \ \forall j\in \{1,...,\mathsf{N}%
\}\backslash \{a\}\text{ and }h_{\pi (a)}^{(a)}=1-h_{\pi (a)}.
\end{equation}%
Let us now note that from $h_{\pi (l+1)}=0$ it follows that $\{h_{1},...,h_{%
\mathsf{N}}\}\neq \{k_{1},...,k_{\mathsf{N}}\}$, as $k_{\pi (l+1)}=1$ by
definition and similarly $\{h_{1}^{(a)},...,h_{\mathsf{N}}^{(a)}\}\neq
\{k_{1},...,k_{\mathsf{N}}\}$ being by definition $h_{\pi
(l+1)}^{(a)}=h_{\pi (l+1)}=0$ for any $a\in \{1,...,\mathsf{N}\}\backslash
\{l+1\}$. Finally, from $\{h_{1},...,h_{\mathsf{N}}\}\neq \{k_{1}^{\prime
},...,k_{\mathsf{N}}^{\prime }\}$ with $h_{\pi (l+1)}=k_{\pi (l+1)}^{\prime
}=0$, clearly it follows that $\{h_{1}^{(l+1)},...,h_{\mathsf{N}%
}^{(l+1)}\}\neq \{k_{1},...,k_{\mathsf{N}}\}$. So, by using the induction
argument, we get that any term in the above sum is zero. So that also in the
case $h_{\pi (l+1)}=0$, we get that (\ref{l+1-Id}) is satisfied, and so it
is satisfied for any $\{h_{1},...,h_{\mathsf{N}}\}\neq \{k_{1}^{\prime
},...,k_{\mathsf{N}}^{\prime }\}$ which proves the induction of the
orthogonality to $l+1$. Indeed, by changing the permutation $\pi $ we can
both take for $\{\pi (1),...,\pi (l)\}$ any subset of cardinality $l$ in $%
\{1,...,\mathsf{N}\}$ and with $\pi (l+1)$ any element in its complement $%
\{1,...,\mathsf{N}\}\backslash \{\pi (1),...,\pi (l)\}$.

We can compute now the left/right normalization, and to do this we just need
to compute the following type of ratio:%
\begin{equation}
\frac{\langle h_{1}^{(a)},...,h_{\mathsf{N}}^{(a)}|h_{1}^{(a)},...,h_{%
\mathsf{N}}^{(a)}\rangle }{\langle \bar{h}_{1}^{(a)},...,\bar{h}_{\mathsf{N}%
}^{(a)}|\bar{h}_{1}^{(a)},...,\bar{h}_{\mathsf{N}}^{(a)}\rangle }=a(\xi _{a})%
\frac{\langle h_{1}^{(a)},...,h_{\mathsf{N}}^{(a)}|h_{1}^{(a)},...,h_{%
\mathsf{N}}^{(a)}\rangle }{\langle \bar{h}_{1}^{(a)},...,\bar{h}_{\mathsf{N}%
}^{(a)}|T^{(K)}(\xi _{a}^{(1)})|h_{1}^{(a)},...,h_{\mathsf{N}}^{(a)}\rangle }
\end{equation}%
with $\bar{h}_{j}^{(a)}=h_{j}^{(a)}$ for any $j\in \{1,...,\mathsf{N}%
\}\backslash \{a\}$ while $\bar{h}_{j}^{(a)}=0$ and $h_{j}^{(a)}=1$. We can
use now once again the interpolation formula (\ref{Interp-T}) which by the
orthogonality condition produces only one non-zero term, the one associate
to $T^{(K)}(\xi _{a},\{\xi \})$. It holds:%
\begin{equation}
\frac{\langle h_{1}^{(a)},...,h_{\mathsf{N}}^{(a)}|h_{1}^{(a)},...,h_{%
\mathsf{N}}^{(a)}\rangle }{\langle \bar{h}_{1}^{(a)},...,\bar{h}_{\mathsf{N}%
}^{(a)}|\bar{h}_{1}^{(a)},...,\bar{h}_{\mathsf{N}}^{(a)}\rangle }%
=\prod_{b\neq a,b=1}^{\mathsf{N}}\frac{\xi _{a}-\xi _{b}^{(h_{b})}}{\xi
_{a}^{\left( 1\right) }-\xi _{b}^{(h_{b})}}.
\end{equation}%
Using the above result, it is now standard to get the proof of the Vandermonde determinant form for the
normalization.

Let us note that being the set of SoV co-vectors and vectors basis in $%
\mathcal{H}$, it follows that for any transfer matrix eigenstates $|t\rangle 
$ and $\langle t|$ there exists at least a $\{r_{1},...,r_{\mathsf{N}}\}\in
\{0,1\}^{\mathsf{N}}$ and a $\{s_{1},...,s_{\mathsf{N}}\}\in \{0,1\}^{%
\mathsf{N}}$ such that:%
\begin{equation}
\langle r_{1},...,r_{\mathsf{N}}|t\rangle \neq 0, \quad \langle
t|s_{1},...,s_{\mathsf{N}}\rangle \neq 0,
\end{equation}%
which together with the identities:%
\begin{equation}
\langle h_{1},...,h_{\mathsf{N}}|t\rangle \propto \langle \underline{\mathbf{%
0}}|t\rangle , \quad \langle t|h_{1},...,h_{\mathsf{N}}\rangle \propto
\langle t|\underline{\mathbf{1}}\rangle, \quad \forall \{h_{1},...,h_{%
\mathsf{N}}\}\in \{0,1\}^{\mathsf{N}},
\end{equation}%
imply that:%
\begin{equation}
\langle \underline{\mathbf{0}}|t\rangle \neq 0, \quad \langle t|\underline{%
\mathbf{1}}\rangle \neq 0.
\end{equation}%
So we are free to fix the normalization of $|t\rangle $ and $\langle t|$
by (\ref{Normalization}). Finally, the representations for these
eigenco-vectors and eigenvectors follow from the use of the SoV decomposition
of the identity:%
\begin{equation}
\mathbb{I}=\text{ }V(\{\xi \})\sum_{h_{1},...,h_{\mathsf{N}}=0}^{1}V(\xi
_{1}^{(h_{1})},...,\xi _{\mathsf{N}}^{(h_{\mathsf{N}})})|h_{1},...,h_{%
\mathsf{N}}\rangle \langle h_{1},...,h_{\mathsf{N}}|.
\end{equation}
\end{proof}

\begin{corollary}
Let us assume that the condition $(\ref{Inhomog-cond})$ is satisfied and
that $K\neq xI$, for any $x\in \mathbb{C}$, and furthermore $\det K\neq 0$,
then the vectors of the right SoV basis admit also the following
representations:%
\begin{equation}
|h_{1},...,h_{\mathsf{N}}\rangle =\prod_{a=1}^{\mathsf{N}}\left(\frac{T^{(K)}(\xi
_{a})}{\det K\text{ }d(\xi _{a}^{(1)})}\right)^{h_{a}}|\underline{\mathbf{0}}\rangle ,	 
\quad \forall \{h_{1},...,h_{\mathsf{N}}\}\in \{0,1\}^{\mathsf{N}},
\end{equation}%
as well as for any element of the spectrum of $T^{(K)}(\lambda )$ the unique
associated eigenco-vector $\langle t|$ admit the following SoV
representations:%
\begin{equation}
\langle t|=\mathsf{N}_{t}\sum_{h_{1},...,h_{\mathsf{N}}=0}^{1}\prod_{a=1}^{%
\mathsf{N}} \left(\frac{t(\xi _{a})}{\det K\text{ }d(\xi _{a}^{(1)})}\right)^{h_{a}}
V(\xi _{1}^{(h_{1})},...,\xi _{\mathsf{N}}^{(h_{\mathsf{N}})})\langle
h_{1},...,h_{\mathsf{N}}|,  \label{ii-SoV-rep-T-eigen}
\end{equation}%
once we fix the normalization by (\ref{Normalization}), where we have
defined:%
\begin{equation}
\mathsf{N}_{t}=\langle t|\underline{\mathbf{0}}\rangle =\prod_{a=1}^{\mathsf{%
N}}\frac{t(\xi _{a}^{(1)})}{a(\xi _{a})}\neq 0.  \label{Normalization+}
\end{equation}
\end{corollary}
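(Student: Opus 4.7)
My plan is to use the quantum determinant fusion identity at the inhomogeneity points, namely
\begin{equation*}
T^{(K)}(\xi_a) \, T^{(K)}(\xi_a - \eta) = \operatorname{q-det} M^{(K)}(\xi_a) = \det K \cdot a(\xi_a) \, d(\xi_a^{(1)}) ,
\end{equation*}
which under the hypotheses $\det K \neq 0$ and \eqref{Inhomog-cond} is a nonzero scalar. This means the operators $T^{(K)}(\xi_a)/(\det K \, d(\xi_a^{(1)}))$ and $T^{(K)}(\xi_a - \eta)/a(\xi_a)$ are mutually inverse; combined with the mutual commutativity of the transfer matrices at different spectral parameters, this is the only ingredient I need.

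To obtain the first representation, I would start from the preceding proposition's defining formula $|h_1,\ldots,h_\mathsf{N}\rangle = \prod_{a=1}^\mathsf{N}(T^{(K)}(\xi_a-\eta)/a(\xi_a))^{1-h_a}\,|\underline{\mathbf{1}}\rangle$ and specialize it to $h_a=0$ for all $a$ to get $|\underline{\mathbf{0}}\rangle = \prod_a T^{(K)}(\xi_a - \eta)/a(\xi_a)\,|\underline{\mathbf{1}}\rangle$. Applying $\prod_a (T^{(K)}(\xi_a)/(\det K\, d(\xi_a^{(1)})))^{h_a}$ to this and using the inverse relation site by site, each factor with $h_a = 1$ cancels exactly one factor of $T^{(K)}(\xi_a-\eta)/a(\xi_a)$, producing $\prod_a (T^{(K)}(\xi_a-\eta)/a(\xi_a))^{1-h_a}|\underline{\mathbf{1}}\rangle$, which is the proposition's $|h_1,\ldots,h_\mathsf{N}\rangle$.

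For the eigenco-vector representation, the scalar counterpart of the fusion identity reads $t(\xi_a)\,t(\xi_a - \eta) = \det K \cdot a(\xi_a)\,d(\xi_a^{(1)})$, giving the key rewriting $t(\xi_a-\eta)/a(\xi_a) = \det K\,d(\xi_a^{(1)})/t(\xi_a)$. Substituting this into \eqref{SoV-YB-T-eigenL}, each $(t(\xi_a-\eta)/a(\xi_a))^{1-h_a}$ splits as the $h$-independent factor $t(\xi_a-\eta)/a(\xi_a)$ multiplied by $(t(\xi_a)/(\det K\,d(\xi_a^{(1)})))^{h_a}$. Pulling the $h$-independent product out of the sum gives exactly the normalization constant $\mathsf{N}_t=\prod_a t(\xi_a^{(1)})/a(\xi_a)$, leaving inside the sum the form \eqref{ii-SoV-rep-T-eigen}. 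The non-vanishing $\mathsf{N}_t\neq 0$ is immediate from the fusion identity whose right-hand side is nonzero under \eqref{Inhomog-cond} and $\det K \neq 0$, forcing every $t(\xi_a-\eta)$ to be nonzero, and the identification with $\langle t|\underline{\mathbf{0}}\rangle$ follows by a direct computation using \eqref{Vector-S-Ch}.

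No genuine obstacle is expected here, as the argument is purely algebraic and relies only on the invertibility of $T^{(K)}(\xi_a)$ (equivalently, of $T^{(K)}(\xi_a-\eta)$), itself guaranteed by the nonvanishing of the quantum determinant under $\det K\neq 0$.
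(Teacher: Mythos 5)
Your proposal is correct and follows essentially the same route as the paper: both rest on the quantum determinant fusion identity $T^{(K)}(\xi _{a})T^{(K)}(\xi _{a}^{(1)})=\det K\,a(\xi _{a})d(\xi _{a}^{(1)})$ to convert each $h_{a}=1$ factor acting on $|\underline{\mathbf{0}}\rangle =\prod_{a}T^{(K)}(\xi _{a}^{(1)})/a(\xi _{a})\,|\underline{\mathbf{1}}\rangle $ back into the defining form of the right SoV vectors, and then transpose this to the eigenvalue level for $\langle t|$. The only cosmetic difference is in establishing $\mathsf{N}_{t}\neq 0$: you read it off directly from the scalar fusion relation $t(\xi _{a})t(\xi _{a}^{(1)})=\det K\,a(\xi _{a})d(\xi _{a}^{(1)})\neq 0$, while the paper deduces $\langle t|\underline{\mathbf{0}}\rangle \neq 0$ from the proportionality of all $\langle t|h_{1},...,h_{\mathsf{N}}\rangle $ to it; both arguments are valid.
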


\begin{proof}
Taking into account the chosen normalizations clearly it holds:%
\begin{equation}
|\underline{\mathbf{0}}\rangle =|h_{1}=0,...,h_{\mathsf{N}}=0\rangle
=\prod_{a=1}^{\mathsf{N}}\frac{T^{(K)}(\xi _{a}^{(1)})}{a(\xi _{a})}|%
\underline{\mathbf{1}}\rangle ,
\end{equation}%
so that:%
\begin{equation}
\begin{aligned}
\prod_{a=1}^{\mathsf{N}} \left(\frac{T^{(K)}(\xi _{a})}{\det K d(\xi
_{a}^{(1)})}\right )^{h_{a}}|\underline{\mathbf{0}}\rangle 
&=\prod_{a=1}^{\mathsf{N}} \left(\frac{T^{(K)}(\xi _{a})}{\det K d(\xi _{a}^{(1)})}\right)^{h_{a}}\frac{%
T^{(K)}(\xi _{a}^{(1)})}{a(\xi _{a})}|\underline{\mathbf{1}}\rangle 
\\
& =\prod_{a=1}^{\mathsf{N}} \left(\frac{T^{(K)}(\xi _{a})T^{(K)}(\xi _{a}^{(1)})}{%
\det K d(\xi _{a}^{(1)})a(\xi _{a})}\right)^{h_{a}} \left(\frac{T^{(K)}(\xi
_{a}^{(1)})}{a(\xi _{a})}\right)^{1-h_{a}}|\underline{\mathbf{1}}\rangle  
\\
& =|h_{1},...,h_{\mathsf{N}}\rangle ,
\end{aligned}%
\end{equation}
by the quantum determinant identity. From this representation of the right
SoV vectors it follows also that for any fixed left transfer matrix
eigenstate $\langle t|$ it holds:%
\begin{equation}
\langle t|h_{1},...,h_{\mathsf{N}}\rangle \propto \langle t|\underline{%
\mathbf{0}}\rangle, \quad \forall \{h_{1},...,h_{\mathsf{N}}\}\in
\{0,1\}^{\mathsf{N}},
\end{equation}%
so that it must holds $\langle t|\underline{\mathbf{0}}\rangle \neq 0$.
\end{proof}

As we have already shown in the previous appendix for $gl_{3}$
representations, also in $gl_{2}$ representations the tensor product forms
hold.

\begin{corollary}
Let the inhomogeneity condition $(\ref{Inhomog-cond})$ be satisfied and $%
K\neq rI$, for any $r\in \mathbb{C}$, and let $(x,y)\in \mathbb{C}^{2}$ be
such that:%
\begin{equation}
n_{K}(x,y)=bx^{2}+(d-a)xy-cy^{2}\neq 0.
\end{equation}%
Then, once we define:%
\begin{equation}
\langle \underline{\mathbf{0}}|=\bigotimes_{a=1}^{\mathsf{N}}(x,y)_{a},
\end{equation}%
it holds:%
\begin{equation}
|\underline{\mathbf{1}}\rangle =\frac{1}{n_{1}}\,\bigotimes_{n=1}^{\mathsf{N}}\left( 
\begin{array}{c}
-y \\ 
x%
\end{array}%
\right) _{n},\text{ \ }|\underline{\mathbf{0}}\rangle =\frac{1}{n_{0}}\,
\bigotimes_{n=1}^{\mathsf{N}}\left( 
\begin{array}{c}
bx+dy \\ 
-(ax+cy)%
\end{array}%
\right) _{n},
\end{equation}%
where:%
\begin{align}
n_{1} &=n_{1,...,\mathsf{N}}\,n_{K}^{\mathsf{N}}(x,y)\,V(\xi _{1},...,\xi _{\mathsf{N}%
})V(\xi _{1}^{\left( 1\right) },...,\xi _{\mathsf{N}}^{\left( 1\right)
}) \left(\prod_{n=1}^{\mathsf{N}}a(\xi _{n})\right)^{-1}, \\
n_{0} &=n_{K}^{\mathsf{N}}(x,y)\,V^{2}(\xi _{1},...,\xi _{\mathsf{N}}),\text{ 
}n_{1,...,\mathsf{N}}=\prod_{1\leq i<j\leq \mathsf{N}}(\eta ^{2}-(\xi _{i}-\xi _{j})^{2}).
\end{align}
\end{corollary}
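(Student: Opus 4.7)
The plan is to verify directly that the proposed tensor product expressions for $|\underline{\mathbf{0}}\rangle$ and $|\underline{\mathbf{1}}\rangle$ satisfy the orthogonality conditions with the SoV co-vector basis $\langle h_1, \ldots, h_\mathsf{N}|$ uniquely characterising them, namely $\langle \underline{\mathbf{h}}|\underline{\mathbf{0}}\rangle = \prod_n \delta_{h_n, 0}/V^2(\xi_1, \ldots, \xi_\mathsf{N})$ of \eqref{Vector-S-Ch} and its analogue for $|\underline{\mathbf{1}}\rangle$. The normalisation constants $n_0$ and $n_1$ will then be fixed by matching the diagonal values.

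The main tool will be a $gl_2$ analogue of the product-of-transfer-matrices identity from Appendix~\ref{sec:explicit-form}: for any ordered subset $S = \{a_1 < \cdots < a_\mathsf{M}\} \subseteq \{1, \ldots, \mathsf{N}\}$,
\begin{equation*}
\prod_{j=1}^{\mathsf{M}} T^{(K)}(\xi_{a_j}) = n_{a_1, \ldots, a_\mathsf{M}} \, \mathcal{R}^{\mathrm{L}}_S \, \Bigl(\bigotimes_{a \in S} K_a\Bigr) \, \mathcal{R}^{\mathrm{R}}_S,
\end{equation*}
where $\mathcal{R}^{\mathrm{L}}_S$ and $\mathcal{R}^{\mathrm{R}}_S$ are explicit products of $R$-matrices coupling the sites of $S$ to the sites outside $S$. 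The derivation of Appendix~\ref{sec:explicit-form} only uses Yang-Baxter and $R$-matrix unitarity, so it holds verbatim here. The crucial point is that $\langle \underline{\mathbf{0}}| = \bigotimes_a (x,y)_a$ and each candidate vector is a tensor product of the same two-component vector on every site; since $\mathbb{P}(v \otimes v) = v \otimes v$ in $\mathbb{C}^2 \otimes \mathbb{C}^2$, each factor $R_{ab}(\lambda) = \lambda I + \eta \mathbb{P}$ acts as the scalar $\lambda + \eta$ on such tensor products. Consequently $\mathcal{R}^{\mathrm{L}}_S$ acts on $\langle \underline{\mathbf{0}}|$ and $\mathcal{R}^{\mathrm{R}}_S$ acts on $\bigotimes_n v_n$ as fully explicit scalar eigenvalues $E^{\mathrm{L}}_S$ and $E^{\mathrm{R}}_S$, yielding the factorised form
\begin{equation*}
\langle \underline{\mathbf{0}}|\prod_{a \in S} \tfrac{T^{(K)}(\xi_a)}{a(\xi_a)}\, \Bigl| \tfrac{1}{n_{\bullet}}\bigotimes_n v_n \Bigr\rangle \;=\; \frac{\mathcal{C}_S}{n_{\bullet}}\, \prod_{a \in S}\bigl[(x,y)\,K\,v\bigr]\, \prod_{k \notin S}\bigl[(x,y)\,v\bigr],
\end{equation*}
with $\mathcal{C}_S = n_{a_1,\ldots,a_\mathsf{M}} E^{\mathrm{L}}_S E^{\mathrm{R}}_S / \prod_{a \in S} a(\xi_a)$ an explicit nonzero prefactor.

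The orthogonality then collapses to two elementary local identities. For the candidate $|\underline{\mathbf{1}}\rangle$ with $v = \binom{-y}{x}$, the local orthogonality $(x,y)\,v = 0$ forces vanishing of the product over $k \notin S$ whenever $S \neq \{1, \ldots, \mathsf{N}\}$, while $(x,y)\,K\,v = bx^2 + (d-a)xy - cy^2 = n_K(x,y) \neq 0$ ensures nonvanishing on the diagonal. For the candidate $|\underline{\mathbf{0}}\rangle$ with $v_0 = \binom{bx+dy}{-ax-cy}$, a direct two-line calculation gives the remarkable identity $K\,v_0 = \det(K)\binom{y}{-x}$, so that combined with $(x,y)\binom{-y}{x} = 0$ one obtains $(x,y)\,K\,v_0 = 0$, forcing vanishing whenever $S \neq \emptyset$, while $(x,y)\,v_0 = n_K(x,y) \neq 0$ controls the diagonal. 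Imposing the required diagonal values $\langle \underline{\mathbf{0}}|\underline{\mathbf{0}}\rangle = 1/V^2(\xi)$ and $\langle \underline{\mathbf{1}}|\underline{\mathbf{1}}\rangle = 1/(V(\xi)V(\xi^{(1)}))$ then fixes $n_0$ and $n_1$. The main obstacle is the explicit evaluation of $\mathcal{C}_{\{1, \ldots, \mathsf{N}\}}$ for the $|\underline{\mathbf{1}}\rangle$ case: the left and right $R$-products are then empty, leaving only $n_{1, \ldots, \mathsf{N}}$ against the $a(\xi_a)^{-1}$ denominators, and one must verify that these scalar factors rearrange into the precise combination $n_{1, \ldots, \mathsf{N}}\,V(\xi)V(\xi^{(1)})/\prod_n a(\xi_n)$ stated for $n_1$, a bookkeeping exercise that is direct but requires careful tracking of the shifted Vandermonde structure.
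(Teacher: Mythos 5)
Your proposal is correct and follows essentially the route the paper's own machinery dictates: the corollary is stated without an explicit proof, but the natural argument is precisely the one you give, namely the product-of-transfer-matrices identity of Appendix~\ref{sec:explicit-form} combined with the fact that $\langle \underline{\mathbf{0}}|$ (and, in this rank-one case, also the candidate kets) are pure tensor powers on which every rational $R$-matrix acts as the scalar $\lambda+\eta$, reducing all overlaps to products of the local quantities $(x,y)\,v$ and $(x,y)\,K\,v$. Your two local identities $(x,y)\binom{-y}{x}=0$, $(x,y)K\binom{-y}{x}=n_K(x,y)$ and $K\binom{bx+dy}{-(ax+cy)}=\det (K)\binom{y}{-x}$ check out, and the only surviving term in each case (the full set $S=\{1,\ldots,\mathsf{N}\}$ for $|\underline{\mathbf{1}}\rangle$, where $E^{\mathrm L}_S=E^{\mathrm R}_S=1$, and $S=\emptyset$ for $|\underline{\mathbf{0}}\rangle$) reproduces exactly the stated $n_1$ and $n_0$.
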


\section{Proof of Theorem \protect\ref{Central-T-det-non0}}
\label{sec:proof-thm}

This appendix is dedicated to the completion of the proof of the Theorem \ref%
{Central-T-det-non0}: here we prove the orthogonality properties and
the non-zero coupling of the SoV co-vectors/vectors. It is worth remarking
that the proof of the "pseudo-orthogonality" is quite intricate and we have
divided it in several steps to make it more intelligible.
The orthogonality conditions are established in
section C.1, while section C.2 is dedicated to the proof of the form of the non-zero couplings of co-vectors/vectors. 

The form of the orthogonality condition naturally leads to consider 
in the first instance vectors with $\underline{\mathbf{k}}\in \{0,2\}^{\mathsf{N}}$%
, this is achieved in subsection C.1.1. In this case, the co-vector/vector
coupling is diagonal, i.e.\ standard orthogonality holds with non-zero
coupling only for co-vector/vector associated to the same $\mathsf{N}$-tuple $%
\underline{\mathbf{h}}=\underline{\mathbf{k}}\in \{0,2\}^{\mathsf{N}}$. This
proof requires already different steps. We prove it first for the
case with only one $k_{a}=2$ while all the others being zero, and then by
induction for the generic $\mathsf{N}$-tuple $\underline{\mathbf{k}}\in
\{0,2\}^{\mathsf{N}}$. In subsection C.1.2, we then consider the case with
just one $k_{a}=1$ while all the others $k_{b\neq a}$ being in $%
\{0,2\}$. Here, we prove that the standard orthogonality still works. In
subsection C.1.3, we finally consider the proof for the case with
non-diagonal and diagonal couplings, which correspond to SoV vectors associated
to $\underline{\mathbf{k}}$ with at least one couple $(k_{a}=1,\ k_{b}=1)$.
First, the case with just one couple $(k_{a}=1,\ k_{b}=1)$ is developed, and
then the case of vectors associated to a general $\underline{\mathbf{k}}\in
\{0,1,2\}^{\mathsf{N}}$. 

In subsection C.2.1, we write
the non-diagonal couplings in terms of the diagonal ones. 
In particular, we prove the formula (\ref{pseudo-ortho}) and its
power dependence w.r.t.\ $\det K$. The coefficients $C_{\text{\underline{$\mathbf{h}$}}}^{\text{\underline{$\mathbf{k}$%
}}}$ in (\ref{pseudo-ortho}) are shown to be independent w.r.t.\ $\det K$ and completely characterized by the Lemma \ref{Expansion-C-T2} and by the solutions to the recursion equations derived in Lemma \ref{Recursion-C-T2}. We do not resolve these recursions in
general. Rather we argue the dependence of the coefficients in terms of the
involved transfer matrix interpolation formulae and explicitly present them
in the case of co-vectors having one couple of $(h_{a}=0,\ h_{b}=2)$ associated
to vectors with a couple $(k_{a}=1,\ k_{b}=1)$. Finally, in subsection C.2.2, we prove
the explicit form of the co-vector/vector diagonal coupling. The proof
derived there does not use the fact that for $\det K=0$ we have an independent
derivation of the same SoV measure.

\subsection{Orthogonality proof}
\label{ssec:orthogonality-proof}

We use the following \textit{incomplete}\footnote{%
We need only to keep partial information on the interpolation formulae for
our current aims.} notation for the interpolation formulae in the shifted inhomogeneities  $\{\xi
_{n}^{\left( h_{n}\right) }\}$ of the transfer matrix:%
\begin{equation} \label{eq:incomplete}
T_{a}^{(K)}(\lambda )\underset{\scriptscriptstyle{\text{UpC}}}{=}\mathsf{t}_{a}+\sum_{n=1}^{\mathsf{N%
}}\mathsf{T}_{a}(\xi _{n}^{\left( h_{n}\right) }),
\end{equation}%
with%
\begin{equation}
\mathsf{t}_{a}=\mathsf{a}^{\delta _{a,1}}\big( \mathsf{b}\, d(\lambda -\eta
)\big)^{\delta _{a,2}}\prod_{a=1}^{\mathsf{N}}(\lambda -\xi _{a}^{\left(
h_{a}\right) }),
\quad \mathsf{T}_{a}(\xi _{n}^{\left( h_{n}\right)
})=T_{a}^{(K)}(\xi _{n}^{\left( h_{n}\right) })d^{\delta _{a,2}}(\lambda
-\eta )g_{n,\underline{\mathbf{h}}}^{\left( a\right) }(\lambda )
\end{equation}%
for $a\in \{1,2\}$, where 
\begin{equation*}
\mathsf{a}=\tr K, \quad \mathsf{b}=\frac{(\tr K)^{2}-\tr (K^{2})}{2},\quad \mathsf{c}=\det K,
\end{equation*}%
are the spectral invariants of the matrix $K$ and%
\begin{equation}
\mathsf{c}_{n}=\operatorname{q-det}M^{(K)}(\xi _{n})=\mathsf{c}\, \operatorname{q-det}M^{(I)}(\xi _{n}).
\end{equation}%
Note that this shorted notation hides the original value in which
the transfer matrix was computed before the interpolation, which is $\lambda$ in \eqref{eq:incomplete}. It also loses the
coefficients of the same interpolation formulae. 
In the following of this appendix, all the equalities written down with symbol $\underset{\scriptscriptstyle\text{UpC}}{=%
}$ have to be interpreted up to these implicit, missing coefficients. This does not represent a problem, as here we are only interested in
proofs that some matrix elements are zero or proportional to each other, which is something
that remains true independently of the exact coefficients (as long as they do not vanish).

\subsubsection{First step: the case \texorpdfstring{$\ket{\protect\underline{\mathbf{k}}}$}{ket(k)} with \texorpdfstring{$\protect\underline{\mathbf{k}}\in\{0,2\}^{\mathsf{N}}$}{k in {0,2}powN} }

In the following, we introduce needed notations to implement operations on $%
\mathsf{N}$-tuple of indices. Let us denote with $\underline{\mathbf{x}}%
=\{x_{1},...,x_{\mathsf{N}}\}\in \{0,1,2\}^{\mathsf{N}}$ a generic $%
\mathsf{N}$-tuple from $\{0,1,2\}$, and with $\left( j_{1},...,j_{m}\right) \in
\{0,1,2\}^{m}$ a generic $m$-tuple from $\{0,1,2\}$. We introduce the
following notations:%
\begin{equation}
\underline{\mathbf{x}}_{a_{1},...,a_{m}}^{\left( j_{1},...,j_{m}\right) }=%
\underline{\mathbf{x}}-\sum_{\substack{ i=1  \\ i\neq r_{1},...,r_{l},0\leq
l\leq m-1}}^{m}(x_{a_{i}}-j_{i})\underline{\mathbf{e}}_{a_{i}}%
\quad \forall a_{i}\in \{1,...,\mathsf{N}\},\ i\leq m\leq \mathsf{N},
\end{equation}%
where $\underline{\mathbf{e}}_{a}=\{\delta _{1,a},...,\delta _{\mathsf{N}%
,a}\}$ and the $r_{1},...,r_{l}$ are defined as follows for any fixed
choice of $a_{1},...,a_{m}$:%
\begin{equation}
\forall h\leq l, \ r_{h}\in \{1,...,m\}:\exists s\in
\{1,...,m\}\backslash \{r_{1},...,r_{l}\},\ r_{h}<s\text{ with }a_{r_{h}}=a_{s},
\end{equation}%
while it holds:%
\begin{equation}
a_{p}\neq a_{q}, \quad \forall p\neq q\in \{1,...,m\}\backslash
\{r_{1},...,r_{l}\}.  \label{DistinctA}
\end{equation}%
In simple words, for any fixed $a_{1},...,a_{m}$, the $r_{1},...,r_{l}$ are
defined as the minimal set of the smallest integers in $\{1,...,m\}$ such
that removing them from $\{1,...,m\}$ make the condition (\ref{DistinctA})
satisfied. Clearly, we have $l=0$ if the $a_{1},...,a_{m}$ are all distinct.

\paragraph{i) Only one $k_{n}=2$}

Let us first prove:%
\begin{equation}
\langle \underline{\mathbf{h}}|\underline{\mathbf{0}}_{n}^{\left( 2\right) }{%
\rangle }=\langle \underline{\mathbf{h}}|\mathsf{T}_{1}(\xi _{n})|\underline{%
\mathbf{0}}\rangle =0,\quad \text{for \underline{$\mathbf{h}$}}\neq \underline{%
\mathbf{0}}_{n}^{\left( 2\right) }.
\end{equation}%
If $h_{n}=0,1$ this statement is evident, since%
\begin{equation}
\langle \underline{\mathbf{h}}|\mathsf{T}_{1}(\xi _{n})|\underline{\mathbf{0}%
}\rangle =\mathsf{c}_{n}^{\delta _{h_{n},0}}\langle \underline{\mathbf{h}}%
_{n}^{\left( {h_{n}+1}\right) }|\underline{\mathbf{0}}\rangle =0.
\end{equation}%
Now, let us fix $h_{n}=2$. Here we proceed by induction, first assuming that
all the others $h_{j\neq n}=0,1$:%
\begin{equation}
\begin{aligned}
\langle \underline{\mathbf{h}}|\mathsf{T}_{1}(\xi _{n})|\underline{\mathbf{%
0}}\rangle &\underset{\scriptscriptstyle\text{UpC}}{=}
\mathsf{t}_{1}\langle \underline{\mathbf{h}}|%
\underline{\mathbf{0}}\rangle +\langle \underline{\mathbf{h}}_{n}^{\left( {1}%
\right) }|\mathsf{T}_{2}(\xi _{n})|\underline{\mathbf{0}}\rangle +\langle 
\underline{\mathbf{h}}|\sum_{l\neq n,l=1}^{\mathsf{N}}\mathsf{T}_{1}(\xi
_{l})|\underline{\mathbf{0}}\rangle \\
&\underset{\scriptscriptstyle\text{UpC}}{=}\mathsf{t}_{1}\langle \underline{\mathbf{h}}|%
\underline{\mathbf{0}}\rangle +\langle \underline{\mathbf{h}}_{n}^{\left( {1}%
\right) }|\mathsf{T}_{2}(\xi _{n})|\underline{\mathbf{0}}\rangle
+\sum_{l\neq n,l=1}^{\mathsf{N}}\mathsf{c}_{l}^{\delta _{h_{l},0}}\langle 
\underline{\mathbf{h}}_{l}^{\left( {h_{l}+1}\right) }|\underline{\mathbf{0}}%
\rangle \\
&\underset{\scriptscriptstyle\text{UpC}}{=}\langle \underline{\mathbf{h}}_{n}^{\left( {1}\right)
}|\mathsf{T}_{2}(\xi _{n})|\underline{\mathbf{0}}\rangle .
\end{aligned}%
\end{equation}
Now, we have to use the interpolation formula for $\mathsf{T}_{2}(\xi _{n})$%
\begin{equation}
\begin{aligned}
 \langle \underline{\mathbf{h}}|\mathsf{T}_{1}(\xi _{n})|\underline{\mathbf{%
0}}\rangle &= \langle \underline{\mathbf{h}}_{n}^{\left( {1}%
\right) }|\mathsf{T}_{2}(\xi _{n})|\underline{\mathbf{0}}\rangle  \\
&\underset{\scriptscriptstyle\text{UpC}}{=}\mathsf{t}_{2}\langle \underline{\mathbf{h}}%
_{n}^{\left( {1}\right) }|\underline{\mathbf{0}}\rangle +\langle \underline{%
\mathbf{h}}_{n}^{\left( {1}\right) }|\sum_{l=1}^{\mathsf{N}}(\delta
_{h_{l}^{\prime },0}\mathsf{T}_{2}(\xi _{l})+\delta _{h_{l}^{\prime },1}%
\mathsf{T}_{2}(\xi _{l}^{\left( 1\right) }))|\underline{\mathbf{0}}\rangle \\
&\underset{\scriptscriptstyle\text{UpC}}{=}\mathsf{t}_{2}\langle \underline{\mathbf{h}}%
_{n}^{\left( {1}\right) }|\underline{\mathbf{0}}\rangle +\sum_{l=1}^{\mathsf{%
N}}\delta _{h_{l}^{\prime },1}\langle \underline{\mathbf{h}}_{n,l}^{\left( {%
1,0}\right) }|\underline{\mathbf{0}}\rangle +\sum_{l=1}^{\mathsf{N}}\delta
_{h_{l}^{\prime },0}\mathsf{c}_{l}\langle \underline{\mathbf{h}}%
_{n,l}^{\left( {1,1}\right) }|\mathsf{T}_{1}(\xi _{l}^{\left( 1\right) })|%
\underline{\mathbf{0}}\rangle \\
&=0,
\end{aligned}%
\end{equation}
where we have defined \underline{$\mathbf{h}$}$^{\prime }=\{h_{1}^{\prime
},...,h_{\mathsf{N}}^{\prime }\}=\underline{\mathbf{h}}_{n}^{\left( {1}%
\right) }$ and used that $\underline{\mathbf{h}}_{n,l}^{\left( {1,0}%
\right) }\neq \underline{\mathbf{0}}$ holds even for $l=n$, as the condition $%
\underline{\mathbf{h}}_{n}^{\left( {2}\right) }\neq \underline{\mathbf{0}}%
_{n}^{\left( {2}\right) }$ implies $\underline{\mathbf{h}}_{n,n}^{\left( {1,0%
}\right) }=\underline{\mathbf{h}}_{n}^{\left( {0}\right) }\neq \underline{%
\mathbf{0}}$. Moreover, it holds 
\begin{equation}
\langle \underline{\mathbf{h}}_{n,l}^{\left( {1,1}\right) }|\mathsf{T}%
_{1}(\xi _{l}^{\left( 1\right) })|\underline{\mathbf{0}}\rangle \underset{%
\scriptscriptstyle\text{UpC}}{=}\mathsf{t}_{1}\langle \underline{\mathbf{h}}_{n,l}^{\left( {1,1}%
\right) }|\underline{\mathbf{0}}\rangle +\sum_{m=1}^{\mathsf{N}}\langle 
\underline{\mathbf{h}}_{n,l}^{\left( {1,1}\right) }|\mathsf{T}_{1}(\xi _{m})|%
\underline{\mathbf{0}}\rangle ,
\end{equation}%
and defining \underline{$\mathbf{h}$}$^{\prime \prime }=\{h_{1}^{\prime
\prime },...,h_{\mathsf{N}}^{\prime \prime }\}=\underline{\mathbf{h}}%
_{n,l}^{\left( {1,1}\right) }$, we get%
\begin{equation}
\langle \underline{\mathbf{h}}_{n,l}^{\left( {1,1}\right) }|\mathsf{T}%
_{1}(\xi _{l}^{\left( 1\right) })|\underline{\mathbf{0}}\rangle \underset{%
\scriptscriptstyle\text{UpC}}{=}\sum_{m=1}^{\mathsf{N}}\langle \underline{\mathbf{h}}%
_{n,l,m}^{\left( {1,1,}h_{m}^{\prime \prime }+1\right) }|\underline{\mathbf{0%
}}\rangle =0.
\end{equation}%

Let us now consider the induction, i.e. we assume that the orthogonality
works when there are $m\geq 1$ values of $h_{a}=2$ in $\langle \underline{%
\mathbf{h}}|$ and we want to prove it for the case of $m+1$ values of $%
h_{a}=2$ in $\langle \underline{\mathbf{h}}|$. 
Up to a reordering of the index of the $\{\xi _{a}\}$, this is equivalent to prove that
\begin{equation}
\langle {h_{1}=2,...,h_{m+1}=2,h_{l\geq m+2}\in \{0,1\}}|\mathsf{T}_{1}(\xi
_{m+1})|\underline{\mathbf{0}}\rangle =0.
\end{equation}

Setting
\begin{equation}
\underline{\mathbf{h}}=\{h_{1}=2,...,h_{m+1}=2,h_{l\geq m+2}\in \{0,1\}\},
\end{equation}
and once again using the development by interpolation formula, we get
\begin{align}
\langle \underline{\mathbf{h}}{|}\mathsf{T}_{1}(\xi _{m+1})|\underline{%
\mathbf{0}}\rangle & \underset{\scriptscriptstyle\text{UpC}}{=}  \mathsf{t}_{1}\langle 
\underline{\mathbf{h}}|\underline{\mathbf{0}}\rangle +\langle \underline{%
\mathbf{h}}{|}\sum_{l=m+2}^{\mathsf{N}}\mathsf{T}_{1}(\xi _{l})|\underline{%
\mathbf{0}}\rangle +\sum_{l=1}^{m+1}\langle \underline{\mathbf{h}}{|}\mathsf{%
T}_{1}(\xi _{l}^{\left( 1\right) })|\underline{\mathbf{0}}\rangle \\
& \underset{\scriptscriptstyle\text{UpC}}{=}\sum_{l=1}^{m+1}\langle {\underline{\mathbf{h}}}%
_{l}^{\left( 1\right) }{|}\mathsf{T}_{2}(\xi _{l})|\underline{\mathbf{0}}%
\rangle ,
\end{align}%
so expanding $\mathsf{T}_{2}(\xi _{l})$:%
\begin{equation}
\langle {\underline{\mathbf{h}}}_{l}^{\left( 1\right) }{|}\mathsf{T}_{2}(\xi
_{l})|\underline{\mathbf{0}}\rangle \underset{\scriptscriptstyle\text{UpC}}{=}\mathsf{t}%
_{2}\langle {\underline{\mathbf{h}}}_{l}^{\left( 1\right) }|\underline{%
\mathbf{0}}\rangle +\langle {\underline{\mathbf{h}}}_{l}^{\left( 1\right)
}|\sum_{r=1}^{\mathsf{N}}\mathsf{T}_{2}(\xi _{r}^{(\delta _{h_{r}^{\prime
},1}+\delta _{h_{r}^{\prime },2})})|\underline{\mathbf{0}}\rangle
\end{equation}%
where $h_{r}^{\prime }$ are the elements of \underline{$\mathbf{h}$}$%
^{\prime }=\{h_{1}^{\prime },...,h_{\mathsf{N}}^{\prime }\}\equiv \underline{%
\mathbf{h}}_{l}^{\left( {1}\right) }$. Then, we can use the rewriting%
\begin{equation}
\left( \delta _{h_{r}^{\prime },1}+\delta _{h_{r}^{\prime },2}\right)
\langle \underline{\mathbf{h}}^{\prime }|\mathsf{T}_{2}(\xi _{r}^{\left(
1\right) })|\underline{\mathbf{0}}\rangle =\left( \delta _{h_{r}^{\prime
},1}+\delta _{h_{r}^{\prime },2}\right) \mathsf{c}_{r}^{\delta
_{h_{r}^{\prime },2}}\langle \underline{\mathbf{h}}{_{l,r}^{\left( 1,{%
h_{r}^{\prime }-1}\right) }}|\underline{\mathbf{0}}\rangle =0.
\end{equation}%
Indeed, $\underline{\mathbf{h}}_{l,r}^{\left( {1,h_{r}^{\prime }-1}\right)
}\neq \underline{\mathbf{0}}$ holds even for $r=l$, as $\underline{\mathbf{h}}%
_{l,l}^{\left( {1,h_{l}^{\prime }-1}\right) }=\underline{\mathbf{h}}%
_{l}^{\left( {0}\right) }$ has at least one element equal to 2 being by
assumption $m\geq 1$. Then, we get%
\begin{equation}
\langle \underline{\mathbf{h}}_{l}^{\left( {1}\right) }{|}\mathsf{T}_{2}(\xi
_{l})|\underline{\mathbf{0}}\rangle \underset{\scriptscriptstyle\text{UpC}}{=}\langle \underline{%
\mathbf{h}}_{l}^{\left( {1}\right) }|\sum_{r=m+2}^{\mathsf{N}}\delta
_{h_{r}^{\prime },0}\mathsf{T}_{2}(\xi _{r})|\underline{\mathbf{0}}\rangle
=\sum_{r=m+2}^{\mathsf{N}}\delta _{h_{r}^{\prime },0}\mathsf{c}_{r}\langle 
\underline{\mathbf{h}}_{l,r}^{\left( {1,1}\right) }{|}\mathsf{T}_{1}(\xi
_{r}^{(1)})|\underline{\mathbf{0}}\rangle ,
\end{equation}%
and finally:%
\begin{equation}
\langle \underline{\mathbf{h}}_{l,r}^{\left( {1,1}\right) }{|}\mathsf{T}%
_{1}(\xi _{r}^{(1)})|\underline{\mathbf{0}}\rangle \underset{\scriptscriptstyle\text{UpC}}{=}%
\mathsf{t}_{1}\langle \underline{\mathbf{h}}_{l,r}^{\left( {1,1}\right) }|%
\underline{\mathbf{0}}\rangle +\langle \underline{\mathbf{h}}_{l,r}^{\left( {%
1,1}\right) }{|}\sum_{s=1}^{\mathsf{N}}\mathsf{T}_{1}(\xi _{s})|\underline{\mathbf{0}}\rangle ,
\end{equation}%
which is zero by the induction. So we have proven the orthogonality:%
\begin{equation}
\langle \underline{\mathbf{h}}|{\underline{\mathbf{0}}_{j}^{\left( 2\right)
}\rangle =0}, \quad \text{for any \underline{$\mathbf{h}$}}\neq {\underline{\mathbf{%
0}}_{j}^{\left( 2\right) }}.
\end{equation}

\paragraph{ii) The general $\protect\underline{\mathbf{k}} \in\{0,2\}^{\mathsf{N}}$}

We now perform the induction over the number of $k_{a}=2$ in ${\text{%
\underline{$\mathbf{k}$}} \in \{0,2\}^{\mathsf{N}}}$. 
The orthogonality is assumed to work when there are $m$ values of $k_{a}=2$ in 
${\text{\underline{$\mathbf{k}$}} }$, while the others being all $0$,
and we want to prove it for the case of $m+1$ values of $k_{a}=2$ in ${%
\text{\underline{$\mathbf{k}$}} }$.

Let us start proving the following

\begin{lemma}
Let $\underline{\mathbf{k}}\in\{0,2\}^{\mathsf{N}}$ with%
\begin{equation}
\sum_{a=1}^{\mathsf{N}}\delta _{k_{a},2}=m,
\end{equation}%
and $\underline{\overset{ab}{\mathbf{h}}}$ a $N$-tuple in $\{0,1,2\}$ such that $h_{a}\neq k_{a}$ and $%
h_{b}\neq k_{b}$ if $a\neq b$, while $h_{a}=1\neq k_{a}=0$ if $a=b$. 
The following recursive formula holds for any fixed $c\in \{1,...,\mathsf{N}\}$ :
\begin{equation}
\langle \underline{\overset{ab}{\mathbf{h}}}|\mathsf{T}_{1}\Big(\xi _{c}^{\left(
\delta _{h_{c},0}+\delta _{h_{c},1}\right) }\Big)|{\text{\underline{$\mathbf{k}$}%
}\rangle }\underset{\scriptscriptstyle\text{UpC}}{=}\sum_{r=1}^{\mathsf{N}}\delta
_{h_{r},2}\sum_{s=1,s\neq r}^{\mathsf{N}}\delta _{h_{s},0}\mathsf{c}%
_{s}\langle \underline{\overset{ab}{\mathbf{h}}}{_{r,s}^{(1,{1)}}|}\mathsf{T}%
_{1}(\xi _{s}^{\left( 1\right) })|{\text{\underline{$\mathbf{k}$}}\rangle } .
\label{Recursion-1}
\end{equation}%
\end{lemma}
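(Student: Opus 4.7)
My plan is to establish the recursion by two successive applications of the interpolation formula (\ref{eq:incomplete}) --- first for $T_1^{(K)}$ and then for $T_2^{(K)}$ --- combined with the fusion identities satisfied by the transfer matrices and the inductive hypothesis (orthogonality is assumed to hold when $\underline{\mathbf{k}}\in\{0,2\}^{\mathsf{N}}$ has strictly fewer than $m+1$ entries equal to $2$, as stated at the start of the section). The two rounds of interpolation correspond structurally to the two ``defects'' that survive on the right-hand side: one index going from $2$ to $1$, and a paired one going from $0$ to $1$.

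\textbf{Step 1: first interpolation and fusion.} I would write
\[
T_1^{(K)}\!\left(\xi_c^{(\delta_{h_c,0}+\delta_{h_c,1})}\right)\underset{\scriptscriptstyle\text{UpC}}{=}\mathsf{t}_1+\sum_{l=1}^{\mathsf{N}}\mathsf{T}_1\!\left(\xi_l^{(\delta_{h_l,2})}\right),
\]
with interpolation nodes adapted to $\underline{\overset{ab}{\mathbf{h}}}$, insert this identity between $\langle\underline{\overset{ab}{\mathbf{h}}}|$ and $|\underline{\mathbf{k}}\rangle$, and use the fusion rules to compute the action of each $\mathsf{T}_1$ factor on the co-vector. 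If $h_l\in\{0,1\}$, the action produces (up to a non-zero constant) the co-vector $\langle\underline{\overset{ab}{\mathbf{h}}}_l^{(h_l+1)}|$, which still differs from $\underline{\mathbf{k}}$ at positions $a,b$ (a single shift cannot undo the two defects); by induction, the corresponding matrix element vanishes. The asymptotic contribution $\mathsf{t}_1\langle\underline{\overset{ab}{\mathbf{h}}}|\underline{\mathbf{k}}\rangle$ vanishes for the same reason. The only survivors are the indices $l=r$ with $h_r=2$, where the fusion $T_1(\xi_r)T_1(\xi_r^{(1)})=T_2(\xi_r)$ yields $\langle\underline{\overset{ab}{\mathbf{h}}}_r^{(1)}|T_2(\xi_r)|\underline{\mathbf{k}}\rangle$.

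\textbf{Step 2: second interpolation and extraction of the RHS.} I would then expand $T_2^{(K)}(\xi_r)$ by its own interpolation, with nodes $\xi_s^{(\delta_{h_s^{(r)},1}+\delta_{h_s^{(r)},2})}$ adapted to $\underline{\overset{ab}{\mathbf{h}}}_r^{(1)}$. The asymptotic term and the contributions with $h_s\in\{1,2\}$ are again killed by the induction hypothesis, since the shifted co-vector still differs from $\underline{\mathbf{k}}$ at positions $a,b$. The only surviving channel is $h_s=0$, where the fusion $T_2(\xi_s^{(1)})T_1(\xi_s)=\operatorname{q-det}M^{(K)}(\xi_s)=\mathsf{c}_s\operatorname{q-det}M^{(I)}(\xi_s)$ produces precisely the factor $\mathsf{c}_s$ together with the residual matrix element $\langle\underline{\overset{ab}{\mathbf{h}}}_{r,s}^{(1,1)}|\mathsf{T}_1(\xi_s^{(1)})|\underline{\mathbf{k}}\rangle$. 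The exclusion $s\neq r$ appears automatically because $\delta_{h_r,2}\delta_{h_r,0}=0$. Summing over the pairs $(r,s)$ reconstructs the announced right-hand side.

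\textbf{Main obstacle.} The heart of the work is the combinatorial verification that only the $(r,s)$-pairs with $h_r=2$, $h_s=0$, $s\neq r$ survive. This requires a careful case analysis when $c$, $r$, or $s$ coincides with one of the ``defect'' positions $a,b$ of $\underline{\overset{ab}{\mathbf{h}}}$, to ensure that the successive shifts of the index tuple never accidentally produce the tuple $\underline{\mathbf{k}}$ and thereby escape the induction hypothesis. A secondary but non-trivial task is to keep track of the numerical constants hidden under the $\underset{\scriptscriptstyle\text{UpC}}{=}$ convention: although the result is stated only up to non-zero overall constants, one must verify that the coefficients coming from the interpolation factors $g_{l,\cdot}^{(a)}(\lambda)$, from the successive fusion identities, and from the products $\mathsf{c}_s\operatorname{q-det}M^{(I)}(\xi_s)$ assemble consistently on both sides.
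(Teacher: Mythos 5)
Your proposal follows essentially the same route as the paper's proof: a first interpolation of $\mathsf{T}_1$ whose only surviving channels are the indices with $h_r=2$ (converted into $\mathsf{T}_2(\xi_r)$ by fusion), followed by an interpolation of $\mathsf{T}_2(\xi_r)$ whose only surviving channels are $h_s=0$, $s\neq r$ (converted into $\mathsf{c}_s\,\mathsf{T}_1(\xi_s^{(1)})$ by the quantum-determinant fusion), with all other terms killed by the induction hypothesis. The case analysis you flag as the ``main obstacle'' --- checking that the shifted tuples never coincide with $\underline{\mathbf{k}}$ when $r$ or $s$ hits a defect position, including the degenerate case $a=b$ --- is indeed exactly where the paper's proof spends its effort, and your sketch of why it goes through is correct.
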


\begin{proof}
Let us use this first interpolation formula:%
\begin{align}
\langle \underline{\overset{ab}{\mathbf{h}}}|\mathsf{T}_{1}\Big(\xi
_{c}^{\left( \delta _{h_{c},0}+\delta _{h_{c},1}\right) }\Big)|\underline{%
\mathbf{k}}{\rangle } 
&\underset{\scriptscriptstyle\text{UpC}}{=}\mathsf{t}_{1}\langle \underline{%
\overset{ab}{\mathbf{h}}}|\underline{\mathbf{k}}{\rangle }+\langle 
\underline{\overset{ab}{\mathbf{h}}}|\sum_{r=1}^{\mathsf{N}}\mathsf{T}%
_{1}\Big(\xi _{r}^{(\delta _{h_{r},2})}\Big)|\underline{\mathbf{k}}{\rangle }  \notag
\\
& \underset{\scriptscriptstyle\text{UpC}}{=}\sum_{r=1}^{\mathsf{N}}\delta _{h_{r},2}\langle 
\underline{\overset{ab}{\mathbf{h}}}_{r}^{(1)}{|}\mathsf{T}_{2}(\xi _{r})|%
\underline{\mathbf{k}}{\rangle ,}  \label{Interp-T1}
\end{align}%
as by the orthogonality assumption it holds:%
\begin{equation}
\langle \underline{\overset{ab}{\mathbf{h}}}|\underline{\mathbf{k}}{\rangle }%
\left. {=}\right. {0,}
\end{equation}%
as well as%
\begin{equation}
\left( \delta _{h_{r},0}+\delta _{h_{r},1}\right) \langle \underline{\overset%
{ab}{\mathbf{h}}}{|}\mathsf{T}_{1}(\xi _{r})|\underline{\mathbf{k}}{\rangle }%
\left. {=}\right. \left( \delta _{h_{r},0}\mathsf{c}_{r}+\delta
_{h_{r},1}\right) \langle \underline{\overset{ab}{\mathbf{h}}}%
_{r}^{(h_{r}+1)}{|}\underline{\mathbf{k}}{\rangle \left. {=}\right. 0},
\quad \forall r\left. \in \right. \{1,...,\mathsf{N}\},  \label{T1-Simpl}
\end{equation}%
being $\underline{\overset{ab}{\mathbf{h}}}_{r}^{(h_{r}+1)}\neq {\underline{%
\mathbf{k}}}$ under the condition $\left( \delta _{h_{r},0}+\delta
_{h_{r},1}\right) =1$. This is easily the case for $b\neq a$ as $\underline{%
\overset{ab}{\mathbf{h}}}_{r}^{(h_{r}+1)}$keeps at least one $h_{j}\neq k_{j}
$, for $j=a$ or $j=b$, independently from the choice of $r$. In the case $a=b
$ it holds $\underline{\overset{ab}{\mathbf{h}}}_{r}^{(h_{r}+1)}={\underline{%
\mathbf{h}}}_{a,r}^{\left( 1,h_{r}+1\right) }$ so that for $r\neq a$ it
still holds $h_{a}=1\neq k_{a}=0$. Finally, in the case $r=b=a$ it holds $%
\underline{\overset{ab}{\mathbf{h}}}_{r}^{(h_{r}+1)}=\underline{\mathbf{h}}%
_{a}^{\left( 2\right) }\neq {\underline{\mathbf{k}}}$.

Now, let us use the following second interpolation formula to develop the
terms on the r.h.s.\ of (\ref{Interp-T1}) :
\begin{equation}
\begin{aligned}
\langle \underline{\overset{ab}{\mathbf{h}}}_{r}^{(1)}{|}\mathsf{T}%
_{2}(\xi _{r})|\underline{\mathbf{k}}{\rangle }
&\underset{\scriptscriptstyle\text{UpC}}{=}\mathsf{t}%
_{2}\langle \underline{\overset{ab}{\mathbf{h}}}_{r}^{(1)}{|\text{\underline{%
$\mathbf{k}$}}\rangle }+\langle \underline{\overset{ab}{\mathbf{h}}}%
_{r}^{(1)}{|}\sum_{s=1}^{\mathsf{N}}\mathsf{T}_{2}\Big(\xi _{s}^{(\delta
_{h_{s}^{\prime },1}+\delta _{h_{s}^{\prime },2})}\Big)|{\text{\underline{$%
\mathbf{k}$}}\rangle } \\
&\underset{\scriptscriptstyle\text{UpC}}{=}\sum_{s=1,s\neq r}^{\mathsf{N}}\delta _{h_{s},0}%
\mathsf{c}_{s}\langle \underline{\overset{ab}{\mathbf{h}}}_{r,s}^{(1,1)}{|}%
\mathsf{T}_{1}(\xi _{s}^{\left( 1\right) })|{\text{\underline{$\mathbf{k}$}}%
\rangle ,}
\end{aligned}%
\end{equation}
where we have defined $\{h_{1}^{\prime },...,h_{\mathsf{N}}^{\prime
}\}\equiv \underline{\overset{ab}{\mathbf{h}}}_{r}^{(1)}$. Indeed, by the
orthogonality condition it holds:%
\begin{equation}
\langle \underline{\overset{ab}{\mathbf{h}}}_{r}^{(1)}|{\text{\underline{$%
\mathbf{k}$}}\rangle } = 0 ,
\end{equation}
and 
\begin{equation}
\left( \delta _{h_{s}^{\prime },1}+\delta _{h_{s}^{\prime },2}\right)
\langle \underline{\overset{ab}{\mathbf{h}}}_{r}^{(1)}{|}\mathsf{T}_{2}(\xi
_{s}^{\left( 1\right) })|\underline{\mathbf{k}}{\rangle }\left. {=}\right.
\left( \delta _{h_{s}^{\prime },1}+\delta _{h_{s}^{\prime },2}\right) 
\mathsf{c}_{s}^{\delta _{h_{s}^{\prime },2}}\langle \underline{\overset{ab}{%
\mathbf{h}}}_{r,s}^{(1,h_{s}^{\prime }-1)}{|}\underline{\mathbf{k}} \rangle
= 0, \quad \forall s\left. \in \right. \{1,...,\mathsf{N%
}\},
\end{equation}
being $\underline{\overset{ab}{\mathbf{h}}}_{r,s}^{(1,h_{s}^{\prime
}-1)}\neq {\underline{\mathbf{k}}}$ under the condition $\left( \delta
_{h_{s}^{\prime },1}+\delta _{h_{s}^{\prime },2}\right) =1$. Indeed, this is
easily the case for $s\neq r$ as $\underline{\overset{ab}{\mathbf{h}}}%
_{r,s}^{(1,h_{s}^{\prime }-1)}$ keeps $h_{r}^{\prime }=1\neq k_{r}{\in
\{0,2\}}$, independently from the choice of $s$. In the case $s=r$, it holds $%
\underline{\overset{ab}{\mathbf{h}}}_{r,s}^{(1,h_{s}^{\prime }-1)}=%
\underline{\overset{ab}{\mathbf{h}}}_{r}^{\left( 0\right) }$ so that for $%
b\neq a$ $\underline{\overset{ab}{\mathbf{h}}}_{r}^{(0)}$keeps at least one $%
h_{j}\neq k_{j}$, for $j=a$ or $j=b$. Finally, in the case $s=r$ and $a=b$
it holds $r\neq a$ and so $\underline{\overset{ab}{\mathbf{h}}}%
_{r}^{(0)}\neq {\underline{\mathbf{k}}}$ as by our hypothesis on $\underline{%
\overset{aa}{\mathbf{h}}}$ we have $h_{a}=1$ while by (\ref{Interp-T1}) it
must hold $h_{r}=2$.

Putting together the results of these interpolation developments, we get
our recursion formula as a consequence of the orthogonality assumed for $m$
values of $k_{j}=2$ in ${\text{\underline{$\mathbf{k}$}}}$.
\end{proof}

Note that the above lemma gives a recursive formula. 
The terms on the r.h.s.\ of equation \eqref{Recursion-1} are of the same type as the starting one on the l.h.s.\ , and 
for any $r,s$ such $\delta _{h_{r},2}=\delta _{h_{s},0}=1$, the $\underline{%
\overset{ab}{\mathbf{h}}}{_{r,s}^{(1,{1)}}}$ surely satisfies the condition
to have at least two different elements w.r.t.\ the given ${\text{\underline{$%
\mathbf{k}$}}\in \{0,2\}^{\mathsf{N}}}$. Indeed, $\underline{\overset{ab}{%
\mathbf{h}}}{_{r,s}^{(1,{1)}}}$ contains a couple of elements equal to 1.
Hence it is possible to apply the same recursion formula to the terms on the r.h.s.\ of equation \eqref{Recursion-1}.

The previous lemma implies the following:

\begin{corollary}
Let $\underline{\mathbf{k}}\in \{0,2\}^{\mathsf{N}}$ with%
\begin{equation}
\sum_{a=1}^{\mathsf{N}}\delta _{k_{a},2}=m,
\end{equation}%
and $\underline{\overset{ab}{\mathbf{h}}}$ such that $h_{a}\neq k_{a}$ and $%
h_{b}\neq k_{b}$ if $a\neq b$, while $h_{a}=1\neq k_{a}=0$ if $a=b$. 
The following orthogonality conditions hold for any fixed $c\in \{1,...,\mathsf{N}\}$ :
\begin{equation}
\langle \underline{\overset{ab}{\mathbf{h}}}|\mathsf{T}_{1}\Big(\xi _{c}^{\left(
\delta _{h_{c},0}+\delta _{h_{c},1}\right) }\Big)|{\text{\underline{$\mathbf{k}$}%
}\rangle }=0,  \label{Ortho-(0,2)-induction}
\end{equation}%
\end{corollary}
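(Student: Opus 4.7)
The plan is to iterate the recursion formula (\ref{Recursion-1}) from the preceding lemma until the resulting sum becomes empty. The key observation is that each application of (\ref{Recursion-1}) produces matrix elements of the very same shape, indexed by a pair $(r,s)$ with $h_r = 2$ and $h_s = 0$, with the new multi-index $\underline{\overset{ab}{\mathbf{h}}}_{r,s}^{(1,1)}$ obtained from $\underline{\overset{ab}{\mathbf{h}}}$ by replacing these two entries by $1$. Consequently both $\#\{j : h_j = 2\}$ and $\#\{j : h_j = 0\}$ strictly decrease by one at each step, so after at most $\min(\#\{j : h_j = 2\}, \#\{j : h_j = 0\})$ iterations the sum is indexed over the empty set and the matrix element vanishes. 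A second useful point is that the shift $\xi_s^{(1)}$ appearing on the right-hand side of (\ref{Recursion-1}) matches the form $\xi_c^{(\delta_{h'_c,0}+\delta_{h'_c,1})}$ with $c=s$ for the new index $\underline{\overset{ab}{\mathbf{h}}}_{r,s}^{(1,1)}$, since the new value $h'_s=1$ makes $\delta_{h'_s,0}+\delta_{h'_s,1}=1$. So the lemma can indeed be reapplied to each term on the right-hand side.

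Before invoking the iteration, one must check that the new index $\underline{\overset{ab}{\mathbf{h}}}_{r,s}^{(1,1)}$ still satisfies the hypothesis placed on $\underline{\overset{ab}{\mathbf{h}}}$ relative to the same $\underline{\mathbf{k}}\in\{0,2\}^{\mathsf{N}}$. In the case $a\neq b$, one needs $h'_a\neq k_a$ and $h'_b\neq k_b$; any entry of $\underline{\overset{ab}{\mathbf{h}}}_{r,s}^{(1,1)}$ modified by the recursion becomes $1$, and since $k_j\in\{0,2\}$ for every $j$, such a replaced value is automatically distinct from $k_j$, so even if $\{r,s\}\cap\{a,b\}\neq\emptyset$ the hypothesis at $a$ and $b$ is preserved. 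In the case $a=b$, the hypothesis requires $h_a=1$; but $1\notin\{0,2\}$, so $a$ is never selected as a summation index in (\ref{Recursion-1}), which restricts to positions with $h_r=2$ and $h_s=0$; hence position $a$ is untouched by the transformation and $h'_a=1$ persists.

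With these observations in hand, the corollary follows by induction on $N(\underline{\overset{ab}{\mathbf{h}}}):=\#\{j:h_j=2\}+\#\{j:h_j=0\}$. The base case is when one of these two counts is zero: then the sum on the right-hand side of (\ref{Recursion-1}) is empty and the matrix element vanishes immediately. In the inductive step, one applies (\ref{Recursion-1}), which expresses the matrix element as a sum of terms of the same form with multi-indices $\underline{\overset{ab}{\mathbf{h}}}_{r,s}^{(1,1)}$ satisfying $N(\underline{\overset{ab}{\mathbf{h}}}_{r,s}^{(1,1)})=N(\underline{\overset{ab}{\mathbf{h}}})-2$; by the preceding paragraph each such term still fits the lemma's hypotheses, and the inductive assumption yields its vanishing.

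The only real technical subtlety --- which is not a genuine obstacle --- is the bookkeeping on the distinguished positions $a,b$ under repeated application of the recursion, handled above by distinguishing the cases $a\neq b$ and $a=b$. Termination of the iteration is automatic from the finiteness of $\mathsf{N}$, and the whole argument rests transparently on the specific shape of (\ref{Recursion-1}) together with the constraint $\underline{\mathbf{k}}\in\{0,2\}^{\mathsf{N}}$.
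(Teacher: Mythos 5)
Your proposal is correct and follows essentially the same route as the paper: iterate the recursion \eqref{Recursion-1}, noting that each application strictly decreases both the number of entries equal to $2$ and the number equal to $0$, until the sum over pairs $(r,s)$ with $h_r=2$, $h_s=0$ is empty. Your explicit induction on $\#\{j:h_j=0\}+\#\{j:h_j=2\}$ and the verification that the hypotheses (including the shift $\xi_s^{(1)}$ matching $\delta_{h'_s,0}+\delta_{h'_s,1}=1$ and the preservation of the distinguished pair $a,b$) persist under the recursion merely make precise what the paper handles in the remark following the lemma, where it is observed that the new index contains a pair of entries equal to $1$, hence again satisfies the lemma's assumptions relative to $\underline{\mathbf{k}}\in\{0,2\}^{\mathsf{N}}$.
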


\begin{proof}
Note that if $\underline{\overset{ab}{\mathbf{h}}}$ does not contain $h=2$
or $h=0$, the orthogonality is proven just by applying once the recursion
formula. Otherwise, the recursion generate the $\underline{\overset{ab}{\mathbf{h}}}{_{r,s}^{(1,{1)}}}$ where we have
reduced by one the number of $h=2$, reduced by one the number of $h=0$ and 
increased by two the number of $h=1$. This amounts to change $\underline{\mathbf{h}}$ in $\underline{\mathbf{h}}'$, with $h'_{a\neq r,s} =h_a$ but $h_{r}=2\rightarrow
h_{r}^{\prime }=1$ and $h_{s}=0\rightarrow h_{s}^{\prime }=1$. Then, if $%
\underline{\overset{ab}{\mathbf{h}}}{_{r,s}^{(1,{1)}}}$ does not contain $%
h=2 $ or $h=0$, the orthogonality is proven just by applying the
recursion formula one more time. Otherwise, we continue to apply \eqref{Recursion-1} until we arrive to the condition that there are no $h=2$
or $h=0$ in the index of the SoV co-vectors involved. 
This proves the above corollary.
\end{proof}

We are now in position to perform the induction over the number $m$ of $k_a=2$ for the orthogonality.

Up to a reordering in the indices of the $\{\xi _{a}\}$, this is equivalent to prove: 
\begin{equation}
\langle \underline{\mathbf{h}}|\mathsf{T}_{1}(\xi _{m+1})|\underline{\mathbf{%
k}}{\rangle }=0,\quad \text{for any \underline{$\mathbf{h}$}}\neq \underline{%
\mathbf{k}}_{m+1}^{(2)},
\end{equation}%
where we have defined:%
\begin{equation}
\underline{\mathbf{k}}=\{{k_{1}=2,...,k_{m}=2,k_{l\geq m+1}=0\}}.
\label{Form1-k}
\end{equation}%
The only case that we have to consider is%
\begin{equation}
\text{\underline{$\mathbf{h}$}}\neq \underline{\mathbf{k}}%
_{m+1}^{(2)}\text{ with }h_{1}=2,...,h_{m+1}=2.  \label{Form1-h}
\end{equation}%
Indeed, if this is not the case we can write:%
\begin{equation}
\langle \underline{\mathbf{h}}|\mathsf{T}_{1}(\xi _{m+1})|\underline{\mathbf{%
k}}\rangle =\langle \underline{\mathbf{h}}|\mathsf{T}_{1}(\xi _{l<m+1})|%
\underline{\mathbf{k}}_{l,m+1}^{(0,2)}{\rangle },
\end{equation}%
and we can directly apply the corresponding $\mathsf{T}_{1}(\xi _{l\leq
m+1}) $ on the left vector $\langle \underline{\mathbf{h}}|$, increasing by
one the associated $h_{l\leq m+1}\leq 1$. Then, using the orthogonality
assumed for $m$ values of $k_{j}=2$ in ${\text{\underline{$\mathbf{k}$}}}$,
we get zero, i.e.\ for $h_{l\leq m+1}\leq 1$ it holds\footnote{%
Note that the above discussion also implies the orthogonality $\langle 
\underline{\mathbf{h}}|\underline{\mathbf{2}}{\rangle =0}$ for any ${%
\underline{\mathbf{h}}\neq \underline{\mathbf{2}}.}$}:%
\begin{equation}
\langle \underline{\mathbf{h}}|\mathsf{T}_{1}(\xi _{m+1})|\underline{\mathbf{%
k}}\rangle =\langle \underline{\mathbf{h}}_{l}^{(h_{l}+1)}|\underline{%
\mathbf{k}}_{l,m+1}^{(0,2)}{\rangle =0.}
\end{equation}%
So it is sufficient to consider the tuples $\underline{\mathbf{h}}$ of the form (\ref{Form1-h}). 
But then $\underline{\mathbf{h}}$ has at least two
elements different from the given $\underline{\mathbf{k}}\in \{0,2\}^{\mathsf{N}}$. 
Indeed, from $\underline{\mathbf{h}} \neq \underline{\mathbf{%
k}}_{m+1}^{(2)}$ it follows that there exists at least one $j\in \{m+2,...,%
\mathsf{N}\}$ such that $h_{j}\neq k_{j}=0$, and by the definitions \eqref{Form1-h} 
and \eqref{Form1-k} of $\underline{\mathbf{h}}$ and $\underline{%
\mathbf{k}}$, it holds $h_{m+1}=2\neq k_{m+1}=0$. So we get our proof of the
orthogonality induction being:%
\begin{equation}
\langle \underline{\mathbf{h}}|\mathsf{T}_{1}(\xi _{m+1})|{\text{\underline{$%
\mathbf{k}$}}\rangle }=0,
\end{equation}%
as consequence of (\ref{Ortho-(0,2)-induction}). Note that the proven
orthogonality also implies that the above lemma and corollary indeed hold
for any $m\leq \mathsf{N}$.

\subsubsection{Second step: the case \texorpdfstring{$|{\text{\protect\underline{$\mathbf{k}$}}\rangle }$}{ket(k)} with 
\texorpdfstring{$k_{a}=1$, $k_{b\neq a}\in \{0,2\}$}{ka=1, kb in 0,2}}

Let us make the orthogonality proof in the case where ${\underline{\mathbf{k}}%
}$ contains only one $a\in \{1,...,\mathsf{N}\}$, such that $k_{a}=1$ while $%
k_{b}\in \{0,2\}$ for any $b\neq a\in \{1,...,\mathsf{N}\}$, i.e. let us
show that it holds:%
\begin{equation}
\langle \underline{\mathbf{h}}|\mathsf{T}_{2}(\xi _{a})|{\underline{\mathbf{k%
}}}_{a}^{\left( 0\right) }{\rangle }=0,\quad \forall \text{\underline{$%
\mathbf{h}$}}\neq \text{\underline{$\mathbf{k}$} with \underline{$\mathbf{h}$%
}}\in \{0,1,2\}^{\mathsf{N}}.
\end{equation}%
In the case $h_{a}=0$, it rewrites
\begin{equation}
\langle \underline{\mathbf{h}}|\mathsf{T}_{2}(\xi _{a})|{\underline{\mathbf{k%
}}}_{a}^{\left( 0\right) }{\rangle }=\mathsf{c}_{a}\langle {\underline{%
\mathbf{h}}}_{a}^{\left( 1\right) }|\mathsf{T}_{1}(\xi _{a}^{\left( 1\right)
})|{\underline{\mathbf{k}}}_{a}^{\left( 0\right) }{\rangle }=0,
\end{equation}%
and this follows by \eqref{Ortho-(0,2)-induction}, observing that ${%
\underline{\mathbf{k}}}_{a}^{\left( 0\right) }{\in \{0,2\}^{\mathsf{N}}}$.

In the case $h_{a}=1$ or $h_{a}=2$, we first implement the interpolation
development of $\mathsf{T}_{2}(\xi _{a})$:%
\begin{equation}
\begin{aligned}
\langle \underline{\mathbf{h}}|\mathsf{T}_{2}(\xi _{a})|{\underline{\mathbf{k%
}}}_{a}^{\left( 0\right) } \rangle &\underset{\scriptscriptstyle\text{UpC}}{=}\mathsf{t}%
_{2}\langle \underline{\mathbf{h}}|{\underline{\mathbf{k}}}_{a}^{\left(
0\right) }{\rangle }+\langle \underline{\mathbf{h}}|\sum_{s=1}^{\mathsf{N}}%
\mathsf{T}_{2}(\xi _{s}^{\left( \delta _{h_{s},1}+\delta _{h_{s},2}\right)
})|{\underline{\mathbf{k}}}_{a}^{\left( 0\right) }{\rangle } \\
& \underset{\scriptscriptstyle\text{UpC}}{=}\sum_{s=1,s\neq a}^{\mathsf{N}}\delta _{h_{s},0}%
\mathsf{c}_{s}\langle {\underline{\mathbf{h}}}_{s}^{\left( 1\right) }{|}%
\mathsf{T}_{1}(\xi _{s}^{\left( 1\right) })|{\underline{\mathbf{k}}}%
_{a}^{\left( 0\right) }{\rangle }.
\end{aligned}%
\end{equation}
Indeed, we have:%
\begin{equation}
\langle \underline{\mathbf{h}}|{\underline{\mathbf{k}}}_{a}^{\left(
0\right) }{\rangle }\left. =\right. 0,
\end{equation}
and
\begin{equation}
\left( \delta _{h_{s},1}+\delta _{h_{s},2}\right) \langle \underline{%
\mathbf{h}}|\mathsf{T}_{2}(\xi _{s}^{\left( 1\right) })|{\underline{\mathbf{k%
}}}_{a}^{\left( 0\right) }{\rangle }\left. =\right. (\delta
_{h_{s},1}+\delta _{h_{s},2}\mathsf{c}_{s}^{\delta _{h_{s},2}})\langle 
\underline{\mathbf{h}}{_{s}^{\left( {h}_{s}{-1}\right) }}|{\underline{%
\mathbf{k}}}_{a}^{\left( 0\right) }{\rangle }\left. =\right. 0,
\end{equation}%
as it holds ${\underline{\mathbf{h}}_{s}^{\left( {h_{s}-1}\right) }}\neq {%
\underline{\mathbf{k}}}_{a}^{\left( 0\right) }$ for $s\neq a$ being $%
h_{a}\in \{1,2\}$, while for $s=a$ it still holds ${\underline{\mathbf{h}}%
_{a}^{\left( {h_{a}-1}\right) }}\neq {\underline{\mathbf{k}}}_{a}^{\left(
0\right) }$ evidently for $h_{a}=2$ but also for $h_{a}=1$. Indeed, in this
last case, the condition ${\underline{\mathbf{h}}}\neq {\underline{\mathbf{k}}%
}$ is explicitly written as ${\underline{\mathbf{h}}_{a}^{\left( {1}\right)
}}\neq {\underline{\mathbf{k}}}_{a}^{\left( 1\right) }$, which is equivalent
to ${\underline{\mathbf{h}}_{a}^{\left( {0}\right) }}\neq {\underline{%
\mathbf{k}}}_{a}^{\left( 0\right) }$. Now our orthogonality condition
follows just remarking that $\underline{\mathbf{h}}_{s}^{(1)}$ and ${%
\underline{\mathbf{k}}}_{a}^{\left( 0\right) }$ have different $a$ and $%
s(\neq a)$ elements, so that it holds%
\begin{equation}
\langle {\underline{\mathbf{h}}}_{s}^{\left( 1\right) }{|}\mathsf{T}_{1}(\xi
_{s}^{\left( 1\right) })|{\underline{\mathbf{k}}}_{a}^{\left( 0\right) }{%
\rangle =0,}
\end{equation}%
by applying (\ref{Ortho-(0,2)-induction}).

\subsubsection{Third step: orthogonality by induction on the number of \texorpdfstring{$k_i=1$
in $|{\text{\protect\underline{$\mathbf{k}$}}\rangle }$}{k=1 in ket(k)}}

Let us now prove our final orthogonality statement for the general case of $%
m+1$ indices $k_a=1$ in ${\text{\underline{$\mathbf{k}$}} }$ by induction. Up
to a reordering of the indices, this is equivalent to ask that given the $\mathsf{N}$-tuple
\begin{equation}
{\text{\underline{$\mathbf{k}$}}}: \quad k_{1}=k_{2}=\cdots
=k_{m}=k_{m+1}=1, \ k_{l\geq m+2}\in \{0,2\},  \label{K-m+1}
\end{equation}%
(we are fixing $\mathbf{1}_{\text{\underline{$\mathbf{k}$}}}=\{1,...,m+1\}$ and
so $n_{\text{\underline{$\mathbf{k}$}}}$ is the integer part of $(m+1)/2$)
then the covector $\bra{\underline{\mathbf{h}}}$ is orthogonal to $\ket{\underline{\mathbf{k}}}$, i.e.\ $\langle \underline{\mathbf{h}}|{\underline{\mathbf{k}}\rangle }\left.
=\right. 0$, if and only if:
\begin{equation}
\underline{\mathbf{h}}\in \{0,1,2\}^{\mathsf{N}}
\quad \text{such that} \quad 
\sum_{r=0}^{n_{\text{\underline{$\mathbf{k}$}}}}\sum_{\substack{ \alpha
\cup \beta \cup \gamma =\mathbf{1}_{\text{\underline{$\mathbf{k}$}}},  \\ \alpha
,\beta ,\gamma \text{ disjoint, }\#\alpha =\#\beta =r}}\delta _{\underline{%
\mathbf{h}},\underline{\mathbf{k}}_{\alpha ,\beta }^{\left( \underline{%
\mathbf{0}}\mathbf{,}\underline{\mathbf{2}}\right) }}=0.  \label{Gen-ortho-C}
\end{equation}%
The above condition on $\underline{\mathbf{h}}$ is denoted by%
\begin{equation}
{\text{\underline{$\mathbf{h}$}}}\underset{\text{(\ref{Gen-ortho-C})}}{\neq }
{\text{\underline{$\mathbf{k}$}}} . \label{Ortho-DisEq}
\end{equation}%
It simply says that for any choice of the disjoint
subsets $\alpha $, $\beta \subset \mathbf{1}_{\text{\underline{$\mathbf{k}$}}}$ with
the same cardinality $0\leq \#\alpha =\#\beta =r\leq n_{\text{\underline{$%
\mathbf{k}$}}}$, it must holds:%
\begin{equation}
\underline{\mathbf{h}}\neq \underline{\mathbf{k}}_{\alpha ,\beta }^{\left( 
\underline{\mathbf{0}}\mathbf{,}\underline{\mathbf{2}}\right) }{.}
\end{equation}%
In the following, we assume that this orthogonality holds in the case where
there are only $m$ values of $k_{a}=1$, and prove it for $m+1$.

Let us start proving the following Lemma.

\begin{lemma}
Let $\underline{\mathbf{h}}$ be the generic element of $\{0,1,2\}^{\mathsf{N}%
}$ with $h_{1}\neq 0$ and $h_{r}\neq 0$, satisfying (\ref{Ortho-DisEq}) with $%
\underline{\mathbf{k}}$ of the form (\ref{K-m+1}). 
The following recursive formula 
\begin{equation}
\langle \underline{\mathbf{h}}_{1,r}^{\left( h_{1}\neq 0,h_{r}\neq 0\right)
}|\mathsf{T}_{2}(\xi _{r})|{\text{\underline{$\mathbf{k}$}}_{1}^{\left(
0\right) }\rangle }\underset{\scriptscriptstyle\text{UpC}}{=}\sum_{p=1}^{\mathsf{N}}\delta
_{h_{p},0}\mathsf{c}_{p}\sum_{q=1}^{\mathsf{N}}\delta _{h_{q},2}\langle 
\underline{\mathbf{h}}_{1,r,p,q}^{\left( h_{1}\neq 0,h_{r}\neq 0,1,1\right) }%
{|}\mathsf{T}_{2}(\xi _{q})|{\text{\underline{$\mathbf{k}$}}_{a}^{\left(
0\right) }\rangle ,}  \label{Recursion2}
\end{equation}%
holds for any fixed $r\in \{1,...,\mathsf{N}\}$, indifferently equal or
different from 1.
\end{lemma}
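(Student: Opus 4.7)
The plan is to mimic the strategy of the previous lemma's proof: expand $T_{2}(\xi_{r})$ through an interpolation formula adapted to the covector $\langle\underline{\mathbf{h}}|$, then use the fusion identities to collapse each single-point contribution and the induction hypothesis (orthogonality for $m$ unit entries in $\underline{\mathbf{k}}$) to annihilate every term that is not of the advertised form. The relevant fusion identities are $T_{1}(\xi_{s})T_{2}(\xi_{s}^{(1)})=\mathsf{c}_{s}$, $T_{1}(\xi_{s})T_{1}(\xi_{s}^{(1)})=T_{2}(\xi_{s})$, and $T_{2}(\xi_{s}^{(1)})T_{2}(\xi_{s})=\mathsf{c}_{s}T_{1}(\xi_{s}^{(1)})$.

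First, I would insert
\begin{equation*}
T_{2}(\xi_{r})\underset{\scriptscriptstyle\text{UpC}}{=}\mathsf{t}_{2}+\sum_{s=1}^{\mathsf{N}}\mathsf{T}_{2}\bigl(\xi_{s}^{(\delta_{h_{s},1}+\delta_{h_{s},2})}\bigr)
\end{equation*}
between $\langle\underline{\mathbf{h}}|$ and $|\underline{\mathbf{k}}_{1}^{(0)}\rangle$. The constant piece $\mathsf{t}_{2}\langle\underline{\mathbf{h}}|\underline{\mathbf{k}}_{1}^{(0)}\rangle$ vanishes by the induction hypothesis, since $\underline{\mathbf{k}}_{1}^{(0)}$ has only $m$ unit entries (in positions $2,\dots,m+1$) and the requirement $h_{1}\neq 0$ prevents $\underline{\mathbf{h}}$ from being obtained from $\underline{\mathbf{k}}_{1}^{(0)}$ by any $(\underline{\mathbf{0}},\underline{\mathbf{2}})$-exchange on $\mathbf{1}_{\underline{\mathbf{k}}_{1}^{(0)}}\subset\{2,\dots,m+1\}$, which leaves position $1$ untouched.

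Next, I would split the remaining sum over $s$ into two cases. When $h_{s}\in\{1,2\}$ the evaluation point is $\xi_{s}^{(1)}$, and $T_{2}(\xi_{s}^{(1)})$ collapses on $\langle\underline{\mathbf{h}}|$ to $\mathsf{c}_{s}^{\delta_{h_{s},2}}\langle\underline{\mathbf{h}}_{s}^{(h_{s}-1)}|$; the resulting scalar product with $|\underline{\mathbf{k}}_{1}^{(0)}\rangle$ is killed by induction thanks to either $h_{1}\neq 0$ (when $s\neq 1$) or $h_{r}\neq 0$ (when $s=1$ and $r\neq 1$; the case $s=r=1$ uses directly $h_{1}\in\{1,2\}$). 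When $h_{s}=0$ the evaluation point is $\xi_{s}$, and $T_{2}(\xi_{s})$ must pass through the $T_{2}(\xi_{s}^{(1)})$ factor already present in $\langle\underline{\mathbf{h}}|$, yielding $\mathsf{c}_{s}\langle\underline{\mathbf{h}}_{s}^{(1)}|T_{1}(\xi_{s}^{(1)})$. A second interpolation of the residual $T_{1}(\xi_{s}^{(1)})$ adapted to $\langle\underline{\mathbf{h}}_{s}^{(1)}|$, followed by another round of fusion-plus-induction pruning, leaves only contributions indexed by a position $q$ with $h_{q}=2$, in which the residual operator reassembles as $T_{2}(\xi_{q})$ acting on $|\underline{\mathbf{k}}_{1}^{(0)}\rangle$ while $\langle\underline{\mathbf{h}}|$ has both its $0$ at $p=s$ and its $2$ at $q$ flipped to $1$. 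Summing over admissible $(p,q)$ produces precisely the RHS of \eqref{Recursion2}.

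The main obstacle will be the bookkeeping of the $h_{s}=0$ branch, and in particular controlling the second interpolation of $T_{1}(\xi_{s}^{(1)})$: I must verify that among all the nodes $t$ appearing in that expansion only those with $h_{t}=2$ escape cancellation by the induction hypothesis (using the constraints $h_{1}\neq 0$ and $h_{r}\neq 0$ carried through both reductions), and that the operator left acting on the right state is exactly $T_{2}(\xi_{q})$ rather than a different evaluation such as $T_{2}(\xi_{q}^{(1)})$ or a further-shifted variant. The ``$\underset{\scriptscriptstyle\text{UpC}}{=}$'' convention hides the coefficient tracking, but I still have to check that no spurious channel absorbs part of the contribution along the way and that the surviving weight is the pure $\mathsf{c}_{p}$ advertised in the statement.
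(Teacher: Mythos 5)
Your proposal follows essentially the same route as the paper's proof: a first interpolation of $\mathsf{T}_{2}(\xi_{r})$ whose $h_{s}\in\{1,2\}$ terms are killed by the induction hypothesis (using $h_{1}\neq 0$, $h_{r}\neq 0$ to verify the covector indices stay outside the pseudo-orthogonality set), leaving the $h_{s}=0$ terms as $\mathsf{c}_{s}\langle\cdots|\mathsf{T}_{1}(\xi_{s}^{(1)})|\cdots\rangle$, followed by a second interpolation of $\mathsf{T}_{1}(\xi_{s}^{(1)})$ in which only the $h_{q}=2$ nodes survive and reassemble via $T_{1}(\xi_{q})T_{1}(\xi_{q}^{(1)})=T_{2}(\xi_{q})$ into the claimed right-hand side. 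The verification points you flag (the case analysis $s=1$ versus $s\neq 1$, and that the surviving operator is $T_{2}(\xi_{q})$ rather than a shifted variant) are exactly the checks the paper carries out in its inequalities, so your plan is sound and matches the published argument.
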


\begin{proof}
Let us make a first interpolation
\begin{equation}
\begin{aligned} \label{eq:first_interp}
\langle \underline{\mathbf{h}}_{1,r}^{\left( h_{1}\neq 0,h_{r}\neq
0\right) }|\mathsf{T}_{2}(\xi _{r})|{\text{\underline{$\mathbf{k}$}}}%
_{1}^{(0)}{\rangle } &\underset{\scriptscriptstyle\text{UpC}}{=} \mathsf{t}_{2}\langle 
\underline{\mathbf{h}}_{1,r}^{\left( h_{1}\neq 0,h_{r}\neq 0\right) }|{\text{%
\underline{$\mathbf{k}$}}}_{1}^{(0)}{\rangle }+\langle \underline{\mathbf{h}}%
_{1,r}^{\left( h_{1}\neq 0,h_{r}\neq 0\right) }|\sum_{s=1}^{\mathsf{N}}%
\mathsf{T}_{2} \Big(\xi _{s}^{\left( \delta _{h_{s},1}+\delta _{h_{s},2}\right)
}\Big)|{\text{\underline{$\mathbf{k}$}}}_{1}^{(0)}{\rangle } \\
& \underset{\scriptscriptstyle\text{UpC}}{=} \mathsf{t}_{2}\langle \underline{%
\mathbf{h}}_{1,r}^{\left( h_{1}\neq 0,h_{r}\neq 0\right) }|{\text{\underline{%
$\mathbf{k}$}}}_{1}^{(0)}{\rangle }+\sum_{s=1}^{\mathsf{N}} \Big[\mathsf{c}%
_{s}^{\delta _{h_{s},2}}\left( \delta _{h_{s},1}+\delta _{h_{s},2}\right)
\langle {\text{\underline{$\mathbf{h}$}}}_{1,r,s}^{(h_{1}\neq 0,h_{r}\neq 0,{%
h_{s}-1})}{|}  \\
&\phantom{\underset{\scriptscriptstyle\text{UpC}}{=}} +\delta _{h_{s},0}\mathsf{c}_{s}\langle {\text{\underline{$\mathbf{h}$}}}%
_{1,r,s}^{h_{1}\neq 0,h_{r}\neq 0,{1})}{|}\Big]\,\mathsf{T}_{1}(\xi _{s}^{\left(
1\right) })|{\text{\underline{$\mathbf{k}$}}}_{1}^{(0)}{\rangle }.
\end{aligned}
\end{equation}
Now we have that it holds:%
\begin{align}
{\text{\underline{$\mathbf{h}$}}}_{1,r}^{(h_{1}\neq 0,h_{r}\neq 0)}%
&\underset{\text{\eqref{Gen-ortho-C}}}{\neq }{\text{\underline{$\mathbf{k}$}}}%
_{1}^{(0)},  \label{Dis1} \\
{\underline{\mathbf{h}}}_{1,r,s}^{(h_{1}\neq 0,h_{r}\neq 0,{h_{s}-1})}%
&\underset{\text{\eqref{Gen-ortho-C}}}{\neq }{\text{\underline{$\mathbf{k}$}}}%
_{1}^{(0)}\quad \text{ for any $s$ such that } \delta _{h_{s},1}+\delta
_{h_{s},2}\left. =\right. 1.  \label{Dis2}
\end{align}%
Let us show the validity of (\ref{Dis1}) first. Clearly, we have ${\text{%
\underline{$\mathbf{h}$}}}_{1,r}^{(h_{1}\neq 0,h_{r}\neq 0)}\neq {\text{%
\underline{$\mathbf{k}$}}}_{1}^{(0)}$ as $h_{1}\in \{1,2\}\neq k_{1}=0$.
Moreover, taking ${\text{\underline{$\mathbf{k}$}}}%
_{1,a,b}^{(0,0,2)}$ for any $a\neq b\in \{2,...,m+1\}$, it holds ${\text{%
\underline{$\mathbf{h}$}}}_{1,r}^{(h_{1}\neq 0,h_{r}\neq 0)}\neq {\text{%
\underline{$\mathbf{k}$}}}_{1,a,b}^{(0,0,2)}$, because $%
h_{1}\in \{1,2\}\neq k_{1}=0$. Similarly, if we take the generic $\alpha $%
, $\beta \subset \{2,...,m+1\}$ with $\alpha \cap \beta =\emptyset $ and $%
0\leq \#\alpha =\#\beta =r\leq n_{\text{\underline{$\mathbf{k}$}}}$, it must
holds:%
\begin{equation}
{\text{\underline{$\mathbf{h}$}}}_{1,r}^{(h_{1}\neq 0,h_{1}\neq 0)}\neq {%
\text{\underline{$\mathbf{k}$}}}_{1,\alpha ,\beta }^{(0,2,...,2,0,...,0)}{,}
\end{equation}%
which is  \eqref{Dis1}. Note that our proof of (\ref{Dis1}) holds
independently from the value of $r$ : it is valid both for $r\neq 1$ and for $r=1$
where ${\text{\underline{$\mathbf{h}$}}}_{1,r}^{(h_{1}\neq 0,h_{r}\neq 0)}={%
\text{\underline{$\mathbf{h}$}}}_{1}^{(h_{1}\neq 0)}$.

Let us now show \eqref{Dis2}. We have to distinguish the two cases $s=1$ and $%
s\neq 1$. If $s=1$ and $h_{1}=2$, we have ${\underline{\mathbf{%
h}}}_{1,r,s}^{(h_{1}\neq 0,h_{r}\neq 0,{h_{s}-1})}={\text{\underline{$%
\mathbf{h}$}}}_{r,1}^{(h_{r}\neq 0,1)}$, so the proof of (\ref{Dis2})
follows the same steps as the one for \eqref{Dis1}, independently from the value of $r$.
If $s=1$ and $h_{1}=1$, we have ${\underline{\mathbf{h}}}%
_{1,r,s}^{(h_{1}\neq 0,h_{r}\neq 0,{h_{s}-1})}={\text{\underline{$\mathbf{h}$%
}}}_{r,1}^{(h_{r}\neq 0,0)}$, and the following implication holds:%
\begin{equation}
\underline{\mathbf{h}}_{r,1}^{(h_{r}\neq 0,1)}\underset{\text{\eqref{Gen-ortho-C}}}{\neq }\text{\underline{$\mathbf{k}$}}_{1}^{({1})}
\quad\Longrightarrow\quad
\underline{\mathbf{h}}_{r,1}^{(h_{r}\neq 0,{0})}\underset{%
\text{(\ref{Gen-ortho-C})}}{\neq }{\text{\underline{$\mathbf{k}$}}_{1}^{({0}%
)},}
\end{equation}%
where the l.h.s.\ is our starting point assumption once we fix $%
h_{1}=1$, holds independently from the value of $r$. Note that we have
used the notations ${\text{\underline{$\mathbf{h}$}}}_{r,1}^{(h_{r}\neq
0,1)} $ and $\underline{\mathbf{h}}_{r,1}^{(h_{r}\neq 0,{0})}$, as these
correctly reduce to ${\text{\underline{$\mathbf{h}$}}}_{1}^{(1)}$ and $%
\underline{\mathbf{h}}_{1}^{({0})}$ for $r=1$, respectively. 

Now, if $s\neq 1$, we have by definition $h_{1}\neq k_{1}=0$
in ${\underline{\mathbf{h}}}_{1,r,s}^{(h_{1}\neq 0,h_{r}\neq 0,{h_{s}-1})}$
so the proof of (\ref{Dis2}) follows once again the same lines as those of \eqref{Dis1}.
This proves equation \eqref{Dis2}.

Returning to \eqref{eq:first_interp} and using the fact that in ${\text{\underline{$\mathbf{k}$}}}_{1}^{(0)}$ there
are exactly $m$ entries with $k=1$, equations \eqref{Dis1}, \eqref{Dis2} and the assumed orthogonality give
\begin{equation}
\langle \underline{\mathbf{h}}_{1,r}^{\left( h_{1}\neq 0,h_{r}\neq 0\right)
}|{\text{\underline{$\mathbf{k}$}}}_{1}^{(0)}{\rangle =0}
\quad \text{and} \quad 
\left( \delta_{h_{s},1}+\delta _{h_{s},2}\right) \langle {\text{\underline{$\mathbf{h}$}}}%
_{1,r,s}^{(h_{1}\neq 0,h_{r}\neq 0,{h_{s}-1})}|{\text{\underline{$\mathbf{k}$%
}}}_{1}^{(0)}{\rangle }=0,
\end{equation}%
and we are left with
\begin{equation}
\langle \underline{\mathbf{h}}|\mathsf{T}_{2}(\xi _{1})|{\text{\underline{$%
\mathbf{k}$}}}_{1}^{(0)}{\rangle }\left. \underset{\scriptscriptstyle\text{UpC}}{=}\right.
\sum_{p=1}^{\mathsf{N}}\delta _{h_{p},0}\mathsf{c}_{p}\langle {\text{%
\underline{$\mathbf{h}$}}}_{1,r,p}^{(h_{1}\neq 0,h_{r}\neq 0,{1})}{|}\mathsf{%
T}_{1}(\xi _{p}^{\left( 1\right) })|{\text{\underline{$\mathbf{k}$}}}%
_{1}^{(0)}{\rangle }.  \label{Step-a1}
\end{equation}%
Note that it similarly holds:%
\begin{equation}
{\text{\underline{$\mathbf{h}$}}}_{1,r,p}^{(h_{1}\neq 0,h_{r}\neq 0,{1})}%
\underset{\text{(\ref{Gen-ortho-C})}}{\neq }{\text{\underline{$\mathbf{k}$}}}%
_{1}^{(0)},  \label{Dis3}
\end{equation}%
as $h_{1}=1$ or $2$ does not coincide with $k_{1}=0$ and $p\neq 1$ and $r$,
being associated to the condition $\delta _{h_{p},0}=1$. 

Defining $\{h_{1}^{\prime },...,h_{\mathsf{N}}^{\prime }\}\equiv {\text{\underline{$%
\mathbf{h}$}}}_{1,r,p}^{(h_{1}\neq 0,h_{r}\neq 0,{1})}$, we can now perform a second interpolation:%
\begin{align}
& \langle {\text{\underline{$\mathbf{h}$}}}_{1,r,p}^{(h_{1}\neq 0,h_{r}\neq
0,{1})}{|}\mathsf{T}_{1}(\xi _{p}^{\left( 1\right) })|{\text{\underline{$%
\mathbf{k}$}}}_{1}^{(0)}{\rangle } \notag \\
& \left. \underset{\scriptscriptstyle\text{UpC}}{=}\right. \mathsf{t}_{1}\langle {\text{\underline{%
$\mathbf{h}$}}}_{1,r,p}^{(h_{1}\neq 0,h_{r}\neq 0,{1})}|{\text{\underline{$%
\mathbf{k}$}}}_{1}^{(0)}{\rangle }+\langle {\text{\underline{$\mathbf{h}$}}}%
_{1,r,p}^{(h_{1}\neq 0,h_{r}\neq 0,{1})}{|}\sum_{q=1}^{\mathsf{N}}\mathsf{T}%
_{1}(\xi _{q}^{(\delta _{h_{q}^{\prime },2})})|{\text{\underline{$\mathbf{k}$%
}}}_{1}^{(0)}{\rangle }  \label{Interpolation-F-T1} \\
& \left. \underset{\scriptscriptstyle\text{UpC}}{=}\right. \sum_{q=1}^{\mathsf{N}}\left( \delta
_{h_{q}^{\prime },0}+\delta _{h_{q}^{\prime },1}\right) \mathsf{c}%
_{q}^{\delta _{h_{q}^{\prime },0}}\langle {\text{\underline{$\mathbf{h}$}}}%
_{1,r,p,q}^{(h_{1}\neq 0,h_{r}\neq 0,{1,h_{q}^{\prime }+1})}|{\text{%
\underline{$\mathbf{k}$}}}_{1}^{(0)}{\rangle }+\sum_{q=1}^{\mathsf{N}}\delta
_{h_{q}^{\prime },2}\langle {\text{\underline{$\mathbf{h}$}}%
_{1,r,p,q}^{(h_{1}\neq 0,h_{r}\neq 0,{1,1})}|}\mathsf{T}_{2}(\xi _{q})|{%
\text{\underline{$\mathbf{k}$}}}_{1}^{(0)}{\rangle } \\
& \left. \underset{\scriptscriptstyle\text{UpC}}{=}\right. \sum_{r=1}^{\mathsf{N}}\delta
_{h_{q},2}\langle {\text{\underline{$\mathbf{h}$}}_{1,r,p,q}^{(h_{1}\neq
0,h_{r}\neq 0,{1,1})}|}\mathsf{T}_{2}(\xi _{q})|{\text{\underline{$\mathbf{k}
$}}}_{1}^{(0)}{\rangle },  \label{Step-a2}
\end{align}
where (\ref{Step-a2}) follows as it holds:%
\begin{equation}
{\text{\underline{$\mathbf{h}$}}}_{1,r,p,q}^{(h_{1}\neq 0,h_{r}\neq 0,{%
1,h_{q}^{\prime }+1})}\underset{\text{(\ref{Gen-ortho-C})}}{\neq }{\text{%
\underline{$\mathbf{k}$}}}_{1}^{(0)},\quad \text{for any $q$ such that }\left( \delta
_{h_{q}^{\prime },0}+\delta _{h_{q}^{\prime },1}\right) =1,  \label{Dis4}
\end{equation}%
while we have suppressed the prime notation in the last line of \eqref{Step-a2}, 
as $h_{q}^{\prime }=2$ iff.\ $h_{q}=2$. Indeed, $q=1$ is possible
iff.\ $h_{1}=1$ and then ${\text{\underline{$\mathbf{h}$}}}_{1,r,p,q}^{(h_{1}%
\neq 0,h_{r}\neq 0,{1,h_{q}^{\prime }+1})}={\text{\underline{$\mathbf{h}$}}}%
_{r,p,1}^{(h_{r}\neq 0,{1,2})}$. Then the component 1 of ${\text{\underline{$%
\mathbf{h}$}}}_{r,p,1}^{(h_{r}\neq 0,{1,2})}$ is $2\neq k_{1}=0$, as $p\neq
1,r$, so we can argue the proof of (\ref{Dis4}) as done for the proof
of (\ref{Dis1}). Instead, if $q\neq 1$, $h_{1}$ is not modified in ${%
\text{\underline{$\mathbf{h}$}}}_{1,r,p,q}^{(h_{1}\neq 0,h_{r}\neq 0,{%
1,h_{q}^{\prime }+1})}$ so it stays $h_{1}\neq k_{1}=0$, and once again
we can argue the proof of (\ref{Dis4}) as done for the proof of (\ref{Dis1}%
). 
Collecting the results of the two interpolation expansions, we get the wanted formula \eqref{Recursion2} of the Lemma.

Let us now remark that from the fact that ${\text{\underline{$\mathbf{h}$}}%
_{1,r}^{(h_{1}\neq 0,h_{r}\neq 0)}}$ satisfies (\ref{Ortho-DisEq}) with $%
\underline{\mathbf{k}}$ of the form (\ref{K-m+1}), then ${\text{\underline{$%
\mathbf{h}$}}_{1,r,p,q}^{(h_{1}\neq 0,h_{r}\neq 0,{1,1})}}$ satisfies (\ref%
{Ortho-DisEq}) with the same $\underline{\mathbf{k}}$. Moreover, we have
that ${\text{\underline{$\mathbf{h}$}}_{1,r,p,q}^{(h_{1}\neq 0,h_{r}\neq 0,{%
1,1})}}$ satisfies (\ref{Ortho-DisEq}) with ${\text{\underline{$\mathbf{k}$}}%
}_{1}^{(0)}$, as it stays true that the component one of ${\text{\underline{$%
\mathbf{h}$}}_{1,r,p,q}^{(h_{1}\neq 0,h_{r}\neq 0,{1,1})}}$ is non-zero,
independently from the value of $p\in \{2,...,\mathsf{N}\}\backslash \{r\}$
and of $q\in \{1,...,\mathsf{N}\}\backslash \{p\}$. Then, all the terms $%
\langle {\text{\underline{$\mathbf{h}$}}_{1,r,p,q}^{(h_{1}\neq 0,h_{r}\neq 0,%
{1,1})}|}\mathsf{T}_{2}(\xi _{q})|{\text{\underline{$\mathbf{k}$}}}_{1}^{(0)}%
{\rangle }$ on the r.h.s. of (\ref{Recursion2}) can be expanded once again
according to the same formula (\ref{Recursion2}), as ${\text{\underline{$%
\mathbf{h}$}}_{1,r,p,q}^{(h_{1}\neq 0,h_{r}\neq 0,{1,1})}}$ behaves exactly
like a ${\text{\underline{$\mathbf{h}$}}_{1,r^{\prime }}^{(h_{1}\neq 0{,1})}}
$ with $r^{\prime }=q$ and $h_{r^{\prime }}=1\neq 0$. 
This ensure that this is a recursive formula.
\end{proof}

The previous lemma implies the following:

\begin{corollary}
Under the same assumptions on $\underline{\mathbf{h}}_{1,r}^{\left(
h_{1}\neq 0,h_{r}\neq 0\right) }$ and $\underline{\mathbf{k}}$ as in the
previous lemma, the following orthogonality condition holds%
\begin{equation}
\langle \underline{\mathbf{h}}_{1,r}^{\left( h_{1}\neq 0,h_{r}\neq 0\right)
}|\mathsf{T}_{2}(\xi _{r})|{\text{\underline{$\mathbf{k}$}}_{1}^{\left(
0\right) }\rangle }=0{,}
\end{equation}%
for any fixed $r\in \{1,...,\mathsf{N}\}$.
\end{corollary}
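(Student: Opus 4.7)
The plan is to iterate the recursion formula \eqref{Recursion2} established in the preceding lemma and argue that it must terminate at an identically vanishing expression. The crucial structural features of \eqref{Recursion2} are that (i) only positions $p$ with $h_p=0$ and positions $q$ with $h_q=2$ contribute, and (ii) the resulting tuples $\underline{\mathbf{h}}_{1,r,p,q}^{(h_1\neq0,h_r\neq0,1,1)}$ replace one entry equal to $0$ and one entry equal to $2$ by two entries equal to $1$, while leaving positions $1$ and $r$ untouched. As the lemma itself observes, the new tuple on the right-hand side still satisfies all hypotheses required to apply the same formula again with $r^\prime=q$, so the recursion is genuine.

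First, I would introduce the statistic $N(\underline{\mathbf{h}})\equiv \min\big(\#\{i:h_i=0\},\ \#\{i:h_i=2\}\big)$, and observe that each application of \eqref{Recursion2} produces only tuples with $N(\underline{\mathbf{h}}')=N(\underline{\mathbf{h}})-1$, since one $0$ and one $2$ are simultaneously replaced by two $1$'s. Iterating the recursion at most $N(\underline{\mathbf{h}}_{1,r}^{(h_1\neq0,h_r\neq0)})+1$ times, we reach a stage at which every surviving tuple on the right-hand side has either no entry equal to $0$ or no entry equal to $2$; in such tuples the double sum in \eqref{Recursion2} is empty because the constraints $\delta_{h_p,0}\delta_{h_q,2}=1$ cannot be satisfied simultaneously. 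Hence each term contributes zero and the original matrix element vanishes.

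The only points requiring minor attention are the preservation of the hypotheses of the lemma at each step of the recursion and the verification that the recursion can be applied with the choice $r^\prime=q$ regardless of whether $q=1$ or $q=r$. Both are essentially built into the previous lemma: the component at position $1$ remains nonzero throughout (since $p\neq 1$ and the update at $q$ only changes a $2$ into a $1$), and the hypothesis ``$h_{r^\prime}\neq 0$'' is automatic because the entry at $q$ becomes $1$. The expected main obstacle is simply bookkeeping: making sure that the inequality \eqref{Gen-ortho-C} continues to hold for the new tuples with respect to $\underline{\mathbf{k}}_1^{(0)}$, so that the orthogonality assumption of the induction can legitimately be invoked to discard the ``diagonal'' terms $\mathsf{t}_2\langle\cdot\,|\cdot\rangle$ and the contributions proportional to $\delta_{h_s,1}+\delta_{h_s,2}$ at every intermediate step. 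Once this is checked, as in the proof of \eqref{Dis1}--\eqref{Dis4}, the conclusion follows immediately.
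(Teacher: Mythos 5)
Your proposal is correct and follows essentially the same route as the paper: iterate the recursion \eqref{Recursion2}, note that each step trades one entry $0$ and one entry $2$ for two entries $1$, and terminate when the double sum becomes empty because no pair $(p,q)$ with $h_p=0$ and $h_q=2$ survives. Your explicit decreasing statistic $N(\underline{\mathbf{h}})$ is just a slightly more formal version of the paper's ``apply it over and over until there are no $h=2$ or $h=0$'' termination argument.
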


\begin{proof}
If $\underline{\mathbf{h}}_{1,r}^{\left( h_{1}\neq 0,h_{r}\neq
0\right) }$ does not contain $h=2$ or $h=0$, this is proven by
applying once the recursion formula. 
Otherwise, a first application of the recusrion generate the $\underline{\mathbf{h}}_{1,r,p,q}^{\left(
h_{1}\neq 0,h_{r}\neq 0,1,1\right) }$ where we have reduced by one unit the
number of $h=2$ and the number of $h=0$, while we have increased by two unit
the number of $h=1$, transforming $\underline{\mathbf{h}}$ like $h_{p}=0\rightarrow h_{p}^{\prime }=1$ and $%
h_{q}=2\rightarrow h_{q}^{\prime }=1$. Then, if $\underline{\mathbf{h}}%
_{1,r,p,q}^{\left( h_{1}\neq 0,h_{r}\neq 0,1,1\right) }$ does not contain $%
h=2$ or $h=0$, the orthogonality is proven just by applying once again the
recursion formula. Otherwise, we can continue to apply it over and over until there are no $h=2$
or $h=0$ in the index of the SoV co-vectors involved. 
This proves the above Corollary.
\end{proof}

Let us now perfom the induction step over the number $m$ of $k_a=1$ in the vector $\ket{\underline{\mathbf{k}}}$.
Let $\underline{\mathbf{h}}$ be the generic element of $\{0,1,2\}^{\mathsf{N}}$
satisfying (\ref{Ortho-DisEq}) with a fixed $\underline{\mathbf{k}}$ of the
form (\ref{K-m+1}). If $h_{1}\neq 0$, then the orthogonality condition reads%
\begin{equation}
0=\langle \underline{\mathbf{h}}|{\text{\underline{$\mathbf{k}$}}\rangle =}%
\langle \underline{\mathbf{h}}_{1}^{(h_{1}\neq 0)}|\mathsf{T}_{2}(\xi _{1})|{%
\text{\underline{$\mathbf{k}$}}}_{1}^{({0})}{\rangle ,}
\end{equation}%
which follows by a direct application of the above corollary. 
If $h_{1}=0$, it holds
\begin{equation*}
\langle \underline{\mathbf{h}}_{1}^{(0)}|{\text{\underline{$\mathbf{k}$}}%
\rangle =}\langle \underline{\mathbf{h}}_{1}^{(0)}|\mathsf{T}_{2}(\xi _{1})|{%
\text{\underline{$\mathbf{k}$}}}_{1}^{({0})}{\rangle }\left. =\right. 
\mathsf{c}_{1}\langle {\text{\underline{$\mathbf{h}$}}}_{1}^{({1})}{|}%
\mathsf{T}_{1}(\xi _{1}^{(1)})|\underline{\mathbf{k}}_{1}^{\left( 0\right)
}\rangle,
\end{equation*}%
and so we use the following interpolation%
\begin{equation}
\begin{aligned}
\langle {\text{\underline{$\mathbf{h}$}}}_{1}^{({1})}{|}\mathsf{T}_{1}(\xi
_{1}^{(1)})|\underline{\mathbf{k}}_{1}^{\left( 0\right) }\rangle 
&\underset{\scriptscriptstyle\text{UpC}}{=} \mathsf{t}_{1}\langle {\text{\underline{%
$\mathbf{h}$}}}_{1}^{({1})}|{\text{\underline{$\mathbf{k}$}}}_{1}^{({0})}{%
\rangle }+\langle {\text{\underline{$\mathbf{h}$}}}_{1}^{({1})}{|}%
\sum_{s=1}^{\mathsf{N}}\mathsf{T}_{1}(\xi _{s}^{(\delta _{h_{s}^{\prime
},2})})|{\text{\underline{$\mathbf{k}$}}}_{1}^{({0})}{\rangle } \\
&\underset{\scriptscriptstyle\text{UpC}}{=} \sum_{s=1}^{\mathsf{N}}(\delta
_{h_{s}^{\prime },0}+\delta _{h_{s}^{\prime },1})\mathsf{c}_{s}^{\delta
_{h_{s}^{\prime },0}}\langle {\text{\underline{$\mathbf{h}$}}}_{1,s}^{({1,h}%
^{\prime }{_{s}+1})}|{\text{\underline{$\mathbf{k}$}}}_{1}^{({0})}{\rangle }%
+\sum_{s=2}^{\mathsf{N}}\delta _{h_{s}^{\prime },2}\langle {\text{\underline{%
$\mathbf{h}$}}}_{1,s}^{({1,1})}{|}\mathsf{T}_{2}(\xi _{s})|{\text{\underline{%
$\mathbf{k}$}}}_{1}^{({0})}{\rangle },
\end{aligned}%
\end{equation}
where we have defined $\{h_{1}^{\prime },...,h_{\mathsf{N}}^{\prime }\}={%
\text{\underline{$\mathbf{h}$}}}_{1}^{({1})}$. From the assumed
orthogonality (i.e.\ the induction hypothesis) we get
\begin{equation}
\langle \underline{\mathbf{h}}_{1}^{(0)}|{\text{\underline{$\mathbf{k}$}}%
\rangle =}\langle \underline{\mathbf{h}}_{1}^{(0)}|\mathsf{T}_{2}(\xi _{1})|{%
\text{\underline{$\mathbf{k}$}}}_{1}^{({0})}{\rangle }\left. \underset{\scriptscriptstyle\text{UpC}%
}{=}\right. \mathsf{c}_{1}\sum_{s=1}^{\mathsf{N}}\delta _{h_{s},2}\langle {%
\text{\underline{$\mathbf{h}$}}}_{1,s}^{({1,1})}{|}\mathsf{T}_{2}(\xi _{s})|{%
\text{\underline{$\mathbf{k}$}}}_{1}^{({0})}{\rangle },  \label{Recursion-3}
\end{equation}%
being%
\begin{equation}
{\text{\underline{$\mathbf{h}$}}}_{1,s}^{(1,h_{s}^{\prime }+1)}\underset{%
\text{(\ref{Gen-ortho-C})}}{\neq }\text{\underline{$\mathbf{k}$}}%
_{1}^{\left( 0\right) }, \quad \text{for any $s$ such that }\delta _{h_{s}^{\prime },0}+\delta
_{h_{s}^{\prime },1}=1.  \label{Dis5}
\end{equation}%
Indeed, for $s=1$ it holds $h_{1}^{\prime }=1$ and so $h_{1}^{\prime
}+1=2\neq k_{1}=0$, so we can argue the proof of \eqref{Dis5} as done
for the proof of \eqref{Dis1}. While for $s\neq 1$ it stays $h_{1}^{\prime
}=1$ so we have $h_{1}^{\prime }\neq k_{1}=0$, and once again the proof
of \eqref{Dis5} is done as that of \eqref{Dis1}.

\bigskip
It remains to observe that the terms at the r.h.s.\ of \eqref{Recursion-3} satisfy the requirements of the previous corollary. 
This completes the proof by the induction of the pseudo-orthogonality \eqref{pseudo-ortho}.

Note that the proven orthogonality also implies that the above lemma and
corollary indeed hold for any $m\leq \mathsf{N}-1$.

\subsection{Non-zero SoV co-vector/vector couplings}
\label{ssec:non-zero-sov-couplings}

\subsubsection{Nondiagonal elements from diagonal ones}

The orthogonality conditions implied in the formula (\ref{pseudo-ortho}) of
the Theorem \protect\ref{Central-T-det-non0} have been proven in the previous subsection. Here, we
complete the proof of this formula for the non-zero matrix elements $\langle 
$\underline{$\mathbf{h}$}$|{\text{\underline{$\mathbf{k}$}}\rangle }$ with
their expressions in terms of the diagonal ones $\langle $\underline{$%
\mathbf{k}$}$|{\text{\underline{$\mathbf{k}$}}\rangle }$ and the power
dependence w.r.t.\ $\mathsf{c}=\text{det}K$.

More precisely, let us assume that there are $m$ $k=1$ in $|{\text{%
\underline{$\mathbf{k}$}}\rangle }$, let us say%
\begin{equation}
k_{\pi _{1}}=k_{\pi _{2}}=\cdots =k_{\pi _{m}}=1,
\end{equation}%
then we want to show that it holds%
\begin{equation}
\langle \underline{\mathbf{h}}|\underline{\mathbf{k}}_{\pi _{1},\ldots ,\pi
_{m}}^{(1,...,1)}\rangle =\mathsf{c}^{r+1}C_{\text{\underline{$\mathbf{h}$}}%
}^{\text{\underline{$\mathbf{k}$}}}\langle \underline{\mathbf{k}}|{\text{%
\underline{$\mathbf{k}$}}\rangle ,}  \label{NonDiag-Diag-0}
\end{equation}%
with $C_{\text{\underline{$\mathbf{h}$}}}^{\text{\underline{$\mathbf{k}$}}}$
non-zero and independent w.r.t. $\mathsf{c}$ for\footnote{%
Where we have introduced the notation \underline{$\mathbf{x}$}$%
_{r_{1},\ldots ,r_{m}}$ without the upper index values to indicate the $%
\mathsf{N}-m$-tuple obtained from the generic $\mathsf{N}$-tuple \underline{$%
\mathbf{x}$} removing the entries $\{r_{1},\ldots ,r_{m}\}\subset \{1,...,%
\mathsf{N}\}$.} \underline{$\mathbf{h}$}$_{\pi _{1},\ldots ,\pi _{m}}\left.
=\right. $\underline{$\mathbf{k}$}$_{\pi _{1},\ldots ,\pi _{m}}\in \{0,2\}^{%
\mathsf{N}-m}$ and%
\begin{equation}
(h_{\pi _{2a-1}},h_{\pi _{2a}})=(0,2),\quad \forall a\in \{1,...,r+1\}\quad 
\text{and}\quad h_{\pi _{s}}=1,\quad \forall s\in \{2r+3,...,m\}.
\end{equation}%
Moreover, the next lemmas completely characterize the coefficients $C_{\text{%
\underline{$\mathbf{h}$}}}^{\text{\underline{$\mathbf{k}$}}}$ in terms of
solutions to a derived recursion relations. 

Up to a reordering in the indices of the $\xi _{i}$, the generic case of $r+1
$ (0,2) couples in $\langle $\underline{$\mathbf{h}$}$|$, corresponding to $%
r+1$ (1,1) in $|{\underline{\mathbf{k}}\rangle }$, is equivalent to compute $%
\langle \underline{\mathbf{h}}_{1,2,3,...,2r}^{(0,2,\text{\underline{$%
\mathbf{p}$}})}|\underline{\mathbf{h}}_{1,2,3,...,2r+2}^{(1,1,\text{%
\underline{$\mathbf{q}$}})}\rangle $ in terms of $\langle \underline{\mathbf{%
h}}_{1,2,3,...,2r}^{(1,1,\text{\underline{$\mathbf{q}$}})}|\underline{%
\mathbf{h}}_{1,2,3,...,2r+2}^{(1,1,\text{\underline{$\mathbf{q}$}})}\rangle $%
, where:%
\begin{align}
\text{\underline{$\mathbf{p}$}}& =\{p_{1},...,p_{2r}\}\text{ with }%
p_{2a-i}=2(1-i),\quad \forall a\in \{1,...,r\},i\in \{0,1\}, \\
\text{\underline{$\mathbf{q}$}}& =\{q_{1},...,q_{2r}\}\text{ with }%
q_{2a-i}=1,\quad \forall a\in \{1,...,r\},i\in \{0,1\},
\end{align}%
while \underline{$\mathbf{h}$}$_{1,2,3,...,2r+2}\in \{0,1,2\}^{\mathsf{N}%
-2(r+1)}$. Then the following lemma holds:

\begin{lemma}
\label{Expansion-C-T2}Under the previous definition of the \underline{$%
\mathbf{p}$} and \underline{$\mathbf{q}$}, the following expansion holds:%
\begin{align}
&C_{\underline{\mathbf{h}}_{1,2,3,...,2r+2}^{(0,2,\text{\underline{$\mathbf{p}
$}})}}^{\underline{\mathbf{h}}_{1,2,3,...,2r+2}^{(1,1,\text{\underline{$%
\mathbf{q}$}})}} = \notag \\ 
&=\frac{\mathsf{c}^{-r}\,q-\text{det}M^{\left( I\right)
}(\xi _{1})}{\langle \underline{\mathbf{h}}_{1,2,3,...,2r+2}^{(1,1,\text{%
\underline{$\mathbf{q}$}})}|\underline{\mathbf{h}}_{1,2,3,...,2r+2}^{(1,1,%
\text{\underline{$\mathbf{q}$}})}{\rangle }}\left[ \prod_{a\neq 2,a=1}^{%
\mathsf{N}}\frac{(\xi _{1}^{\left( 1\right) }-\xi _{a}^{(\delta
_{h_{a}^{\prime },2})})}{(\xi _{2}^{\left( 1\right) }-\xi _{a}^{(\delta
_{h_{a}^{\prime },2})})}\langle \underline{\mathbf{h}}%
_{1,2,3,...,2r+2}^{(1,1,\text{\underline{$\mathbf{p}$}})}|\mathsf{T}_{2}(\xi
_{2})|\underline{\mathbf{h}}_{1,2,3,...,2r+2}^{(0,1,\text{\underline{$%
\mathbf{q}$}})}\rangle \right.   \notag \\
& \left. +\sum_{j=1}^{r}\prod_{a\neq 2j+2,a=1}^{\mathsf{N}}\frac{(\xi
_{1}^{\left( 1\right) }-\xi _{a}^{(\delta _{h_{a}^{\prime },2})})}{(\xi
_{2j+2}^{\left( 1\right) }-\xi _{a}^{(\delta _{h_{a}^{\prime },2})})}\langle 
\underline{\mathbf{h}}_{1,2,3,...,2r+2}^{(1,2,\text{\underline{$\mathbf{p}$}}%
_{2j}^{\left( 1\right) })}|\mathsf{T}_{2}(\xi _{2j+2})|\underline{\mathbf{h}}%
_{1,2,3,...,2r+2}^{(0,1,\text{\underline{$\mathbf{q}$}})}\rangle \right] ,
\label{Expansion-C-by-T2}
\end{align}%
where we have denoted 
\begin{equation}
\underline{\mathbf{h}}^{\prime }\equiv \{h_{1}^{\prime },...,h_{N}^{\prime
}\}=\underline{\mathbf{h}}_{1,2,3,...,2r+2}^{(1,1,\text{\underline{$\mathbf{p%
}$}})},
\end{equation}%
and for $r=0$ we get:%
\begin{equation}
C_{\text{\underline{$\mathbf{h}$}}_{1,2}^{(0,2)}}^{\underline{\mathbf{h}}%
_{1,2}^{(1,1)}}=\frac{d(\xi _{2}-\eta )}{d(\xi _{1}-\eta )}\frac{q-\text{det}%
M^{\left( I\right) }(\xi _{1})}{\eta ^{-2}(\xi _{1}-\xi _{2}+\eta )^{2}}%
\prod_{a\geq 3}^{\mathsf{N}}\frac{(\xi _{1}^{\left( 1\right) }-\xi
_{a}^{(\delta _{h_{a},2})})(\xi _{2}-\xi _{a}^{(1-\delta _{h_{a},0})})}{(\xi
_{2}^{\left( 1\right) }-\xi _{a}^{(\delta _{h_{a},2})})(\xi _{1}-\xi
_{a}^{(1-\delta _{h_{a},0})})}.  \label{eq:coeff_one_pair}
\end{equation}
\end{lemma}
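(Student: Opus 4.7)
The plan is to derive \eqref{Expansion-C-by-T2} by inserting $T_2^{(K)}(\xi_1^{(1)})$ between the covector and vector, expanding it via the interpolation formula \eqref{T-Func-form-m} at an appropriate choice of nodes, and then filtering the resulting sum using the pseudo-orthogonality \eqref{pseudo-ortho} already proved in Subsection \ref{ssec:orthogonality-proof}. First, by the SoV covector basis definition \eqref{Co-vector-SoV-basis-def}, since $h_1=0$ on the bra, one has $\langle \underline{\mathbf{h}}^{(0,2,\underline{\mathbf{p}})}| = \langle \underline{\mathbf{h}}^{(1,2,\underline{\mathbf{p}})}|\,T_2^{(K)}(\xi_1^{(1)})$, so the coupling becomes $\langle \underline{\mathbf{h}}^{(1,2,\underline{\mathbf{p}})}|\,T_2^{(K)}(\xi_1^{(1)})\,|\underline{\mathbf{h}}^{(1,1,\underline{\mathbf{q}})}\rangle$. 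The overall scalar $\mathsf{c}\cdot\operatorname{q-det}M^{(I)}(\xi_1)=\mathsf{c}_1$ appearing in the statement will ultimately come from the fusion identity $T_2^{(K)}(\xi_1^{(1)})\,T_1^{(K)}(\xi_1)=\mathsf{c}_1$ implicit in the manipulations that follow.

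I would then expand $T_2^{(K)}(\lambda)$ at $\lambda=\xi_1^{(1)}$ using \eqref{T-Func-form-m} with interpolation nodes $\{\xi_a^{(\delta_{h_a',2})}\}_{a=1}^{\mathsf{N}}$ where $\underline{\mathbf{h}}'=\underline{\mathbf{h}}^{(1,1,\underline{\mathbf{p}})}$ as in the lemma: this choice simultaneously ensures no node coincides with $\xi_1^{(1)}$ (so all terms are non-degenerate) and makes the Lagrange coefficients $g_{a,\underline{\mathbf{y}}}^{(2)}(\xi_1^{(1)})$ reduce to the precise product $\prod_{b\neq a}(\xi_1^{(1)}-\xi_b^{(\delta_{h_b',2})})/(\xi_a^{(\delta_{h_a',2})}-\xi_b^{(\delta_{h_b',2})})$ that appears in \eqref{Expansion-C-by-T2}. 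Each resulting contribution is a matrix element $\langle \underline{\mathbf{h}}^{(1,2,\underline{\mathbf{p}})}|\,T_2^{(K)}(\xi_a^{(\delta_{h_a',2})})\,|\underline{\mathbf{h}}^{(1,1,\underline{\mathbf{q}})}\rangle$ to be handled case by case. For odd $a\in\{3,5,\dots,2r+1\}$ with $h_a'=2$, the fusion $T_1^{(K)}(\xi_a)T_2^{(K)}(\xi_a^{(1)})=\mathsf{c}_a$ converts the right action into a shift $h_a:2\to 1$ carrying a factor $\mathsf{c}_a$, producing auxiliary couplings with strictly fewer $(0,2)$/$(1,1)$ discrepancies that feed back into the induction. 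For even $a=2j+2\in\{2,4,\dots,2r+2\}$, after using the vector basis identity $|\underline{\mathbf{h}}^{(1,1,\underline{\mathbf{q}})}\rangle = T_2^{(K)}(\xi_1)\,|\underline{\mathbf{h}}^{(0,1,\underline{\mathbf{q}})}\rangle$ (since $k_1=1$) and the covector identity absorbing the shifts $h_2:1\to 2$ and $h_{2j+2}:0\to 1$, the matrix element rearranges precisely into the stated form $\langle \underline{\mathbf{h}}^{(1,2,\underline{\mathbf{p}}_{2j}^{(1)})}|\,T_2^{(K)}(\xi_{2j+2})\,|\underline{\mathbf{h}}^{(0,1,\underline{\mathbf{q}})}\rangle$.

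The coefficient $C$ itself is extracted by dividing by $\mathsf{c}^{r+1}\mathsf{N}_{\underline{\mathbf{h}}^{(1,1,\underline{\mathbf{q}})}}$, as dictated by \eqref{pseudo-ortho}: the combined $\mathsf{c}$-counting from the initial $\mathsf{c}_1$ factor together with the $\mathsf{c}_a$'s produced by the odd-$a$ reductions collapses to the single prefactor $\mathsf{c}^{-r}\operatorname{q-det}M^{(I)}(\xi_1)$. The main obstacle is the bookkeeping in the third step: one must verify that the odd-$a$ auxiliary reductions assemble exactly into the sum indexed by $j$ (rather than producing extraneous terms), that the Lagrange factors reorganize into the claimed product over $a\neq 2j+2$, and that the power-counting in $\mathsf{c}$ is correct. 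The base case $r=0$ giving \eqref{eq:coeff_one_pair} falls out immediately: with $\underline{\mathbf{p}}=\underline{\mathbf{q}}=\emptyset$ there is only the single $a=2$ term, no odd-$a$ reductions intervene, and the Lagrange coefficient evaluates in closed form.
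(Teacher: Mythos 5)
Your overall strategy (absorb basis factors into transfer-matrix insertions, expand by interpolation, and let the already-proven pseudo-orthogonality annihilate terms) is the right spirit, but the specific route does not produce the structure of \eqref{Expansion-C-by-T2}, and the case analysis rests on incorrect premises. The paper's proof strips \emph{both} the $h_1=0$ factor of the bra \emph{and} the $k_1=1$ factor $T_2^{(K)}(\xi_1)$ of the ket, and uses the level-two fusion $T_2^{(K)}(\xi_1^{(1)})T_2^{(K)}(\xi_1)=\mathsf{c}_1\,T_1^{(K)}(\xi_1^{(1)})$ (not $T_2T_1=\mathsf{c}_1$ as you state) to rewrite the coupling as $\mathsf{c}_1\langle \underline{\mathbf{h}}^{(1,1,\underline{\mathbf{p}})}|\mathsf{T}_1(\xi_1^{(1)})\mathsf{T}_1(\xi_2)|\underline{\mathbf{h}}^{(0,1,\underline{\mathbf{q}})}\rangle$; it then interpolates the single operator $\mathsf{T}_1(\xi_1^{(1)})$. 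The level-one fusion $T_1^{(K)}(\xi_a^{(1)})T_1^{(K)}(\xi_a)=T_2^{(K)}(\xi_a)$, applied at $a=2$ (with the outer $\mathsf{T}_1(\xi_2)$) and at the even nodes $a=2j+2$ (with the bra factor $T_1^{(K)}(\xi_{2j+2})$ coming from $p_{2j}=2$), is what generates exactly the insertions $\mathsf{T}_2(\xi_2)$ and $\mathsf{T}_2(\xi_{2j+2})$ of the lemma, while the odd-node terms die by orthogonality after a single shift on the bra. In your scheme the ket stays $(1,1,\underline{\mathbf{q}})$ and you interpolate $T_2^{(K)}(\xi_1^{(1)})$ between $(1,2,\underline{\mathbf{p}})$ and $(1,1,\underline{\mathbf{q}})$: the natural evaluations then go through level-two fusions ($T_2T_2=\mathsf{c}_a T_1$, $T_2T_1=\mathsf{c}_a$), which create extra explicit $\mathsf{c}_a$ factors and, after you strip $T_2^{(K)}(\xi_1)$ from the ket, leave $T_2^{(K)}(\xi_1)$ — not $T_2^{(K)}(\xi_{2j+2})$ — as the middle insertion (e.g.\ your $a=2$ term is $\langle \underline{\mathbf{h}}^{(1,1,\underline{\mathbf{p}})}|T_1(\xi_2)T_2(\xi_2)T_2(\xi_1)|\underline{\mathbf{h}}^{(0,1,\underline{\mathbf{q}})}\rangle$, which is not the lemma's $\langle \underline{\mathbf{h}}^{(1,1,\underline{\mathbf{p}})}|\mathsf{T}_2(\xi_2)|\underline{\mathbf{h}}^{(0,1,\underline{\mathbf{q}})}\rangle$). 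You give no argument that these objects reorganize into the stated right-hand side, and the extra $\mathsf{c}_a$'s would spoil the $\mathsf{c}$-counting: in \eqref{Expansion-C-by-T2} the whole explicit $\mathsf{c}$-dependence is the single prefactor $\mathsf{c}^{-r}$, which is just $\mathsf{c}_1/\mathsf{c}^{r+1}$ from the definition of $C$, with no further collapsing needed.

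Two of your steps are also wrong as stated. First, for odd $a\in\{3,\dots,2r+1\}$ one has $h'_a=0$ (since $p_{2j-1}=0$), not $h'_a=2$, and the ket entries at positions $1,\dots,2r+2$ all equal $1$, so the shift ``$h_a:2\to1$ on the right'' you invoke never occurs; likewise your even-$a$ shifts ``$h_2:1\to2$'' and ``$h_{2j+2}:0\to1$'' do not match the actual bra $(1,2,\underline{\mathbf{p}})$, whose entries at those positions are already $2$. Moreover no induction is needed or used in this lemma: the pseudo-orthogonality is already available, the odd-node terms simply vanish, and the recursion structure only enters in the subsequent Lemma \ref{Recursion-C-T2}. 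Second, the $r=0$ formula does not ``fall out immediately'' from one Lagrange coefficient: \eqref{eq:coeff_one_pair} contains the ratio $d(\xi_2-\eta)/d(\xi_1-\eta)$ and a second product over $a\geq 3$, which arise from a \emph{second} interpolation, of $\mathsf{T}_2(\xi_2)$, needed to reduce $\langle \underline{\mathbf{h}}_{1,2}^{(1,1)}|\mathsf{T}_2(\xi_2)|\underline{\mathbf{h}}_{1,2}^{(0,1)}\rangle$ to the diagonal coupling $\langle \underline{\mathbf{h}}_{1,2}^{(1,1)}|\underline{\mathbf{h}}_{1,2}^{(1,1)}\rangle$.
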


\begin{proof}
By the definition of the coefficients  $C_{\text{\underline{$\mathbf{h}$}}}^{%
\text{\underline{$\mathbf{k}$}}}$ we have:%
\begin{equation}
C_{\underline{\mathbf{h}}_{1,2,3,...,2r+2}^{(0,2,\text{\underline{$\mathbf{p}
$}})}}^{\underline{\mathbf{h}}_{1,2,3,...,2r+2}^{(1,1,\text{\underline{$%
\mathbf{q}$}})}}=\frac{\langle \underline{\mathbf{h}}_{1,2,3,...,2r}^{(0,2,%
\text{\underline{$\mathbf{p}$}})}|\underline{\mathbf{h}}%
_{1,2,3,...,2r+2}^{(1,1,\text{\underline{$\mathbf{q}$}})}\rangle }{\mathsf{c}%
^{r+1}\langle \underline{\mathbf{h}}_{1,2,3,...,2r+2}^{(1,1,\text{\underline{%
$\mathbf{q}$}})}|\underline{\mathbf{h}}_{1,2,3,...,2r+2}^{(1,1,\text{%
\underline{$\mathbf{q}$}})}{\rangle }},
\end{equation}%
then formula (\ref{Expansion-C-by-T2}) follows by the following identity%
\begin{equation}
\langle \underline{\mathbf{h}}_{1,2,3,...,2r}^{(0,2,\text{\underline{$%
\mathbf{p}$}})}|\underline{\mathbf{h}}_{1,2,3,...,2r+2}^{(1,1,\text{%
\underline{$\mathbf{q}$}})}\rangle =\mathsf{c}_{1}\langle \underline{\mathbf{%
h}}_{1,2,3,...,2r+2}^{(1,1,\text{\underline{$\mathbf{p}$}})}|\mathsf{T}%
_{1}(\xi _{1}^{\left( 1\right) })\mathsf{T}_{1}(\xi _{2})|\underline{\mathbf{%
h}}_{1,2,3,...,2r+2}^{(0,1,\text{\underline{$\mathbf{q}$}})}\rangle ,
\end{equation}%
once we make an interpolation expansion of $\mathsf{T}_{1}(\xi _{1}^{\left(
1\right) })$. More in detail, up to the coefficients, we use the
interpolation identity  
\begin{equation}
\mathsf{T}_{1}(\xi _{1}^{\left( 1\right) })\underset{\scriptscriptstyle\text{%
UpC}}{=}\mathsf{t}_{1}+\mathsf{T}_{1}(\xi _{1})+\sum_{s\geq 2}^{\mathsf{N}}%
\mathsf{T}_{1}(\xi _{s}^{\left( \delta _{h_{s},2}\right) }),
\label{expansion-1}
\end{equation}%
from which it follows 
\begin{align}
& \langle \underline{\mathbf{h}}_{1,2,...,2r+2}^{(1,1,\text{\underline{$%
\mathbf{p}$}})}|\mathsf{T}_{1}(\xi _{1}^{\left( 1\right) })\mathsf{T}%
_{1}(\xi _{2})|\underline{\mathbf{h}}_{1,2,...,2r+2}^{(0,1,\text{%
\underline{$\mathbf{q}$}})}\rangle =   \notag \\
& \underset{\scriptscriptstyle\text{UpC}}{=}\mathsf{t}_{1}\langle \underline{%
\mathbf{h}}_{1,2,...,2r+2}^{(1,2,\text{\underline{$\mathbf{p}$}})}|%
\underline{\mathbf{h}}_{1,2,...,2r+2}^{(0,1,\text{\underline{$\mathbf{q}$}}%
)}\rangle +\langle \underline{\mathbf{h}}_{1,2,...,2r+2}^{(2,2,\text{%
\underline{$\mathbf{p}$}})}|\underline{\mathbf{h}}_{1,2,...,2r+2}^{(0,1,%
\text{\underline{$\mathbf{q}$}})}\rangle +\langle \underline{\mathbf{h}}%
_{1,2,...,2r+2}^{(1,1,\text{\underline{$\mathbf{p}$}})}|\mathsf{T}_{2}(\xi
_{2})|\underline{\mathbf{h}}_{1,2,...,2r+2}^{(0,1,\text{\underline{$%
\mathbf{q}$}})}\rangle   \notag \\
& 
+\sum_{a=1}^{r}%
\sum_{i=0}^{1}\langle \underline{\mathbf{h}}_{1,2,...,2r+2}^{(1,2,\text{%
\underline{$\mathbf{p}$}})}|\mathsf{T}_{1}(\xi _{2a+2-i}^{\left( 1-i\right)
})|\underline{\mathbf{h}}_{1,2,...,2r+2}^{(0,1,\text{\underline{$\mathbf{q}
$}})}\rangle +\sum_{s=1+2(r+1)}^{\mathsf{N}}\langle \underline{\mathbf{h}}%
_{1,2,...,2r+2}^{(1,2,\text{\underline{$\mathbf{p}$}})}|\mathsf{T}_{1}(\xi
_{s}^{\left( \delta _{h_{s},2}\right) })|\underline{\mathbf{h}}%
_{1,2,...,2r+2}^{(0,1,\text{\underline{$\mathbf{q}$}})}\rangle  \\
& \underset{\scriptscriptstyle\text{UpC}}{=}\langle \underline{\mathbf{h}}%
_{1,2,...,2r+2}^{(1,1,\text{\underline{$\mathbf{p}$}})}|\mathsf{T}_{2}(\xi
_{2})|\underline{\mathbf{h}}_{1,2,...,2r+2}^{(0,1,\text{\underline{$%
\mathbf{q}$}})}\rangle +\sum_{j=1}^{r}\langle \underline{\mathbf{h}}%
_{1,2,...,2r+2}^{(1,2,\text{\underline{$\mathbf{p}$}}_{2j}^{\left(
1\right) })}|\mathsf{T}_{2}(\xi _{2j+2})|\underline{\mathbf{h}}%
_{1,2,...,2r+2}^{(0,1,\text{\underline{$\mathbf{q}$}})}\rangle ,
\label{r-T2}
\end{align}%
Indeed, from the previous orthogonality conditions, we have:%
\begin{equation}
\langle \underline{\mathbf{h}}_{1,2,3,...,2r+2}^{(1,2,\text{\underline{$%
\mathbf{p}$}})}|\underline{\mathbf{h}}_{1,2,3,...,2r+2}^{(0,1,\text{%
\underline{$\mathbf{q}$}})}\rangle =0,\quad \langle \underline{\mathbf{h}}%
_{1,2,3,...,2r+2}^{(2,2,\text{\underline{$\mathbf{p}$}})}|\underline{\mathbf{%
h}}_{1,2,3,...,2r+2}^{(0,1,\text{\underline{$\mathbf{q}$}})}\rangle =0,
\end{equation}%
and%
\begin{equation}
\langle \underline{\mathbf{h}}_{1,2,3,...,2r+2}^{(1,2,\text{\underline{$%
\mathbf{p}$}})}|\mathsf{T}_{1}(\xi _{2a+1})|\underline{\mathbf{h}}%
_{1,2,3,...,2r+2}^{(0,1,\text{\underline{$\mathbf{q}$}})}\rangle \left.
=\right. \mathsf{c}_{2a+1}\langle \underline{\mathbf{h}}%
_{1,2,3,...,2r+2}^{(1,2,\text{\underline{$\mathbf{p}$}}_{2a-1}^{\left(
1\right) })}|\underline{\mathbf{h}}_{1,2,3,...,2r+2}^{(0,1,\text{\underline{$%
\mathbf{q}$}})}\rangle \left. =\right. 0,
\end{equation}%
for any $1\leq a\leq r$. Also, for $s\geq 2r+3$ and $h_{s}=0,1$, we have:%
\begin{equation}
\langle \underline{\mathbf{h}}_{1,2,3,...,2r+2}^{(1,2,\text{\underline{$%
\mathbf{p}$}})}|\mathsf{T}_{1}(\xi _{s})|\underline{\mathbf{h}}%
_{1,2,3,...,2r+2}^{(0,1,\text{\underline{$\mathbf{q}$}})}\rangle \left.
=\right. \mathsf{c}_{s}^{\delta _{h_{s},0}}\langle \underline{\mathbf{h}}%
_{1,2,3,...,2r+2,s}^{(1,2,\text{\underline{$\mathbf{p}$}},h_{s}+1)}|%
\underline{\mathbf{h}}_{1,2,3,...,2r+2,s}^{(0,1,\text{\underline{$\mathbf{q}$%
}},h_{s})}\rangle =0,
\end{equation}%
as well as for $s\geq 2r+3$ and $h_{s}=2$ we have:%
\begin{equation}
\langle \underline{\mathbf{h}}_{1,2,3,...,2r+2}^{(1,2,\text{\underline{$%
\mathbf{p}$}})}|\mathsf{T}_{1}(\xi _{s}^{\left( 1\right) })|\underline{%
\mathbf{h}}_{1,2,3,...,2r+2}^{(0,1,\text{\underline{$\mathbf{q}$}})}\rangle
=\langle \underline{\mathbf{h}}_{1,2,3,...,2r+2,s}^{(1,2,\text{\underline{$%
\mathbf{p}$}},2)}|\underline{\mathbf{h}}_{1,2,3,...,2r+2,s}^{(0,1,\text{%
\underline{$\mathbf{q}$}},1)}\rangle =0.
\end{equation}
So we are left only with the terms written in \eqref{r-T2} and our formula (%
\ref{Expansion-C-by-T2}) follows once we reintroduce the missing
interpolation coefficients of the formula (\ref{expansion-1}).

Let us now compute explicitly the case with only one couple of $(0,2)$,
i.e.\ the case $r=0$. Formula (\ref{Expansion-C-by-T2}) reads:%
\begin{equation}
C_{\underline{\mathbf{h}}_{1,2}^{(0,2)}}^{\underline{\mathbf{h}}%
_{1,2}^{(1,1)}}=\,q-\text{det}M^{\left( I\right) }(\xi _{1})\prod_{a\neq
2,a=1}^{\mathsf{N}}\frac{(\xi _{1}^{\left( 1\right) }-\xi _{a}^{(\delta
_{h_{a},2})})}{(\xi _{2}^{\left( 1\right) }-\xi _{a}^{(\delta _{h_{a},2})})}%
\frac{\,\langle \underline{\mathbf{h}}_{1,2}^{(1,1)}|\mathsf{T}_{2}(\xi
_{2})|\underline{\mathbf{h}}_{1,2}^{(0,1)}\rangle }{\langle \underline{%
\mathbf{h}}_{1,2}^{(1,1)}|\underline{\mathbf{h}}_{1,2}^{(1,1)}{\rangle }},
\end{equation}%
then by using the following, up to the coefficients, interpolation identity:
\begin{equation}
\mathsf{T}_{2}(\xi _{2})\underset{\scriptscriptstyle\text{UpC}}{=}\mathsf{t}%
_{2}+\mathsf{T}_{2}(\xi _{1})+\sum_{s\geq 2}^{\mathsf{N}}\mathsf{T}_{2}(\xi
_{s}^{\left( 1-\delta _{h_{s},0}\right) }),  \label{expansion-2}
\end{equation}%
we get: 
\begin{equation}
\langle \underline{\mathbf{h}}_{1,2}^{(1,1)}|\mathsf{T}_{2}(\xi _{2})|%
\underline{\mathbf{h}}_{1,2}^{(0,1)}\rangle \underset{\scriptscriptstyle%
\text{UpC}}{=}\langle \underline{\mathbf{h}}_{1,2}^{\left( 1,1\right) }|%
\underline{\mathbf{h}}_{1,2}^{\left( 1,1\right) }{\rangle },
\label{Step-1-T2}
\end{equation}%
as by the orthogonality conditions, proven in the previous subsection, it
holds:%
\begin{equation}
\langle \underline{\mathbf{h}}_{1,2}^{(1,1)}|\mathsf{T}_{2}(\xi _{s}^{\left(
1-\delta _{h_{s},0}\right) })|\underline{\mathbf{h}}_{1,2}^{(0,1)}\rangle =0,%
\text{ for any }s\geq 2.
\end{equation}%
Indeed, we have:%
\begin{equation}
\langle \underline{\mathbf{h}}_{1,2}^{(1,1)}|\mathsf{T}_{2}(\xi _{s}^{\left(
1-\delta _{h_{s},0}\right) })|\underline{\mathbf{h}}_{1,2}^{(0,1)}\rangle
=\left\{ 
\begin{array}{l}
\langle \underline{\mathbf{h}}_{1,2}^{(1,1)}|\underline{\mathbf{h}}%
_{1,2,s}^{(0,1,1)}\rangle =0\text{ if }h_{s}=0\,, \\ 
\mathsf{c}_{s}^{\delta _{h_{s},2}}\langle \underline{\mathbf{h}}%
_{1,2,s}^{(1,1,h_{s}-1)}|\underline{\mathbf{h}}_{1,2}^{(0,1)}\rangle =0\text{
if }h_{s}=1,2\,.%
\end{array}%
\right. 
\end{equation}

Then, reintroducing the missing interpolation coefficients in front to $%
T_{2}^{(K)}(\xi _{1})$ in (\ref{expansion-2}) we get our result (\ref%
{eq:coeff_one_pair}).
\end{proof}

Note that any term in the sum in \eqref{r-T2}, associated to a fixed $j\in
\{1,...,r\}$, is formally identical to the first term of \eqref{r-T2} up to
the exchange of indices $2$ and $2j+2$ in $\xi _{h}$.

The following lemma gives a recursive formula to compute the matrix elements
on the right hand side of (\ref{Expansion-C-by-T2}). To simplify the
notations, the lemma is formulated explicitly for the first matrix element
but it can be used similarly for the others matrix elements $\langle 
\underline{\mathbf{h}}_{1,2,3,...,2r+2}^{(1,2,\text{\underline{$\mathbf{p}$}}%
_{2j}^{\left( 1\right) })}|$\thinspace $\mathsf{T}_{2}(\xi _{2j+2})\,$\\ $|%
\underline{\mathbf{h}}_{1,2,3,...,2r+2}^{(0,1,\text{\underline{$\mathbf{q}$}}%
)}\rangle $,by exchanging the indices $2\leftrightarrow 2j+2$ in the $\xi
_{h}$, for every term involving the $j$ index.

\begin{lemma}
\label{Recursion-C-T2}Under the previous definition of the \underline{$%
\mathbf{p}$} and \underline{$\mathbf{q}$}, then for $r\geq 1$ the following
recursion formulae hold%
\begin{align}
\langle \underline{\mathbf{h}}_{1,2,3,...,2r+2}^{(1,1,\text{\underline{$%
\mathbf{p}$}})}|\mathsf{T}_{2}(\xi _{2})|\underline{\mathbf{h}}%
_{1,2,3,...,2r+2}^{(0,1,\text{\underline{$\mathbf{q}$}})}\rangle & =\mathsf{c%
}_{3}\mathsf{r}_{1,2}\sum_{s=1}^{r}\mathsf{s}_{1,2,3,2s}\langle \underline{%
\mathbf{h}}_{1,2,3,...,2r+2}^{(1,1,\text{\underline{$\mathbf{p}$}}%
_{1,2s}^{(1,1)})}|\mathsf{T}_{2}(\xi _{2s+2})|\underline{\mathbf{h}}%
_{1,2,3,...,2r+2}^{(1,1,\text{\underline{$\mathbf{q}$}}_{1}^{(0)})}\rangle  
\notag \\
+\sum_{a=1}^{r}\mathsf{c}_{2a+1}\mathsf{r}_{2a+1,2}& \left( \sum_{b=1}^{r}%
\mathsf{s}_{1,2,2a+1,2b}\langle \underline{\mathbf{h}}%
_{1,2,3,...,2r+2}^{(1,1,\text{\underline{$\mathbf{p}$}}_{2a-1,2b}^{\left(
1,1\right) })}|\mathsf{T}_{2}(\xi _{2b+2})|\underline{\mathbf{h}}%
_{1,2,3,...,2r+2}^{(0,1,\text{\underline{$\mathbf{q}$}})}\rangle \right) ,
\label{NonDiag-Diag}
\end{align}%
where:%
\begin{equation}
{}\mathsf{r}_{2a+1,2}=\frac{d(\xi _{2}^{\left( 1\right) })}{d(\xi
_{2a+1}^{\left( 1\right) })}\prod_{n=0}^{r}\frac{\xi _{2}-\xi _{2n+2}^{(1)}}{%
\xi _{2a+1}-\xi _{2n+2}^{(1)}}\prod_{\substack{ n=0 \\ n\neq a}}^{r}\frac{%
\xi _{2}-\xi _{2n+1}}{\xi _{2a+1}-\xi _{2n+1}}\prod_{2r+3\leq j\leq \mathsf{N%
}}\frac{\xi _{2}-\xi _{j}^{(1-\delta _{h_{j},0})}}{\xi _{2a+1}-\xi
_{j}^{(1-\delta _{h_{j},0})}},
\end{equation}%
and%
\begin{equation}
\mathsf{s}_{1,2,2a+1,2b}=\prod_{i=1}^{2}\frac{\xi _{2a+1}^{(1)}-\xi _{i}}{%
\xi _{2b+2}^{(1)}-\xi _{i}}\prod_{\substack{ n=1 \\ n\neq b}}^{r}\frac{\xi
_{2a+1}-\xi _{2n+2}}{\xi _{2b+2}-\xi _{2n+2}}\prod_{n=1}^{r}\frac{\xi
_{2a+1}^{(1)}-\xi _{2n+1}}{\xi _{2b+2}^{(1)}-\xi _{2n+1}}\prod_{2r+3\leq
j\leq \mathsf{N}}\frac{\xi _{2a+1}^{(1)}-\xi _{j}^{(\delta _{h_{j},2})}}{\xi
_{2b+2}^{(1)}-\xi _{j}^{(\delta _{h_{j},2})}},
\end{equation}%
with the following initial condition for $r=0$: 
\begin{equation}
\,\langle \underline{\mathbf{h}}_{1,2}^{(1,1)}|\mathsf{T}_{2}(\xi _{2})|%
\underline{\mathbf{h}}_{1,2}^{(0,1)}\rangle =\langle \underline{\mathbf{h}}%
_{1,2}^{(1,1)}|\underline{\mathbf{h}}_{1,2}^{(1,1)}{\rangle }\frac{d(\xi
_{2}-\eta )}{d(\xi _{1}-\eta )}\frac{\eta }{(\xi _{1}-\xi _{2}+\eta )^{2}}%
\prod_{a\geq 3}^{\mathsf{N}}\frac{\xi _{2}-\xi _{a}^{(1-\delta _{h_{a},0})}}{%
\xi _{1}-\xi _{a}^{(1-\delta _{h_{a},0})}}.  \label{Step-1-T2+Coef}
\end{equation}
\end{lemma}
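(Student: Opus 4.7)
The plan is to obtain the recursion by expanding $\mathsf{T}_{2}(\xi_{2})$ via the interpolation formula at carefully chosen shifted nodes, and then to kill most of the terms in the resulting sum by the pseudo-orthogonality conditions established in Subsection~\ref{ssec:orthogonality-proof}, together with the fusion identities $T_{1}(\xi_{n}^{(1)})T_{1}(\xi_{n})=T_{2}(\xi_{n})$ and $T_{1}(\xi_{n})T_{2}(\xi_{n}^{(1)})=\operatorname{q-det}M(\xi_{n})=\mathsf{c}_{n}$. The initial condition for $r=0$ is already provided by the computation leading to \eqref{Step-1-T2} in Lemma~\ref{Expansion-C-T2}, so only the inductive step $r\geq 1$ needs to be handled.

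First, I would write
\begin{equation}
\mathsf{T}_{2}(\xi_{2})\;\underset{\scriptscriptstyle\text{UpC}}{=}\;\mathsf{t}_{2}+\sum_{n=1}^{\mathsf{N}}\mathsf{T}_{2}\Big(\xi_{n}^{(\delta_{h'_{n},1}+\delta_{h'_{n},2})}\Big),
\end{equation}
where $\underline{\mathbf{h}}'$ is the tuple indexing the left SoV co-vector $\langle \underline{\mathbf{h}}_{1,2,3,\ldots,2r+2}^{(1,1,\underline{\mathbf{p}})}|$. The action of each term $\mathsf{T}_{2}$ on the left then produces a shifted SoV co-vector (possibly with a factor $\mathsf{c}_{n}^{\delta_{h'_{n},2}}$ via fusion), and I would insert $\langle \cdot | \mathsf{T}_{2}(\xi_{n}^{*}) | \cdot \rangle$ into the right SoV vector $| \underline{\mathbf{h}}_{1,2,3,\ldots,2r+2}^{(0,1,\underline{\mathbf{q}})}\rangle$. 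By pseudo-orthogonality, only those choices of $n$ and shift for which the resulting left-state label differs from the right-state label \emph{only through an allowed} $(0,2)\leftrightarrow(1,1)$ permutation (on the $1$'s of $\underline{\mathbf{q}}$) can contribute; all other terms vanish. A careful inspection shows that the surviving nodes are precisely $\xi_{2a+1}$ for $a=0,\ldots,r$ (the positions where the left label equals $2$ or, for $a=0$ and $n=2$, where a $T_{1}$ shift can restore the $(1,1)$ pattern via the first fusion identity), the former producing the factor $\mathsf{c}_{2a+1}$ while the latter produces $\mathsf{c}_{3}$ from acting on the triple $(h_{1},h_{2},h_{3})=(1,1,2)$.

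Next, for each surviving term I would rewrite the resulting matrix element as a new $\mathsf{T}_{2}$-matrix element between modified SoV states by re-applying an interpolation expansion of $T_{2}$ at the node $\xi_{2b+2}$ for $b=1,\ldots,r$, again reducing by pseudo-orthogonality to a single surviving term; this is the source of the inner sum over $b$ (and $s$). The explicit factors $\mathsf{r}_{2a+1,2}$ and $\mathsf{s}_{1,2,2a+1,2b}$ would then be recovered by tracking the interpolation coefficients $g_{n,\underline{\mathbf{h}}}^{(2)}$ and the ratios $d(\xi_{\bullet}^{(1)})/d(\xi_{\bullet}^{(1)})$ through the two successive expansions.

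The main obstacle will be the combinatorial bookkeeping: for every interpolation node one has to determine whether the induced shift of the left-state label lies in the exceptional set $\underline{\mathbf{k}}_{\alpha,\beta}^{(\underline{\mathbf{0}},\underline{\mathbf{2}})}$ of the pseudo-orthogonality relation \eqref{pseudo-ortho} applied to the \emph{right} state, and to verify that the non-trivial contributions organize exactly into the two sums in \eqref{NonDiag-Diag}, with the subtle distinction between the ``anchor'' index $1$ (giving the first sum with $\mathsf{c}_{3}$) and the generic indices $2a+1$ for $a\geq 1$ (giving the second sum with $\mathsf{c}_{2a+1}$). Once this accounting is done, the explicit forms of $\mathsf{r}_{2a+1,2}$ and $\mathsf{s}_{1,2,2a+1,2b}$ follow directly from the definition of the interpolation polynomials $g_{n,\underline{\mathbf{h}}}^{(2)}$, which is a routine computation.
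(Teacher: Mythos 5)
Your overall strategy --- interpolate the transfer matrices at nodes adapted to the SoV labels, kill terms with the orthogonality conditions of Subsection~\ref{ssec:orthogonality-proof}, and convert the survivors with the fusion identities --- is indeed the route the paper takes, and invoking \eqref{Step-1-T2} for the $r=0$ initial condition is correct. However, the concrete claims you make about \emph{which} terms survive and \emph{why} are wrong, and since that accounting is the entire content of the lemma, the proof as described would not produce \eqref{NonDiag-Diag}. With the stated convention $p_{2a-1}=0$, $p_{2a}=2$, the odd positions $2a+1$ ($a\geq 1$) of the left co-vector carry the label $0$, not $2$. The contributions at these nodes survive through $\langle\cdots h_{2a+1}{=}0\cdots|\mathsf{T}_{2}(\xi_{2a+1})=\mathsf{c}_{2a+1}\,\langle\cdots h_{2a+1}{=}1\cdots|\mathsf{T}_{1}(\xi_{2a+1}^{(1)})$, i.e.\ via $T_{2}^{(K)}(\xi^{(1)})T_{2}^{(K)}(\xi)=\mathsf{c}\,T_{1}^{(K)}(\xi^{(1)})$, which leaves a residual $\mathsf{T}_{1}$; it is the \emph{second} interpolation, applied to this $\mathsf{T}_{1}(\xi_{2a+1}^{(1)})$ rather than to a $\mathsf{T}_{2}$, that generates the inner sum over $b=1,\dots,r$ through $\langle\cdots h_{2b+2}{=}2\cdots|\mathsf{T}_{1}(\xi_{2b+2}^{(1)})=\langle\cdots h_{2b+2}{=}1\cdots|\mathsf{T}_{2}(\xi_{2b+2})$, and it does \emph{not} collapse to a single term. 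Conversely, the nodes where the left label actually equals $2$ (the even positions $2a+2$) produce terms $\mathsf{c}_{2a+2}\langle\underline{\mathbf{h}}_{1,2,\dots}^{(1,1,\underline{\mathbf{p}}_{2a}^{(1)})}|\underline{\mathbf{h}}_{1,2,\dots}^{(0,1,\underline{\mathbf{q}})}\rangle$ that vanish outright by orthogonality, contrary to your identification of the survivors.

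You have also mislocated the origin of the first sum carrying $\mathsf{c}_{3}$. The node at position $2$ contributes $\langle\underline{\mathbf{h}}_{1,2,\dots}^{(1,0,\underline{\mathbf{p}})}|\underline{\mathbf{h}}_{1,2,\dots}^{(0,1,\underline{\mathbf{q}})}\rangle=0$; what survives besides the $a$-sum is the overlap $\langle\underline{\mathbf{h}}_{1,2,\dots}^{(1,1,\underline{\mathbf{p}})}|\underline{\mathbf{h}}_{1,2,\dots}^{(1,1,\underline{\mathbf{q}})}\rangle$ coming from the node $\xi_{1}$ acting on the \emph{right} vector ($k_{1}:0\to1$). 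This is a genuine non-zero off-diagonal coupling with one fewer mismatched couple, it is not of the form $\langle\cdot|\mathsf{T}_{2}|\cdot\rangle$, and it must be re-opened on the $(3,4)$ couple exactly as in the proof of Lemma~\ref{Expansion-C-T2} (this is \eqref{Efficient-T1-}); that re-expansion is where $\mathsf{c}_{3}$ and the sum over $s$ come from. Your attribution of $\mathsf{c}_{3}$ to the triple $(h_{1},h_{2},h_{3})=(1,1,2)$ cannot be right, since $h_{3}=p_{1}=0$. These are not deferable bookkeeping details: they fix the two-sum structure of \eqref{NonDiag-Diag} and the appearance of exactly one power of $\mathsf{c}$ per recursion step, on which the $\mathsf{c}^{r+1}$ counting of the main theorem rests.
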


\begin{proof}
Using the interpolation formula (\ref{expansion-2}), we get 
\begin{align}
\langle \underline{\mathbf{h}}_{1,2,3,...,2r+2}^{(1,1,\text{\underline{$%
\mathbf{p}$}})}|& \mathsf{T}_{2}(\xi _{2})|\underline{\mathbf{h}}%
_{1,2,3,...,2r+2}^{(0,1,\text{\underline{$\mathbf{q}$}})}\rangle   \notag \\
& \underset{\scriptscriptstyle\text{UpC}}{=}\mathsf{t}_{2}\langle \underline{%
\mathbf{h}}_{1,2,3,...,2r+2}^{(1,1,\text{\underline{$\mathbf{p}$}})}|%
\underline{\mathbf{h}}_{1,2,3,...,2r+2}^{(0,1,\text{\underline{$\mathbf{q}$}}%
)}\rangle   \notag \\
& \phantom{\underset{\scriptscriptstyle\text{UpC}}{=}}+\langle \underline{%
\mathbf{h}}_{1,2,3,...,2r+2}^{(1,1,\text{\underline{$\mathbf{p}$}})}|%
\underline{\mathbf{h}}_{1,2,3,...,2r+2}^{(1,1,\text{\underline{$\mathbf{q}$}}%
)}\rangle +\langle \underline{\mathbf{h}}_{1,2,3,...,2r+2}^{(1,0,\text{%
\underline{$\mathbf{p}$}})}|\underline{\mathbf{h}}_{1,2,3,...,2r+2}^{(0,1,%
\text{\underline{$\mathbf{q}$}})}\rangle   \notag \\
& \phantom{\underset{\scriptscriptstyle\text{UpC}}{=}}+\sum_{a=1}^{r}%
\sum_{i=0}^{1}\langle \underline{\mathbf{h}}_{1,2,3,...,2r+2}^{(1,1,\text{%
\underline{$\mathbf{p}$}})}|\mathsf{T}_{2}(\xi _{2a+2-i}^{\left( 1-i\right)
})|\underline{\mathbf{h}}_{1,2,3,...,2r+2}^{(0,1,\text{\underline{$\mathbf{q}
$}})}\rangle   \notag \\
& \phantom{\underset{\scriptscriptstyle\text{UpC}}{=}}+\sum_{s=1+2(r+1)}^{%
\mathsf{N}}\langle \underline{\mathbf{h}}_{1,2,3,...,2r+2}^{(1,1,\text{%
\underline{$\mathbf{p}$}})}|\mathsf{T}_{2}(\xi _{s}^{\left( \delta
_{h_{s},1}+\delta _{h_{s},2}\right) })|\underline{\mathbf{h}}%
_{1,2,3,...,2r+2}^{(0,1,\text{\underline{$\mathbf{q}$}})}\rangle 
\label{Efficient-T2-0} \\
& \underset{\scriptscriptstyle\text{UpC}}{=}\langle \underline{\mathbf{h}}%
_{1,2,3,...,2r+2}^{(1,1,\text{\underline{$\mathbf{p}$}})}|\underline{\mathbf{%
h}}_{1,2,3,...,2r+2}^{(1,1,\text{\underline{$\mathbf{q}$}})}\rangle
+\sum_{a=1}^{r}\mathsf{c}_{2a+1}\langle \underline{\mathbf{h}}%
_{1,2,3,...,2r+2}^{(1,1,\text{\underline{$\mathbf{p}$}}_{2a-1}^{\left(
1\right) })}|\mathsf{T}_{1}(\xi _{2a+1}^{\left( 1\right) })|\underline{%
\mathbf{h}}_{1,2,3,...,2r+2}^{(0,1,\text{\underline{$\mathbf{q}$}})}\rangle .
\label{Efficient-T2}
\end{align}%
Indeed, 
\begin{equation}
\langle \underline{\mathbf{h}}_{1,2,3,...,2r+2}^{(1,1,\text{\underline{$%
\mathbf{p}$}})}|\underline{\mathbf{h}}_{1,2,3,...,2r+2}^{(0,1,\text{%
\underline{$\mathbf{q}$}})}\rangle =0,\quad \langle \underline{\mathbf{h}}%
_{1,2,3,...,2r+2}^{(1,0,\text{\underline{$\mathbf{p}$}})}|\underline{\mathbf{%
h}}_{1,2,3,...,2r+2}^{(0,1,\text{\underline{$\mathbf{q}$}})}\rangle \left.
=\right. 0,
\end{equation}%
and%
\begin{equation}
\langle \underline{\mathbf{h}}_{1,2,3,...,2r+2}^{(1,1,\text{\underline{$%
\mathbf{p}$}})}|\mathsf{T}_{2}(\xi _{2a+2}^{\left( 1\right) })|\underline{%
\mathbf{h}}_{1,2,3,...,2r+2}^{(0,1,\text{\underline{$\mathbf{q}$}})}\rangle
\left. =\right. \mathsf{c}_{2a+2}\langle \underline{\mathbf{h}}%
_{1,2,3,...,2r+2}^{(1,1,\text{\underline{$\mathbf{p}$}}_{2a}^{\left(
1\right) })}|\underline{\mathbf{h}}_{1,2,3,...,2r+2}^{(0,1,\text{\underline{$%
\mathbf{q}$}})}\rangle =0,
\end{equation}%
for any $1\leq a\leq r$. Also, for $s\geq 2r+3$ and $h_{s}\left. =\right. 1,2
$, we have: 
\begin{equation}
\langle \underline{\mathbf{h}}_{1,2,3,...,2r+2}^{(1,1,\text{\underline{$%
\mathbf{p}$}})}|\mathsf{T}_{2}(\xi _{s}^{\left( 1\right) })|\underline{%
\mathbf{h}}_{1,2,3,...,2r+2}^{(0,1,\text{\underline{$\mathbf{q}$}})}\rangle
\left. =\right. \mathsf{c}_{s}^{\delta _{h_{s},2}}\langle \underline{\mathbf{%
h}}_{1,2,3,...,2r+2,s}^{(1,1,\text{\underline{$\mathbf{p}$}},h_{s}-1)}|%
\underline{\mathbf{h}}_{1,2,3,...,2r+2,s}^{(0,1,\text{\underline{$\mathbf{q}$%
}},h_{s})}\rangle \left. =\right. 0,
\end{equation}%
while for $s\geq 2r+3$ and $h_{s}\left. =\right. 0$, we have:%
\begin{equation}
\langle \underline{\mathbf{h}}_{1,2,3,...,2r+2}^{(1,1,\text{\underline{$%
\mathbf{p}$}})}|\mathsf{T}_{2}(\xi _{s})|\underline{\mathbf{h}}%
_{1,2,3,...,2r+2}^{(0,1,\text{\underline{$\mathbf{q}$}})}\rangle \left.
=\right. \langle \underline{\mathbf{h}}_{1,2,3,...,2r+2,s}^{(1,1,\text{%
\underline{$\mathbf{p}$}},0)}|\underline{\mathbf{h}}%
_{1,2,3,...,2r+2,s}^{(0,1,\text{\underline{$\mathbf{q}$}},1)}\rangle \left.
=\right. 0.
\end{equation}%
So we are left only with the following terms for $a\in \{1,...,r\}$ which
read:%
\begin{equation}
\langle \underline{\mathbf{h}}_{1,2,3,...,2r+2}^{(1,1,\text{\underline{$%
\mathbf{p}$}})}|\mathsf{T}_{2}(\xi _{2a+1})|\underline{\mathbf{h}}%
_{1,2,3,...,2r+2}^{(0,1,\text{\underline{$\mathbf{q}$}})}\rangle =\mathsf{c}%
_{2a+1}\langle \underline{\mathbf{h}}_{1,2,3,...,2r+2}^{(1,1,\text{%
\underline{$\mathbf{p}$}}_{2a-1}^{\left( 1\right) })}|\mathsf{T}_{1}(\xi
_{2a+1}^{\left( 1\right) })|\underline{\mathbf{h}}_{1,2,3,...,2r+2}^{(0,1,%
\text{\underline{$\mathbf{q}$}})}\rangle .
\end{equation}%
Now we can use the interpolation formula (\ref{expansion-1}) and we get 
\begin{align}
\langle \underline{\mathbf{h}}_{1,2,3,...,2r+2}^{(1,1,\text{\underline{$%
\mathbf{p}$}}_{2a-1}^{\left( 1\right) })}|\mathsf{T}_{1}(\xi _{2a+1}^{\left(
1\right) })|\underline{\mathbf{h}}_{1,2,3,...,2r+2}^{(0,1,\text{\underline{$%
\mathbf{q}$}})}\rangle & \underset{\scriptscriptstyle\text{UpC}}{=}\mathsf{t}%
_{1}\langle \underline{\mathbf{h}}_{1,2,3,...,2r+2}^{(1,1,\text{\underline{$%
\mathbf{p}$}}_{2a-1}^{\left( 1\right) })}|\underline{\mathbf{h}}%
_{1,2,3,...,2r+2}^{(0,1,\text{\underline{$\mathbf{q}$}})}\rangle   \notag \\
 \phantom{\underset{\scriptscriptstyle\text{UpC}}{=}}+&\langle \underline{%
\mathbf{h}}_{1,2,3,...,2r+2}^{(2,1,\text{\underline{$\mathbf{p}$}}%
_{2a-1}^{\left( 1\right) })}|\underline{\mathbf{h}}_{1,2,3,...,2r+2}^{(0,1,%
\text{\underline{$\mathbf{q}$}})}\rangle +\langle \underline{\mathbf{h}}%
_{1,2,3,...,2r+2}^{(1,2,\text{\underline{$\mathbf{p}$}}_{2a-1}^{\left(
1\right) })}|\underline{\mathbf{h}}_{1,2,3,...,2r+2}^{(0,1,\text{\underline{$%
\mathbf{q}$}})}\rangle   \notag \\
\phantom{\underset{\scriptscriptstyle\text{UpC}}{=}}+&\sum_{b=1}^{r}\mathsf{%
c}_{2b+1}^{1-\delta _{b,a}}\langle \underline{\mathbf{h}}%
_{1,2,3,...,2r+2}^{(1,1,\text{\underline{$\mathbf{p}$}}_{2a-1}^{\left(
1\right) })}+\underline{\mathbf{e}}_{2b+1}|\underline{\mathbf{h}}%
_{1,2,3,...,2r+2}^{(0,1,\text{\underline{$\mathbf{q}$}})}\rangle   \notag \\
 \phantom{\underset{\scriptscriptstyle\text{UpC}}{=}}+&\sum_{b=1}^{r}\langle 
\underline{\mathbf{h}}_{1,2,3,...,2r+2}^{(1,1,\text{\underline{$\mathbf{p}$}}%
_{2a-1,2b}^{\left( 1,0\right) })}|\mathsf{T}_{1}(\xi _{2b+2}^{\left(
1\right) })|\underline{\mathbf{h}}_{1,2,3,...,2r+2}^{(0,1,\text{\underline{$%
\mathbf{q}$}})}\rangle   \notag \\
\phantom{\underset{\scriptscriptstyle\text{UpC}}{=}}+&\sum_{s=1+2(r+1)}^{%
\mathsf{N}}\langle \underline{\mathbf{h}}_{1,2,3,...,2r+2}^{(1,1,\text{%
\underline{$\mathbf{p}$}}_{2a-1}^{\left( 1\right) })}|\mathsf{T}_{1}(\xi
_{s}^{\left( \delta _{h_{s},2}\right) })|\underline{\mathbf{h}}%
_{1,2,3,...,2r+2}^{(0,1,\text{\underline{$\mathbf{q}$}})}\rangle 
\label{Efficient-T1-0} \\
& \left. \underset{\scriptscriptstyle\text{UpC}}{=}\right.
\sum_{b=1}^{r}\langle \underline{\mathbf{h}}_{1,2,3,...,2r+2}^{(1,1,\text{%
\underline{$\mathbf{p}$}}_{2a-1,2b}^{\left( 1,1\right) })}|\mathsf{T}%
_{2}(\xi _{2b+2})|\underline{\mathbf{h}}_{1,2,3,...,2r+2}^{(0,1,\text{%
\underline{$\mathbf{q}$}})}\rangle .  \label{Efficient-T1}
\end{align}%
Indeed, by the orthogonality it holds:%
\begin{align}
& \langle \underline{\mathbf{h}}_{1,2,3,...,2r+2}^{(1,1,\text{\underline{$%
\mathbf{p}$}}_{2a-1}^{\left( 1\right) })}|\underline{\mathbf{h}}%
_{1,2,3,...,2r+2}^{(0,1,\text{\underline{$\mathbf{q}$}})}\rangle \left.
=\right. 0,\quad \langle \underline{\mathbf{h}}_{1,2,3,...,2r+2}^{(2,1,\text{%
\underline{$\mathbf{p}$}}_{2a-1}^{\left( 1\right) })}|\underline{\mathbf{h}}%
_{1,2,3,...,2r+2}^{(0,1,\text{\underline{$\mathbf{q}$}})}\rangle \left.
=\right. 0 \\
& \langle \underline{\mathbf{h}}_{1,2,3,...,2r+2}^{(1,2,\text{\underline{$%
\mathbf{p}$}}_{2a-1}^{\left( 1\right) })}|\underline{\mathbf{h}}%
_{1,2,3,...,2r+2}^{(0,1,\text{\underline{$\mathbf{q}$}})}\rangle \left.
=\right. 0,\quad \langle \underline{\mathbf{h}}_{1,2,3,...,2r+2}^{(1,1,\text{%
\underline{$\mathbf{p}$}}_{2a-1}^{\left( 1\right) })}+\underline{\mathbf{e}}%
_{2b+1}|\underline{\mathbf{h}}_{1,2,3,...,2r+2}^{(0,1,\text{\underline{$%
\mathbf{q}$}})}\rangle \left. =\right. 0,
\end{align}%
while for $h_{s}=2$ it holds:%
\begin{equation}
\langle \underline{\mathbf{h}}_{1,2,3,...,2r+2,s}^{(1,1,\text{\underline{$%
\mathbf{p}$}}_{2a-1}^{\left( 1\right) },2)}|\mathsf{T}_{1}(\xi _{s}^{\left(
1\right) })|\underline{\mathbf{h}}_{1,2,3,...,2r+2}^{(0,1,\text{\underline{$%
\mathbf{q}$}})}\rangle =\langle \underline{\mathbf{h}}%
_{1,2,3,...,2r+2,s}^{(1,1,\text{\underline{$\mathbf{p}$}}_{2a-1}^{\left(
1\right) },2)}|\underline{\mathbf{h}}_{1,2,3,...,2r+2,s}^{(0,1,\text{%
\underline{$\mathbf{q}$}},1)}\rangle =0,
\end{equation}%
as well as for $h_{s}=0,1$ it holds:%
\begin{equation}
\langle \underline{\mathbf{h}}_{1,2,3,...,2r+2}^{(1,1,\text{\underline{$%
\mathbf{p}$}}_{2a-1}^{\left( 1\right) })}|\mathsf{T}_{1}(\xi _{s})|%
\underline{\mathbf{h}}_{1,2,3,...,2r+2}^{(0,1,\text{\underline{$\mathbf{q}$}}%
)}\rangle =\langle \underline{\mathbf{h}}_{1,2,3,...,2r+2,s}^{(1,1,\text{%
\underline{$\mathbf{p}$}}_{2a-1}^{\left( 1\right) },h_{s}+1)}|\underline{%
\mathbf{h}}_{1,2,3,...,2r+2,s}^{(0,1,\text{\underline{$\mathbf{q}$}}%
,h_{s})}\rangle =0.
\end{equation}%
Therefore we obtain the following mixed recursion formula 
\begin{align}
\langle \underline{\mathbf{h}}_{1,2,3,...,2r+2}^{(1,1,\text{\underline{$%
\mathbf{p}$}})}|\mathsf{T}_{2}(\xi _{2})|\underline{\mathbf{h}}%
_{1,2,3,...,2r+2}^{(0,1,\text{\underline{$\mathbf{q}$}})}\rangle & \underset{%
\scriptscriptstyle\text{UpC}}{=}\langle \underline{\mathbf{h}}%
_{1,2,3,...,2r+2}^{(1,1,\text{\underline{$\mathbf{p}$}})}|\underline{\mathbf{%
h}}_{1,2,3,...,2r+2}^{(1,1,\text{\underline{$\mathbf{q}$}})}\rangle   \notag
\\
& \phantom{\underset{\scriptscriptstyle\text{UpC}}{=}}+\sum_{a=1}^{r}%
\sum_{b=1}^{r}\mathsf{c}_{2a+1}\langle \underline{\mathbf{h}}%
_{1,2,3,...,2r+2}^{(1,1,\text{\underline{$\mathbf{p}$}}_{2a-1,2b}^{\left(
1,1\right) })}|\mathsf{T}_{2}(\xi _{2b+2})|\underline{\mathbf{h}}%
_{1,2,3,...,2r+2}^{(0,1,\text{\underline{$\mathbf{q}$}})}\rangle .
\label{general recursion}
\end{align}%
Indeed, all the matrix elements $\langle \underline{\mathbf{h}}%
_{1,2,3,...,2r+2}^{(1,1,\text{\underline{$\mathbf{p}$}}_{2a-1,2b}^{\left(
1,1\right) })}|\mathsf{T}_{2}(\xi _{2b+2})|\underline{\mathbf{h}}%
_{1,2,3,...,2r+2}^{(0,1,\text{\underline{$\mathbf{q}$}})}\rangle $ on the
r.h.s.\ of (\ref{general recursion}) have $(r-1)$-couples of $(0,2)$, i.e.\
one less w.r.t.\ the first matrix element on the r.h.s.\ of (\ref{r-T2}) $\langle \underline{\mathbf{h}}%
_{1,2,3,...,2r+2}^{(1,1,\text{\underline{$\mathbf{p}$}})}|\,\mathsf{T}%
_{2}(\xi _{2})$\thinspace $|\underline{\mathbf{h}}_{1,2,3,...,2r+2}^{(0,1,%
\text{\underline{$\mathbf{q}$}})}\rangle $ .
Moreover, the matrix element $\langle \underline{\mathbf{h}}%
_{1,2,3,...,2r+2}^{(1,1,\text{\underline{$\mathbf{p}$}})}|\underline{\mathbf{%
h}}_{1,2,3,...,2r+2}^{(1,1,\text{\underline{$\mathbf{q}$}})}\rangle $
contains one couple less of $(0,2)$ that the starting matrix element $%
\langle \underline{\mathbf{h}}_{1,2,3,...,2r+2}^{(0,2,\text{\underline{$%
\mathbf{p}$}})}|\underline{\mathbf{h}}_{1,2,3,...,2r+2}^{(1,1,\text{%
\underline{$\mathbf{q}$}})}\rangle $, i.e.\ $r$-couples of $(0,2)$. Up to a
reordering in the indices,  $\langle \underline{\mathbf{h}%
}_{1,2,3,...,2r+2}^{(1,1,\text{\underline{$\mathbf{q}$}})}|\underline{%
\mathbf{h}}_{1,2,3,...,2r+2}^{(1,1,\text{\underline{$\mathbf{q}$}})}\rangle $
can be developed just as done in (\ref{Expansion-C-by-T2}), generating
matrix elements with $(r-1)$-couples of $(0,2)$. In total, we have that 
\begin{equation}
\langle \underline{\mathbf{h}}_{1,2,3,...,2r+2}^{(1,1,\text{\underline{$%
\mathbf{p}$}})}|\underline{\mathbf{h}}_{1,2,3,...,2r+2}^{(1,1,\text{%
\underline{$\mathbf{q}$}})}\rangle \underset{\scriptscriptstyle\text{UpC}}{=}%
\mathsf{c}_{3}\sum_{j=2}^{r+1}\langle \underline{\mathbf{h}}%
_{1,2,3,...,2r+2}^{(1,1,\text{\underline{$\mathbf{p}$}}_{1,2j}^{(1,1)})}|%
\mathsf{T}_{2}(\xi _{2j})|\underline{\mathbf{h}}_{1,2,3,...,2r+2}^{(1,1,%
\text{\underline{$\mathbf{q}$}}_{1}^{(0)})}\rangle ,  \label{Efficient-T1-}
\end{equation}%
and by substituting it in \eqref{general recursion} we get the recursion
formula \eqref{NonDiag-Diag}, up to the coefficients.

Now that we have identified the non-zero contributions in the used
interpolation formulae, we can easily compute the missing coefficients
presented in (\ref{NonDiag-Diag}). From (\ref{Efficient-T2-0}), the non-zero
contributions of $\mathsf{T}_{2}(\xi _{2})$ read: 
\begin{equation}
\sum_{a=0}^{r}\frac{d(\xi _{2}^{\left( 1\right) })}{d(\xi _{2a+1}^{\left(
1\right) })}\prod_{b\neq 2a+1}\frac{\xi _{2}-\xi _{b}^{(1-\delta _{\tilde{h}%
_{b},0})}}{\xi _{2a+1}-\xi _{b}^{(1-\delta _{\tilde{h}_{b},0})}}\mathsf{T}%
_{2}(\xi _{2a+1}),
\end{equation}%
where $\tilde{h}_{0}=1$ and $\tilde{h}_{b}$ is the $b$ element of $%
\underline{\mathbf{h}}_{1,2,3,...,2r+2}^{(1,1,\text{\underline{$\mathbf{p}$}}%
)}$ for any $b\geq 2$. Similarly, from (\ref{Efficient-T1-0}), the non-zero
contributions of $\mathsf{T}_{1}(\xi _{2a+1}^{\left( 1\right) })$ read:%
\begin{equation}
\sum_{b=1}^{r}\prod_{c\neq 2b+2}\frac{\xi _{2a+1}^{\left( 1\right) }-\xi
_{c}^{(\delta _{h_{c},2})}}{\xi _{2b+2}^{\left( 1\right) }-\xi _{c}^{(\delta
_{h_{c},2})}}\mathsf{T}_{1}(\xi _{2b+2}^{\left( 1\right) })
\end{equation}%
where $h_{b}$ is the $b$ element of $\underline{\mathbf{h}}%
_{1,2,3,...,2r+2}^{(1,1,\text{\underline{$\mathbf{p}$}})}$ for any $b\geq 1$%
. Finally, from (\ref{Efficient-T1-}), the non-zero contributions of $%
\mathsf{T}_{1}(\xi _{2}^{\left( 1\right) })$ read:%
\begin{equation}
\sum_{b=1}^{r}\prod_{c\neq 2b+2}\frac{\xi _{3}^{\left( 1\right) }-\xi
_{c}^{(\delta _{h_{c},2})}}{\xi _{2b+2}^{\left( 1\right) }-\xi _{c}^{(\delta
_{h_{c},2})}}\mathsf{T}_{1}(\xi _{2b+2}^{\left( 1\right) })
\end{equation}%
where $h_{b}$ is the $b$ element of $\underline{\mathbf{h}}%
_{1,2,3,...,2r+2}^{(1,1,\text{\underline{$\mathbf{p}$}})}$ for any $b\geq 1$%
. From these expansions, it is simple to verify that the recursion holds as
written in the lemma.

Finally, the initial condition (\ref{Step-1-T2+Coef}) for the recursion just
coincides with the identity (\ref{Step-1-T2}), proven in the previous
lemma, by reintroducing the missing interpolation coefficients in front to $%
T_{2}^{(K)}(\xi _{1})$ in (\ref{expansion-2}).
\end{proof}

It is worth remarking that in the recursion formula (\ref{NonDiag-Diag}) the
common part $\underline{\mathbf{h}}_{1,2,3,...,2r+2}$ of the SoV co-vectors
and vectors are left unchanged by the recursion, i.e.\ the recursion acts
only on the $(0,2)$ couples.

Moreover, thanks to Lemma \ref{Expansion-C-T2} the solution of these
recursions formulae lead to the determination of the coefficient $C_{\text{%
\underline{$\mathbf{h}$}}}^{\text{\underline{$\mathbf{k}$}}}$, as defined in
(\ref{NonDiag-Diag-0}). Here, we do not solve these recursions but we use
the previous lemmas to complete the proof of the Theorem \protect\ref{Central-T-det-non0} by proving the
independence of the $C_{\text{\underline{$\mathbf{h}$}}}^{\text{\underline{$%
\mathbf{k}$}}}$ w.r.t. $\mathsf{c}$. We have just to remark that at the
right hand side of (\ref{NonDiag-Diag}) we have matrix elements of $\mathsf{T%
}_{2}(\xi _{2h})$ with $(r-1)$-couples of $(0,2)$ in the co-vector
corresponding to $(r-1)$-couples of $(1,1)$ in the vector. The same
statement holds true adapting (\ref{NonDiag-Diag}) for the development of
the others matrix elements $\langle \underline{\mathbf{h}}%
_{1,2,3,...,2r+2}^{(1,2,\text{\underline{$\mathbf{p}$}}_{2j}^{\left(
1\right) })}|$\thinspace $\mathsf{T}_{2}(\xi _{2j+2})\,|\underline{\mathbf{h}%
}_{1,2,3,...,2r+2}^{(0,1,\text{\underline{$\mathbf{q}$}})}\rangle $. Hence,
applying $(r-1)$-times the same recursion formulae to all the non-zero
matrix elements generated in this first step of the recursion, we end up
exactly in the same diagonal matrix element $\langle \underline{\mathbf{h}}%
_{1,2,3,...,2r+2}^{(1,1,\text{\underline{$\mathbf{q}$}})}|\underline{\mathbf{%
h}}_{1,2,3,...,2r+2}^{(1,1,\text{\underline{$\mathbf{q}$}})}\rangle $,
proving the following proportionality: 
\begin{equation}
\langle \underline{\mathbf{h}}_{1,2,3,...,2r+2}^{(0,2,\text{\underline{$%
\mathbf{p}$}})}|\underline{\mathbf{h}}_{1,2,3,...,2r+2}^{(1,1,\text{%
\underline{$\mathbf{q}$}})}\rangle \propto \mathsf{c}^{r+1}\langle 
\underline{\mathbf{h}}_{1,2,3,...,2r+2}^{(1,1,\text{\underline{$\mathbf{q}$}}%
)}|\underline{\mathbf{h}}_{1,2,3,...,2r+2}^{(1,1,\text{\underline{$\mathbf{q}
$}})}\rangle ,
\end{equation}%
as any time that we make a recursion we generate exactly a power one of $%
\mathsf{c}$. The proportionality coefficient $C_{\text{\underline{$\mathbf{h}
$}}}^{\text{\underline{$\mathbf{k}$}}}$ must then be independent with
respect to $\mathsf{c}$ as the full dependence in $\mathsf{c}$ is already
made explicit in the previous formula.

\subsubsection{Computation of diagonal elements}

Here we give a proof of the form of the diagonal coupling between SoV
co-vectors and vectors. It is independent from the proof of the same result, but in the special case $\det K=0$, that is given in the main body of the paper, see Theorem \ref{Central-T-det0}.

We follow the standard procedure used to prove the ``Sklyanin measure'' \cite%
{GroMN12,Nic12}, by using the usual interpolation formulae of
the transfer matrices.

i) We have that
\begin{equation}
\langle \underline{\mathbf{h}}_{a}^{\left( 1\right) }|T_{2}^{(K)}(\xi
_{a}^{(1)})|\underline{\mathbf{h}}_{a}^{\left( 0\right) }\rangle \left.
=\right. \langle \underline{\mathbf{h}}_{a}^{\left( 0\right) }|\underline{%
\mathbf{h}}_{a}^{\left( 0\right) }\rangle.
\end{equation}%
Computing the action of $T_2^{(K)}(\xi_a^{(1)})$ by interpolating in the right points
\begin{equation}
T_{2}^{(K)}(\xi _{a}^{(1)})=d(\xi _{a}^{(2)})\left( T_{2,\underline{\mathbf{z%
}}(\underline{\mathbf{h}})}^{(K,\infty )}(\xi _{a}^{(1)})+\sum_{b=1}^{%
\mathsf{N}}g_{b,\underline{\mathbf{z}}(\underline{\mathbf{h}})}^{\left(
2\right) }(\xi _{a}^{(1)})T_{2}^{(K)}(\xi _{b}^{(\delta _{h_{b},1}+\delta
_{h_{b},2})})\right) ,
\end{equation}%
where we recall the definitions%
\begin{align}
\underline{\mathbf{z}}(\underline{\mathbf{h}}) &=\{\delta _{h_{1},1}+\delta
_{h_{1},2},...,\delta _{h_{\mathsf{N}},1}+\delta _{h_{\mathsf{N}},2}\}, \\
g_{a,\underline{\mathbf{h}}}^{\left( m\right) }(\lambda ) &=\prod_{b\neq
a,b=1}^{\mathsf{N}}\frac{\lambda -\xi _{b}^{(h_{b})}}{\xi _{a}^{(h_{a})}-\xi
_{b}^{(h_{b})}}\prod_{b=1}^{(m-1)\mathsf{N}}\frac{1}{\xi _{a}^{(h_{a})}-\xi
_{b}^{(-1)}},
\end{align}%
we get%
\begin{align}
\langle \underline{\mathbf{h}}_{a}^{\left( 0\right) }|\underline{\mathbf{h}}%
_{a}^{\left( 0\right) }\rangle & =d(\xi _{a}^{(2)})(T_{2,\underline{\mathbf{z%
}}(\underline{\mathbf{h}})}^{(K,\infty )}(\xi _{a}^{(1)})\langle \underline{%
\mathbf{h}}_{a}^{\left( 1\right) }|\underline{\mathbf{h}}_{a}^{\left(
0\right) }\rangle +g_{a,\underline{\mathbf{z}}(\underline{\mathbf{h}}%
)}^{\left( 2\right) }(\xi _{a}^{(1)})\langle \underline{\mathbf{h}}%
_{a}^{\left( 1\right) }|\underline{\mathbf{h}}_{a}^{\left( 1\right) }\rangle
\\
& \phantom{=} +\sum_{b=1,b\neq a}^{\mathsf{N}}g_{b,\underline{\mathbf{z}}(\underline{%
\mathbf{h}})}^{\left( 2\right) }(\xi _{a}^{(1)})\langle \underline{\mathbf{h}%
}_{a}^{\left( 1\right) }|T_{2}^{(K)}(\xi _{b}^{(\delta _{h_{b},1}+\delta
_{h_{b},2})})|\underline{\mathbf{h}}_{a}^{\left( 0\right) }\rangle ).
\end{align}%
Now, we can use the following identities:%
\begin{equation}
\langle \underline{\mathbf{h}}_{a}^{\left( 1\right) }|T_{2}^{(K)}(\xi
_{b}^{(\delta _{h_{b},1}+\delta _{h_{b},2})})|\underline{\mathbf{h}}%
_{a}^{\left( 0\right) }\rangle =
\begin{cases}
\langle \underline{\mathbf{h}}_{a,b}^{\left( 1,h_{b}-1\right) }|\underline{%
\mathbf{h}}_{a,b}^{\left( 0,h_{b}\right) }\rangle &\text{if }h_{b}\in
\{1,2\}, \\ 
\langle \underline{\mathbf{h}}_{a,b}^{\left( 1,0\right) }|\underline{\mathbf{%
h}}_{a,b}^{\left( 0,1\right) }\rangle &\text{if }h_{b}=0,
\end{cases}
\end{equation}%
and then being 
\begin{align}
&\underline{\mathbf{h}}_{a}^{\left( 1\right) }\underset{\text{(\ref%
{Gen-ortho-C})}}{\neq }\underline{\mathbf{h}}_{a}^{\left( 0\right) },\quad
\underline{\mathbf{h}}_{a,b}^{\left( 1,0\right) }\underset{\text{(\ref%
{Gen-ortho-C})}}{\neq }\underline{\mathbf{h}}_{a,b}^{\left( 0,1\right) }, \\
&\underline{\mathbf{h}}_{a,b}^{\left( 1,h_{b}-1\right) }\underset{\text{(%
\ref{Gen-ortho-C})}}{\neq }\underline{\mathbf{h}}_{a,b}^{\left(
0,h_{b}\right) }\quad \text{if }h_{b}\left. \in \right. \{1,2\},
\end{align}%
the orthogonality conditions implies the identity:%
\begin{equation}
\langle \underline{\mathbf{h}}_{a}^{\left( 0\right) }|\underline{\mathbf{h}}%
_{a}^{\left( 0\right) }\rangle =d(\xi _{a}^{(2)})g_{a,\underline{\mathbf{z}}(%
\underline{\mathbf{h}})}^{\left( 2\right) }(\xi _{a}^{(1)})\langle 
\underline{\mathbf{h}}_{a}^{\left( 1\right) }|\underline{\mathbf{h}}%
_{a}^{\left( 1\right) }\rangle ,
\end{equation}%
or equivalently:%
\begin{equation}
\frac{\langle \underline{\mathbf{h}}_{a}^{\left( 0\right) }|\underline{%
\mathbf{h}}_{a}^{\left( 0\right) }\rangle }{\langle \underline{\mathbf{h}}%
_{a}^{\left( 1\right) }|\underline{\mathbf{h}}_{a}^{\left( 1\right) }\rangle 
}=\frac{d(\xi _{a}^{(2)})}{d(\xi _{a}^{(1)})}\prod_{n\neq a,n=1}^{\mathsf{N}}%
\frac{\xi _{a}^{(1)}-\xi _{n}^{\left( \delta _{h_{n},1}+\delta
_{h_{n},2}\right) }}{\xi _{a}-\xi _{n}^{\left( \delta _{h_{n},1}+\delta
_{h_{n},2}\right) }}.
\end{equation}

ii) Similarly, we have 
\begin{equation}
\langle \underline{\mathbf{h}}_{a}^{\left( 1\right) }|T_{1}^{(K)}(\xi _{a})|%
\underline{\mathbf{h}}_{a}^{\left( 2\right) }\rangle \left. =\right. \langle 
\underline{\mathbf{h}}_{a}^{\left( 2\right) }|\underline{\mathbf{h}}%
_{a}^{\left( 2\right) }\rangle.
\end{equation}%
Computing the action of $T_1^{(K)}(\xi_a)$ by interpolating in the right points
\begin{equation}
T_{1}^{(K)}(\lambda )=T_{1,\underline{\mathbf{y}}(\underline{\mathbf{h}}%
)}^{(K,\infty )}(\xi _{a})+\sum_{a=1}^{\mathsf{N}}g_{a,\underline{\mathbf{y}}%
(\underline{\mathbf{h}})}^{\left( 1\right) }(\xi _{a})T_{1}^{(K)}(\xi
_{a}^{(\delta _{h_{a},2})}),
\end{equation}%
where we recall the definitions%
\begin{equation}
\underline{\mathbf{y}}(\underline{\mathbf{h}})=\{\delta
_{h_{1},2},...,\delta _{h_{\mathsf{N}},2}\},
\end{equation}%
we get 
\begin{align}
\langle \underline{\mathbf{h}}_{a}^{(2)}|\underline{\mathbf{h}}_{a}^{\left(
2\right) }\rangle & =T_{1,\underline{\mathbf{y}}(\underline{\mathbf{h}}%
)}^{(K,\infty )}(\xi _{a})\langle \underline{\mathbf{h}}_{a}^{\left(
1\right) }|\underline{\mathbf{h}}_{a}^{\left( 2\right) }\rangle +g_{a,%
\underline{\mathbf{y}}(\underline{\mathbf{h}})}^{\left( 1\right) }(\xi
_{a})\langle \underline{\mathbf{h}}_{a}^{\left( 1\right) }|\underline{%
\mathbf{h}}_{a}^{\left( 1\right) }\rangle \\
&\phantom{=} +\sum_{b=1,b\neq a}^{\mathsf{N}}g_{b,\underline{\mathbf{y}}(\underline{%
\mathbf{h}})}^{\left( 1\right) }(\xi _{a})\langle \underline{\mathbf{h}}%
_{a}^{\left( 1\right) }|T_{1}^{(K)}(\xi _{b}^{(\delta _{h_{b},2})})|%
\underline{\mathbf{h}}_{a}^{\left( 2\right) }\rangle .
\end{align}%
Now, using the following identities:%
\begin{equation}
\langle \underline{\mathbf{h}}_{a}^{\left( 1\right) }|T_{1}^{(K)}(\xi
_{b}^{(\delta _{h_{b},2})})|\underline{\mathbf{h}}_{a}^{\left( 2\right)}\rangle =
\begin{cases}
\langle \underline{\mathbf{h}}_{a,b}^{\left( 1,h_{b}+1\right) }|\underline{%
\mathbf{h}}_{a,b}^{\left( 2,h_{b}\right) }\rangle &\text{if }h_{b}\in
\{0,1\}, \\ 
\langle \underline{\mathbf{h}}_{a,b}^{\left( 1,2\right) }|\underline{\mathbf{%
h}}_{a,b}^{\left( 2,1\right) }\rangle &\text{if }h_{b}=2,%
\end{cases}%
\end{equation}%
and then being 
\begin{align}
&\underline{\mathbf{h}}_{a}^{\left( 1\right) }\underset{\text{(\ref%
{Gen-ortho-C})}}{\neq }\underline{\mathbf{h}}_{a}^{\left( 2\right) },\quad 
\underline{\mathbf{h}}_{a,b}^{\left( 1,2\right) }\underset{\text{(\ref%
{Gen-ortho-C})}}{\neq }\underline{\mathbf{h}}_{a,b}^{\left( 2,1\right) },\\
&\underline{\mathbf{h}}_{a,b}^{\left( 1,h_{b}+1\right) }\underset{\text{(%
\ref{Gen-ortho-C})}}{\neq }\underline{\mathbf{h}}_{a,b}^{\left(
2,h_{b}\right) }\quad\text{if }h_{b}\left. \in \right. \{1,2\},
\end{align}%
the orthogonality conditions implies the identity:%
\begin{equation}
\langle \underline{\mathbf{h}}_{a}^{\left( 2\right) }|\underline{\mathbf{h}}%
_{a}^{\left( 2\right) }\rangle =g_{a,\underline{\mathbf{y}}(\underline{%
\mathbf{h}})}^{\left( 2\right) }(\xi _{a})\langle \underline{\mathbf{h}}%
_{a}^{\left( 1\right) }|\underline{\mathbf{h}}_{a}^{\left( 1\right) }\rangle
,
\end{equation}%
or equivalently:%
\begin{equation}
\frac{\langle \underline{\mathbf{h}}_{a}^{\left( 2\right) }|\underline{%
\mathbf{h}}_{a}^{\left( 2\right) }\rangle }{\langle \underline{\mathbf{h}}%
_{a}^{\left( 1\right) }|\underline{\mathbf{h}}_{a}^{\left( 1\right) }\rangle 
}=\prod_{n\neq a,n=1}^{\mathsf{N}}\frac{\xi _{a}-\xi _{n}^{\left( \delta
_{h_{n},2}\right) }}{\xi _{a}^{(1)}-\xi _{n}^{\left( \delta
_{h_{n},2}\right) }}.
\end{equation}


\begin{thebibliography}{10}
\providecommand{\url}[1]{\texttt{#1}}
\providecommand{\urlprefix}{URL }
\expandafter\ifx\csname urlstyle\endcsname\relax
  \providecommand{\doi}[1]{doi:\discretionary{}{}{}#1}\else
  \providecommand{\doi}{doi:\discretionary{}{}{}\begingroup
  \urlstyle{rm}\Url}\fi
\providecommand{\eprint}[2][]{\url{#2}}

\bibitem{MaiN18}
J.~M. Maillet and G.~Niccoli,
\newblock \emph{On quantum separation of variables},
\newblock Journal of Mathematical Physics \textbf{59}(9), 091417 (2018),
\newblock \doi{10.1063/1.5050989}.

\bibitem{Skl85}
E.~K. Sklyanin,
\newblock \emph{The quantum {Toda} chain},
\newblock In \emph{Non-Linear Equations in Classical and Quantum Field Theory},
  pp. 196--233. Springer Berlin Heidelberg,
\newblock \doi{10.1007/3-540-15213-x_80} (1985).

\bibitem{Skl90}
E.~Sklyanin,
\newblock \emph{{Functional Bethe Ansatz}},
\newblock In \emph{Integrable and Superintegrable Systems}, pp. 8--33. World
  Scientific,
\newblock \doi{10.1142/9789812797179_0002} (1990).

\bibitem{Skl92a}
E.~K. Sklyanin,
\newblock \emph{Separation of variables in the classical integrable ${SL}(3)$
  magnetic chain},
\newblock Communications in Mathematical Physics \textbf{150}(1), 181 (1992),
\newblock \doi{10.1007/bf02096572}.

\bibitem{Skl95}
E.~K. Sklyanin,
\newblock \emph{Separation of variables : New trends},
\newblock Progress of Theoretical Physics Supplement \textbf{118}, 35 (1995),
\newblock \doi{10.1143/ptps.118.35}.

\bibitem{Skl96}
E.~K. Sklyanin,
\newblock \emph{Separation of variables in the quantum integrable models
  related to the {Y}angian {$Y[sl(3)]$}},
\newblock Journal of Mathematical Sciences \textbf{80}(3), 1861 (1996),
\newblock \doi{10.1007/bf02362784}.

\bibitem{FadS78}
E.~K. Sklyanin and L.~D. Faddeev,
\newblock \emph{Quantum-mechanical approach to completely integrable field
  theory models},
\newblock In \emph{Fifty Years of Mathematical Physics}, pp. 290--292. {WORLD}
  {SCIENTIFIC},
\newblock \doi{10.1142/9789814340960_0025} (2016).

\bibitem{FadST79}
E.~K. Sklyanin, L.~A. Takhtadzhyan and L.~D. Faddeev,
\newblock \emph{Quantum inverse problem method. i},
\newblock Theoretical and Mathematical Physics \textbf{40}(2), 688 (1979),
\newblock \doi{10.1007/bf01018718}.

\bibitem{FadT79}
L.~A. Takhtadzhan and L.~D. Faddeev,
\newblock \emph{The quantum method of the inverse problem and the {H}eisenberg
  {XYZ} model},
\newblock Russian Mathematical Surveys \textbf{34}(5), 11 (1979),
\newblock \doi{10.1070/rm1979v034n05abeh003909}.

\bibitem{Skl79}
E.~Sklyanin,
\newblock \emph{Method of the inverse scattering problem and the nonlinear
  quantum {S}chroedinger equation},
\newblock Sov. Phys. Dokl. \textbf{24}, 107 (1979).

\bibitem{Skl79a}
E.~Sklyanin,
\newblock \emph{On complete integrability of the {Landau-Lifshitz} equation},
\newblock Tech. rep., LOMI (1979).

\bibitem{FadT81}
L.~D. Faddeev and L.~A. Takhtajan,
\newblock \emph{Quantum inverse scattering method},
\newblock Sov. Sci. Rev. Math  (1981).

\bibitem{Skl82}
E.~K. Sklyanin,
\newblock \emph{Quantum version of the method of inverse scattering problem},
\newblock Journal of Soviet Mathematics \textbf{19}(5), 1546 (1982),
\newblock \doi{10.1007/bf01091462}.

\bibitem{Fad82}
L.~Faddeev,
\newblock \emph{Integrable models in 1+ 1 dimensional quantum field theory},
\newblock Tech. rep., CEA Centre d'Etudes Nucleaires de Saclay (1982).

\bibitem{Fad96}
L.~Faddeev,
\newblock \emph{How {Algebraic} {Bethe} {Ansatz} works for integrable models},
\newblock arXiv preprint hep-th/9605187  (1996).

\bibitem{B31}
H.~Bethe,
\newblock \emph{{Zur Theorie der Metalle}},
\newblock Zeitschrift für Physik \textbf{71}(3-4), 205 (1931),
\newblock \doi{10.1007/bf01341708}.

\bibitem{W59}
L.~R. Walker,
\newblock \emph{Antiferromagnetic linear chain},
\newblock Physical Review \textbf{116}(5), 1089 (1959),
\newblock \doi{10.1103/physrev.116.1089}.

\bibitem{Lie67c}
E.~H. Lieb,
\newblock \emph{Residual entropy of square ice},
\newblock Physical Review \textbf{162}(1), 162 (1967),
\newblock \doi{10.1103/physrev.162.162}.

\bibitem{Ba72}
R.~J. Baxter,
\newblock \emph{{Partition function of the Eight-Vertex lattice model}},
\newblock Annals of Physics \textbf{70}(1), 193 (1972),
\newblock \doi{10.1016/0003-4916(72)90335-1}.

\bibitem{Bax71a}
R.~J. Baxter,
\newblock \emph{Eight-vertex model in lattice statistics},
\newblock Physical Review Letters \textbf{26}(14), 832 (1971),
\newblock \doi{10.1103/physrevlett.26.832}.

\bibitem{Bax73-tot}
R.~Baxter,
\newblock \emph{Eight-vertex model in lattice statistics and one-dimensional
  anisotropic {Heisenberg} chain. i, {II}, {III}},
\newblock Annals of Physics \textbf{76}(1), 1 (1973),
\newblock \doi{10.1016/0003-4916(73)90440-5}.

\bibitem{BaxBook}
R.~J. Baxter,
\newblock \emph{{Exactly Solved Models in Statistical Mechanics}},
\newblock In \emph{Series on Advances in Statistical Mechanics}, pp. 5--63.
  World Scientific,
\newblock \doi{10.1142/9789814415255_0002} (1985).

\bibitem{BabBS96}
O.~Babelon, D.~Bernard and F.~A. Smirnov,
\newblock \emph{Quantization of solitons and the restricted sine-{G}ordon
  model},
\newblock Communications in Mathematical Physics \textbf{182}(2), 319 (1996),
\newblock \doi{10.1007/bf02517893}.

\bibitem{Smi98a}
F.~A. Smirnov,
\newblock \emph{Structure of matrix elements in the quantum {Toda} chain},
\newblock Journal of Physics A: Mathematical and General \textbf{31}(44), 8953
  (1998),
\newblock \doi{10.1088/0305-4470/31/44/019}.

\bibitem{Smi01}
F.~A. Smirnov,
\newblock \emph{Separation of variables for quantum integrable models related
  to ${U}_q(\widehat{sl}_n)$},
\newblock In \emph{{MathPhys} Odyssey 2001}, pp. 455--465. Birkhäuser Boston,
\newblock \doi{10.1007/978-1-4612-0087-1_17} (2002).

\bibitem{DerKM01}
S.~Derkachov, G.~Korchemsky and A.~Manashov,
\newblock \emph{Non-compact {Heisenberg} spin magnets from high-energy {QCD}},
\newblock Nuclear Physics B \textbf{617}(1-3), 375 (2001),
\newblock \doi{10.1016/s0550-3213(01)00457-6}.

\bibitem{DerKM03}
S.~Derkachov, G.~P. Korchemsky and A.~N. Manashov,
\newblock \emph{Separation of variables for the quantum $sl(2,\mathbb{R})$ spin
  chain},
\newblock Journal of High Energy Physics \textbf{2003}(07), 047 (2003),
\newblock \doi{10.1088/1126-6708/2003/07/047}.

\bibitem{DerKM03b}
S.~Derkachov, G.~P. Korchemsky and A.~N. Manashov,
\newblock \emph{Baxter $\mathbb{Q}$-operator and separation of variables for
  the open $sl(2,\mathbb{R})$ spin chain},
\newblock Journal of High Energy Physics \textbf{2003}(10), 053 (2003),
\newblock \doi{10.1088/1126-6708/2003/10/053}.

\bibitem{BytT06}
A.~G. Bytsko and J.~Teschner,
\newblock \emph{Quantization of models with non-compact quantum group symmetry:
  modular {XXZ} magnet and lattice sinh-{G}ordon model},
\newblock Journal of Physics A: Mathematical and General \textbf{39}(41), 12927
  (2006),
\newblock \doi{10.1088/0305-4470/39/41/s11}.

\bibitem{vonGIPS06}
G.~von Gehlen, N.~Iorgov, S.~Pakuliak and V.~Shadura,
\newblock \emph{The {Baxter-Bazhanov-Stroganov} model: separation of variables
  and the {Baxter} equation},
\newblock Journal of Physics A: Mathematical and General \textbf{39}(23), 7257
  (2006),
\newblock \doi{10.1088/0305-4470/39/23/006}.

\bibitem{FraSW08}
H.~Frahm, A.~Seel and T.~Wirth,
\newblock \emph{Separation of variables in the open {XXX} chain},
\newblock Nuclear Physics B \textbf{802}(3), 351 (2008),
\newblock \doi{10.1016/j.nuclphysb.2008.04.008}.

\bibitem{AmiFOW10}
L.~Amico, H.~Frahm, A.~Osterloh and T.~Wirth,
\newblock \emph{Separation of variables for integrable spin-boson models},
\newblock Nuclear Physics B \textbf{839}(3), 604 (2010),
\newblock \doi{10.1016/j.nuclphysb.2010.07.005}.

\bibitem{NicT10}
G.~Niccoli and J.~Teschner,
\newblock \emph{The sine-{G}ordon model revisited: I},
\newblock Journal of Statistical Mechanics: Theory and Experiment
  \textbf{2010}(09), P09014 (2010),
\newblock \doi{10.1088/1742-5468/2010/09/p09014}.

\bibitem{Nic10a}
G.~Niccoli,
\newblock \emph{Reconstruction of {Baxter}-operator from {Sklyanin} {SOV} for
  cyclic representations of integrable quantum models},
\newblock Nuclear Physics B \textbf{835}(3), 263 (2010),
\newblock \doi{10.1016/j.nuclphysb.2010.03.009}.

\bibitem{Nic11}
G.~Niccoli,
\newblock \emph{Completeness of {Bethe} ansatz by {Sklyanin} {SOV} for cyclic
  representations of integrable quantum models},
\newblock Journal of High Energy Physics \textbf{2011}(3) (2011),
\newblock \doi{10.1007/jhep03(2011)123}.

\bibitem{FraGSW11}
H.~Frahm, J.~H. Grelik, A.~Seel and T.~Wirth,
\newblock \emph{Functional {Bethe} ansatz methods for the open {XXX} chain},
\newblock Journal of Physics A: Mathematical and Theoretical \textbf{44}(1),
  015001 (2010),
\newblock \doi{10.1088/1751-8113/44/1/015001}.

\bibitem{GroMN12}
N.~Grosjean, J.~M. Maillet and G.~Niccoli,
\newblock \emph{On the form factors of local operators in the lattice
  sine{\textendash}{G}ordon model},
\newblock Journal of Statistical Mechanics: Theory and Experiment
  \textbf{2012}(10), P10006 (2012),
\newblock \doi{10.1088/1742-5468/2012/10/p10006}.

\bibitem{GroN12}
N.~Grosjean and G.~Niccoli,
\newblock \emph{The $\tau_2$-model and the chiral {Potts} model revisited:
  completeness of {Bethe} equations from {Sklyanin's} {SoV} method},
\newblock Journal of Statistical Mechanics: Theory and Experiment
  \textbf{2012}(11), P11005 (2012),
\newblock \doi{10.1088/1742-5468/2012/11/p11005}.

\bibitem{Nic12}
G.~Niccoli,
\newblock \emph{Non-diagonal open spin-1/2 {XXZ} quantum chains by separation
  of variables: complete spectrum and matrix elements of some quasi-local
  operators},
\newblock Journal of Statistical Mechanics: Theory and Experiment
  \textbf{2012}(10), P10025 (2012),
\newblock \doi{10.1088/1742-5468/2012/10/p10025}.

\bibitem{Nic13}
G.~Niccoli,
\newblock \emph{Antiperiodic spin-1/2 {XXZ} quantum chains by separation of
  variables: Complete spectrum and form factors},
\newblock Nuclear Physics B \textbf{870}(2), 397 (2013),
\newblock \doi{10.1016/j.nuclphysb.2013.01.017}.

\bibitem{Nic13a}
G.~Niccoli,
\newblock \emph{An antiperiodic dynamical six-vertex model: I. complete
  spectrum by {SOV}, matrix elements of the identity on separate states and
  connections to the periodic eight-vertex model},
\newblock Journal of Physics A: Mathematical and Theoretical \textbf{46}(7),
  075003 (2013),
\newblock \doi{10.1088/1751-8113/46/7/075003}.

\bibitem{Nic13b}
G.~Niccoli,
\newblock \emph{Form factors and complete spectrum of {XXX} anti-periodic
  higher spin chains by quantum separation of variables},
\newblock Journal of Mathematical Physics \textbf{54}(5), 053516 (2013),
\newblock \doi{10.1063/1.4807078}.

\bibitem{GroMN14}
N.~Grosjean, J.-M. Maillet and G.~Niccoli,
\newblock \emph{On the form factors of local operators in the
  {Bazhanov-Stroganov} and chiral {Potts} models},
\newblock Annales Henri Poincar{\'{e}} \textbf{16}(5), 1103 (2014),
\newblock \doi{10.1007/s00023-014-0358-9}.

\bibitem{FalN14}
S.~Faldella and G.~Niccoli,
\newblock \emph{{SoV} approach for integrable quantum models associated with
  general representations on spin-1/2 chains of the 8-vertex reflection
  algebra},
\newblock Journal of Physics A: Mathematical and Theoretical \textbf{47}(11),
  115202 (2014),
\newblock \doi{10.1088/1751-8113/47/11/115202}.

\bibitem{FalKN14}
S.~Faldella, N.~Kitanine and G.~Niccoli,
\newblock \emph{The complete spectrum and scalar products for the open spin-1/2
  {XXZ} quantum chains with non-diagonal boundary terms},
\newblock Journal of Statistical Mechanics: Theory and Experiment
  \textbf{2014}(1), P01011 (2014),
\newblock \doi{10.1088/1742-5468/2014/01/p01011}.

\bibitem{KitMN14}
N.~Kitanine, J.~M. Maillet and G.~Niccoli,
\newblock \emph{Open spin chains with generic integrable boundaries: {Baxter}
  equation and {Bethe} ansatz completeness from separation of variables},
\newblock Journal of Statistical Mechanics: Theory and Experiment
  \textbf{2014}(5), P05015 (2014),
\newblock \doi{10.1088/1742-5468/2014/05/p05015}.

\bibitem{NicT15}
G.~Niccoli and V.~Terras,
\newblock \emph{Antiperiodic {XXZ} chains with arbitrary spins: Complete
  eigenstate construction by functional equations in separation of variables},
\newblock Letters in Mathematical Physics \textbf{105}(7), 989 (2015),
\newblock \doi{10.1007/s11005-015-0759-9}.

\bibitem{LevNT15}
D.~Levy-Bencheton, G.~Niccoli and V.~Terras,
\newblock \emph{Antiperiodic dynamical 6-vertex model by separation of
  variables {II}: functional equations and form factors},
\newblock Journal of Statistical Mechanics: Theory and Experiment
  \textbf{2016}(3), 033110 (2016),
\newblock \doi{10.1088/1742-5468/2016/03/033110}.

\bibitem{NicT16}
G.~Niccoli and V.~Terras,
\newblock \emph{The eight-vertex model with quasi-periodic boundary
  conditions},
\newblock Journal of Physics A: Mathematical and Theoretical \textbf{49}(4),
  044001 (2015),
\newblock \doi{10.1088/1751-8113/49/4/044001}.

\bibitem{KitMNT16}
N.~Kitanine, J.~M. Maillet, G.~Niccoli and V.~Terras,
\newblock \emph{On determinant representations of scalar products and form
  factors in the {SoV} approach: the {XXX} case},
\newblock Journal of Physics A: Mathematical and Theoretical \textbf{49}(10),
  104002 (2016),
\newblock \doi{10.1088/1751-8113/49/10/104002}.

\bibitem{MarS16}
D.~Martin and F.~Smirnov,
\newblock \emph{Problems with using separated variables for computing
  expectation values for higher ranks},
\newblock Letters in Mathematical Physics \textbf{106}(4), 469 (2016),
\newblock \doi{10.1007/s11005-016-0823-0}.

\bibitem{KitMNT17}
N.~Kitanine, J.~M. Maillet, G.~Niccoli and V.~Terras,
\newblock \emph{The open {XXX} spin chain in the {SoV} framework: scalar
  product of separate states},
\newblock Journal of Physics A: Mathematical and Theoretical \textbf{50}(22),
  224001 (2017),
\newblock \doi{10.1088/1751-8121/aa6cc9}.

\bibitem{MaiNP17}
J.~M. Maillet, G.~Niccoli and B.~Pezelier,
\newblock \emph{Transfer matrix spectrum for cyclic representations of the
  6-vertex reflection algebra {I}},
\newblock {SciPost} Physics \textbf{2}(1) (2017),
\newblock \doi{10.21468/scipostphys.2.1.009}.

\bibitem{MaiNP18}
J.~M. Maillet, G.~Niccoli and B.~Pezelier,
\newblock \emph{Transfer matrix spectrum for cyclic representations of the
  6-vertex reflection algebra {II}},
\newblock SciPost Phys. \textbf{5}, 26 (2018),
\newblock \doi{10.21468/SciPostPhys.5.3.026}.

\bibitem{JiaKKS16}
Y.~Jiang, S.~Komatsu, I.~Kostov and D.~Serban,
\newblock \emph{The hexagon in the mirror: the three-point function in the
  {SoV} representation},
\newblock Journal of Physics A: Mathematical and Theoretical \textbf{49}(17),
  174007 (2016),
\newblock \doi{10.1088/1751-8113/49/17/174007}.

\bibitem{KitMNT18}
N.~Kitanine, J.~M. Maillet, G.~Niccoli and V.~Terras,
\newblock \emph{The open {XXZ} spin chain in the {SoV} framework: scalar
  product of separate states},
\newblock Journal of Physics A: Mathematical and Theoretical \textbf{51}(48),
  485201 (2018),
\newblock \doi{10.1088/1751-8121/aae76f}.

\bibitem{BelPRS12}
S.~Belliard, S.~Pakuliak, E.~Ragoucy and N.~A. Slavnov,
\newblock \emph{Highest coefficient of scalar products in {$SU(3)$}-invariant
  integrable models},
\newblock Journal of Statistical Mechanics: Theory and Experiment
  \textbf{2012}(09), P09003 (2012),
\newblock \doi{10.1088/1742-5468/2012/09/p09003}.

\bibitem{BelPRS12a}
S.~Belliard, S.~Pakuliak, E.~Ragoucy and N.~A. Slavnov,
\newblock \emph{The algebraic {Bethe} ansatz for scalar products in
  {$SU(3)$}-invariant integrable models},
\newblock Journal of Statistical Mechanics: Theory and Experiment
  \textbf{2012}(10), P10017 (2012),
\newblock \doi{10.1088/1742-5468/2012/10/p10017}.

\bibitem{BelPRS13}
S.~Belliard, S.~Pakuliak, E.~Ragoucy and N.~A. Slavnov,
\newblock \emph{Bethe vectors of {$GL(3)$}-invariant integrable models},
\newblock Journal of Statistical Mechanics: Theory and Experiment
  \textbf{2013}(02), P02020 (2013),
\newblock \doi{10.1088/1742-5468/2013/02/p02020}.

\bibitem{BelPRS13a}
S.~Belliard, S.~Pakuliak, E.~Ragoucy and N.~A. Slavnov,
\newblock \emph{Form factors in {$SU(3)$}-invariant integrable models},
\newblock Journal of Statistical Mechanics: Theory and Experiment
  \textbf{2013}(04), P04033 (2013),
\newblock \doi{10.1088/1742-5468/2013/04/p04033}.

\bibitem{PakRS14}
S.~Pakuliak, E.~Ragoucy and N.~Slavnov,
\newblock \emph{Form factors in quantum integrable models with
  {$GL(3)$}-invariant {$R$}-matrix},
\newblock Nuclear Physics B \textbf{881}, 343 (2014),
\newblock \doi{https://doi.org/10.1016/j.nuclphysb.2014.02.014}.

\bibitem{PakRS14a}
S.~Z. Pakuliak, E.~Ragoucy and N.~A. Slavnov,
\newblock \emph{Determinant representations for form factors in quantum
  integrable models with the {$GL(3)$}-invariant {$R$}-matrix},
\newblock Theoretical and Mathematical Physics \textbf{181}(3), 1566 (2014),
\newblock \doi{10.1007/s11232-014-0236-0}.

\bibitem{PakRS15}
S.~Pakuliak, , E.~Ragoucy, N.~A. Slavnov and and,
\newblock \emph{{$GL(3)$}-based quantum integrable composite models. {I}.
  {Bethe} vectors},
\newblock Symmetry, Integrability and Geometry: Methods and Applications
  (2015),
\newblock \doi{10.3842/sigma.2015.063}.

\bibitem{PakRS15a}
S.~Pakuliak, , E.~Ragoucy, N.~A. Slavnov and and,
\newblock \emph{{$GL(3)$} -based quantum integrable composite models. {II}.
  form factors of local operators},
\newblock Symmetry, Integrability and Geometry: Methods and Applications
  (2015),
\newblock \doi{10.3842/sigma.2015.064}.

\bibitem{PakRS15b}
S.~Pakuliak, E.~Ragoucy and N.~A. Slavnov,
\newblock \emph{Form factors of local operators in a one-dimensional
  two-component {Bose} gas},
\newblock Journal of Physics A: Mathematical and Theoretical \textbf{48}(43),
  435001 (2015),
\newblock \doi{10.1088/1751-8113/48/43/435001}.

\bibitem{HutLPRS16}
A.~Hutsalyuk, A.~Liashyk, S.~Z. Pakuliak, E.~Ragoucy and N.~A. Slavnov,
\newblock \emph{Multiple actions of the monodromy matrix in
  gl(2$\vert$1)-invariant integrable models},
\newblock Symmetry, Integrability and Geometry: Methods and Applications
  (2016),
\newblock \doi{10.3842/sigma.2016.099}.

\bibitem{HutLPRS17}
A.~A. Hutsalyuk, A.~N. Liashyk, S.~Z. Pakuliak, E.~Ragoucy and N.~A. Slavnov,
\newblock \emph{Current presentation for the super-{Yangian} double
  ${DY}\mathfrak{gl}(m\vert n)$ and {Bethe} vectors},
\newblock Russian Mathematical Surveys \textbf{72}(1), 33 (2017),
\newblock \doi{10.1070/rm9754}.

\bibitem{PakRS17}
S.~Z. Pakuliak, E.~Ragoucy and N.~A. Slavnov,
\newblock \emph{Bethe vectors for models based on the super-{Y}angian
  ${Y}(\mathfrak{gl}(m\vert n))$},
\newblock Journal of Integrable Systems \textbf{2}(1) (2017),
\newblock \doi{10.1093/integr/xyx001}.

\bibitem{HutLPRS17a}
A.~Hutsalyuk, A.~Liashyk, S.~Z. Pakuliak, E.~Ragoucy and N.~A. Slavnov,
\newblock \emph{Scalar products of {Bethe} vectors in models with
  $\mathfrak{gl}(1\vert 2)$ symmetry 2. determinant representation},
\newblock Journal of Physics A: Mathematical and Theoretical \textbf{50}(3),
  034004 (2016),
\newblock \doi{10.1088/1751-8121/50/3/034004}.

\bibitem{LiaS18}
A.~Liashyk and N.~A. Slavnov,
\newblock \emph{On {Bethe} vectors in $gl(3)$-invariant integrable models},
\newblock Journal of High Energy Physics \textbf{2018}(6) (2018),
\newblock \doi{10.1007/jhep06(2018)018}.

\bibitem{KulR81}
P.~Kulish and N.~Y. Reshetikhin,
\newblock \emph{{Generalized Heisenberg ferromagnet and the Gross-Neveu
  model}},
\newblock Sov. Phys. JETP \textbf{53}(1), 108 (1981).

\bibitem{KulR83}
P.~P. Kulish and N.~Y. Reshetikhin,
\newblock \emph{Diagonalisation of {${GL}(N)$} invariant transfer matrices and
  quantum {N}-wave system ({Lee} model)},
\newblock Journal of Physics A: Mathematical and General \textbf{16}(16), L591
  (1983),
\newblock \doi{10.1088/0305-4470/16/16/001}.

\bibitem{BelR08}
S.~Belliard and E.~Ragoucy,
\newblock \emph{The nested {Bethe} ansatz for `all' closed spin chains},
\newblock Journal of Physics A: Mathematical and Theoretical \textbf{41}(29),
  295202 (2008),
\newblock \doi{10.1088/1751-8113/41/29/295202}.

\bibitem{PakRS18}
S.~Pakuliak, E.~Ragoucy and N.~Slavnov,
\newblock \emph{Nested {Algebraic} {Bethe} {Ansatz} in integrable models:
  recent results},
\newblock {SciPost} Physics Lecture Notes  (2018),
\newblock \doi{10.21468/scipostphyslectnotes.6}.

\bibitem{CavGLM19}
A.~Cavagli{\`{a}}, N.~Gromov and F.~Levkovich-Maslyuk,
\newblock \emph{Separation of variables and scalar products at any rank},
\newblock Journal of High Energy Physics \textbf{2019}(9) (2019),
\newblock \doi{10.1007/jhep09(2019)052}.

\bibitem{GroLMRV19}
N.~Gromov, F.~Levkovich-Maslyuk, P.~Ryan and D.~Volin,
\newblock \emph{Dual separated variables and scalar products},
\newblock{Phys. Lett. B, 806 (2020) 135494},
\newblock \doi{10.1016/j.physletb.2020.135494}.

\bibitem{GroLMS17}
N.~Gromov, F.~Levkovich-Maslyuk and G.~Sizov,
\newblock \emph{New construction of eigenstates and separation of variables for
  {$SU(N)$} quantum spin chains},
\newblock Journal of High Energy Physics \textbf{2017}(9) (2017),
\newblock \doi{10.1007/jhep09(2017)111}.

\bibitem{GroF2018}
N.~Gromov and F.~Levkovich-Maslyuk,
\newblock \emph{New compact construction of eigenstates for supersymmetric spin
  chains},
\newblock Journal of High Energy Physics \textbf{2018}(9), 85 (2018),
\newblock \doi{10.1007/JHEP09(2018)085}.

\bibitem{DerV18}
S.~Derkachov and P.~Valinevich,
\newblock \emph{Separation of variables for the quantum {$SL(3,\mathbb{C})$}
  spin magnet: eigenfunctions of {Sklyanin} {B}-operator},
\newblock arXiv preprint arXiv:1807.00302  (2018).

\bibitem{MaiN19}
J.~M. Maillet and G.~Niccoli,
\newblock \emph{{Complete spectrum of quantum integrable lattice models
  associated to $Y(gl(n))$ by separation of variables}},
\newblock SciPost Phys. \textbf{6}, 71 (2019),
\newblock \doi{10.21468/SciPostPhys.6.6.071}.

\bibitem{MaiN19a}
J.~M. Maillet and G.~Niccoli,
\newblock \emph{Complete spectrum of quantum integrable lattice models
  associated to $\mathcal{U}_{q} (\widehat{gl_{n}})$ by separation of
  variables},
\newblock Journal of Physics A: Mathematical and Theoretical \textbf{52}(31),
  315203 (2019),
\newblock \doi{10.1088/1751-8121/ab2930}.

\bibitem{MaiN18b}
J.~M. Maillet and G.~Niccoli,
\newblock \emph{On separation of variables for reflection algebras},
\newblock Journal of Statistical Mechanics: Theory and Experiment
  \textbf{2019}(9), 094020 (2019),
\newblock \doi{10.1088/1742-5468/ab357a}.

\bibitem{MaiN18e}
J.~M. Maillet and G.~Niccoli,
\newblock \emph{On quantum separation of variables: beyond fundamental
  representations},
\newblock arXiv:1903.06618  (2019).

\bibitem{MaiNV19}
J.~M. Maillet, G.~Niccoli and L.~Vignoli,
\newblock \emph{Separation of variables bases for integrable $gl_{M|N}$ and
  {Hubbard} models} (2019), 
\newblock \doi{10.21468/SciPostPhys.9.4.060}.

\bibitem{RyaV18}
P.~Ryan and D.~Volin,
\newblock \emph{Separated variables and wave functions for rational {$gl(N)$}
  spin chains in the companion twist frame},
\newblock Journal of Mathematical Physics \textbf{60}(3), 032701 (2019),
\newblock \doi{10.1063/1.5085387}.

\bibitem{RyaV20}
P.~Ryan and D.~Volin,
\newblock \emph{Separation of variables for rational
  $\mathfrak{gl}(\mathsf{n})$ spin chains in any compact representation, via
  fusion, embedding morphism and {B\"acklund} flow} (2020),
  \eprint{2002.12341}.

\bibitem{KulRS81}
P.~P. Kulish, N.~Y. Reshetikhin and E.~K. Sklyanin,
\newblock \emph{{Yang-Baxter} equation and representation theory: I},
\newblock Letters in Mathematical Physics \textbf{5}(5), 393 (1981),
\newblock \doi{10.1007/bf02285311}.

\bibitem{KulR82}
P.~P. Kulish and N.~Y. Reshetikhin,
\newblock \emph{{GL}3-invariant solutions of the {Yang-Baxter} equation and
  associated quantum systems},
\newblock Journal of Soviet Mathematics \textbf{34}(5), 1948 (1986),
\newblock \doi{10.1007/bf01095104}.

\bibitem{KunNS94}
A.~Kuniba, T.~Nakanishi and J.~Suzuki,
\newblock \emph{Functional relations in solvable lattice models {I}: Functional
  relations and representation theory},
\newblock International Journal of Modern Physics A \textbf{09}(30), 5215
  (1994),
\newblock \doi{10.1142/s0217751x94002119}.

\end{thebibliography}
\end{document}